%% file: main.tex
\documentclass[12pt,onecolumn,a4paper]{article}
\input{commands.tex} 
\begin{document}

\title{Fault-Tolerant One-Shot Entanglement Generation\\
 with Constant-Sized Quantum Devices in the Plane}
 \author[1]{Dylan Harley} 
 \author[2,3]{Robert K{\"o}nig}
\affil[1]{Department of Mathematical Sciences, University of Copenhagen, Denmark}
\affil[2]{Department of Mathematics, School of Computation, Information and Technology, Technical University of Munich, Garching, Germany}
\affil[3]{Munich Center for Quantum Science and Technology, Munich, Germany}
\maketitle 
\begin{abstract} 
Consider a rectangular grid of qubits in 2D  with single-qubit and nearest-neighbor two-qubit operations subject to local stochastic Pauli noise. 
At different length scales, this setup describes  both a single  quantum computing device with geometrically limited connectivity between qubits 
arranged on a disc, and planar networks composed of quantum repeater stations of constant size. We give a protocol which robustly generates entanglement between distant qubits in this setup. For noise below a  constant threshold error strength, it generates  
a constant-fidelity Bell pair between qubits separated by an arbitrarily large distance~$R$. 
To generate distance-$R$ entanglement, a rectangular grid of qubits of dimensions $\Theta(R)\times \Theta(\mathsf{poly}(\log R))$ suffices. 
Our protocol applies  quantum operations in one shot, establishing a Bell state in a constant time up to a known Pauli correction. In contrast, existing
entanglement generation protocols either require local quantum devices controlling a number of qubits growing with the targeted distance, 
or are not single-shot, i.e., have  a distance-dependent execution time.  The protocol leverages many-body entanglement in networks and  provides the first example of a short-range entangled state in 2D with long-range localizable entanglement robust to local stochastic Pauli noise. As an immediate corollary, we construct a 2D-local stabilizer Hamiltonian whose Gibbs states possess long-range localizable entanglement at constant positive temperature.
\end{abstract}

\newpage 
\tableofcontents

\section{Introduction}
Generating long-ranged entanglement is a vital requirement for quantum technologies across a wide range of applications; for network communications, entanglement must be consumed between parties in order to transmit states via teleportation or to perform quantum key distribution for cryptography (see e.g.,~\cite{bennett1993teleporting,bennett2014quantum}). Moreover, near-term scalable architectures for fault-tolerant quantum computation will likely consist of spatially separated processors working in parallel~\cite{azuma2023quantum}, between which entanglement must be generated. As such, the question of how to fault-tolerantly  produce such long-range entanglement is a fundamental problem which has attracted a great deal of theoretical and experimental interest (see  Section~\ref{sec:related} below). 

Naively transporting a quantum system over a long distance leaves it exposed to environmental noise for the entire transit time, leading to an accumulation of operation errors. For example, this is the case when motional degrees of trapped ions or neutral atom arrays are used to encode quantum information and these information carriers are physically moved. In quantum optics setups, on the other hand, fiber attenuation leads to photon loss, a major source of 
experimental difficulty. As a consequence,  an exponential decay of the transmission fidelity is generally unavoidable when trying to directly communicate quantum information over a significant distance.

Primarily studied in the setting of fiber-optic quantum point-to-point communication, quantum repeater-based protocols seek to overcome this fundamental challenge~\cite{briegel1998quantum}. The underlying idea is to insert repeater stations at regular intervals along the communication line.
This  breaks up the challenge of generating long-range entanglement into that of creating high-fidelity bipartite entangled states between neighboring pairs of repeaters. The latter are then used as building blocks to create longer-ranged entanglement, e.g., by entanglement swapping. Integrating active error detection and correction steps in such interactive processes, a zoo of protocols have been developed for long-range fault-tolerant entanglement generation along a line of repeaters (see e.g., ~\cite{muralidharan2016optimal,azuma2023quantum} for reviews).

{\em Low-latency entanglement generation protocols.} Here we are interested in  protocols establishing entanglement with low latency, i.e., in a constant time independent of the communication distance.  In more detail, such a protocol generates a long-range entangled Bell pair in a constant time up to a single-qubit unitary correction (generally a Pauli operator) which can be determined from measurement results (``syndromes'') collected in its execution. It typically takes the following form when trying to establish entanglement between two qubits $Q_1,Q_2$:
\begin{enumerate}[(i)]
\item\label{it:stateprepsingleshot}
 a short-range entangled state~$\ket{\Psi}$ shared between the repeater stations is prepared. (We use the term short-range entangled for any state which can be prepared by a constant-depth local circuit from a product state.)  This can be achieved in a constant time  with limited communication between repeaters. The corresponding operations are formally described by a constant-depth circuit~$U$ composed of single-qubit  and nearest-neighbor two-qubit operations applied to an initial product state.
 \item\label{it:measurementsingleshot}
With the exception of the qubits $Q_1Q_2$, all qubits  in the prepared state~$\ket{\Psi}$ are measured simultaneously, i.e., in a one-shot manner, resulting in a measurement result~$z$ and a post-measurement state~$\ket{\Psi(z)}$ on the qubits~$Q_1Q_2$.
\item\label{it:correctionfix}
The measurement results~$z$ are pooled, i.e., communicated to the repeater holding the qubit~$Q_1$ by means of  one-way classical communication.
Then a Pauli $P(z)$ is computed from~$z$ and applied to the post-measurement state, resulting in the state
\begin{align}
(P(z)_{Q_1}\otimes I_{Q_2})\ket{\Psi(z)}\ .
\end{align}
\end{enumerate}
The main `quantum part', i.e., Steps~\eqref{it:stateprepsingleshot} and\eqref{it:measurementsingleshot} of such a protocol establishes long-range entanglement in the form of a state~$\ket{\Psi(z)}$ which (in the absence of errors) is the Bell state~$\ket{\Phi}=\frac{1}{\sqrt{2}}(\ket{00}+\ket{11})$ up to a  (known) single-qubit Pauli~$P(z)$.
We emphasize here that these two steps  only take a constant amount of time to execute. This is because the circuit~$U$ is of constant depth, and the single-qubit measurements can be applied simultaneously: it is not necessary to know the measurement outcomes from other qubits (repeaters). Correspondingly, this kind of entanglement generation protocol is referred to as one-shot.

The Pauli~$P(z)$ is a function of the obtained measurement results~$z$ and determines the `Pauli frame' of subsequent operations on the qubits~$Q_1Q_2$. It can be actively applied as in Step~\eqref{it:correctionfix} to remove it. To maintain the relevant entanglement, only the qubits~$Q_1Q_2$ need to be preserved in a good quantum memory while this communication and computation takes place. The Pauli~$P(z)$ can then be applied to qubit~$Q_1$.
 We note that in some settings, such as when dealing with Clifford circuits, it suffices to keep track of~$P(z)$ ``in software''. Furthermore, as discussed in~\cite{perseguers2008one}, in certain applications such as key distribution, the determination of $P(z)$ can be omitted entirely since any (even Pauli-corrupted) Bell pair suffices.  In these settings, Step~\eqref{it:correctionfix} can be omitted, and the protocol only requires qubit coherence times that are constant: no long-term quantum memory is needed.

{\em Limitations of entanglement generation protocols.} 
The idea of using repeaters along a communication line follows a long and successful tradition in classical signal transmissions, where e.g., submarine amplifiers in long-distance communication were first deployed in TAT-1 (Transatlantic No. 1) in 1956. For quantum communication, the repeater-based approach to entanglement distribution suffers from a key disadvantage: it requires quantum repeaters whose local system dimension (number of qubits) scales with the communication distance, and/or requires protocols taking a distance-dependent amount of time to establish long-range entanglement.
For example, to reach a distance~$R$, the protocol of~\cite{childressetal05}, while only requiring $O(1)$~qubits at each repeater, is realized by a circuit of $O(\mathsf{poly}(R))$ depth.

In contrast, low-latency entanglement generation along a line of repeaters requires large on-site quantum processors at every repeater node. Roughly, this is because the bipartite entanglement established between neighboring repeaters must 
be of a quality scaling inversely with the communication distance: to obtain distance-$R$ entanglement with $N=\Theta(R)$ repeaters, the  infidelity of Bell pairs between neighboring repeaters should not exceed~$O(1/N)=O(1/R)$.
Such high-fidelity Bell pairs can only be created in a constant time using, e.g., an errror-correcting code with a distance scaling non-trivially with~$R$ at each repeater. In turn, this implies that the dimension of the local system needs to scale with~$R$. We refer to
e.g.,~\cite{acin2007entanglement,choe2024long} for rigorous versions of this argument and corresponding statements. The scaling of the local system (repeater) dimension with the targeted distance is a strong requirement to ask of every node/repeater. 

\subsection*{Our contribution: Leveraging the power of networks}
In many settings of practical interest, the line of communication does not exist in isolation: instead, it is a path in a large network of `quantum nodes' (i.e., devices operating  on local quantum systems and connected by quantum communication links). Examples are a prospective quantum internet, or a device with many qubits arranged on a planar piece of material. 
It is natural to ask whether fault-tolerant entanglement generation  can be enhanced by exploiting synergies in such networks, where nodes cooperate to establish long-range entanglement. Specifically, we ask the following main question:
\begin{quote}
{\em Can constant-fidelity entanglement be created in a one-shot manner in planar networks using constant-size on-site devices only?}
\end{quote}
Our main result is an affirmative answer to this question: We propose a one-shot protocol for arbitrary-distance entanglement generation in 2D which only uses repeaters/nodes of constant size (i.e., manipulating a constant number of qubits) independently of the targeted distance. We then show that it creates constant-fidelity Bell pairs at arbitrary distances.

We note that our finding follows a long line of works on entanglement generation and localizable entanglement, see Section~\ref{sec:related} below for an in-depth discussion. In fact, our central question has previously been posed and partially answered in~\cite[Section III.C]{perseguers2008one}, which we became aware of   close to the completion of this work. The main contribution of Ref.~\cite{perseguers2008one} is the proposal and analysis of a  protocol for one-shot entanglement generation in 2D with repeater stations whose number of local qubits is logarithmic in the distance. Complementary to this, the authors raise the question of whether a constant number of qubits per station is in principle sufficient in Section~III.C. In two paragraphs, they sketch a solution to this problem based on the  fault-tolerant quantum computation scheme of~\cite{stephens2007universal}. 

Our construction can be seen as a worked-out variant of the discussion
in~\cite[Section III.C]{perseguers2008one}: We give a fully explicit description of all steps involved in the construction of such a one-shot protocol, along with a threshold proof for general local stochastic Pauli noise. We note that our construction and analysis go beyond what has traditionally been discussed in the context of quantum fault tolerance  (such as in the relevant literature around the time of publication of Ref.~\cite{perseguers2008one}). In fault-tolerant quantum computation, the  focus typically is on approximately reproducing the statistics of measurement outcomes of an ideal quantum circuit. Instead, to get a fault-tolerant protocol for long-range entanglement generation between physical qubits, we need to argue more specifically about noisy encoding and decoding steps. The corresponding full analysis is one of our key technical advances necessary to rigorously execute this program. It results in  explicit threshold estimates (see Eq.~\eqref{eq:pzerooneDresdef} and Theorem~\ref{thm:2Dentanglementgenerationgeneral} below).

Let us formally state our main result. We consider qubits arranged on the vertices 
of a grid graph of dimensions~$\ell_X\times\ell_Y$, i.e., 
the Cartesian product~$\mathsf{P}_{\ell_X}\square \mathsf{P}_{\ell_Y}$ of two  path graphs $\mathsf{P}_{\ell_X}$ and $\mathsf{P}_{\ell_Y}$ of lengths~$\ell_X$ and $\ell_Y$, respectively. We assume that this graph is embedded in~$\mathbb{R}^2$ with neighboring vertices (qubits) at distance~$1$. Our main result can be summarized as follows (see Fig.~\ref{fig:mainresultsketch} for an illustration).

\counterwithout{theorem}{section}
\begin{theorem}[Arbitrary-distance entanglement generation in 2D]\label{thm:2Dentanglementgeneration} 
There is a (constant) threshold error strength~$p_{9/10}>0$ such that the following holds.
For any desired communication distance~$R>0$,  there are 
\begin{enumerate}
\item
parameters
\begin{align}
\ell_X&=\Theta(\mathsf{poly}(\log R))\\
\ell_Y&=\Theta(R)\ ,
\end{align}
\item
two qubits $Q_1,Q_2$ separated by a distance~$\Theta(R)$ on the grid graph~$\mathsf{P}_{\ell_X}\square \mathsf{P}_{\ell_Y}=:(V,E)$, 
\item
 a Clifford circuit~$U$ of constant depth composed of single-qubit and  nearest-neighbor two-qubit gates on~$\mathsf{P}_{\ell_X}\square \mathsf{P}_{\ell_Y}$ and
 \item an (efficiently computable)  post-processing function~$P:\{0,1\}^{V\backslash \{Q_1,Q_2\}}\rightarrow \cP_1$
taking values in the single-qubit Pauli group~$\cP_1$
\end{enumerate}
with the following properties.
Suppose we measure every qubit~$v\in V\backslash \{Q_1,Q_2\}$ 
of the state
\begin{align}
\ket{\Psi}&= U\ket{0^V}\label{eq:centralstateshortrangelongrange}
\end{align}
in the computational basis. Let 
\begin{align}
z&=(z_v)_{v\in V\backslash\{Q_1,Q_2\}}\in \{0,1\}^{V\backslash \{Q_1,Q_2\}}\\
\ket{\Psi(z)}&\propto  \left(I_{Q_1Q_2}\otimes \bra{z}\right)\ket{\Psi}
\end{align}
denote the measurement outcome and  the associated post-measurement state on the qubits~$Q_1Q_2$, respectively. Then
\begin{align}
(P(z)_{Q_1}\otimes I_{Q_2})\ket{\Psi(z)}\propto \ket{\Phi}:=\frac{1}{\sqrt{2}}(\ket{00}+\ket{11})\ .\label{eq:correctalwayswithoutnoise}
\end{align}
Furthermore, if $U$ is replaced by a noisy implementation~$\tilde{U}$ 
with  (circuit-level) local stochastic  Pauli noise~$E$ of strength~$p\leq p_{9/10}$ to instead produce a corrupted state $\ket{\tilde{\Psi}} = \tilde{U} \ket{0^V}$, then 
\begin{align}
\Pr_{z,E}\left[(P(z)_{Q_1}\otimes I_{Q_2})\ket{\tilde{\Psi}(z)}\propto \ket{\Phi}\right]&\geq 9/10\ ,\label{eq:noisyversionqgeneration}
\end{align}
where the probability is taken over the noise realization~$E$ and the  measurement outcome~$z$. 
\end{theorem}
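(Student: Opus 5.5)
The plan is to simulate a one-dimensional fault-tolerant teleportation chain, i.e.\ a concatenated-code ``quantum wire'', inside a strip of width $\ell_X=\Theta(\mathsf{poly}(\log R))$, and to make every step one-shot by using measurement-based (Clifford) gadgets. Concretely, we fix a concatenated CSS code, say the Steane code, with $k=\Theta(\log\log R)$ levels. Its block length is then $7^k=\mathsf{poly}(\log R)$ and its logical failure probability is $(p/p_{\mathrm{th}})^{2^k}=O(1/R^{2})$ for $p$ below the concatenation threshold. A logical block is laid out as a column of the strip, of width $\mathsf{poly}(\log R)$. Since transversal and encoding operations of a concatenated code are not nearest-neighbour on a line, we implement them using local SWAP networks. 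The level-by-level recursive structure (in the spirit of \cite{stephens2007universal}) keeps the number of swaps, and hence the error spreading, controlled at each level.

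The circuit $U$ itself stays constant depth because we never run the chain sequentially. Instead we prepare, all in parallel, the following pieces. First, noisy encoded logical Bell pairs $\ket{\overline{\Phi}}$ between neighbouring columns, spaced $\Theta(\mathsf{poly}\log R)$ apart along the $Y$ direction; there are $\Theta(R/\mathsf{poly}\log R)$ such columns, so $\ell_Y=\Theta(R)$ suffices. Second, at each end, a physical–logical Bell pair between $Q_i$ and an encoded qubit, i.e.\ a noisy encoding gadget. Third, logical Bell measurements joining adjacent pairs, which are transversal CNOTs followed by single-qubit $X$/$Z$ measurements, i.e.\ entanglement swapping.

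A deep object such as an encoder or logical Bell pair is made constant depth by realising it as a graph/cluster state. This is a Clifford stabilizer state, which can be made with depth-$O(1)$ nearest-neighbour gates on the 2D grid, provided the measurement pattern is laid out along the $X$ direction. Its byproduct Paulis are folded into the Pauli frame. All these byproducts, together with the decoding of the $X$- and $Z$-syndromes of every swap measurement (concatenated hard-decision decoding, which is efficient), are computed by $P(z)$. In the noiseless case, \eqref{eq:correctalwayswithoutnoise} follows from the stabilizer formalism: the measured state is a graph state whose post-measurement state on $Q_1Q_2$ is $\ket{\Phi}$ up to a Pauli, and that Pauli can be tracked classically.

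For noise, we propagate the local stochastic circuit noise through the Clifford circuit to an effective local stochastic Pauli error on $\ket{\Psi}$ before measurement. Because $U$ has constant depth, the strength becomes $p'=O(p^{c})$, with a constant lightcone. We then establish a union bound over three kinds of failure. For each of the $O(R)$ interior logical swaps, we show that a level-$k$ block fails only if a ``bad'' rectangle appears at every level; by the standard extended-rectangle counting, this gives probability $\le (p'/p_{\mathrm{th}})^{2^k}$, which summed over $O(R)$ swaps contributes $\le 1/30$. The two ends need separate treatment, since the encoders/decoders of $Q_1,Q_2$ are not protected by the code. There we show that a constant-size neighbourhood of $Q_i$ fails with probability $O(p')$, and that level-$j$ errors in the noisy encoder are corrected by the higher levels with a geometric series $\sum_j (p'/p_{\mathrm{th}})^{2^j}=O(p')$. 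Choosing $p_{9/10}$ small makes each end contribute $\le 1/30$.

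The main obstacle is this noisy encoding/decoding at the boundary between physical and logical qubits. Standard threshold theorems only guarantee correct logical measurement statistics, not that the physical qubit $Q_i$ ends up in the correct Bell state. I would first attempt a level-by-level ``ideal decoder'' argument: a noisy level-$j$ encoder followed by an ideal decoder equals an ideal level-$(j-1)$ object up to a Pauli on good rectangles. I would also need to verify that the SWAP routing within width $\mathsf{poly}\log R$ does not destroy the locality of the effective noise.
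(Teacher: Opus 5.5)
Your overall plan is the paper's: a concatenated Steane-code wire with $\Theta(\log\log R)$ levels whose blocks set the strip width $\mathsf{poly}(\log R)$, made one-shot by unfolding time into the $Y$ direction with all byproducts and corrections absorbed into $P(z)$. The obstacle you single out, the noisy physical--logical interface, is handled in the paper exactly as you propose. Lemma~\ref{lem:itlevelreductionencdec} gives a level-by-level reduction in which noisy $\mathsf{Enc}^{(j)}$/$\mathsf{Dec}^{(j)}$ cost $O(p_j)$ at level $j$, summing to $O(p)$. The paper's top-level architecture is simpler than yours. It prepares one local physical Bell pair and sends one half through a robust implementation of the length-$R$ single-qubit identity circuit (noisy encoder, level-$L$ simulation, noisy decoder), so there is a single logical worldline and no chain of logical Bell pairs. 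Two steps of your sketch do not go through as written.

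\textbf{The bulk pairs.} Suppose the bulk pairs are made like your end pieces, with physical qubits pushed through a noisy concatenated encoder. Then each pair fails with probability of order $p'$, and with $\Theta(R/\mathsf{poly}(\log R))$ pairs the chain fails with high probability (e.g.\ under i.i.d.\ noise). The bound $(p'/p_{\mathrm{th}})^{2^k}$ you quote requires each pair to come from the level-$k$ simulation of a Bell-preparation circuit, with preparation gadgets, transversal CNOT, and EC after every gadget. The two halves must then be moved $\Theta(\mathsf{poly}(\log R))$ apart. If a ``graph/cluster state'' does this by teleporting each physical qubit along a row of that length, every physical qubit acquires error probability $\Theta(p\,\mathsf{poly}(\log R))$, which exceeds any threshold. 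The transport must itself be a level-$k$ idle circuit with EC every $O(1)$ steps, teleported gate by gate. That is a segment of the paper's robust identity wire, so cutting it into logical Bell pairs adds nothing.

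\textbf{The noise translation.} Pushing circuit noise through the constant-depth $U$ gives local stochastic noise on $|\Psi\rangle$. Extended-rectangle counting, however, needs local stochastic faults on the simulated \emph{adaptive} 1D circuit, including the effect of flipped outcomes on byproducts and on EC corrections. This holds only if the unfolding is built from constant-size gate-teleportation units with constant qubit lifespan, and it must be proved. The paper does this in three steps:
\begin{enumerate}
\item it postpones all classically controlled Paulis while preserving the support of Pauli noise (Lemma~\ref{lem:adaptivenonadaptiveClifford});
\item it substitutes gate teleportations and cleans unitary and adaptive read-once rectangles to get $p\mapsto\Lambda p^{\lambda}$ (Lemma~\ref{lem:gateteleportationsubstitution});
\item it unfolds time into space (Theorem~\ref{thm:2Dimplementation}).
\end{enumerate}
This lets the paper use the known threshold as a black box instead of recounting. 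Finally, your SWAP remark misses the real obstruction. At the lowest level, a single faulty physical SWAP on a line damages two qubits of one distance-$3$ block, so the standard gadget conditions fail. The paper avoids this by using the bilinear-array gadgets and transferring path-graph noise to the bilinear array with $p\mapsto\Lambda p^{\lambda}$ (Corollary~\ref{cor:compatiblev}).
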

\counterwithin{theorem}{subsection}

\noindent  We remark here that the constant~$9/10$ is arbitrary (see the corresponding generalized statement given as  Theorem~\ref{thm:2Dentanglementgenerationgeneral} below).

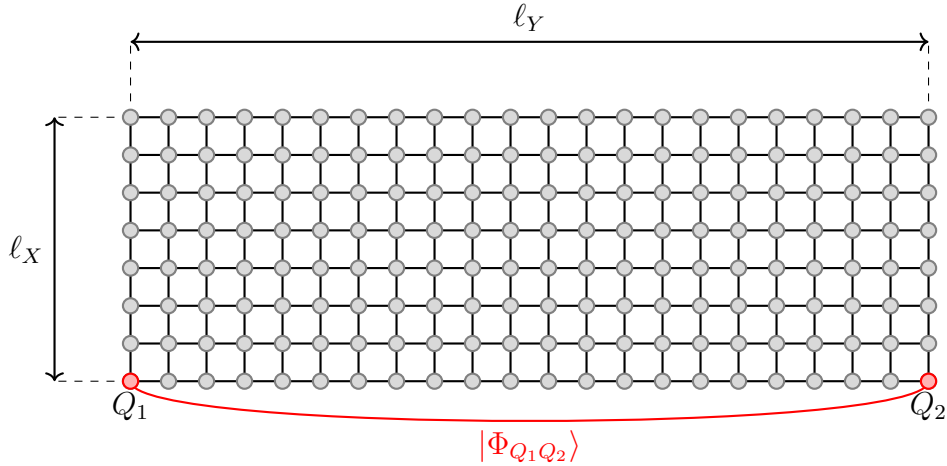
\begin{figure}
    \centering
    \begin{tikzpicture}
        \def\spacing{0.5}
        \def\latwidth{21}
        \def\latheight{7}
        \def\siterad{0.1}
        \def\labelbuffer{1}

        \draw[dashed] (0,0) -- (-\labelbuffer,0);
        \draw[dashed] (0,\latheight*\spacing) -- (-\labelbuffer,\latheight*\spacing);
        \draw[thick,<->] (-\labelbuffer,0) -- (-\labelbuffer,\latheight*\spacing);
        \node[left] at (-\labelbuffer,\latheight*\spacing*0.5) {$\ell_X$};
        
        \draw[dashed] (0,\latheight*\spacing) -- (0,\latheight*\spacing + \labelbuffer);
        \draw[dashed] (\latwidth*\spacing,\latheight*\spacing) -- (\latwidth*\spacing,\latheight*\spacing + \labelbuffer);
        \draw[thick,<->] (0,\latheight*\spacing + \labelbuffer) -- (\latwidth*\spacing,\latheight*\spacing + \labelbuffer);
        \node[above] at (\latwidth*\spacing*0.5,\latheight*\spacing + \labelbuffer) {$\ell_Y$};

        \foreach\x in {0,...,\latwidth} {
        \draw[thick] (\x*\spacing,0) -- (\x*\spacing,\latheight*\spacing);
        }
        \foreach\y in {0,...,\latheight} {
        \draw[thick] (0,\y*\spacing) -- (\latwidth*\spacing,\y*\spacing);
        }
        \foreach\x in {0,...,\latwidth} {
        \foreach\y in {0,...,\latheight} {
        \draw[draw=gray,fill=gray!30,thick] (\x*\spacing,\y*\spacing) circle (\siterad);
        }
        }

        \coordinate (Q1) at (0,0);
        \coordinate (Q2) at (\latwidth*\spacing,0);

        \draw[draw=red,thick] (Q1) to[in=-90,out=-90,distance=20] (Q2);

        \draw[draw=red,fill=red!30,thick] (Q1) circle (\siterad) node[below]{$Q_1$};
        \draw[draw=red,fill=red!30,thick] (Q2) circle (\siterad) node[below]{$Q_2$};

        \node[below,red] at (\latwidth*\spacing*0.5,-0.5) {$\ket{\Phi_{Q_1Q_2}}$};
        
    \end{tikzpicture}
    \caption{Sketch of the setup for Theorem~\ref{thm:2Dentanglementgeneration}. A short-ranged entangled state $\ket{\Psi}$ is prepared on a $2D$ lattice of qubits of dimensions $\ell_X\times \ell_Y$ by a constant-depth circuit $U$, where $\ell_X = \Theta(\poly(\log R))$ and $\ell_Y = \Theta(R)$ for some $R > 0$. Two qubits $Q_1,Q_2$ (red) separated by distance $\ell_Y$ are distinguished, and the remainder of the vertices (gray) are measured in the computational basis. This produces a Bell pair $\ket{\Phi_{Q_1Q_2}}$, up to a classically computable Pauli correction on $Q_1$, and the entire procedure is robust to local stochastic Pauli noise.}
    \label{fig:mainresultsketch}
\end{figure}

 We define the notion of local stochastic errors below (see Section~\ref{sec:localstochasticpaulinoise}).  Roughly speaking, this corresponds to randomly inserting Pauli operators into a quantum circuit such that the probability of a given set of $\ell$~spacetime locations being faulty is bounded by~$O(p^\ell)$, where $p>0$ is the noise strength. In particular, the errors may be correlated across both space and time.

Theorem~\ref{thm:2Dentanglementgeneration} amounts to a one-shot protocol with constant-size (in fact: single-qubit) repeaters at each site: Each  qubit is initialized in the state~$\ket{0}$ (a local operation at each repeater). Subsequently, the circuit~$U$ is applied, which requires local operations  (for single-qubit gates) and quantum communication (for two-qubit gates) along each link. Finally, each repeater qubit is measured. The (combined) result~$z$ determines a Pauli correction~$P(z)$. 
Eq.~\eqref{eq:correctalwayswithoutnoise} expresses the fact that the produced state on the qubits~$Q_1Q_2$ is the Bell state up to the Pauli correction~$P(z)$. Eq.~\eqref{eq:noisyversionqgeneration} states that this is still true with constant probability even if all operations are noisy.

In addition to its immediate operational relevance to long-distance entanglement generation,  our work provides fundamental insight into the nature of entanglement in 2D~networks. To our knowledge, the state~$\ket{\Psi}$ introduced in  Eq.~\eqref{eq:centralstateshortrangelongrange}
is the first example of a short-range entangled state in 2D with long-range localizable entanglement  robust to local stochastic Pauli noise.
In other words, it is a state which
\begin{enumerate}[(a)]
\item\label{it:shortrangeentangled}
can be created by a constant-depth, 2D-local unitary circuit from a product state, and
\item
allows to extract long-range entanglement by single-qubit measurements of ``bulk qubits'' even if it is corrupted by local stochastic Pauli noise.
\end{enumerate}
We also note that Property~\eqref{it:shortrangeentangled}  of being short-ranged entangled means that $\ket{\Psi}$ is the ground state of a local (stabilizer) Hamiltonian~$H$, a fact which may be used to provide alternative protocols to prepare it. This Hamiltonian is of independent interest: In Section~\ref{sec:localizableentanglementfinite}, we show that Gibbs states associated with~$H$ have long-range localizable entanglement below a constant temperature.

\subsection{Prior work}\label{sec:related}
Here we briefly discuss the relationship between our result and existing work on (one-shot) entanglement generation. The latter  can roughly be categorized according to the underlying network, i.e., the spatial arrangement of (repeater) nodes and their connectivity. 

\paragraph{1D: Quantum repeaters.} 
 In 1D,  entanglement swapping~\cite{zukowski1993event} through a chain of intermediate nodes having
 a constant number of qubits each can establish long-range entanglement in a one-shot manner. However, in the presence of noise,
applying this respectively arbitrary one-shot protocols results in a fidelity which decays exponentially with the  distance~\cite{acin2007entanglement}.

This barrier can be circumvented by increasing the sizes of the intermediate systems (maintaining the one-shot property)
 and/or iteratively distilling imperfect entanglement to the desired fidelity (going beyond one-shot protocols). Such approaches are known as quantum repeater protocols (see e.g.,~\cite{briegel1998quantum,muralidharan2016optimal,azuma2023quantum}), and can be categorized (following  the review paper~\cite{munro2015inside}) into three distinct ``generations''. In the first generation (see e.g., ~\cite{briegel1998quantum,sangouard2011quantum}), entanglement is created between adjacent nodes through heralded entanglement generation (HEG), and errors at each level of entanglement swapping are corrected through heralded entanglement purification (HEP). Both HEG and HEP are probabilistic methods which must be repeated over multiple rounds to succeed. In the second generation (see e.g.,~\cite{jiang2009quantum,munro2010quantum,zwerger2014hybrid}), HEP is replaced by using quantum error correction to deterministically correct errors, reducing the number of shots required and decreasing the classical communication overhead. In the third generation (see e.g.,~\cite{fowler2010surface,muralidharan2014ultrafast,muralidharan2017overcoming,munro2012quantum}), HEG is also replaced with techniques from quantum error correction, leading to further asymptotic improvements in time and communication cost, albeit at the expense of more challenging implementation. 
 Such protocols differ from our work in that they either allow for large local systems of qubits (scaling logarithmically with the distance) at each repeaters, or
extensive (e.g., polynomial) execution time (quantum circuit depth). We consider constant-sized network nodes (i.e., the local Hilbert space dimension is independent of the targeted distance) and one-shot protocols. 
The possibility of such one-shot protocols in 1D is ruled out by the no-go results of~\cite{acin2007entanglement} (see also~\cite{choe2024long}).

\paragraph{2D: One-shot protocols and entanglement percolation.} In 2D, Ref.~\cite{acin2007entanglement} demonstrates the possibility of fault-tolerant long-range entanglement generation in a planar lattice starting from 
imperfect bipartite two-qubit entangled states shared along each edge of the lattice. 
They argue that -- provided  the error rate~$p_e$ is below a certain threshold -- bipartite entanglement between distant pairs of qubits can be created with constant probability. Their argument connects this noise threshold to the percolation threshold of the underlying lattice~\cite{grimmett1999percolation}. It heavily relies on the fact that the considered noise model is somewhat restricted: noise is only allowed in the initial state in the form of non-maximally entangled pure states (instead of Bell states) across links,
with the ideal Bell state replaced by such a corrupted state with probability~$p_e$ independently for each link. All subsequent on-site operations are executed noiselessly.
As a consequence, states on individual links can be projected by local projective measurements onto maximally entangled  states. The relation to percolation arises because entanglement swapping can subsequently be used to establish entanglement between two qubits whenever these are connected by a path of such Bell states.
Similar to the general notion of single-shot protocols discussed above, this protocol generates long-range entanglement in a constant time (up to a Pauli correction from entanglement swapping).

 The idea of entanglement percolation has been extended in several directions by follow-up works. For fixed local system dimension, entanglement percolation protocols have extensively been  studied and optimized to achieve thresholds surpassing the classical percolation thresholds of the associated lattice, where links are deleted with probability~$p_e$~\cite{perseguers2008entanglement,lapeyre2009enhancement,lapeyre2012distribution}.
 Further improvements have been achieved by considering  schemes involving multipartite shared entanglement between intermediate nodes~\cite{perseguers2010multipartite}. Common to these works is the consideration of  noise models which only affect  the initial state in a specific way rather than circuit-level noise affecting the protocol. Realizing the measurements required in these protocols using unitaries and single-qubit measurements (potentially requiring auxiliary qubits), these constructions can be interpreted as generating 2D short-range-entangled states with long-range localizable entanglement~\cite{poppetal05}; however, this localizable entanglement is only shown to be robust against the specific noise model considered (and not, e.g., to single-qubit depolarizing noise). In contrast, our work provides an example of a state with localizable entanglement robust to local stochastic Pauli noise.
 
 Progress in the direction of 2D entanglement generation under more general noise models (allowing e.g., mixed initial states) and, in particular, circuit-level noise was made in Ref.~\cite{perseguers2008one} mentioned above. 
 Here the authors design a single-shot protocol which is robust to noisy execution of the on-site operations at each intermediate node. They argue that bit-flip errors can be corrected using a matching algorithm given certain syndrome patterns (akin to error correction in the surface code), whereas phase errors are taken care of by encoding a logical qubit into a quantum error-correcting code at each node. The latter is spanned by two GHZ states with a logarithmic number of qubits. That is, the 
number of qubits at each  intermediate nodes scales logarithmically with the distance.

As mentioned above, in~\cite[Section III.C]{perseguers2008one}, the authors also outline a single-shot scheme akin to ours
using only a constant number of qubits per node. In more detail, they sketch a connection between 1D fault-tolerant circuits and 2D entanglement localization protocols using techniques from measurement-based quantum computing. This is essentially the same idea from which our construction is built. However, the detailed construction and associated fault tolerance threshold proof is novel to our work.

\begin{figure}
    \centering
    \begin{tabular}{l|c|c|c|c|}
        Method &  Ref. & Noise model  & Local \# of qubits   \\
        \hline\hline
        Entanglement percolation &\cite{acin2007entanglement} & non-maximal entanglement& $O(1)$ \\
        Local encoding & \cite{perseguers2008one} & circuit-level i.i.d. & $O(\log R)$ \\
                & This work & local stochastic & $O(1)$ \\
        \hline\hline
                Cluster state-based &\cite{raussendorf2005long} & thermal & $O(1)$ \\
                                           &\cite{perseguers2010fidelity} & i.i.d. & $O(1)$  \\
         & \cite{bravyi2020quantum,choe2024long} & local stochastic & $O(1)$   \\
    \end{tabular}
    \caption{Single-shot entanglement generation: We compare 
    existing one-shot entanglement generation protocols in 2D and 3D in terms of resource requirements. The table gives the number of qubits needed at each node to generate a maximally entangled pair of two physical qubits at distance $R$ up to constant error. 
    In order to fit into this comparison, we must assume
ideal (decoding) operations on two ``boundary'' lattices to move the encoded entangled states to two physical qubits for Refs.~\cite{raussendorf2005long} and~\cite{bravyi2020quantum}. 
We distinguish between different noise models:
\emph{thermal} noise refers to a setting where each qubit
undergoes a phase error with some probability; the corresponding initial state is then a thermal state of the parent (stabilizer) Hamiltonian of the cluster state. 
\emph{i.i.d.} noise models assume independent Pauli errors on each qubit, either only on the shared initial state or at each circuit level.
     \emph{Local stochastic}  model: All operations, including initial entanglement distribution and local measurements, take place with Pauli faults at each circuit location under a local stochastic model (see Section~\ref{sec:localstochasticpaulinoise}). \emph{Non-maximal entanglement} model: The initial bipartite entangled states between adjacent nodes are prepared independently with some probability of error, whilst subsequent local operations occur noiselessly.}\label{fig:comparison}
\end{figure}

\paragraph{3D: Cluster states.} The most direct point of comparison for our work comes from existing protocols for entanglement localization using cluster states~\cite{raussendorf2005long,perseguers2010fidelity,bravyi2020quantum,choe2024long} in $3D$. In pioneering work~\cite{raussendorf2005long}, the authors established that the $3D$ cluster state~\cite{briegel2001persistent} possesses localizable entanglement which is resilient to independent $Z$-noise on each qubit: By measuring qubits in the bulk of (possibly noisy) cluster state on a cubic lattice, a high-fidelity  maximally entangled state encoded in the surface code 
on each of two opposing faces be generated. This conclusion was later shown to hold in the  more general case of local stochastic Pauli noise~\cite{bravyi2020quantum}. Via a modified scheme including additional measurements on the qubits on the target, these techniques were modified to include single-shot decoding of the logical qubits~\cite{perseguers2010fidelity,choe2024long}.
Ref.~\cite{choe2024long} showed robustness against local stochastic Pauli noise, establishing that constant-fidelity entanglement  between a pair of physical qubits  can be generated fault-tolerantly at arbitrary distance for noise strengths below some threshold.  These works imply that the cluster state provides an example of a short-range entangled state with long-range localizable entanglement robust to local stochastic Pauli noise.
As first observed in~\cite{raussendorf2005long}, a lattice of dimensions
$\Theta(\log R)\times \Theta(\log R)\times \Theta(R)$  is sufficient to generate entanglement at distance~$R$. 
The $\Theta(\log^2 R)$-scaling of the number of qubits in each ``slice'' along the ``communication'' direction was shown to be optimal up to a logarithmic factor in~\cite[Section 8.1]{choe2024long}.

Our work provides an example of a short-range entangled state in 2D with robust long-range entanglement. To obtain constant-fidelity entanglement at distance~$R$, our construction requires a lattice of dimensions~$\Theta(\poly(\log R))\times \Theta(R)$.  Unlike the 3D cluster state, our 2D construction is not translation-invariant.

\Cref{fig:comparison} provides a summary and comparison of these techniques with our work.

\subsection{Our contribution}
Our construction can be summarized in terms of the following main steps.

\paragraph{Fault tolerance against local stochastic Pauli noise in 1D for Clifford circuits.}
We argue that qubits arranged on a line with (noisy) one-qubit and nearest-neighbor two-qubit operations can serve as a basis for a quantum fault tolerance scheme.  We refer to Section~\ref{sec:ft1Dlieterature} for a discussion of related works showing how to achieve general fault-tolerant quantum computation with geometrically local operations. Our main result here applies to the implementation of 1D-local Clifford circuits:
 We give fault tolerance threshold theorems showing that such circuits can be implemented fault-tolerantly in the presence of local stochastic Pauli noise of strength below some a certain threshold. 

We first consider prepare-and-measure circuits as commonly considered in quantum computing: Here one is given a unitary circuit~$U$, and 
the goal is to (approximately) sample from the distribution of measurement outcomes when measuring the state~$U\ket{0^n}$ in the computational basis.  We show the following, see Theorem~\ref{thm:ftpathgraph} for further details:
\begin{result}[Fault-tolerant 1D-local implementation of prepare-and-measure Clifford circuits]
\label{res:ftimplementation1Dprepmeas}
There is a constant threshold error strength $p_*\in (0,1)$ such that the following holds. 
Let $\varepsilon>0$ be arbitrary. Let $\cC$ be a $1D$-local prepare-and-measure Clifford-circuit.
Then there is a $1D$-local implementation~$\cC_{FT}$ 
which  produces a sample from a distribution which is $\varepsilon$-close to the output distribution of~$\cC$ in the presence of local stochastic Pauli noise of any strength~$p\leq p_*$. The construction has polylogarithmic qubit and depth overhead.
\end{result}
We emphasize that Step~\ref{res:ftimplementation1Dprepmeas} merely illustrates our techniques, and note that significantly stronger results relevant to quantum computation have been obtained (see the discussion in Section~\ref{sec:ft1Dlieterature}). In particular, these constructions are neither restricted to Clifford circuits nor to Pauli noise. Instead, they consider general circuits based on a universal gate set, and arbitrary local stochastic noise (and/or adversarial noise).

For our purposes, dealing with Clifford circuits is sufficient to analyze entanglement generation.  The restriction to Clifford circuits is, in fact, essential for our approach: It allows to convert unitary subcircuits by gate teleportation (i.e., adaptive measurement-based) circuits, and permits to postpone Pauli corrections (effectively eliminating mid-circuit adaptive operations. Furthermore, the additional restriction to local stochastic Pauli noise allows us to  propagate errors through circuits while controlling their strength.  In particular, this allows us to analyze the effect of substituting subcircuits in a fault tolerance construction.
Such subcircuit substitutions can be used to relate circuits with different (but closely related) geometries, as well as to convert between unitary and measurement-based circuits. 

Step~\ref{res:ftimplementation1Dprepmeas} is obtained by applying this kind of analysis to the fault tolerance construction of Ref.~\cite{stephens2007universal}, which establishes an analogous result for
circuits whose  operations are local on a bilinear array of qubits. We use circuit transformations which relate the bilinear array to a line of qubits, and analyze their effect of local stochastic errors. This establishes Step~\ref{res:ftimplementation1Dprepmeas}.

\paragraph{Robust implementation of circuits.}
To construct fault-tolerant entanglement-generation protocols, statements of the kind given in Step~\ref{res:ftimplementation1Dprepmeas} are insufficient because the ultimate goal is to prepare a quantum state (rather than sample from a certain distribution). This means we have to deviate from the standard paradigm of fault-tolerant quantum computation: We need to ask how to render more general circuits robust to noise, namely circuits producing a quantum output state (i.e., where not every qubit is measured at the end of the circuit). 
Even more generally,
one can consider the problem of fault-tolerantly implementing circuits which realize  quantum channels, i.e., completely positive trace-preserving (CPTP) maps taking a number of input qubits to a certain number of output qubits.  We refer to Section~\ref{sec:noisyencodingdecoding} for a more detailed discussion of this extension of quantum-fault tolerance considerations.

Concretely, suppose we have a circuit~$\cC$ which acts on~$N$ qubits, $N_{\mathsf{in}}\leq N$ of which are ``input qubits''
(taking an arbitrary $N_{\mathsf{in}}$-qubit state),  whereas the remaining $N-N_{\mathsf{in}}$~qubits are initialized in the state~$\ket{0^{N-N_{\mathsf{in}}}}$. At the end of the computation, a state of~$N_{\mathsf{out}}$ qubits is returned (whereas the remaining~$N-N_{\mathsf{out}}$ qubits are discarded). The problem of fault-tolerantly realizing such a circuit has recently attracted some attention (see e.g., Ref.~\cite{christandl2026fault} titled ``Fault-tolerant quantum input/output'').  Unlike in ``standard'' quantum fault tolerance, it should be emphasized that the input and output-qubits in this setting are not encoded, i.e., they are bare physical qubits subject to noise. As a consequence, corresponding accuracy guarantees when considering e.g., success probabilities, necessarily depend on the number of qubits involved (unlike e.g., in Step~\ref{res:ftimplementation1Dprepmeas}). 

We show that the behaviour of a general $N$-qubit circuit (possibly based on universal gate sets) with (any) number $N_{\mathsf{in}},N_{\mathsf{out}}\leq N$ of input, respectively output qubits can be implemented robustly by a concatenated fault tolerance construction involving noisy encoding and decoding maps. 
The corresponding result applies to general local stochastic (not necessarily Pauli) noise and leads to the following result. Here failure
refers to the fault-tolerant circuit not acting identically as the original circuit when seen as a CPTP map. 
\begin{result}[Robust circuit implementation]\label{res:resultrobust}
Let $N\in\mathbb{N}$ and $\varepsilon\in [0,1]$ be arbitrary. 
There is threshold error strength $p_0(N,\varepsilon)=O(\varepsilon/N) $ such that the following holds.
Any  circuit~$\cC$ on~$N$ qubits composed of gates belonging to a certain universal gate set 
can be implemented by circuit~$\cC_{\mathsf{FT}}$ under local stochastic  noise of strength~$p\leq p_0(N,\varepsilon)$ 
 except with failure probability~$\varepsilon$, with 
qubit respectively depth overhead which is polylogarithmic
in~$1/\varepsilon$ and the circuit size of~$\cC$.
\end{result}
\noindent We refer to Theorem~\ref{thm:robustimplementationstochastic} below for the exact form of~$p_0(N,\varepsilon)$. The $O(1/N)$-dependence of ths function here is  optimal e.g., when the number of output qubits is~$N_{\mathsf{out}}=\Omega(N)$.

\paragraph{1D-local robust implementation of Clifford circuits.}
Combining our construction of robust circuit implementation 
with the fault tolerance gadgets of  Ref.~\cite{stephens2007universal}, and our analysis of circuit transformations for changing the locality of circuits, we obtain the following ``geometrically local'' version of Step~\ref{res:resultrobust}. It applies to circuits composed of Clifford unitaries and local stochastic Pauli noise only.
\begin{result}[Robust $1D$-local Clifford circuit implementation]\label{res:resultft1dlocal}
Let $N\in\mathbb{N}$ and $\varepsilon\in [0,1]$ be arbitrary. 
There is threshold error strength $p_0(N,\varepsilon)=O((\varepsilon/N)^{\Theta(1)})$ such that the following holds.
Any $1D$-local (unitary) Clifford circuit~$\cC$ on~$N$ qubits
can be implemented by a $1D$-local circuit~$\cC^{\mathsf{1D}}$ under local stochastic Pauli noise of strength~$p\leq p_0(N,\varepsilon)$, 
 except with failure probability~$\varepsilon$, with polylogarithmic  
qubit respectively depth overhead.
\end{result}
\noindent A detailed statement including explicit bounds on $p_0(N,\varepsilon)$ is given in Theorem~\ref{thm:implementation1D}. Importantly, this threshold error strength depends polynomially on $\varepsilon$ and  inversely polynomially  on~$N$. Such a dependence is unavoidable because  the constructed circuit 
takes as input physical (i.e., unencoded) qubits, and outputs unencoded qubits. For the goal of entanglement generation, we will consider very simple circuits~$\cC$ with $N=O(1)$.

\paragraph{$1D$-local robust implementation with  constant qubit lifespan.}
As an intermediate next step, we modify
the circuit~$\cC^{\mathsf{1D}}$ given in Step~\ref{res:resultft1dlocal}
by replacing Clifford gates by  gate teleportation circuits, and postponing Pauli corrections to the end of the circuit, see Section~\ref{sec:qubitresets}. This results in a new circuit~$\cC^{\mathsf{1D,res}}$ with the special property that each qubit is active only for a constant amount of time before being reset (i.e., reinitialized in the state~$\ket{0}$). 
In more detail, let the lifespan of a qubit
be the maximum number of gate layers between between two successive  (re)initializations of the qubit. The qubit lifespan of the circuit refers to the maximum lifespan of any qubit. In this terminology, we establish the following (see Theorem~\ref{thm:1Dresets} for details).
\begin{result}[Robust $1D$-implementation of Clifford circuits with mid-circuit resets]
Let $N\in\mathbb{N}$ and $\varepsilon\in [0,1]$ be arbitrary. 
There is threshold error strength $p_0(N,\varepsilon)=O((\varepsilon/N)^{\Theta(1)})$ such that the following holds.
Any $1D$-local (unitary) Clifford circuit~$\cC$ on~$N$ qubits
can be implemented by a $1D$-local circuit~$\cC^{\mathsf{1D,res}}$ under local stochastic Pauli noise of strength~$p\leq p_0(N,\varepsilon)$, 
 except with failure probability~$\varepsilon$, with polylogarithmic  
qubit respectively depth overhead. Furthermore, the circuit~$\cC^{\mathsf{1D,res}}$ has constant qubit lifespan.
\end{result}
The circuit~$\cC^{\mathsf{1D,res}}$ can be seen as hybrid
between measurement-based computation (MBC) and the quantum circuit model: gates are implemented by measurements on a resource state, but unlike the MBC, new resource states are prepared on the fly throughout the computation.

\paragraph{$2D$-local constant-depth implementation by a space-time transformation.}
As a final step, we parallelize the $1D$-local circuit~$\cC^{\mathsf{1D,res}}$ by replacing each reinitializing of a  qubit by the introduction of a new physical qubit in the state~$\ket{0}$, see Section~\ref{sec:2dsimulation}. The resulting circuit~$\cC^{\mathsf{2D}}$ has the property that qubits can be arranged in a rectangular grid in such a way that the circuit becomes $2D$-local. Furthermore, it is of constant depth as the measurements can be applied simultaneously. 
The $2D$-grid is of linear dimensions $\ell_X\times \ell_Y$, where 
$\ell_X$ is linear in the number $N(\cC^{\mathsf{1D,res}})$ of qubits of the circuit~$\cC^{\mathsf{1D,res}}$, whereas~$\ell_Y$ is linear in the circuit depth~$\mathsf{depth}(\cC^{\mathsf{1D,res}})$.  In other words,  the construction of~$\cC^{\mathsf{2D}}$ ``unwraps'' the circuit~$\cC^{\mathsf{1D,res}}$ in time, transforming time (circuit depth) into space (qubits). We obtain the following, see Theorem~\ref{thm:2Dimplementation} for details.
\begin{result}[Robust constant-depth $2D$-local Clifford circuit implementation]\label{re:result2Dconstantdepth}
Let $N\in\mathbb{N}$ and $\varepsilon>0$ be arbitrary.
Then there is a  threshold error strength $p_0^{\mathsf{2D}}(N,\varepsilon)=O((\varepsilon/N)^{\Theta(1)})$ such that the following holds.
Let $\cC$ be a $1D$-local (unitary) Clifford circuit on~$N$ qubits, and let
$\upperboundL\geq |\mathsf{Loc}(\cC)|)$ be an upper bound on the number of circuit locations of~$\cC$. Then $\cC$ can be implemented by a $2D$-local constant-depth circuit~$\cC^{\mathsf{2D}}$ under local stochastic Pauli noise of strength~$p\leq p_0^{\mathsf{2D}}(N,\varepsilon)$, 
 except with failure probability~$\varepsilon$.
The circuit~$\cC^{\mathsf{2D}}$ acts on a rectangular array of~$\ell_X\times \ell_Y$ qubits, where 
\begin{align}
\ell_X &=N(\cC)\cdot \Theta\left(\mathsf{poly}(\log 1/\varepsilon, \log \upperboundL\right)\\
\ell_Y &=\mathsf{depth}(\cC)\cdot \Theta\left(\mathsf{poly}(\log 1/\varepsilon, \log \upperboundL)\right)\ .
\end{align}
Its input and output qubits are located at opposite sides of the array, and are spaced equidistantly with separation
\begin{align}
\mathsf{dist}=\Theta\left(\mathsf{poly}(\log 1/\varepsilon, \log \upperboundL)\right)\ .
\end{align}
\end{result}
The result expressed by Step~\ref{re:result2Dconstantdepth} immediately gives rise to fault-tolerant one-shot entanglement generation protocols 
in~$2D$. For instance, we can apply the construction to a simple  two-qubit Clifford circuit~$\cC^{\mathsf{prep}}$ which prepares a Bell state from a product state. 
Setting $\upperboundL:=e^{R}$, this gives  a fault-tolerant  entanglement generation protocol generating a Bell pair on qubits separated by a distance~$\Theta(R)$ on an (essentially) square grid, see Lemma~\ref{lem:squaregridstategeneration}.

Our main result, expressed as Theorem~\ref{thm:2Dentanglementgeneration}, gives an entanglement generation protocol for distance-$\Theta(R)$ entanglement requiring a rectangular lattice of linear dimensions $\Theta(\mathsf{poly}(\log R))\times\Theta(R)$ only. It is obtained by constructing a circuit~$\cC^{2D}$ (using Step~\ref{re:result2Dconstantdepth}) realizing the identity channel between two distant qubits; this circuit is applied to the output of a (local) Bell pair preparation circuit. We refer to Theorem~\ref{thm:2Dentanglementgenerationgeneral} for details.

\section*{Conclusions and open problems}
To our knowledge, our one-shot $2D$-local protocol for entanglement generation is the first fully worked out proposal of this kind with an analytically proven fault tolerance threshold error strength against local stochastic Pauli noise (see Theorem~\ref{thm:2Dentanglementgeneration}). The established threshold value is expressed  in terms of the threshold~$p_*$ of the fault tolerance construction for fault-tolerant quantum computing based on a bilinear qubit array (see Ref.~\cite{stephens2007universal}), as well as combinatorial constants $(\Lambda_{\mathsf{1D,res}},\lambda_{\mathsf{1D,res}})$ depending
on the circuit transformations we use  (see Eq.~\eqref{eq:pzerooneDresdef} and Theorem~\ref{thm:2Dentanglementgenerationgeneral} below). Estimates on the value of~$p_*$ (for closely related settings) can be found in Ref.~\cite{stephens2007universal}, whereas the constants $(\Lambda_{\mathsf{1D,res}},\lambda_{\mathsf{1D,res}})$ 
could easily be computed from the expressions we provide. However, our focus is here is not on specific numbers, but rather on the problem of  establishing a threshold theorem in the first place. Indeed, in many places, we choose to use overly conservative bounds to simplify the analysis. As is common in fault tolerance considerations of this kind, analytical thresholds are typically orders of magitude smaller than actual thresholds; here follow-up work could provide more insights e.g., by Monte-Carlo simulations.

In terms of fault tolerance and the use of entanglement generation protocols, several problems remain. For example, while our construction is based on an $L$-fold  concatenated code and associated gadgets, other codes could provide benefits such as lower thresholds, a larger entanglement yield, or optimal width of the necessary $2D$ lattice (i.e., scaling as $\log R$ rather than $\poly(\log R)$, matching the lower bound of Ref.~\cite{choe2024long}). For example, a new scheme for fault-tolerant quantum computation in~$1D$ was recently proposed in Ref.~\cite{bergamaschi2024fault}. Inspired by~\cite{yamasaki2024time} which was based on  concatenated Hamming codes, the proposal of Ref.~\cite{bergamaschi2024fault} 
yields a constant encoding rate (i.e., a constant qubit overhead) as well as a quasi-polylogarithmic time overhead. It is natural to ask to what extent our approach can be applied to this construction. A new resource analysis would be needed to cover such constructions. First steps in this direction (but without considering locality) are taken in~\cite{belzig2026constant}. We note, however, that some of the usual metrics in quantum fault tolerance, such as constant qubit or depth/time overhead,
are not necessarily what is  needed in our setting: Indeed,  the depth and qubit overhead can translate into the targeted distance of the generated long-range entanglement (hence need not necessarily be minimized in our setup).

Extending and applying our work beyond entanglement distribution poses fascinating challenges. For example, a natural problem which may be accessible with further fault-tolerant analysis would be to establish whether our protocol can be parallelized, i.e., whether many EPR pairs can be produced simultaneously up to local stochastic noise acting on their joint system.  Such a parallel repetition theorem may be used to render general fault tolerance schemes geometrically local in~$2D$ following an approach similar to that used in Ref.~\cite{choe2025fault}. Furthermore,  adapting the techniques of Ref.~\cite{bravyi2020quantum} from $3D$ to $2D$, our construction could also serve as a building block for establishing that noisy $2D$-local constant-depth quantum circuits are computationally more powerful that comparable (non-local) classical circuits.

On the level of many-body physics, our work provides the first  examples of a short-range entangled state in $2D$ with long-range localizable entanglement robust to local stochastic noise. While the state belongs to the trivial, i.e.,  product state phase when 
classifying many-body states in terms of equivalence under local constant-depth circuits (or constant-time evolution under a local Hamiltonian, see e.g.,~\cite{PRBChenetal10}), it nevertheless acts as a resource for generating long-range entanglement by local operations. Furthermore, this property is preserved under local stochastic Pauli noise. In a similar vein, we show that thermal states of a certain $2D$-local Hamiltonian at sufficiently low (constant) temperature exhibit long-range localizable entanglement. This is akin to the cluster-state construction in 3D pioneered by Raussendorf, Bravyi and Harrington~\cite{raussendorf2005long}. 

\paragraph{Outline}
Sections~\ref{sec:backgroundquantumcircuitsnoise} 
and~\ref{sec:fault tolerance} provide background material and introduce basic concepts and notation: Section~\ref{sec:backgroundquantumcircuitsnoise} discusses adaptive and non-adaptive quantum circuits, and different noise models. In Section~\ref{sec:fault tolerance}  we give a high-level introduction to quantum fault tolerance using code concatenation. We also discuss quantum fault tolerance based on geometrically constrained two-qubit interactions.

We then derive our main results: In Section~\ref{sec:noisyencodingdecoding}, we extend the standard framework of quantum fault tolerance to include (concatenated) encoding and decoding circuits subject to noise.   In Section~\ref{sec:errorpropagation}, we discuss how Pauli errors propagate through (possibly adaptive) Clifford circuits. 
In Section~\ref{sec:ftimplementcliff}, we show how to implement a $1D$-local Clifford circuit fault-tolerantly in $1D$.  
In Section~\ref{sec:qubitresets}, we give a $1D$ fault-tolerant implementation of Clifford circuits which only uses qubits of limited lifespan.  We then ``unravel'' this construction in time, giving a fault-tolerant constant-depth implementation  of a $1D$-local Clifford circuit based on a $2D$ array of qubits. 
Finally, in Section~\ref{sec:longdistancegeneration} we apply this result to construct one-shot entanglement generation protocols.

\section{Quantum circuits and noise\label{sec:backgroundquantumcircuitsnoise}}
Here we introduce basic notions and notation. In Section~\ref{sec:quantumcircuitsaschannels} we discuss quantum circuits. Contrary to commonly used terminology (where circuits typically only refer to sequences of unitary gates), we describe circuits in terms of quantum channels, and include both state preparation and measurement steps in our description.
This is particularly useful when discussing mid-circuit measurements and subsequent adaptive operations which depend on the measurement results (see Section~\ref{sec:adaptivequantumcirc}). 

We then discuss noisy implementations of circuits. We introduce noise models of increasing generality, starting with Pauli noise in Section~\ref{sec:paulinoisecircuits}. Local stochastic Pauli noise is discussed in Section~\ref{sec:localstochasticpaulinoise}.
In Section~\ref{sec:locfaults} we review the notion of fault-locations and the definition of a quantum comb.  Finally, in Section~\ref{sec:localstochasticnoise}, we discuss the definition and basic properties of  general local stochastic noise.

\subsection{Quantum circuits as quantum channels\label{sec:quantumcircuitsaschannels}}
It will be convenient to describe quantum circuits in terms of quantum channels and quantum instruments.  Preparation amounts to the quantum channel (i.e., the completely positive trace-preserving (CPTP) map)~$\cE_{\mathsf{prep}}(\rho)=\tr(\rho)\proj{0}$, and application of a unitary~$U$  corresponds to the quantum channel~$\cU(\rho)=U\rho U^\dagger$. A measurement is described by a quantum instrument, i.e., a family $\{\cM^x\}_{x\in \cX}$ of completely positive (CP) maps indexed by the set~$\cX$ of possible measurement results such that $\sum_{x\in \cX}\cM^x$ is trace-preserving. When applying such a measurement to a state~$\rho$, the outcome~$x\in \cX$ is observed with probability~$p_x:=\tr(\cM^x(\rho))$, and the post-measurement state upon observing~$x$ is $\rho_x:=\cM^x(\rho)/p_x$.  Specifically, measurement of a qubit in the orthonormal basis~$\{\ket{0},\ket{1}\}$ is described by the instrument~$\cM_{\mathsf{meas}}:=\{\cM^x\}_{x\in \{0,1\}}$ where~$\cM^x(\rho)=\langle x,\rho x\rangle \cdot \proj{x}$.

The composition of a quantum channel given by a CPTP map~$\cE$ followed by a measurement specified by an instrument~$\{\cM^x\}_{x\in \cX}$  is the instrument~$\{\cM^x\circ \cE\}_{x\in\cX}$. Similarly,  a CPTP map~$\cE$ applied after an instrument~$\{\cM^x\}_{x\in \cX}$ corresponds to the instrument~$\{\cE\circ \cM^x\}_{x\in\cX}$. Applying a CPTP map~$\cE$ to the first factor a bipartite system and an instrument~$\{\cM^x\}_{x\in\cX}$ to the second factor amounts to applying the instrument~$\{\cE\bigotimes\cM^x\}_{x\in\cX}$.
By definition, we may think of a CPTP map~$\cE$ as an instrument with an outcome set~$\cX$ of size~$|\cX|=1$. We denote any such outcome set as $\cX=\{\bot\}$ and often drop this symbol altogether (e.g., identifying $\{0,1\}\times\{\bot\}$ with $\{0,1\}$). This is consistent with the following general composition properties of instruments: 
Two instruments~$\{\cM^x\}_{x\in \cX}$ and $\{\cN^y\}_{y\in \cY}$ applied consecutively amount to the instrument~$\{\cN^y\circ\cM^x\}_{(x,y)\in \cX\times \cY}$. The application of two instruments~$\{\cM^x\}_{x\in \cX}$ and $\{\cN^y\}_{y\in \cY}$ on each factor of a bipartite system is described by the instrument~$\{\cM^x\otimes\cN^y\}_{(x,y)\in \cX\times \cY}$.

We sometimes consider circuits where measurement results are used to 
determine subsequent actions. A classically controlled quantum channel is a family~$\cE:=\{\cE^x\}_{x\in \cX}$ of quantum channels. The composition of a quantum instrument~$\cM=\{\cM^x\}_{x\in\cX}$ with a classically controlled channel $\cE:=\{\cE^x\}_{x\in \cX}$ is the instrument
\begin{align}
\cE\circ\cM:=\{\cE^x\circ \cM^x\}_{x\in\cX}\ .
\end{align}
We typically assume that (a description of) the map~$x\mapsto \cE^x$  can be efficiently computed. An example is the case where the measurement extracts the syndrome of an error correcting code, and the associated classically controlled operation is a correction. The composition of the syndrome measurement and the classically controlled correction  is called a decoder.

Very often, $\cX=\{0,1\}^m$ consists of $m$-bit strings. In this situation, any channel can be classically controlled by a subset~$S:=\{i_1<\ldots<i_r\}\subset [m]:=\{1,\ldots,m\}$ of bits. Here the classically controlled channel is of the form~$\cE:=\{\cE^z\}_{z\in \{0,1\}^r}$ and gives rise to the instrument
\begin{align}
\{\cE^{\pi_S(x)}\circ\cM^x\}_{x\in \{0,1\}^m}\label{eq:compositionpis}
\end{align}
where $\pi_S(x)=(x_{i_1},\ldots,x_{i_r})$ projects on the $r$~control bits. Slightly abusing notation, we often simply write~$\cE\circ\cM$ for the instrument defined by Eq.~\eqref{eq:compositionpis} suppressing the explicit dependence on the subset~$S$. Similar reasoning applies for permutations of these control bits.

\subsection{Adaptive and non-adaptive quantum circuits\label{sec:adaptivequantumcirc}}
We typically consider circuits composed of a set~$\cO$ of basic operations. For example, in the case of Clifford circuits,~$\cO$ could consist of Clifford unitaries, single-qubit stabilizer state preparations and computational basis measurements.
Locality restrictions (e.g., the ability to only use nearest-neighbor two-qubit operations on some graph) can then be encoded into the choice of~$\cO$.
\begin{definition}[Valid set of operations]
A set of operations~$\cO$ for $n$~qubits~$Q_1\cdots Q_n\cong (\mathbb{C}^2)^{\otimes n}$
is a set of instruments and classically controlled channels of the form~$\cO_S$ where $S\subseteq [n]$ specifies  a subset
 $Q_S:=\bigcup_{s\in S}\{Q_s\}$ of qubits and $\cO_S$ is an instrument acting on the qubits~$Q_S$. It is called valid if it contains the identity~$\mathsf{id}_{\cB(Q_S)}$ for any non-empty subset~$S\subset[n]$, i.e., any subset of qubits.
\end{definition}

\begin{definition}[Circuits, circuit depth and operation layers]\label{def:nonadaptive circuit}
Let $\cO$ be  a valid set of operations. 
A  depth-$1$ circuit~$\mathsf{C}$ composed of operations belonging to~$\cO$ (or simply ``composed of~$\cO$'') is defined as a tensor product
\begin{align}
\mathsf{C}&=\bigotimes_{j=1}^r \cO_{S_j}
\end{align}
where the sets~$S_1,\ldots,S_j$ form a disjoint partition $[n]=\bigcup_{j=1}^r S_j$ of $[n]:=\{1,\ldots,n\}$ and $\cO_{S_j}\in \cO$ for every $j\in [n]$. A depth-$D$ circuit~$\cC$ is a composition 
\begin{align}
\cC=\cC_D\circ\cdots \circ\cC_1\label{eq:compositiondepthD}
\end{align} of~$D$ depth-$1$ circuits~$\cC_1,\ldots,\cC_D\in \mathsf{Circ}^{(1)}(\cO)$ we refer to as (operation) layers of the circuit.  
We also write~$\mathsf{depth}(\cC)=D$ for a circuit composed of $D$ operation layers.
\end{definition}

\begin{definition}[Non-adaptive and adaptive circuits]
We call a circuit~$\cC$ as in  Definition~\ref{def:nonadaptive circuit} non-adaptive if none of the operations~$\cO_{S_j}$ used (and thus none of the operations~$\cC_1,\ldots,\cC_D$) are classically controlled quantum channels. 
In case $\cC$ contains classically controlled quantum channels and these appear in Eq.~\eqref{eq:compositiondepthD}, the corresponding composition has to be consistent with the temporal order and is left implicit (see the remark after Eq.~\eqref{eq:compositionpis}). In this case we refer to~$\cC$ as an adaptive circuit. 
\end{definition}
Consider a non-adaptive circuit~$\cC$. We note that in Definition~\ref{def:nonadaptive circuit}, each operation~$\cE_{S_j}$  is an instrument
which may be a measurement. In particular,  for each~$t\in [D]$, the layer~$\cC_t$ of operations is an instrument
\begin{align}
\cC_t=\{\cC^{x_t}_t\}_{x_t\in \cX_t}\qquad\textrm{ where }\qquad \cX_t:=\begin{cases}
\{0,1\}^{m_t}\qquad &\textrm{ if }m_t>0\\
\{\bot\}&\textrm{ if }m_t=0 \ ,
\end{cases}
\end{align}
and where $m_t$ is the number of single-qubit measurements applied in the gate layer~$\cC_t$.

The following makes the notion of temporal order more explicit: An adaptive circuit~$\cC$ is an instrument $\cC=\{\cC^x\}_{x\in \cX}$ where $\cX:=\cX_1\times\cdots\times\cX_D$ and
where for each ``history'' $x=(x_1,\ldots,x_D)\in \cX_1\times\cdots\times\cX_D$ of measurement results, the CP map~$\cC^x$ is the composition
\begin{align}
\cC^x&=\cC^{x^{(D)}}\circ\cdots\circ \cC^{x^{(1)}}\ .
\end{align}
Here we write $x^{(t)}=(x_1,\ldots,x_t)$ for the ``history'' up to time~$t\in [D]$.

\subsection{Pauli noise on circuits\label{sec:paulinoisecircuits}}
It is convenient to represent a circuit by a diagram where boxes are used for individual operations, and input and output qubit systems are represented by wires. 
We call~$\cW_{\cC}:=[D]\times [n]$ the set of space-time qubit locations/wires of~$\cC$: each element $(t,j)\in\cW_{\cC}$ specifies a ``wire'' representing the qubit~$j$ after time~$t$ (i.e., after the gate layer~$\cC_t$). 
We now consider noisy implementations where each such wire is affected by a Pauli error. Let $\pauli_n$ denote the Pauli group on $n$~qubits generated by single-qubit Pauli-$X$ and Pauli-$Z$-operators.
\begin{definition}
Let $\cC=\cC_D\circ\cdots \circ\cC_1$ be a depth-$D$ circuit on $n$~qubits. A Pauli error/Pauli fault~$F$ on~$\cC$ is a Pauli-valued function~$F:\cW_{\cC}\rightarrow \pauli_1$ associating a single-qubit Pauli~$F(w)$ to every wire~$w\in \cW_{\cC}$.
The  support of~$F$ is the set of wires
\begin{align}
\supp(F):=\left\{w\in \cW_{\cC}\ |\ F(w)\not\propto I\right\}
\end{align}
on which the error acts non-trivially. We write 
\begin{align}
F\bowtie\cC:=(\cF_D\circ\cC_D)\circ\cdots\circ(\cF_1\circ\cC_1)
\end{align}
for the alternating composition of each layer~$\cC_t$ with the $n$-qubit quantum channel 
\begin{align}
\cF_t(\rho)&= \left(\bigotimes_{j=1}^n \cF(t,j)\right)(\rho)
\end{align}
where $\cF(t,j)(\rho)=F(t,j)\rho F(t,j)^\dagger$ is the single-qubit channel conjugating by the Pauli~$F(t,j)$.  We call~$F\bowtie\cC$ a noisy execution of~$\cC$ with Pauli error~$F$.
\end{definition}
Observe that this definition applies to both  adaptive as well as non-adaptive circuits.

\subsection{Local stochastic Pauli noise\label{sec:localstochasticpaulinoise}}
We formulate our claims in terms of local stochastic Pauli noise, a paradigmatic noise model 
introduced by Gottesman~\cite{gottesman2014fault} which commonly studied in quantum error correction. We note, however, that the fault tolerance threshold theorems we rely on apply more generally to adversarial local stochastic noise of possibly non-Pauli form. 
Indeed, the discussion about noisy decoding can also easily be adapted to such a more general scenario.

We note, however, that the reduction from adaptive to non-adaptive Clifford circuits requires noise of Pauli form. As this step is essential in our construction, it is a priori unclear how to obtain our main results without the assumption of Pauli noise. 
\begin{definition}[Local stochastic Pauli noise]
Let $\mathsf{C}=\cC_D\circ \cdots \circ\cC_1$ be a depth-$D$ circuit on $n$~qubits.
Let $\cW_{\mathsf{C}}:=[D]\times [n]$ be the set of wires of~$\mathsf{C}$. Stochastic Pauli noise on~$\mathsf{C}$ is a random variable~$F$ which takes as values elements in~$\pauli_1^{\cW_{\mathsf{C}}}$, i.e., functions~$f:\cW_{\mathsf{C}}\rightarrow \pauli_1$. In other words, it associates a random single-qubit Pauli~$F(w)$ to every wire~$w\in \cW_{\cC}$ according to some joint distribution. 
The (random) CPTP map $F\bowtie\cC$ is called a noisy implementation of~$\mathsf{C}$ with stochastic noise~$F$.

Stochastic Pauli noise~$F$ is called local of strength~$p\in [0,1]$ if
\begin{align}
\Pr_F\left[W\subseteq \supp(F)\right]&\leq p^{|W|}\qquad\textrm{ for any subset }\qquad W\subseteq\cW_{\mathsf{C}}
\end{align}
of wires.  We write $F\sim \cN^{\pauli}_{\cC}(p)$ if $F$ is local stochastic Pauli noise of strength~$p$ on the circuit~$\mathsf{C}$. 
\end{definition}
It will be convenient to introduce the following definitions: If $F_1$ and $F_2$ are stochastic Pauli noise on a circuit~$\mathsf{C}$, then $F_1\cdot F_2$ is the local stochastic Pauli noise defined in terms of $(F_1,F_2)$ as
\begin{align}
(F_1\cdot F_2)(w)&:=F_1(w)F_2(w)\qquad\textrm{ for all }\qquad w\in \cW_{\mathsf{C}}\ .
\end{align}

Local stochastic Pauli noise satisfies a number of basic facts which will be useful for us.
The following is an adaptation of~\cite[Lemma 11]{bravyi2020quantum}   to the ``spacetime'' setting of local stochastic Pauli noise acting on a circuit~$\mathsf{C}$ (i.e., on a set~$\cW_{\mathsf{C}}$ of wires).
\begin{lemma}[Properties of local stochastic Pauli noise]\label{lem:pauli noise properties}
Let $\mathsf{C}=\cC_D\circ \cdots \circ\cC_1$ be a depth-$D$ circuit on $n$~qubits.
Let $\cW_{\mathsf{C}}:=[D]\times [n]$ be the set of wires of~$\mathsf{C}$. 
    \begin{enumerate}[(i)]
        \item\label{item:pauli noise property 1} Suppose $F \sim \cN_{\mathsf{C}}^{\pauli}(p)$, and $F^\prime$ is stochastic Pauli noise on~$\mathsf{C}$ 
         such that $\supp(F^\prime) \subseteq \supp(F)$ with certainty. Then $F^\prime \sim \cN_{\mathsf{C}}^{\pauli}(p)$.
         \item\label{item:permutationclaimb}
         Let $F\sim \cN_{\mathsf{C}}^{\pauli}(p)$ be local stochastic Pauli noise on~$\cC$. Let $\pi:\cW_\cC\rightarrow \cW_\cC$ be a permutation. Then $F\circ \pi\sim \cN_{\cC}^{\pauli}(p)$. 
         
         \item\label{item:pauli noise property 3} Suppose $F_i \sim \cN_{\mathsf{C}}^{\pauli}(p_i)$, $1\leq i \leq r$ are random Paulis which may be dependent. Then $F_1\cdot \dots \cdot F_r \sim \cN^{\pauli}_{\mathsf{C}}(p^\prime)$, where $p^\prime = r \max_i\{p_i^{1/r}\}$.
    \end{enumerate}
\end{lemma}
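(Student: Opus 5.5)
Each of the three claims follows directly from the definition $\Pr_F[W\subseteq\supp(F)]\leq p^{|W|}$, which must be checked for every fixed subset $W\subseteq\cW_{\mathsf{C}}$.

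For~\eqref{item:pauli noise property 1}, we use event inclusion. If $\supp(F')\subseteq\supp(F)$ holds with certainty, then the event $\{W\subseteq\supp(F')\}$ is contained in $\{W\subseteq\supp(F)\}$. Hence
\begin{align}
\Pr[W\subseteq\supp(F')]\leq\Pr[W\subseteq\supp(F)]\leq p^{|W|}.
\end{align}

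For~\eqref{item:permutationclaimb}, observe that $\supp(F\circ\pi)=\pi^{-1}(\supp(F))$. Therefore $W\subseteq\supp(F\circ\pi)$ if and only if $\pi(W)\subseteq\supp(F)$. Since $\pi$ is a bijection, $|\pi(W)|=|W|$, and the bound $p^{|W|}$ follows immediately.

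The main work is~\eqref{item:pauli noise property 3}, which I would prove with a union bound over ways of splitting $W$. Up to phases, $\supp(F_1\cdots F_r)\subseteq\bigcup_i\supp(F_i)$, because a product of Paulis on a wire is nontrivial only if some factor is. So if $W\subseteq\supp(F_1\cdots F_r)$, each $w\in W$ lies in some $\supp(F_i)$. Assigning each $w$ the smallest such index gives a partition $W=W_1\sqcup\cdots\sqcup W_r$ with $W_i\subseteq\supp(F_i)$. Union-bounding over the at most $r^{|W|}$ such partitions gives
\begin{align}
\Pr[W\subseteq\supp(F_1\cdots F_r)]\leq\sum_{(W_1,\dots,W_r)}\Pr\Big[\bigcap_i\{W_i\subseteq\supp(F_i)\}\Big].
\end{align}

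The obstacle is that the $F_i$ may be dependent, so the joint probability does not factor. Instead, I bound the intersection by its geometric mean. Since an intersection has probability at most the minimum of the individual probabilities, and the minimum is at most the geometric mean,
\begin{align}
\Pr\Big[\bigcap_i\{W_i\subseteq\supp(F_i)\}\Big]\leq\prod_i\Pr[W_i\subseteq\supp(F_i)]^{1/r}\leq\prod_i p_i^{|W_i|/r}\leq\big(\max_i p_i^{1/r}\big)^{|W|}.
\end{align}
Multiplying by the at most $r^{|W|}$ partitions gives $\big(r\max_i p_i^{1/r}\big)^{|W|}=(p')^{|W|}$, as claimed. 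The geometric-mean step is the only nontrivial idea, and it is what produces the $p_i^{1/r}$ exponent in $p'$.
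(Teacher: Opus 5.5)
Your proof is correct and follows essentially the same route as the paper: a union bound over the $r^{|W|}$ ordered partitions of $W$, with the dependence between the $F_i$ handled by bounding the joint probability by a minimum of the individual probabilities. The only difference is that the paper bounds this minimum by choosing a block $W_{k_*}$ with $|W_{k_*}|\geq |W|/r$, whereas you use the inequality between the minimum and the geometric mean; both give the same bound $(\max_i p_i)^{|W|/r}$.
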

\begin{proof}
Claims~\eqref{item:pauli noise property 1} and~\eqref{item:permutationclaimb} immediately follow from the definition.

 For Claim~\eqref{item:pauli noise property 3}, consider a fixed subset $W \subseteq \mathcal{W}_{\mathsf{C}}$ of wires.  
 Summing over all disjoint partitions~$W=\bigcup_{j=1}^r W_j$ into (possibly empty, ordered) subsets $W_1,\ldots,W_r$, we have
     \begin{align}
        \prob_{F_1,\ldots,F_r}[W\subseteq \supp(F_1\cdot \cdots \cdot F_r)] &
               \leq \sum_{W=\bigcup_{j=1}^r W_j}\Pr\left[W_k\subseteq \supp(F_k)\textrm{ for all }k\in [r]\right]\\
               &\leq  \sum_{W=\bigcup_{j=1}^r W_j} \min_{k\in [r]} \Pr\left[W_k\subseteq \supp(F_k)\right]\\
               &\leq   \sum_{W=\bigcup_{j=1}^r W_j} \min_{k\in [r]} p_k^{|W_k|}\ .\label{eq:minpkwks}
               \end{align}
             We note that
             for any disjoint partition~$W=\bigcup_{j=1}^r W_j$ we have 
             $\min_{k\in [r]} p_k^{|W_k|}\leq p_{k_*}^{|W_{k_*}|}$ for any fixed $k_*\in [r]$. In particular,
             using $|W|=\sum_{k=1}^r |W_k|$, we conclude that there is some $k_*$ with $|W_{k^*}|\geq |W|/r$. It follows that 
             \begin{align}
             \min_{k\in [r]} p_k^{|W_k|}\leq p_{k_*}^{|W|/r}\leq \left(\max_{k\in [r]}p_k\right)^{|W|/r}\ .\label{eq:minpkwk}
             \end{align} Inserting Eq.~\eqref{eq:minpkwk} into Eq.~\eqref{eq:minpkwks}
             and using that the number of partitions of~$W$ into $r$~(possibly empty) subsets is  $r^{|W|}$, we conclude that
             \begin{align}
                     \prob_{F_1,\ldots,F_r}[W\subseteq \supp(F_1\cdot \cdots \cdot F_r)] &\leq  \left(r\left(\max_{k\in [r]}p_k\right)^{1/r}\right)^{|W|}\ .
             \end{align}
 As $W$ was arbitrary, this gives the claim~\eqref{item:pauli noise property 3}.
\end{proof}

\subsection{Locations, faults and the causal structure of errors\label{sec:locfaults}}
In the following (see Section~\ref{sec:localstochasticnoise}) we will consider general (possibly non-Pauli) local stochastic noise. To introduce the corresponding notion, we need a few additional definitions discussed here.

The circuit $\cC$ can be organized into $\mathsf{depth}(\cC)$ layers of operations, each consisting of operations acting on disjoint subsets of qubits. In this view, we follow standard terminology and refer to each level-$0$ gadget, i.e., each physical one- or two-qubit gate, one-qubit preparation, or one-qubit measurement, as a circuit location (or simply location) of~$\mathsf{C}$. We note that ``wait locations'' where qubits are idle in a particular layer (i.e., the identity gate is applied) are also considered as circuit locations. In contrast, operations on classical data used to post-process measurement results are omitted in this definition as we will assume classical computations to be error-free. We denote the set of all circuit locations of a circuit~$\mathsf{C}$ by~$\mathsf{Loc}(\cC)$. Observe that for any circuit $\cC$ we have the upper bound 
\begin{align}
|\mathsf{Loc}(\cC)| &\leq \mathsf{depth}(\cC) \cdot N(\cC) \ \label{eq:upperboundciruitsizedep}
\end{align}
on the number of circuit locations. 

Consider a fixed subset $F\subseteq \mathsf{Loc}(\cC)$ (see Fig.~\ref{fig:originalcircuit} for an illustration)
and an error affecting exactly all locations~$F$ (and no other locations). In this situation, we refer to~$F$ as a set of fault locations. In the terminology of~\cite{gottesmantutorial}, the error is then modeled as a quantum operation~$\cE$ which is consistent with the causal structure of~$F$. Following~\cite{Chiribella2008}, we refer to such an operation as a quantum comb associated with~$F$, see Fig.~\ref{fig:noisycircuit} for an illustration, and write $\supp(\cE) = F$ for the support of a comb $\cE$. Diagrammatically, it is an object which attaches to input and output wires associated with a circuit with the locations~$F$ removed (itself a comb). This defines a circuit we denote~$\cE\bowtie\mathsf{C}$ and refer to as a noisy execution of~$\mathsf{C}$ with noise~$\cE$ inserted at the locations~$F$. We denote the set of all combs defined by a subset of locations~$F$ by~$\mathsf{Comb}_{\mathsf{C}}(F)$.

In more detail, every comb~$\cE\in\mathsf{Comb}_{\mathsf{C}}(F)$ is a sequence of completely positive trace-preserving maps (quantum channels)~$\cW_1,\ldots,\cW_K$  where $K\leq \mathsf{depth}(\cC)$, see Fig.~\ref{fig:internalstructurecomb}.  Let $\{t_1<\cdots <t_K\}\subseteq \{1,\ldots,\mathsf{depth}(\cC)\}$ be the set of layers (times) of~$\mathsf{C}$ containing a location belonging to~$F$, and let $F(t_j)\subseteq F$ be the corresponding locations in the layer~$t_j$. We assume that~$\cE$ uses an auxiliary system~$R$ of arbitrary dimension initialized in a fixed state~$\ket{0}$. Then the comb~$\cE$ applies a quantum channel~$\cW_{j}$ at time $t_j$ for each $j\in \{1,\ldots,K\}$ in succession, where
\begin{align}
\cW_j:\cB\left(R\otimes \left(\bigotimes_{r\in F(t_j)}\cH^{\mathsf{in}}_r\right)\right)\rightarrow \cB\left(R\otimes \left(\bigotimes_{r\in F(t_j)}\cH^{\mathsf{out}}_r\right)\right)
\end{align}
maps a tensor product of input Hilbert spaces~$\cH^{\mathsf{in}}_r$ (each either isomorphic to~$\mathbb{C}^2$ or~$(\mathbb{C}^2)^{\otimes 2}$) associated with location~$r\in F(t_j)$ to a tensor product of corresponding output Hilbert spaces~$\cH^{\mathsf{out}}_r$ while also acting on the reference system~$R$. This definition ensures that the comb~$\cE$ is consistent with the causal structure of the fault locations~$F$. It also ensures that -- as discussed above, a comb~$\cE\in\mathsf{Comb}_{\mathsf{C}}(F)$ can be inserted into a modified circuit obtained by removing each operation associated with a fault location~$f\in F$ of~$\mathsf{C}$. This results in a circuit (completely positive trace-preserving map) which we denote by~$\cE\bowtie\mathsf{C}$, see Fig.~\ref{fig:noisycircuit} for an illustration.

For our following proofs, it will be useful to define a way to localize the support of combs on $\cC$ to a given subcircuit (i.e., a subset of locations $L\subseteq \mathsf{Loc}_{\cC}$). Given $F \subseteq \mathsf{Loc}_{\cC}$ and $\cE \in \mathsf{Comb}_{\cC}(F)$, this procedure should lead to an associated comb $\cE^{\localization L} \in \mathsf{Comb}_{\cC}(F\cap L)$. There are several ways to define such a localization, but we opt for the simple approach (which is sufficient for our purposes) of ignoring the action of $\cE$ whenever it is not contained within $L$. That is, we define the localization of $\cE$ to $L$ by
\begin{align}
    \cE^{\localization L} := \left\{\begin{array}{ll}
        \cE & \text{if $F \subseteq L$} \\
        \cI & \text{otherwise.}
    \end{array} \right.\label{eq:combrestriction}
\end{align}
Here $\cI$ is understood to mean the trivial comb, whose support is the empty set (so that $\cI \bowtie \cC = \cC$). It is clear from this definition that the support of $\cE^{\localization L}$ is contained within $L$.

\subsection{Local stochastic noise\label{sec:localstochasticnoise}}
In the following, we define local stochastic noise on a circuit~$\mathsf{C}$; this generalizes the notion of local stochastic Pauli noise. 
For simplicity, we consider a local stochastic noise model where the error strength is parameterized by a single parameter~$p\in [0,1]$. It is straightforward to generalize e.g., to the case of location-dependent error strengths as considered e.g., in~\cite{gottesmantutorial}.

The following definitions will be convenient. For a circuit~$\mathsf{C}$, let
\begin{align}
\mathsf{comb}_{\mathsf{C}}:=\bigcup_{F\subseteq \mathsf{Loc}(\cC)} \mathsf{comb}_{\mathsf{C}}(F)
\end{align}
be the disjoint union of the set of combs over all possible subsets~$F$ of fault locations. For an element~$\cE\in\mathsf{comb}_{\mathsf{C}}$, let us write~$\supp(\cE)\subseteq \mathsf{Loc}(\cC)$ for the set of locations~$\cE$ acts on, i.e., the set defined by~$\cE\in \mathsf{comb}_{\mathsf{C}}(\supp(\cE))$.

A random variable~$\cE$ taking on values in the set of combs~$\mathsf{comb}_{\mathsf{C}}$
is called local stochastic noise of strength~$p$ on~$\mathsf{C}$ if 
\begin{align}
\Pr\left[F\subseteq \supp(\cE)\right]&\leq p^{|F|}\qquad\textrm{ for any subset }F\subseteq\mathsf{Loc}(\cC)\textrm{ of circuit locations}\ .
\end{align}

\begin{figure}
\centering
\begin{subfigure}{0.45\textwidth}
\centering
\includegraphics[width=9cm]{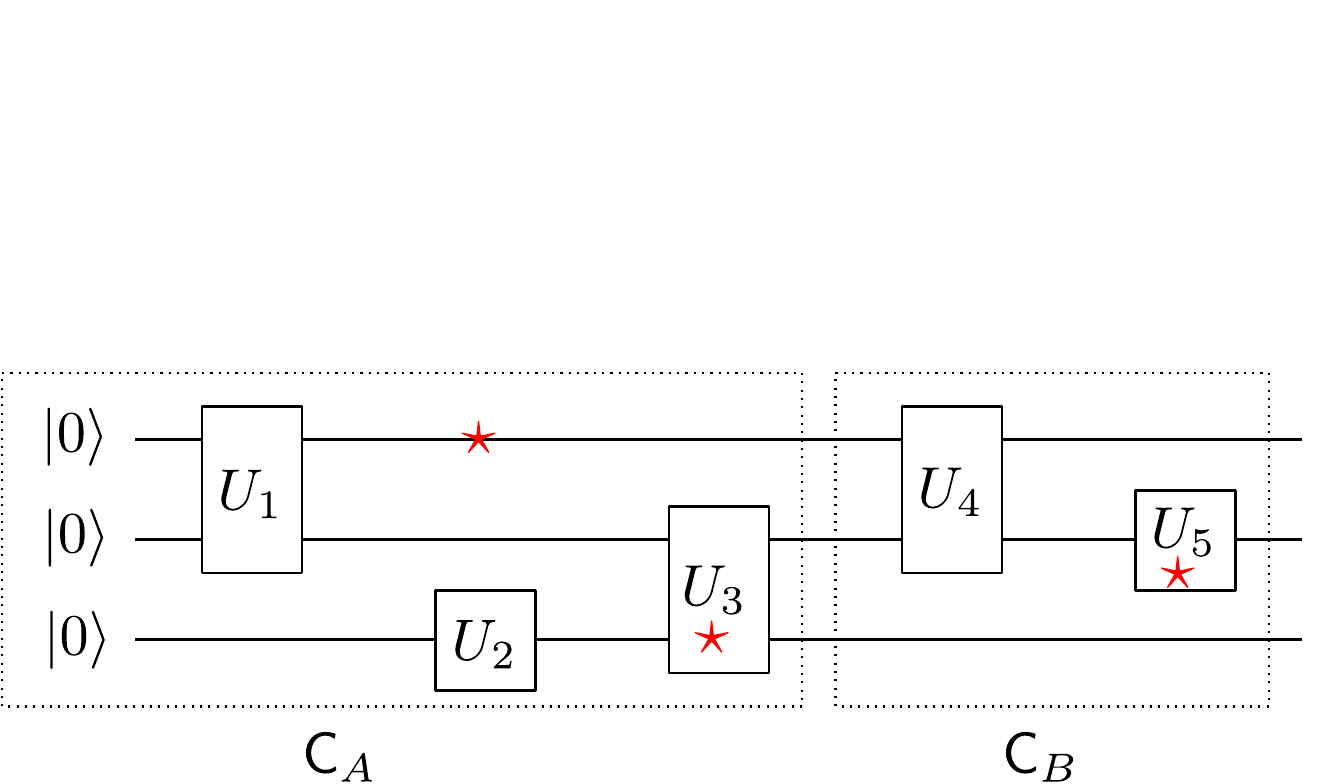}
\caption{The original circuit~$\mathsf{C}=\mathsf{C}_B\circ \mathsf{C}_A$ with a subset~$F\subset \mathsf{Loc}(\cC)$ of fault locations  (including a wait location)  marked by red stars\label{fig:originalcircuit}}
\end{subfigure}\\
\begin{subfigure}{0.45\textwidth}
\includegraphics[width=9cm]{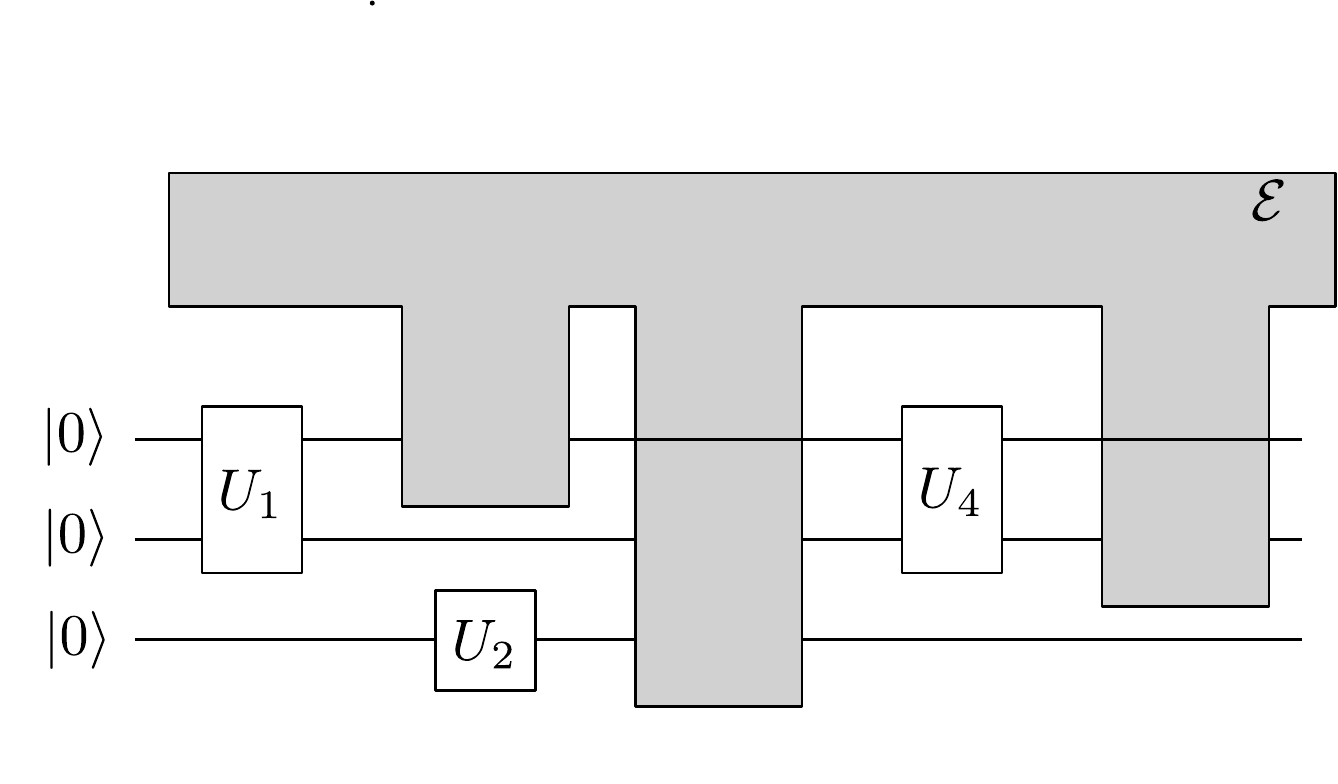}
\caption{The circuit~$\cE\bowtie \mathsf{C}$ for a particular comb~$\cE\in\mathsf{comb}_{\mathsf{C}}(F)$\label{fig:noisycircuit}}
\end{subfigure}\\
\begin{subfigure}{0.95\textwidth}
\centering
\includegraphics[width=9cm]{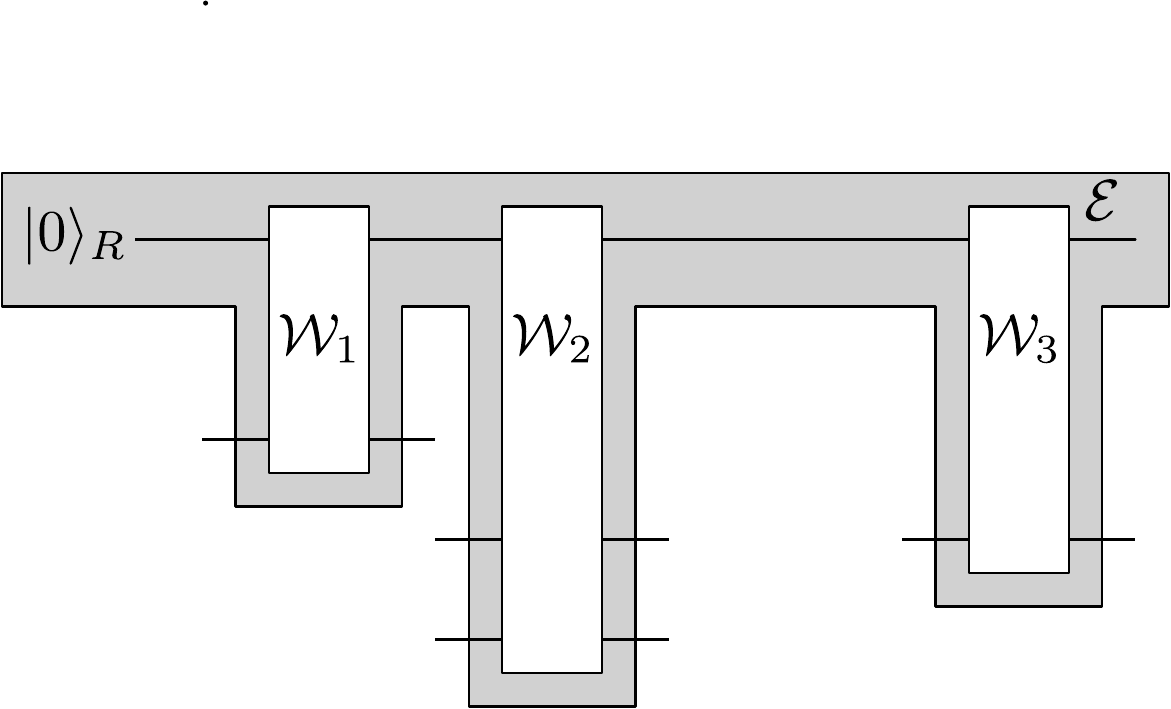}
\caption{The internal structure of the comb~$\cE$\label{fig:internalstructurecomb}}
\end{subfigure}\\
\caption{Noise defined by a comb $\cE$ affecting a subset~$F\subset\mathsf{Loc}(\mathsf{C})$ of (fault) locations of a circuit~$\mathsf{C}$}
\label{fig:localnoise}
\end{figure}
For  a circuit~$\mathsf{C}$ preparing an $N$-qubit quantum state~$\ket{\Psi}=\mathsf{C}$ and local stochastic noise~$\cE$ on $\mathsf{C}$,
the noisy implementation~$\cE\bowtie\mathsf{C}$ is a probabilistic ensemble of $N$-qubit states. 
 For a general circuit~$\cC$, a noisy implementation~$\cE\bowtie\cC$ is an ensemble of CPTP maps. 
It has the following properties:
\begin{lemma}[Local stochastic noise and circuits with quantum input and output]\label{lem:localstochasticnoiseqoutput}
We have the following:
\begin{enumerate}[(i)]
\item\label{it:circnoisyclaim}
Let $\cE$ be local stochastic noise of strength~$p$ on a circuit~$\mathsf{C}$. Then the probability that $\cE$ acts non-trivially on $\cC$ is upper bounded by
\begin{align}
    \prob[\cE \bowtie \cC \neq \cC] \leq |\mathsf{Loc}(\cC)| \cdot p\ .
\end{align}

\item\label{it:circnoisycomposedclaim}
Let $\cE$ be local stochastic noise of strength~$p$ on a circuit~$\mathsf{C}$.
Let $\cE^{\localization L}$ denote the localization of~$\cE$ to the circuit locations~$L\subseteq\mathsf{Loc}(\mathsf{C})$. 
Then
\begin{align}
\Pr\left[\cE \bowtie \cC \neq \cE^{\localization L} \bowtie \cC \right]&\leq |\mathsf{Loc}(\cC) \setminus L|\cdot p\ .
\end{align}

\item\label{it:thirdnoisydecomposedclaim}
Let $\cE,\tilde{\cE}$ be local stochastic noise of strength~$p$ on a circuit~$\mathsf{C}$.
Suppose the localizations $\cE^{\localization L}$  and $\tilde{\cE}^{\localization L}$  of~$\cE$ and $\tilde{\cE}$ to the circuit locations~$L\subseteq\mathsf{Loc}(\mathsf{C})$ satisfy $\cE^{\localization L}=\tilde{\cE}^{\localization L}$. Then 
\begin{align}
\Pr\left[\cE \bowtie \cC \neq \tilde{\cE} \bowtie \cC\right]
&\leq 2 |\mathsf{Loc}(\cC) \setminus L|\cdot p\ .
\end{align}
\end{enumerate}
\end{lemma}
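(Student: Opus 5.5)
The three claims follow from union bounds over single locations, using only the defining inequality $\Pr[F\subseteq\supp(\cE)]\leq p^{|F|}$ with singleton sets, together with the definition~\eqref{eq:combrestriction} of the localization.

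For Claim~\eqref{it:circnoisyclaim}, note that if $\supp(\cE)=\emptyset$ then $\cE$ is the trivial comb $\cI$, so $\cE\bowtie\cC=\cC$. Hence the event $\cE\bowtie\cC\neq\cC$ is contained in the event $\supp(\cE)\neq\emptyset$. That event is the union over $\ell\in\mathsf{Loc}(\cC)$ of the events $\{\ell\}\subseteq\supp(\cE)$, and each of these has probability at most $p$ by the definition with $F=\{\ell\}$. A union bound over $|\mathsf{Loc}(\cC)|$ locations gives the claim.

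For Claim~\eqref{it:circnoisycomposedclaim}, I use that by~\eqref{eq:combrestriction}, $\cE^{\localization L}=\cE$ whenever $\supp(\cE)\subseteq L$. Therefore the event $\cE\bowtie\cC\neq\cE^{\localization L}\bowtie\cC$ is contained in the event $\supp(\cE)\not\subseteq L$. This is the union over $\ell\in\mathsf{Loc}(\cC)\setminus L$ of the events $\{\ell\}\subseteq\supp(\cE)$. Each has probability at most $p$, so the union bound gives $|\mathsf{Loc}(\cC)\setminus L|\cdot p$.

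For Claim~\eqref{it:thirdnoisydecomposedclaim}, I use the triangle-type inclusion of events
\begin{align}
\{\cE\bowtie\cC\neq\tilde\cE\bowtie\cC\}\subseteq\{\cE\bowtie\cC\neq\cE^{\localization L}\bowtie\cC\}\cup\{\tilde\cE\bowtie\cC\neq\tilde\cE^{\localization L}\bowtie\cC\}\ .
\end{align}
This holds because on the complement of the right-hand side, $\cE\bowtie\cC=\cE^{\localization L}\bowtie\cC=\tilde\cE^{\localization L}\bowtie\cC=\tilde\cE\bowtie\cC$, where the middle equality uses the hypothesis $\cE^{\localization L}=\tilde\cE^{\localization L}$. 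Applying Claim~\eqref{it:circnoisycomposedclaim} to each of the two events and taking a union bound yields the factor $2$.

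None of these steps is a real obstacle. The one point needing care is Claim~\eqref{it:thirdnoisydecomposedclaim}, where $\cE$ and $\tilde\cE$ are random variables on a common probability space. I read the hypothesis $\cE^{\localization L}=\tilde\cE^{\localization L}$ as holding pointwise, that is, almost surely. No independence between $\cE$ and $\tilde\cE$ is needed, since the union bound holds for arbitrary joint distributions.
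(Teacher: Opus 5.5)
Your proposal is correct and takes the same route as the paper. Claims (i) and (ii) follow from a union bound over singleton locations combined with the definition of the localization, and claim (iii) follows from the triangle-type union bound applied together with claim (ii). Your use of event inclusions, where the paper writes equalities of events about combs, is if anything slightly more careful, since the inclusions directly control the events concerning the implemented maps.
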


\begin{proof}
For local stochastic noise~$\cE$ of strength~$p$, we have
\begin{align}
\Pr\left[\cE \neq \mathsf{id} \right]&\leq \Pr\left[\exists \ell\in \mathsf{Loc}_{\mathsf{C}}\textrm{ with } \ell\in \supp(\cE)\right]\\
&\leq \sum_{\ell\in \mathsf{Loc}(\mathsf{C})} \Pr\left[\{\ell\}\subseteq \supp(\cE)\right]\\
&\leq  |\mathsf{Loc}(\mathsf{C})|\cdot p\ .
\end{align}
Here we used the union bound in the penultimate step and the local stochasticity of~$\cE$ in the last step. This establishes Claim~\eqref{it:circnoisyclaim}. 

Claim~\eqref{it:circnoisycomposedclaim} is shown in an analogous manner. Note that by definition, we have $\cE = \cE^{\localization L}$ if and only if the support of $\cE$ is contained within $L$. Hence we obtain
\begin{align}
    \prob[\cE \neq \cE^{\localization L}] &= \prob[\supp (\cE) \nsubseteq L] \\
    &\leq \sum_{l \in \mathsf{Loc}(\cC) \setminus L} \prob[l \in \supp(\cE)] \\
    &\leq |\mathsf{Loc}(\cC) \setminus L| \cdot p\ 
\end{align}
by the union bound.

To prove Claim~\eqref{it:thirdnoisydecomposedclaim}, observe that
Claim~\eqref{it:circnoisycomposedclaim} implies that 
\begin{align}
    \prob[\cE \neq \cE^{\localization L}] &\leq |\mathsf{Loc}(\cC) \setminus L| \cdot p \\
    \prob[\tilde{\cE} \neq \tilde{\cE}^{\localization L}] &\leq |\mathsf{Loc}(\cC)\setminus L|\cdot p\ .
\end{align}
The union bound implies  $\Pr[X\neq Y]\leq \Pr[X\neq Z]+\Pr[Y\neq Z]$ for any triple $(X,Y,Z)$ of random variables. Using this and the assumption that $\cE^{\localization L} = \tilde{\cE}^{\localization L}$, the claim follows.
\end{proof}

\section{Fault tolerance against local stochastic noise}\label{sec:fault tolerance}
Here we discuss standard quantum fault tolerance constructions based on concatenated codes, closely following the approaches of~\cite{aliferis2005quantum,aliferis2007level}, see~\cite{gottesman2024surviving} of a modern introduction. 
Traditional quantum fault tolerance typically
focuses on the (approximate) realization of  an (ideal)  circuit~$\mathsf{C}$ on $N$ qubits composed of the following steps.
\begin{enumerate}[(i)]
\item\label{it:prepstepgeneralcircuit} 
Each  of the $N$ qubits  is prepared in the computational basis state~$\ket{0}$.
\item\label{it:evolutionstepgeneralcircuit}
A sequence $U_1,\ldots,U_s$ of one- and two-qubit gates (unitaries) from a specified gate set is applied.   
\item All qubits are measured in the computational basis, resulting in a sample $z$ from the distribution
\begin{align}
p(z)&=|\langle z,(U_s\cdots U_1)0^N\rangle|^2\qquad\textrm{ for }\qquad  z\in \{0,1\}^{N}\ . \label{eq:samplingfromdistribution}
\end{align}
\label{it:measurementstepgeneralcircuit}
\end{enumerate}
In the following, we refer a circuit~$\cC$ consisting of the steps~\eqref{it:prepstepgeneralcircuit}--\eqref{it:measurementstepgeneralcircuit} as a prepare-and-measure circuit.

The goal of quantum fault tolerance is to recompile the circuit~$\mathsf{C}$ into a circuit~$\mathsf{C}_{\mathsf{FT}}$
which approximately simulates the action of~$\mathsf{C}$ even when its implementation is imperfect: Even under noise, the circuit~$\cC_{\mathsf{FT}}$ should produce a sample~$z$ from a distribution~$p_{\mathsf{FT}}$ which is close in variational distance to the distribution~$p$ (see Eq.~\eqref{eq:samplingfromdistribution}).

 We view the circuit~$\mathsf{C}$ consisting of the three steps~\eqref{it:prepstepgeneralcircuit}--\eqref{it:measurementstepgeneralcircuit} and its simulation $\mathsf{C}^{\mathsf{sim}}$ as quantum channels, i.e., a completely positive trace-preserving (CPTP) maps. A noisy implementation of $\mathsf{C}^{\mathsf{sim}}$ is then given by a probabilistic ensemble over CPTP maps (corresponding to the different possible actions of the noise), and the relevant notion of approximate simulation is that this map is equal to the CPTP map realized by $\cC$ with high probability.
 
 We proceed as follows: In Section~\ref{sec:ftgadgetsdiscussion}, we discuss fault tolerance gadgets and their properties. In Section~\ref{sec:FTgagdetssimulation}, we 
 review the construction of the level-$L$ simulation~$\cC^{(L)}$ 
 of a circuit~$\cC$ based on a concatenated code. 
 In Section~\ref{sec:levelreductionftthresholdthm}, we explain how level reduction implies the fault tolerance threshold theorem, and discuss how fault tolerance can be achieved under locality constraints.

\subsection{Fault tolerance gadgets\label{sec:ftgadgetsdiscussion}}
Consider 
  an $[[n,1,d]]$-quantum error correcting code~$\cL\subset (\mathbb{C}^2)^{\otimes n}$ encoding a single logical qubit into~$n$ physical qubits with distance~$d$. 
We first define a family of operations called gadgets following established terminology, see e.g.,~\cite{gottesmantutorial}. 

  Figs.~\ref{fig:ftgadgetcomponentsnotions} and~\ref{fig:ftgadgetcomponents} illustrate the diagrammatic formalism for the definition of gadgets.
Two of the involved objects (see Fig.~\ref{fig:ftgadgetcomponentsnotions})  are only used to define
properties of gadgets and reason about them: an $r$-filter projects onto 
the subspace spanned by the union of subspaces~$E\cL$, where 
$E$ is any Pauli error of weight less than or equal to~$r$. The ideal decoder, in the following denoted~$\idec:\cB\left((\mathbb{C}^2)^{\otimes N}\right)\rightarrow \cB\left(\mathbb{C}^2\right)$, is a CPTP map which optimally performs error correction:  it satisfies
\begin{align}
\idec\left(E\proj{\Psi}E^\dagger\right)&\propto \proj{\Psi}\quad\textrm{ for any } \ket{\Psi}\in\cL\textrm{ and every correctable Pauli error } E\ ,\label{eq:conditionerrorrecovery}
\end{align}
i.e.,  any Pauli error~$E$ of weight less than or equal to~$t$.
Here we identify $\cL$ with $\mathbb{C}^2$ using a suitable isometry. 
We note that Eq.~\eqref{eq:conditionerrorrecovery} does not uniquely determine a  CPTP map  since the action on the subspace spanned by code states corrupted by uncorrectable errors is not prescribed. This is sufficient in the current context as the ideal decoder is only used formally: here we do not care about actually realizing~$\idec$ by a quantum circuit. Corresponding circuit realizations will only be considered below when we consider noisy encoding and decoding procedures, see Section~\ref{sec:noisyencodingdecoding}.

 Fig.~\ref{fig:ft-definitionsgadgets} states the necessary properties of level-$1$ gadgets required to obtain fault tolerance.
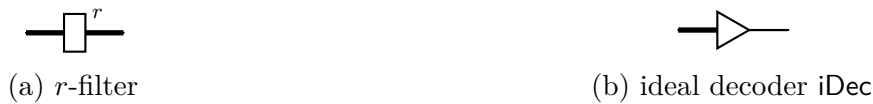
\begin{figure}[htbp]
    \centering
    \begin{subfigure}[b]{0.4\textwidth}
        \centering
        \begin{quantikz} \qw & \filtr & \qw 
        \arrow[from=1-1,to=1-3,black,line width=\encwidth,-]{}
        \end{quantikz}
        \caption{$r$-filter}
    \end{subfigure}
    \qquad\qquad
    \begin{subfigure}[b]{0.4\textwidth}
        \centering
        \begin{quantikz}
            \qw & \decgate{} & \qw
            \arrow[from=1-1,to=1-2,black,line width=\encwidth,-]{}
        \end{quantikz}
        \caption{ideal decoder $\idec$}
    \end{subfigure}
        \caption{Diagrammatic notation for notions used in the definition fault tolerance $1$-gadgets (see~\cite{gottesmantutorial} for details).  Thick (thin) lines represent encoded (decoded) qubits. 
Both the $r$-filter and the (ideal) decoder are not actually applied in 
a recompiled circuit, but are used in the analysis.}
    \label{fig:ftgadgetcomponentsnotions}
\end{figure}

We use the following standard terminology.  
First,  we refer to each elementary physical operations as used in~\eqref{it:prepstepgeneralcircuit}--\eqref{it:measurementstepgeneralcircuit} (i.e., single-qubit state preparation of~$\ket{0}$, any one- or two-qubit gate from the considered gate set and any single-qubit computational basis measurement) as a level-$0$ gadget or simply $0$-Ga.

A level-$1$ error correction gadget (or simply $1$-EC) is an adaptive circuit (i.e., CPTP map) on~$(\mathbb{C}^2)^{\otimes n}$ (i.e., acting on one ``code block'') composed of $0$-Gas (i.e., elementary physical operations). It performs error recovery: if no faults occur during its execution, it maps any state supported on $\cL$ (i.e., any encoded state) to itself, and any corrupted encoded state to some encoded state. The fault tolerance construction also assumes that error propagation is limited as long as the number of faults, i.e., the sum of the number of preexisting errors and the number of faults inside the gadget, is upper bounded by $t=\lfloor (d-1)/2\rfloor$ (the maximal weight of a correctable error), see Fig.~\ref{it:ECcorrectionprop}.

The fault tolerance construction also requires a level-$1$ gadget~$G'$ for each level-$0$ gadget~$G$. Here $G'$ is a (possibly adaptive) circuit of constant size which implements an encoded version of~$G$, where each of the $N$~qubits is encoded in the code~$\cL$, i.e., $G'$ acts on  $Nn$~physical qubits encoding $N$~logical qubits in $\cL^{\otimes N}\subset ((\mathbb{C}^2)^{\otimes n})^{\otimes N}$. Each gadget~$G'$ needs to be designed in such a way as to implement the desired (logical) functionality defined by~$G$, and limit error propagation when combined with error correction, see Figs.~\ref{it:measurementftprop}, \ref{it:preparationftprop} and~\ref{it:gateftpropreq}.

\begin{figure}[htbp]
    \centering
    \begin{subfigure}[b]{0.2\textwidth}
        \centering
        \begin{quantikz}
            \prepgate["s"{font=\scriptsize,xshift=13,yshift=5}]{} & \qw
            \arrow[from=1-1,to=1-2,black,line width=\encwidth,-]{}
        \end{quantikz}
        \caption{Preparation}
    \end{subfigure}
    \qquad
    \begin{subfigure}[b]{0.2\textwidth}
        \centering
        \begin{quantikz}
            \qw & \measgate["s"{font=\scriptsize,yshift=5}]{}
            \arrow[from=1-1,to=1-2,black,line width=\encwidth,-]{}
        \end{quantikz}
        \caption{Measurement}
    \end{subfigure}
    \qquad
    \begin{subfigure}[b]{0.2\textwidth}
        \centering
        \begin{quantikz} \qw & \ftgates{U} & \qw 
        \arrow[from=1-1,to=1-3,black,line width=\encwidth,-]{}
        \end{quantikz}
        \caption{Gate}
    \end{subfigure}
    \qquad
    \begin{subfigure}[b]{0.2\textwidth}
        \centering
        \begin{quantikz} 
        \qw & \ECs & \qw 
        \arrow[from=1-1,to=1-3,black,line width=\encwidth,-]{}
        \end{quantikz}
        \caption{Error Correction\label{it:ECftprop}}
    \end{subfigure}
   
    \caption{Diagrammatic notation for fault tolerance $1$-gadgets (see~\cite{gottesmantutorial} for details). Superscripts denote the number of fault locations inside the gadget. }
    \label{fig:ftgadgetcomponents}
\end{figure}
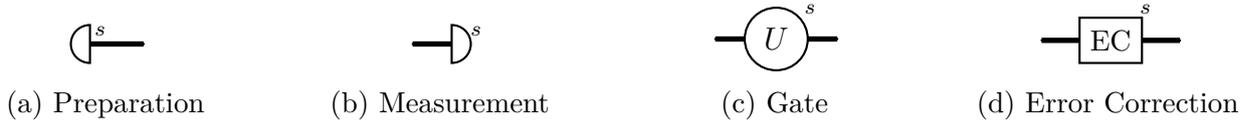

\begin{figure}[htbp]
    \centering
    
    \begin{subfigure}[b]{0.4\textwidth}
        \centering
        $\begin{aligned}
        \begin{quantikz}[column sep=0.4cm]
            \qw & \ECs & \qw
            \arrow[from=1-1,to=1-3,black,line width=\encwidth,-]{}
        \end{quantikz}
        &\;=\;
        \begin{quantikz}[column sep=0.4cm]
            \qw & \ECs & \filts & \qw
            \arrow[from=1-1,to=1-4,black,line width=\encwidth,-]{}
        \end{quantikz}
        \\[1em]
        \begin{quantikz}[column sep=0.4cm]
            \qw & \filtr & \ECs & \decgate["s"{font=\scriptsize,yshift=4,xshift=8}]{} & \qw
            \arrow[from=1-1,to=1-4,black,line width=\encwidth,-]{}
        \end{quantikz}
        &\;=\;
        \begin{quantikz}[column sep=0.4cm]
            \qw & \filtr & \decgate{} & \qw
            \arrow[from=1-1,to=1-3,black,line width=\encwidth,-]{}
        \end{quantikz}
        \end{aligned}$
        \caption{Error correction ($s \leq t$; $r+s \leq t$)\label{it:ECcorrectionprop}}
    \end{subfigure}\qquad \qquad
        \begin{subfigure}[b]{0.4\textwidth}
        \centering
        $\begin{aligned}
        \begin{quantikz}[column sep=0.4cm]
            \qw & \filtr & \measgate["s"{font=\scriptsize,yshift=7,xshift=0}]{}
            \arrow[from=1-1,to=1-3,black,line width=\encwidth,-]{}
        \end{quantikz}
        &\;=\;
        \begin{quantikz}[column sep=0.4cm]
            \qw & \filtr & \decgate{} & \measgate{}
            \arrow[from=1-1,to=1-3,black,line width=\encwidth,-]{}
        \end{quantikz}
        \end{aligned}$
        \caption{Measurement ($r+s \leq t$)\label{it:measurementftprop}}
    \end{subfigure}
    \vspace{1.5em}
     \begin{subfigure}[b]{0.4\textwidth}
        \centering
        $\begin{aligned}
        \begin{quantikz}
        \prepgate["s"{font=\scriptsize,yshift=7,xshift=13}]{} & \qw
        \arrow[from=1-1,to=1-2,black,line width=\encwidth,-]{}
        \end{quantikz}
        &\;=\;
        \begin{quantikz}
            \prepgate["s"{font=\scriptsize,yshift=7,xshift=13}]{} & \filts & \qw
            \arrow[from=1-1,to=1-3,black,line width=\encwidth,-]{}
        \end{quantikz}\\
        \begin{quantikz}
            \prepgate["s"{font=\scriptsize,yshift=7,xshift=13}]{} & \decgate{} & \qw
            \arrow[from=1-1,to=1-2,black,line width=\encwidth,-]{}
        \end{quantikz}
        &\;=\;
        \begin{quantikz}
            \prepgate{} & \qw
        \end{quantikz}
        \end{aligned}$
        \caption{Preparation ($s \leq t$)\label{it:preparationftprop}}
    \end{subfigure}\qquad\qquad 
    \begin{subfigure}[b]{0.4\textwidth}
        \centering
        $\begin{aligned}
        \begin{quantikz}[column sep=0.4cm]
            \qw & \filtri & \ftgates{U} & \qw
            \arrow[from=1-1,to=1-4,black,line width=\encwidth,-]{}
        \end{quantikz}
        &\;=\;
        \begin{quantikz}[column sep=0.4cm]
            \qw & \filtri & \ftgates{U} & \filtsum & \qw
            \arrow[from=1-1,to=1-5,black,line width=\encwidth,-]{}
        \end{quantikz}
        \\[1em]
        \begin{quantikz}[column sep=0.4cm]
            \qw & \filtri & \ftgates{U} & \decgate{} & \qw
            \arrow[from=1-1,to=1-4,black,line width=\encwidth,-]{}
        \end{quantikz}
        &\;=\;
        \begin{quantikz}[column sep=0.4cm]
            \qw & \filtri & \decgate{} & \ftgate{U} & \qw
            \arrow[from=1-1,to=1-3,black,line width=\encwidth,-]{}
        \end{quantikz}
        \end{aligned}$
        \caption{Gate ($s + \sum_i r_i \leq t$)\label{it:gateftpropreq}.}
    \end{subfigure}
     \caption{Conditions for fault tolerance of $1$-gadgets for an $[[n,1,d]]$-code, see~\cite{gottesmantutorial}. Here $t=\lfloor (d-1)/2\rfloor$.}
    \label{fig:ft-definitionsgadgets}
\end{figure}
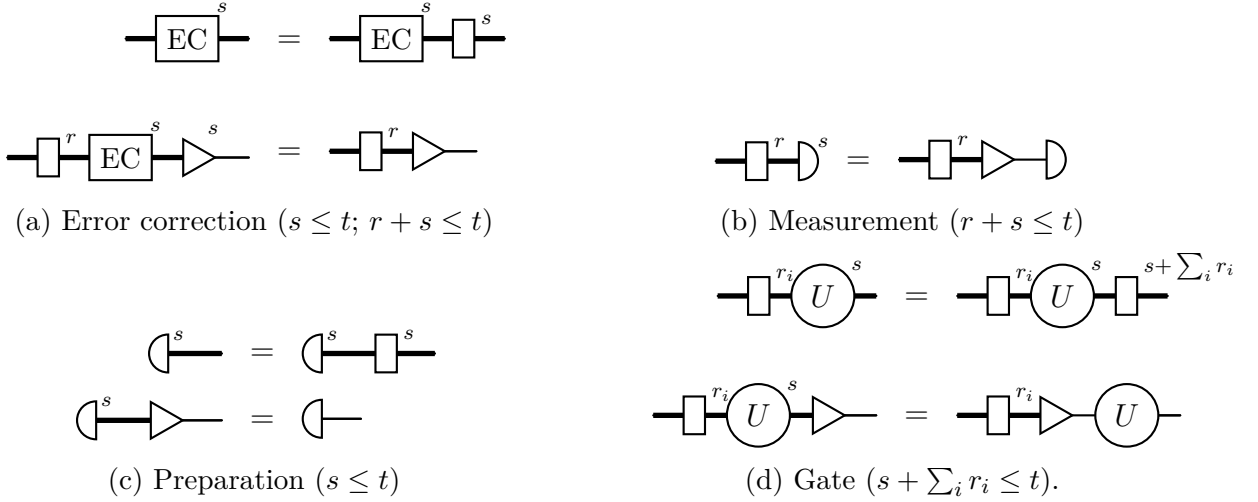
\subsection{Level-$L$ simulation of a circuit\label{sec:FTgagdetssimulation}}
Here we describe how the fault-tolerant circuit~$\mathsf{C}_{\mathsf{FT}}$ is constructed from (a description of) the ideal circuit~$\mathsf{C}$. We note that -- unlike the original circuit~$\mathsf{C}$ -- the circuit~$\mathsf{C}_{\mathsf{FT}}$ generally is  an adaptive circuit  involving mid-circuit measurements and subsequent operations depending on the measurement results. Indeed, a typical building block is an error-recovery step where the syndrome of an error-correcting code is measured and a corresponding correction is applied subsequently.

For concreteness we describe the established construction of~$\mathsf{C}_{\mathsf{FT}}$ based on concatenating an $[[n,1,d]]$-code~$\cL$ encoding a single logical qubit into~$n$ physical qubits. In fact, this results in a family of circuits~$\{\mathsf{C}^{(L)}\}_{L\in \mathbb{N}}$ where each circuit~$\mathsf{C}^{(L)}$ simulates~$\mathsf{C}$. We refer to~$\mathsf{C}^{(L)}$ as the level-$L$ (ideal) simulation of~$\mathsf{C}$, and the integer $L$ as the concatenation level. As explained below, it determines the degree of robustness of the simulation~$\mathsf{C}^{(L)}$ to noise.
 
 The family~$\{\mathsf{C}^{(L)}\}_{L\in \mathbb{N}}$ is constructed recursively using certain substitution rules involving the fault tolerance gadgets discussed in Section~\ref{sec:ftgadgetsdiscussion}. Given the definition of $0$-Gas and $1$-Gas,  the level-$1$ simulation~$\mathsf{C}^{(1)}$ of~$\mathsf{C}$ is defined by applying the following substitution rules to the $0$-Gas constituting~$\mathsf{C}$:
\begin{enumerate}[(I)]
\item\label{it:substitutionruleprep}
For each preparation of~$\ket{0}$, i.e., each preparation $0$-Ga, replace the $0$-Ga by a preparation $1$-Ga followed by a $1$-EC on the corresponding block,
\item 
For each unitary $U_j$, i.e., each gate $0$-Ga, replace the $0$-Ga by the corresponding gate $1$-Ga followed by a $1$-EC on the corresponding block (if $U_j$ is a single-qubit gate) or two $1$-ECs on the corresponding blocks (if $U_j$ is a two-qubit gate).
\item\label{it:substitutionrulemeasurement}
For each measurement of a qubit, i.e., each measurement $0$-Ga, replace the measurement $0$-Ga by a measurement $1$-Ga.
\end{enumerate}
This defines a circuit~$\mathsf{C}^{(1)}$ on $((\mathbb{C}^2)^{\otimes n})^{\otimes N}$ (i.e., $nN$) physical qubits with $nN_{\mathsf{in}}$-qubit input and $nN_{\mathsf{out}}$-qubit output, where every $n$-tuple of physical qubits corresponds to a code block encoding one logical qubit. We note that this circuit is adaptive in general because of the use of the $1$-EC gadget: it uses mid-circuit measurements and adaptive corrections (which are $0$-Gas). 

While the substitution rules~\eqref{it:substitutionruleprep}--\eqref{it:substitutionrulemeasurement}
constitute the most commonly used construction, it will be convenient for our analysis to  slightly modify~\eqref{it:substitutionrulemeasurement}: We will include an additional $1$-EC before the measurement. With the $1$-EC introduced by the preceding preparation or gate application, this results in two $1$-ECs in succession before each measurement. It is easy to check that this does not affect the fault tolerance properties of the resulting circuit; the substitution simply facilitates our resource accounting (see Lemma~\ref{lem:qubitdepthoverheadsimulation} below). 

 For later use, let us also point out that the substitution rules~\eqref{it:substitutionruleprep}--\eqref{it:substitutionrulemeasurement} can also be applied to any adaptive circuit~$\mathsf{C}$ that measures qubits and applies unitaries from the gate set (i.e., $0$-Gas) which are classically controlled on the measurement outcomes. In addition, we can define a simulation of a circuit~$\mathsf{C}$ which includes tracing out qubits by adding the substitution rule
 \begin{enumerate}[(I)]\setcounter{enumi}{3}
 \item\label{it:substitutionrulepartialtrace}
 Replace each partial trace over a qubit by a partial trace over the corresponding block consisting of $n$~qubits. 
 \end{enumerate}

The family~$\{\mathsf{C}^{(L)}\}_{L\in\mathbb{N}}$ of 
simulations is then obtained from~$\mathsf{C}$ by recursively applying this definition, as follows. We set $\mathsf{C}^{(0)}:=\mathsf{C}$ and call this the level-$0$ simulation of~$\mathsf{C}$. Given a level-$j$ simulation~$\mathsf{C}^{(j)}$ of~$\mathsf{C}$ for $j\in \mathbb{N}_0=\mathbb{N}\cup \{0\}$, the level-$(j+1)$ simulation~$\mathsf{C}^{(j+1)}$ is obtained by applying the substitution rules~\eqref{it:substitutionruleprep}--\eqref{it:substitutionrulemeasurement} to the  circuit~$\mathsf{C}^{(j)}$.

Given a set of fault tolerance $1$-Gas as defined above,  define the constants
\begin{align}
\begin{matrix}
\nmax &:=&\max_{G\ \textrm{$1$-Ga}} N(G)\\
\dmin &:=&\mathsf{depth}(\mathop{EC})+\min_{\substack{G\ \textrm{$1$-Ga}\\ G\neq EC}}\mathsf{depth}(G)\\
\dmax &:=&\mathsf{depth}(\mathop{EC})+
\max_{\substack{G\ \textrm{$1$-Ga}\\ G\neq \mathop{EC}}}\mathsf{depth}(G)
\end{matrix}\ .\label{eq:dminmaxdef}
\end{align}
We note that $\nmax\geq n$ in general since $1$-Gas may use auxiliary qubits (in addition to the logical qubit encoded in $n$~physical qubits). 

In terms of these constants, we can give the following estimates on the number~$N(\cC^{(L)})$ of qubits and the depth~$\mathsf{depth}(\cC^{(L)})$ of the level-$L$ simulation~$\cC^{(L)}$.
\begin{lemma}[Qubit and depth overhead of level-$L$ simulation]
\label{lem:qubitdepthoverheadsimulation}
Consider the concatenated code construction based on an $[[n,1,d]]$-code~$\cL$.  
 Let $\nmax, \dmin, \dmax$ be the gadget-dependent constants defined in Eq.~\eqref{eq:dminmaxdef}.
 Then the number of qubits of the circuit~$\cC^{(L)}$ and its depth are bounded as
\begin{align}
\begin{matrix}
n^L&\leq& N(\cC^{(L)})/N(\cC) &\leq \nmax^L\\
\dmin^L&\leq& \mathsf{depth}(\cC^{(L)})/\mathsf{depth}(\cC) &\leq \dmax^L
\end{matrix}\ .\label{eq:gminmaxupperbound}
\end{align}
\end{lemma}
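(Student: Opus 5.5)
The plan is to argue by induction on the concatenation level~$L$. The case $L=0$ is trivial because $\cC^{(0)}=\cC$, so all four ratios equal~$1$. For the inductive step it suffices to prove the one-level statement: for any circuit~$\cD$ (here $\cD=\cC^{(j)}$), its level-$1$ simulation $\cD^{(1)}$ satisfies
\begin{align}
n\, N(\cD)\leq N(\cD^{(1)})\leq \nmax N(\cD)\ ,\qquad \dmin\,\mathsf{depth}(\cD)\leq \mathsf{depth}(\cD^{(1)})\leq \dmax\,\mathsf{depth}(\cD)\ .
\end{align}
Since $\cC^{(j+1)}=(\cC^{(j)})^{(1)}$ by the recursive definition, multiplying these bounds over the $L$ levels gives Eq.~\eqref{eq:gminmaxupperbound}.

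For the qubit count, I will view each of the $N(\cD)$ logical qubits of~$\cD$ as being replaced by one code block. That block holds $n$ physical qubits, which gives the lower bound. I will also reserve, alongside each block, the auxiliary qubits needed by whichever $1$-Ga currently acts on it. A single-qubit $1$-Ga, or a $1$-EC, acting on one block uses at most $\nmax$ qubits. For a two-qubit gate, the gadget acts on two blocks and two reserved registers together, so it again fits within $\nmax$ qubits per logical qubit, provided $\nmax$ is read per block as intended. Auxiliary qubits can be reused across layers because gadgets in one layer act on disjoint blocks. This gives the upper bound $\nmax N(\cD)$.

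For the depth, each layer of~$\cD$ becomes a gadget layer of~$\cD^{(1)}$. Under substitution rules \eqref{it:substitutionruleprep}--\eqref{it:substitutionrulemeasurement}, every $0$-Ga is replaced by a non-EC $1$-Ga followed by an EC. This includes a wait location, which becomes an identity gadget plus EC. It also includes measurements, because of the modification introduced above, where an extra $1$-EC precedes each measurement; this modification is exactly what makes the accounting uniform. In that layer all the gadgets run in parallel on disjoint blocks. The resulting block of layers therefore has depth at least $\mathsf{depth}(EC)+\min_G\mathsf{depth}(G)=\dmin$, because it contains at least one gadget together with an EC. It has depth at most $\dmax$ once shorter gadgets are padded with wait locations. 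Summing over the $\mathsf{depth}(\cD)$ layers gives the claimed depth bounds.

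The main obstacle is bookkeeping, not mathematics. For measurements, the EC must be placed so that each layer costs exactly one EC depth plus one gadget depth. I intend to count the preceding extra EC as the ``EC'' term and the measurement gadget as the ``$G$'' term. The other delicate point is confirming that the per-block qubit accounting for two-block gadgets agrees with the definition of~$\nmax$. If it does not, I would define $\nmax$ per logical qubit, and that version is what I will use.
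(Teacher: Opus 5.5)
Your proposal is correct and follows the paper's own route: a level-by-level recursion in which every $0$-Ga is replaced by the corresponding $1$-Ga together with a $1$-EC (with the extra EC placed before measurements). Each level therefore multiplies the depth by a factor in $[\dmin,\dmax]$ and the qubit count by a factor in $[n,\nmax]$. Your hedge about redefining $\nmax$ is unnecessary: with a reserved register of $\nmax$ qubits per block, a gadget acting on $k\geq 1$ blocks has $k\,\nmax\geq \nmax\geq N(G)$ qubits available, so the paper's definition $\nmax=\max_G N(G)$ already gives the per-level factor $\nmax$.
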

\noindent In particular, this lemma implies that the overhead in both the number of qubits and the depth of~$\cC^{(L)}$ scales as $e^{\Theta(L)}$. 

We note that in standard fault tolerance considerations, the goal is typically to minimize the overhead. Correspondingly, only upper bounds on quantities such as $N(\cC^{(L)})/N(\cC)$ are of interest. However, in our setup, lower bounds are also relevant since the quantities in Eq.~\eqref{eq:gminmaxupperbound}
determine the linear dimensions  of the $2D$~system used in our one-shot entanglement generation protocols. In particular, these overheads also determine the distance at which long-range entanglement is established, see Section~\ref{sec:2dsimulation}.

\begin{proof}
The bound on $\mathsf{depth}(\cC^{(L)})/\mathsf{depth}(\cC)$ in Eq.~\eqref{eq:gminmaxupperbound} 
 follows by recursion from the definition of $\dmin,\dmax$ and the fact
that when applying the substitution rules~\eqref{it:substitutionruleprep}--\eqref{it:substitutionrulemeasurement} (with the mentioned modification of~\eqref{it:substitutionrulemeasurement}, each $0$-Ga is substituted by the corresponding~$1$-Ga and a $1$-EC. The upper bound on $N(\cC^{(L)})/N(\cC)$ follows similarly, whereas the lower bound is a consequence of the fact that an $[[n,1,d]]$-code is concatenated $L$~times. 
\end{proof}
If the original circuit~$\cC$ has a certain structure, this 
is also shared by its level-$L$ simulation~$\cC^{(L)}$.  Specifically, this applies to product circuits and compositions of subcircuits. We state this as a Lemma. It follows immediately 
from  the substitution rules~\eqref{it:substitutionruleprep}--\eqref{it:substitutionrulepartialtrace}. 
\begin{lemma}[Level-$L$ simulations, product- and subcircuits]\label{lem:productandsubcircuitssimulation}
We have the following for any $L\in\mathbb{N}$. 
\begin{enumerate}[(i)]
\item
Let~$\mathsf{C}=\mathsf{C}_R\otimes \mathsf{C}_S$ be a circuit acting separately on disjoint subsets $R,S$ of qubits.
Then the level-$L$ simulation~$\mathsf{C}^{(L)}$ is itself of tensor product form
\begin{align}
\mathsf{C}^{(L)}&=\mathsf{C}^{(L)}_R\otimes \mathsf{C}^{(L)}_S\ .\label{eq:tensorproductcrs}
\end{align}
\item\label{it:subcircuitpropertylem}
Suppose that $\mathsf{C}=\mathsf{C}_B\circ \mathsf{C}_A$ is the composition of two circuits~$\mathsf{C}_A$, $\mathsf{C}_B$, each of which may be adaptive. Then it is easy to check that the level-$L$ simulation~$\mathsf{C}^{(L)}$ of $\mathsf{C}$  decomposes as
\begin{align}
\mathsf{C}^{(L)}&=\mathsf{C}^{(L)}_B\circ \mathsf{C}^{(L)}_A\label{eq:compositioncrs}
\end{align}
into a composition of the corresponding level-$L$ simulations. 
\end{enumerate}
\end{lemma}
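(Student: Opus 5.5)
The plan is to argue by induction on the concatenation level~$L$, with the base case $L=0$ trivial since $\mathsf{C}^{(0)}=\mathsf{C}$. The key observation is that the level-$1$ map $\mathsf{C}\mapsto\mathsf{C}^{(1)}$ is defined \emph{locally}. Each $0$-Ga of~$\mathsf{C}$ is replaced, via the substitution rules~\eqref{it:substitutionruleprep}--\eqref{it:substitutionrulepartialtrace}, by a block consisting of the corresponding $1$-Ga followed by $1$-EC gadgets. The content of this block depends only on the type of the $0$-Ga and on which logical qubits it acts on. It does not depend on any other operation of the circuit. Hence the substitution is a homomorphism with respect to both tensor products and sequential composition. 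Once this is established for $L=1$, the general case follows from $\mathsf{C}^{(L+1)}=(\mathsf{C}^{(L)})^{(1)}$. For~(i), applying the inductive hypothesis gives $(\mathsf{C}_R^{(L)}\otimes\mathsf{C}_S^{(L)})^{(1)}=(\mathsf{C}_R^{(L)})^{(1)}\otimes(\mathsf{C}_S^{(L)})^{(1)}$. Claim~(ii) follows in the same way.

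For~(i) at level~$1$, write each layer of $\mathsf{C}=\mathsf{C}_R\otimes\mathsf{C}_S$ as a tensor product of operations $\cO_{S_j}$, each supported either inside~$R$ or inside~$S$. The replacement gadget for an operation on qubits in~$R$ acts only on the code blocks of those qubits, together with their own auxiliary qubits, which we assign to the $R$ side. Collecting all gadgets coming from~$R$ gives exactly $\mathsf{C}_R^{(1)}$, and likewise for~$S$. The two groups act on disjoint sets of physical qubits, and tensor products of channels on disjoint systems commute. Hence Eq.~\eqref{eq:tensorproductcrs} holds, where the identity is one of channels (or instruments), up to the obvious relabelling of the qubits.

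For~(ii) at level~$1$, the operation layers of $\mathsf{C}$ are those of $\mathsf{C}_A$ followed by those of $\mathsf{C}_B$. Because the substitution is applied operation by operation, the gadgets replacing operations of $\mathsf{C}_A$ form $\mathsf{C}_A^{(1)}$, and those replacing operations of $\mathsf{C}_B$ form $\mathsf{C}_B^{(1)}$, in the same temporal order. This gives Eq.~\eqref{eq:compositioncrs}.

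The only point requiring care is the adaptive case. A classically controlled $0$-Ga in $\mathsf{C}_B$ may be conditioned on measurement outcomes produced in $\mathsf{C}_A$. In $\mathsf{C}^{(L)}$, each such logical outcome is the classically decoded output of a measurement $1$-Ga inside $\mathsf{C}_A^{(L)}$, so the control bits are produced by $\mathsf{C}_A^{(L)}$. The composition is then an instrument composition in the sense of Eq.~\eqref{eq:compositionpis}, with the control subset relabelled accordingly. A second subtlety concerns the modified measurement rule, which inserts an extra $1$-EC before each measurement. That $1$-EC is attached to the measurement gadget itself, so it stays within whichever subcircuit contains the measurement and does not break either decomposition. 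I expect this bookkeeping of classical control bits and auxiliary-qubit assignment to be the only real obstacle; everything else is immediate from the locality of the substitution rules.
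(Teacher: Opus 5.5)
Your proposal is correct and takes the same route as the paper, which simply asserts that both decompositions follow immediately from the operation-by-operation nature of the substitution rules (I)--(IV). Your explicit induction via $\mathsf{C}^{(L+1)}=(\mathsf{C}^{(L)})^{(1)}$, together with the bookkeeping of classical control bits and of the extra pre-measurement $1$-EC, spells out what the paper leaves implicit. One small point: the induction step needs the level-$1$ identities to hold as equalities of circuits (gadget by gadget), not merely of channels as you phrase it, since the substitution rules act on circuit descriptions. Your collection-of-gadgets argument does in fact establish that stronger form.
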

We often consider subcircuits corresponding to a partial execution of a circuit~$\mathsf{C}$. This is the main setting where we use Property~\eqref{it:subcircuitpropertylem}.

\subsection{Level reduction and the fault tolerance threshold theorem\label{sec:levelreductionftthresholdthm}}
As described in Section~\ref{sec:fault tolerance},
in typical quantum fault tolerance considerations, one considers a circuit which applies a unitary circuit to the product state~$|0^N\rangle$ and then measures in the computational basis. To analyze the effect of errors, 
one can  focus on the (potentially corrupted) state before the measurement, and consider the effect of an ideal decoder.
Formally, this follows because of the condition  for a fault-tolerant measurement gadget illustrated in 
Fig.~\ref{it:measurementftprop}.
For ease of presentation, we therefore omit the final measurements in the following. We call~$\mathsf{C}$ the (ideal) circuit obtained by applying a unitary circuit to~$|0^N\rangle$. In other words, we assume that $\cC$ consists of single-qubit state preparation as well as one- and two-qubit unitaries.

As argued in~\cite{aliferis2005quantum,aliferis2007level}, the key to establishing a fault tolerance threshold theorem is
the setting of concatenated codes is the following ``level reduction'' result. It considers the effect of one level of decoding 
to a level-$1$ simulation~$\cC^{(1)}$ of $\cC$. The ideal decoder $(\idec^{(1)})^{\otimes N}$ is explicitly included in the statement of the lemma.
Its proof relies on the 
fault tolerance properties of gadgets shown in Fig.~\ref{fig:ft-definitionsgadgets}. Using these, it is argued that the ideal decoder can be pulled through (i.e., commuted to the left of)  the level-$L$ simulation  to obtain a (less) noisy level-$(L-1)$ simulation. 
We reproduce
the result of~\cite{aliferis2007level} for the special case of local stochastic noise here, see Fig.~\ref{fig:levelreductionillustrated} for an illustration.
\begin{lemma}[Level reduction, Lemma~3 of~\cite{aliferis2007level} paraphrased]\label{lem:level reductionoriginal}
Fix an $[[n,1,d]]$-code $\cL$ and associated level-0 and level-1 gadgets satisfying the fault tolerance properties given in Fig.~\ref{fig:ft-definitionsgadgets}. Let $t := \lfloor (d-1)/2\rfloor$. Let $\idec$  be an  ideal decoder for the code $\cL$.
There is a constant $c > 0$ (depending only on the code and gadgets used) such that the following holds
for any $N$-qubit prepare-and-measure circuit $\cC$  defined by~$s$ one- and two-qubit gates $U_s,\ldots, U_1$ and producing a sample $z\in \{0,1\}^N$ according to $p(z)$ (see Eq.~\eqref{eq:samplingfromdistribution}). For any local stochastic noise $\cE_1$ on $\cC$ of strength $p_1 = p$, there is local stochastic noise $\cE_0$ of strength $p_0 = cp_1^{t+1}$ such that
\begin{align}
    \idec^{\otimes N} \circ \left( \cE_1 \bowtie \cC^{(1)} \right)  = \cE_0 \bowtie \cC\ .
\end{align}
Here  $\cC^{(1)}$ denotes the level-1 simulation of $\cC$. 
\end{lemma}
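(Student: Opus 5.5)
We will prove the level-reduction statement in the standard way of Aliferis, Gottesman and Preskill: partition $\cC^{(1)}$ into extended rectangles, classify each as good or bad, and push the ideal decoder through the circuit from right to left.

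\emph{Step 1: rectangles and extended rectangles.} Each $0$-Ga $G$ of $\cC$ corresponds to a $1$-rectangle of $\cC^{(1)}$, namely the $1$-Ga $G'$ followed by its trailing $1$-EC(s). The associated extended rectangle (exRec) additionally contains the leading $1$-ECs on its input blocks. An exRec is called \emph{good} if it contains at most $t$ fault locations of $\supp(\cE_1)$, and \emph{bad} otherwise.

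\emph{Step 2: pushing the decoder through good exRecs.} We insert ideal decoders $\idec^{\otimes N}$ after the last layer and commute them leftwards, one rectangle at a time. For a good exRec, the properties in Fig.~\ref{fig:ft-definitionsgadgets} are used as follows.
\begin{itemize}
\item The EC property (Fig.~\ref{it:ECcorrectionprop}) lets us insert an $r$-filter after each leading EC with $r\le t$.
\item The gate property (Fig.~\ref{it:gateftpropreq}) lets us move the decoder through the gadget, replacing the rectangle by the ideal $0$-Ga followed by a decoder.
\item For preparation rectangles, Fig.~\ref{it:preparationftprop} replaces preparation followed by the decoder with the ideal level-$0$ preparation.
\end{itemize}
Since the exRec is good, the total fault count satisfies $s+\sum_i r_i\le t$, which is exactly the hypothesis these properties require.

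\emph{Step 3: bad exRecs.} For a bad exRec, we let the decoder act on the rectangle's output, and we replace the rectangle together with the decoder by a faulty $0$-location. Its action is given by the CPTP map realised by the noisy rectangle composed with the decoder; the ideal decoder on its inputs is absorbed by the preceding step. These maps, applied sequentially and together with the reference system of $\cE_1$, form a comb $\cE_0$ supported on the set of bad locations of $\cC$, and it is consistent with their causal order. One technical point is that exRecs overlap in their leading ECs. To handle this, we adopt the usual convention of truncating a good exRec whose leading EC already belongs to a bad exRec, as done in~\cite{aliferis2005quantum}. With this convention the decomposition is well defined, and we obtain $\idec^{\otimes N}\circ(\cE_1\bowtie\cC^{(1)})=\cE_0\bowtie\cC$.

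\emph{Step 4: strength of $\cE_0$.} Fix a set $F$ of locations of $\cC$. Each $f\in F$ being bad requires at least $t+1$ faults inside its exRec $X_f$. Each exRec meets at most a constant number $k$ of other exRecs, determined by how many exRecs share an EC. We can therefore select a subset $F'\subseteq F$ with $|F'|\ge|F|/k$ whose exRecs are pairwise disjoint. Then
\begin{align}
\Pr[F\subseteq\supp(\cE_0)]\le\prod_{f\in F'}\binom{A}{t+1}\,p^{(t+1)}\le \left(\binom{A}{t+1}p^{t+1}\right)^{|F|/k},
\end{align}
where $A$ is the maximal number of locations in an exRec. The first inequality follows by a union bound over the choices of $t+1$ faulty locations in each disjoint exRec, combined with local stochasticity of $\cE_1$ applied to the union of these choices.

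\emph{Main obstacle.} The exponent $1/k$ is the delicate part: as it stands it gives strength $(c'p^{t+1})^{1/k}$ rather than $cp^{t+1}$. To remove it, I would first avoid the disjointness reduction. Instead, I would take a union bound directly over sets $S\subseteq\bigcup_{f\in F}X_f$ such that each $X_f$ contains at least $t+1$ points of $S$. The number of such $S$ is at most $(2^{A})^{|F|}$, and $|S|\ge(t+1)|F|/k$ in the worst case. If this still loses a power, I would fall back to the convention of~\cite{aliferis2007level}, where badness is attributed to non-overlapping rectangles. Under that convention each fault is charged to a unique exRec, giving the clean bound $(\binom{A}{t+1}p^{t+1})^{|F|}$ and hence $c=\binom{A}{t+1}$ up to gadget-dependent factors.
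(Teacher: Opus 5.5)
The paper does not reprove this lemma; it defers to \cite{aliferis2007level} and only sketches the decoder-pulling steps in Fig.~\ref{fig:levelreductionbasic}. Your outline follows that exRec argument, but two steps fail as written.

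\textbf{Step~3.} Nothing places a decoder on the \emph{inputs} of a bad rectangle $R$. With $\idec$ the syndrome-discarding CPTP map used in this paper, $\idec\circ R$ maps physical blocks to logical qubits. The action of the faulty $R$ can depend on the error pattern of its input blocks. So $\idec\circ R$ does not in general factor as $\cF\circ\idec$ for any map $\cF$ on decoded qubits, and hence it is not a level-$0$ location. The missing idea is as follows. Work with the syndrome-retaining, hence unitary, ideal decoder (still denoted $\idec$), insert $\idec^{-1}\circ\idec=\mathrm{id}$ on the inputs of each bad rectangle, and set $\cF:=\idec\circ R\circ\idec^{-1}$. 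The syndrome registers then join the reference system of the comb $\cE_0$, alongside that of $\cE_1$.

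\textbf{Step~4.} This is the more serious gap, and it is not just a counting issue. Under your construction, a location of $\cC$ is declared faulty iff its full exRec contains $\geq t+1$ faults, and with that construction the claimed strength is false. Suppose $t+1$ faults lie in an EC shared by two consecutive exRecs. Then both exRecs are bad, so a pair of level-$0$ locations is faulty with probability of order $p^{t+1}$ rather than $(cp^{t+1})^2$. For the distance-$3$ codes relevant here ($t=1$), your bound $(c'p^{t+1})^{1/k}$ with $k\geq 2$ gives no suppression at all.

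The convention you state, dropping the leading EC of a \emph{good} exRec that overlaps a bad one, does not address this. It even removes the input filter that the good exRec needs. Attributing badness to non-overlapping rectangles fails for the same reason, since correctness of a rectangle depends on the faults in its leading EC.

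What is needed is the notion of independent bad exRecs from \cite{aliferis2005quantum,aliferis2007level}:
\begin{enumerate}[(i)]
\item If an earlier exRec is bad only because of faults in an EC it shares with a later bad exRec, it is truncated.
\item The shared EC is absorbed into the later faulty location $\idec\circ R\circ\mathrm{EC}\circ\idec^{-1}$.
\item The truncated exRec is shown correct by the gate property of Fig.~\ref{it:gateftpropreq}, with the decoder placed directly after the gadget.
\end{enumerate}
The remaining bad exRecs then each contain $t+1$ faults in pairwise disjoint regions. A union bound over disjoint $(t+1)$-subsets gives $\Pr[F\subseteq\supp(\cE_0)]\leq\bigl(\binom{A}{t+1}p^{t+1}\bigr)^{|F|}$.
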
 
\noindent We refer to~\cite{aliferis2007level} for further details; see Fig.~\ref{fig:levelreductionbasic} for a schematic illustration of some of the basic steps used to establish this result. 
\begin{figure}
\centering
\begin{subfigure}[b]{0.9\textwidth}
    \centering
        \begin{quantikz}[column sep=0.4cm]
            \prepgate{} & \EC & \ftgate{U^{(1)}} & \EC & \measgate{}
            \arrow[from=1-1,to=1-5,black,line width=\encwidth,-]{}
        \end{quantikz}
        $\;=\;$
        \begin{quantikz}[column sep=0.4cm]
             \prepgate{} & \EC & \ftgate{U^{(1)}} & \EC & \decgate{} & \measgate{}
             \arrow[from=1-1,to=1-5,black,line width=\encwidth,-]{}
        \end{quantikz}
    \caption{First step: an ideal decoder is inserted before the final measurement step. This will be propagated backwards through the circuit, transforming 1-Gas into their corresponding logical operations on the physical space.\label{fig:levelreductionbasicA}}
\end{subfigure}\\
\hfill
\begin{subfigure}{0.9\textwidth}
    \centering
    $\begin{aligned}
        \begin{quantikz}[column sep=0.4cm]
            \prepgate{} & \EC & \decgate{} & \ftgate{U} & \measgate{}
            \arrow[from=1-1,to=1-3,black,line width=\encwidth,-]{}
        \end{quantikz}
        &\;=\;
        \begin{quantikz}[column sep=0.4cm]
            \prepgate{} & \ftgate{U} & \measgate{}
        \end{quantikz}
    \end{aligned}$
    \caption{Last step: the ideal decoder meets the state preparation 1-Ga at the beginning of the circuit. This has equivalent action to the preparation of the corresponding physical (unencoded) state.\label{fig:levelreductionbasicB}}
\end{subfigure}\qquad 
\caption{Illustration of basic steps in the proof of Lemma~\ref{lem:level reductionoriginal}.
\label{fig:levelreductionbasic}}
\end{figure}

For any $j\geq 2$, let $\mathsf{iDec}^{(j)}$ be defined recursively as the result of applying the substitution rules~\eqref{it:substitutionruleprep}--\eqref{it:substitutionrulepartialtrace} to~$\mathsf{iDec}^{(j-1)}$. 
Then $\mathsf{iDec}^{(L)}$ is a map taking~$n^L$ physical qubits to~$n^{L-1}$ qubits  which changes the concatenation level from~$L$ to $L-1$. We note that the composed map
\begin{align}
\underset{L\rightarrow 0}{\mathsf{iDec}}:=\mathsf{iDec}^{(1)}\circ \cdots \circ \mathsf{iDec}^{(L)} \label{eq:composeddecoder}
\end{align} 
then implements -- in the absence of faults -- an ideal ``level-by-level'' decoder taking a logical qubit encoded in the $L$-fold concatenated code~$\cL^{\circ L}\subset (\mathbb{C}^2)^{\otimes n^L}$ associated with~$\cL$ to a single physical qubit,  see Fig.~\ref{fig:idealdecoderconcatenatedthree} for an illustration.
\begin{figure}[htbp]
\centering
\begin{tikzpicture}[scale=0.9]
    \def\boxwidth{1.4}
    \def\boxheight{5.0}
    \def\wirelen{1.2}
    \def\leaflen{0.8}
    \def\bendx{0.3}
    
    \def\boxAx{0}
    \def\boxBx{\boxwidth + \wirelen}
    \def\boxCx{2*\boxwidth + 2*\wirelen}
    
    \def\boxAheight{5.0}
    \def\boxBheight{3.5}
    \def\boxCheight{2.0}
    
    \draw[fill=white] (\boxAx, -\boxAheight/2) rectangle (\boxAx + \boxwidth, \boxAheight/2);
    \node at (\boxAx + \boxwidth/2, 0) {$\mathsf{iDec}^{(3)}$};
    
    \foreach \i in {0,...,26} {
        \pgfmathsetmacro{\ypos}{\boxAheight/2 - (\i + 0.5) * \boxAheight/27}
        \draw ({-\leaflen}, \ypos) -- (\boxAx, \ypos);
    }
    
    \draw[fill=white] (\boxBx, -\boxBheight/2) rectangle (\boxBx + \boxwidth, \boxBheight/2);
    \node at (\boxBx + \boxwidth/2, 0) {$\mathsf{iDec}^{(2)}$};
    
    \foreach \i in {0,...,8} {
        \pgfmathsetmacro{\yposA}{\boxAheight/2 - (\i + 0.5) * \boxAheight/9}
        \pgfmathsetmacro{\yposB}{\boxBheight/2 - (\i + 0.5) * \boxBheight/9}
        \draw (\boxAx + \boxwidth, \yposA) -- ({\boxAx + \boxwidth + \bendx}, \yposA) 
              -- ({\boxBx - \bendx}, \yposB) -- (\boxBx, \yposB);
    }
    
    \draw[fill=white] (\boxCx, -\boxCheight/2) rectangle (\boxCx + \boxwidth, \boxCheight/2);
    \node at (\boxCx + \boxwidth/2, 0) {$\mathsf{iDec}^{(1)}$};
    
    \foreach \i in {0,...,2} {
        \pgfmathsetmacro{\yposB}{\boxBheight/2 - (\i + 0.5) * \boxBheight/3}
        \pgfmathsetmacro{\yposC}{\boxCheight/2 - (\i + 0.5) * \boxCheight/3}
        \draw (\boxBx + \boxwidth, \yposB) -- ({\boxBx + \boxwidth + \bendx}, \yposB) 
              -- ({\boxCx - \bendx}, \yposC) -- (\boxCx, \yposC);
    }
    
    \draw (\boxCx + \boxwidth, 0) -- (\boxCx + \boxwidth + \leaflen, 0);
    
\end{tikzpicture}
\caption{The decoder $\underset{L\rightarrow 0}{\mathsf{iDec}}$  for $\cL^{\circ L}$, for $L=3$ concatenation levels associated with an $[[n,1,d]]$-code~$\cL$. For our illustrations, we use $n=3$ throughout but the constructions actually require larger codes. Each line represents a physical qubit.\label{fig:idealdecoderconcatenatedthree}}
\end{figure}
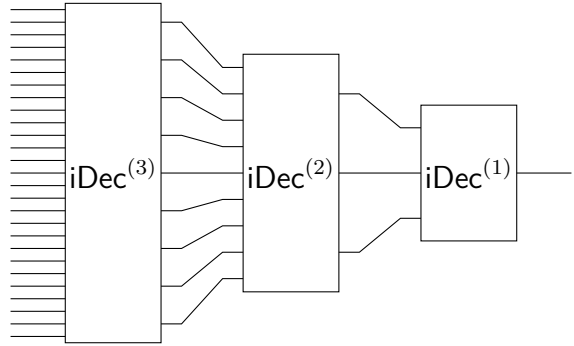

We note that by Eqs.~\eqref{eq:tensorproductcrs} and~\eqref{eq:compositioncrs}, we have 
\begin{align}
\left(\underset{L\rightarrow 0}{\mathsf{iDec}}\right)^{\otimes N}&=(\mathsf{iDec}^{(1)})^{\otimes N}\circ\cdots\circ(\mathsf{iDec}^{(L)})^{\otimes N}\ .\label{eq:parallelconcdec}
\end{align}

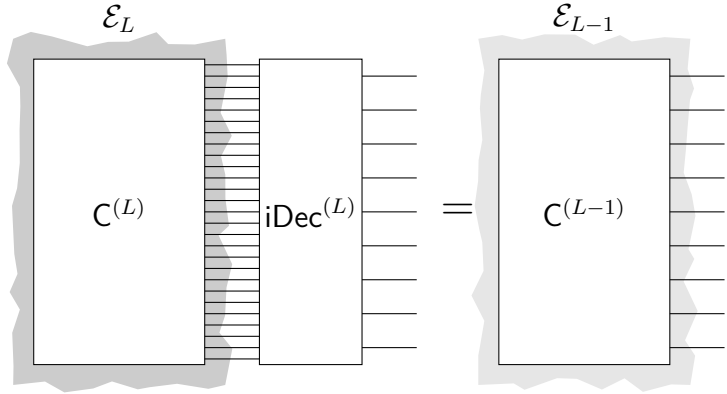
\begin{figure}[htbp]
\centering

\begin{tikzpicture}[scale=0.9]
\def\boxwidth{1.5}
\def\grayboxwidth{2.5}  
\def\boxheight{4.5}
\def\wirelen{0.8}
\def\bendx{0.25}
\def\outlen{0.8}
\def\noisemargin{0.3}  
\def\grayboxleft{0}

\fill[gray!40, decorate, decoration={random steps, segment length=3mm, amplitude=1mm}] 
  (\grayboxleft - \noisemargin, -\boxheight/2 - \noisemargin) rectangle 
  (\grayboxleft + \grayboxwidth + \noisemargin, \boxheight/2 + \noisemargin);

\fill[white] (\grayboxleft, -\boxheight/2) rectangle (\grayboxleft + \grayboxwidth, \boxheight/2);
\draw (\grayboxleft, -\boxheight/2) rectangle (\grayboxleft + \grayboxwidth, \boxheight/2);
\node at (\grayboxleft + \grayboxwidth/2, 0) {$\mathsf{C}^{(L)}$};

\node at (\grayboxleft + \grayboxwidth/2, \boxheight/2 + \noisemargin + 0.3) {$\cE_{L}$};

\def\decboxleft{\grayboxwidth + \wirelen}
\def\decboxheight{4.5}
\draw[fill=white] (\decboxleft, -\decboxheight/2) rectangle (\decboxleft + \boxwidth, \decboxheight/2);
\node at (\decboxleft + \boxwidth/2, 0) {$\mathsf{iDec}^{(L)}$};

\foreach \i in {0,...,26} {
  \pgfmathsetmacro{\yposGray}{\boxheight/2 - (\i + 0.5) * \boxheight/27}
  \pgfmathsetmacro{\yposDec}{\decboxheight/2 - (\i + 0.5) * \decboxheight/27}
  \draw (\grayboxleft + \grayboxwidth, \yposGray) -- ({\grayboxleft + \grayboxwidth + \bendx}, \yposGray)
    -- ({\decboxleft - \bendx}, \yposDec) -- (\decboxleft, \yposDec);
}

\def\decboxright{\decboxleft + \boxwidth}
\foreach \i in {0,...,8} {
  \pgfmathsetmacro{\ypos}{\decboxheight/2 - (\i + 0.5) * \decboxheight/9}
  \draw (\decboxright, \ypos) -- (\decboxright + \outlen, \ypos);
}

\def\eqpos{\decboxright + \outlen + 0.6}
\node at (\eqpos, 0) {\Large $=$};

\def\rightboxleft{\eqpos + 0.6}
\def\rightboxheight{4.5}

\fill[gray!20, decorate, decoration={random steps, segment length=3mm, amplitude=1mm}] 
  (\rightboxleft - \noisemargin, -\rightboxheight/2 - \noisemargin) rectangle 
  (\rightboxleft + \grayboxwidth + \noisemargin, \rightboxheight/2 + \noisemargin);

\fill[white] (\rightboxleft, -\rightboxheight/2) rectangle (\rightboxleft + \grayboxwidth, \rightboxheight/2);
\draw (\rightboxleft, -\rightboxheight/2) rectangle (\rightboxleft + \grayboxwidth, \rightboxheight/2);
\node at (\rightboxleft + \grayboxwidth/2, 0) {$\mathsf{C}^{(L-1)}$};

\node at (\rightboxleft + \grayboxwidth/2, \rightboxheight/2 + \noisemargin + 0.3) {$\cE_{L-1}$};

\def\rightboxright{\rightboxleft + \grayboxwidth}
\foreach \i in {0,...,8} {
  \pgfmathsetmacro{\ypos}{\rightboxheight/2 - (\i + 0.5) * \rightboxheight/9}
  \draw (\rightboxright, \ypos) -- (\rightboxright + \outlen, \ypos);
}
\end{tikzpicture}

\caption{An illustration of level reduction (see Lemma~\ref{lem:level reduction}) for $N=1$ encoded qubit. Gray colors indicate components affected by local stochastic noise, with darker shading indicating larger noise strength.
The noisy execution~$\cE_L\bowtie\mathsf{C}^{(L)}$ of the level-$L$ simulation~$\mathsf{C}^{(L)}$ of $\mathsf{C}$ is illustrated by a gray box (labeled with the local stochastic noise~$\cE_L$) enclosing the ideal circuit~$\mathsf{C}^{(L)}$ on the left (and similarly on the right). For a noisy level-$L$ simulation of a circuit~$\mathsf{C}$ with local stochastic noise~$\cE_L$ of strength~$p$ followed by an ideal decoder to level~$L-1$, there is a noisy level-$(L-1)$ simulation~$\cE_{L-1}\bowtie \mathsf{C}^{(L-1)}$ of~$\mathsf{C}$  with local stochastic noise~$\cE_{L-1}$ of strength~$p_{L-1}$ related to $p_L$. \label{fig:levelreductionillustrated}}
\end{figure}

The following result is an immediate consequence of level reduction (i.e., Lemma~\ref{lem:level reduction}). 
 It is illustrated in Fig.~\ref{fig:levelreductioniterated}. 
\begin{corollary}[Level reduction, iterated -- \cite{aliferis2007level}, Lemma 3, paraphrased] \label{cor:levelreductioniterated} 
Fix an $[[n,1,d]]$-code~$\cL$ and associated level-$0$ and level-$1$ gadgets
satisfying the fault tolerance properties given in Fig.~\ref{fig:ft-definitionsgadgets}. Let $t:=\lfloor (d-1)/2\rfloor$.
Let $L\in\mathbb{N}$ and let  
$\underset{L\rightarrow 0}{\mathsf{iDec}}$ be ideal decoding the map defined in Eq.~\eqref{eq:composeddecoder} decoding~$\cL^{\circ L}$ to a physical qubit.  Then there is a threshold error strength~$p_*>0$ depending only on the code and gadgets used such that the following holds:
Let $N\in\mathbb{N}$ be arbitrary and let $\mathsf{C}$ be an arbitrary $N$-qubit circuit.  Let $\mathsf{C}^{(L)}$ be the level-$L$ simulation of~$\mathsf{C}$. Then for any local stochastic noise~$\cE_L$ of strength~$p_L=p$, there is local stochastic noise~$\cE_{0}$ of strength
    \begin{align}\label{eq:level reduction noise}
        p_{0} = p_* (p/p_*)^{(t+1)^L}\ ,
    \end{align}
such that
    \begin{align}
        \left(\underset{L\rightarrow 0}{\mathsf{iDec}}\right)^{\otimes N} \circ (\cE_L\bowtie \mathsf{C}^{(L)}) = \cE_{0} \bowtie \mathsf{C}\  .
    \end{align}
    \end{corollary}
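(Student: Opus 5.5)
The plan is induction on $L$, applying Lemma~\ref{lem:level reductionoriginal} once per concatenation level together with the factorization in Eq.~\eqref{eq:parallelconcdec}. Let $c$ be the constant from Lemma~\ref{lem:level reductionoriginal} and set $p_*:=c^{-1/t}$. With this choice the one-level map $q\mapsto c\,q^{t+1}$ becomes $q\mapsto p_*(q/p_*)^{t+1}$, because $c\,q^{t+1}=p_*\cdot p_*^{-(t+1)}q^{t+1}$ exactly when $c=p_*^{-t}$.

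\emph{Base case.} For $L=1$ the claim is exactly Lemma~\ref{lem:level reductionoriginal}.

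\emph{Inductive step.} Assume the claim holds for $L-1$ and every circuit. First note that the circuit $\mathsf{C}^{(L)}$ is, by the recursive definition, the level-$1$ simulation of the circuit $\mathsf{C}^{(L-1)}$, viewed as an $N'$-qubit circuit with $N'=N(\mathsf{C}^{(L-1)})$.

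By Eq.~\eqref{eq:parallelconcdec}, the innermost factor $(\mathsf{iDec}^{(L)})^{\otimes N}$ of the composed decoder must be identified with the level-one ideal decoder acting on all blocks of $\mathsf{C}^{(L)}$. By the recursive definition, $\mathsf{iDec}^{(L)}$ is the level-$(L-1)$ simulation of $\mathsf{iDec}^{(1)}$. Its action on the data blocks is therefore $n^{L-1}$ copies of $\mathsf{iDec}^{(1)}$, i.e.\ $(\mathsf{iDec}^{(1)})^{\otimes N n^{L-1}}$ applied to $\mathsf{C}^{(L)}$.

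Apply Lemma~\ref{lem:level reductionoriginal} to $\mathsf{C}^{(L-1)}$ with noise $\cE_L$ of strength $p$. This gives local stochastic noise $\cE_{L-1}$ of strength $p_{L-1}=p_*(p/p_*)^{t+1}$ such that the decoded circuit equals $\cE_{L-1}\bowtie\mathsf{C}^{(L-1)}$. Composing with the remaining decoders $(\mathsf{iDec}^{(1)})^{\otimes N}\circ\cdots\circ(\mathsf{iDec}^{(L-1)})^{\otimes N}$ and using the induction hypothesis yields noise $\cE_0$ of strength $p_*(p_{L-1}/p_*)^{(t+1)^{L-1}}=p_*(p/p_*)^{(t+1)^L}$, which is Eq.~\eqref{eq:level reduction noise}.

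\emph{Main obstacle.} The delicate step is the identification of the innermost decoder layer. One must check that Lemma~\ref{lem:level reductionoriginal} applies to the level-$1$ simulation of an arbitrary circuit $\mathsf{C}^{(L-1)}$. That circuit is itself adaptive, contains EC gadgets, mid-circuit measurements, classically controlled corrections, and auxiliary qubits that are traced out, so it is not a plain prepare-and-measure circuit.

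I would handle this by observing that the proof in~\cite{aliferis2007level} only uses the gadget properties of Fig.~\ref{fig:ft-definitionsgadgets} locally, exRec by exRec, and applies to any circuit built from $0$-Gas. This includes classically controlled gates, since the decoder commutes through a logical gadget controlled by the same logical outcome, and partial traces via rule~\eqref{it:substitutionrulepartialtrace}.

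Auxiliary qubits at level $L-1$ require a separate remark. In the level-$1$ simulation they are encoded blocks, and the ideal decoder is applied to them as well. By Lemma~\ref{lem:productandsubcircuitssimulation} this matches the substitution-rule definition of $\mathsf{iDec}^{(L)}$ acting within each gadget, so the bookkeeping is consistent.

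Finally, the strength bound is uniform in the circuit, because $c$ depends only on the code and gadgets. This uniformity is what allows the induction to be applied to the growing circuits $\mathsf{C}^{(j)}$.
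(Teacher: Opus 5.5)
\textbf{The strategy matches the paper's, but one step does not hold as written.} Your route is the paper's: iterate the one-level relation $p_{j-1}=c\,p_j^{t+1}$ with $p_*:=c^{-1/t}$. Your exponent bookkeeping $p_*(p_{L-1}/p_*)^{(t+1)^{L-1}}=p_*(p/p_*)^{(t+1)^{L}}$ is correct.

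The failing step is the identification of the innermost decoder factor. You argue that since $\mathsf{iDec}^{(L)}$ is the level-$(L-1)$ simulation of $\mathsf{iDec}^{(1)}$, it ``therefore'' acts on the data as $(\mathsf{iDec}^{(1)})^{\otimes n^{L-1}}$ on the bottom-level blocks. This does not follow. Applying the substitution rules $L-1$ times to $\mathsf{iDec}^{(1)}$ gives a \emph{logical} implementation of the one-block decoder across $n$ blocks, each encoded in $\cL^{\circ(L-1)}$. That map strips the \emph{top} concatenation layer, and its gadgets couple different level-$(L-1)$ blocks and use auxiliary blocks. The product $(\mathsf{iDec}^{(1)})^{\otimes n^{L-1}}$ instead strips the \emph{bottom} layer. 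The two maps agree on code states of $\cL^{\circ L}$. Level reduction, however, is a statement about corrupted states, where they differ. So you cannot apply Lemma~\ref{lem:level reductionoriginal} to $\mathsf{iDec}^{(L)}$ through this identification.

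Your remark that the auxiliary blocks ``match the substitution-rule definition of $\mathsf{iDec}^{(L)}$ acting within each gadget'' reflects the same confusion. Those blocks are measured or discarded inside $\mathsf{C}^{(L)}$. An ideal decoder reaches them only through the measurement-gadget property inside the proof of the one-level lemma.

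\textbf{What your induction actually proves, and the fix.} Your argument establishes the statement for the bottom-up level-by-level decoder
$(\mathsf{iDec}^{(1)})^{\otimes N}\circ(\mathsf{iDec}^{(1)})^{\otimes Nn}\circ\cdots\circ(\mathsf{iDec}^{(1)})^{\otimes Nn^{L-1}}$.
Here each factor removes the lowest layer of $\mathsf{C}^{(j)}$, viewed as the level-$1$ simulation of $\mathsf{C}^{(j-1)}$, and returns a noisy $\mathsf{C}^{(j-1)}$. This is the standard convention in level-reduction arguments. It is also the decoder for which the paper's one-line argument (``iterate $p_{j-1}\le c\,p_j^{t+1}$'') is literally valid. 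So take this as the definition of the composed ideal decoder, rather than trying to derive it from the substitution recursion.

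The alternative does not directly close the induction. That alternative peels off the outermost $\mathsf{iDec}^{(1)}$ and treats the remainder as a level-$1$ simulation of a circuit containing the lower-level decoders, as in Lemma~\ref{lem:itlevelreductionencdec}. The reduced noise can then land on decoder locations, so the hypothesis (ideal decoders, noise only on $\mathsf{C}^{(L-1)}$) no longer applies. With the bottom-up convention, the rest of your argument goes through, including your handling of adaptive gadgets.
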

\begin{proof}
Follows immediately by iterating the relation~$p_{j-1}\leq c p_{j}^{t+1}$ from Lemma~\ref{lem:level reduction} with $p_L=p$ 
 and setting $p_*=c^{-1/t}$, cf.~\cite[Section 3.4]{aliferis2007level}.
\end{proof}

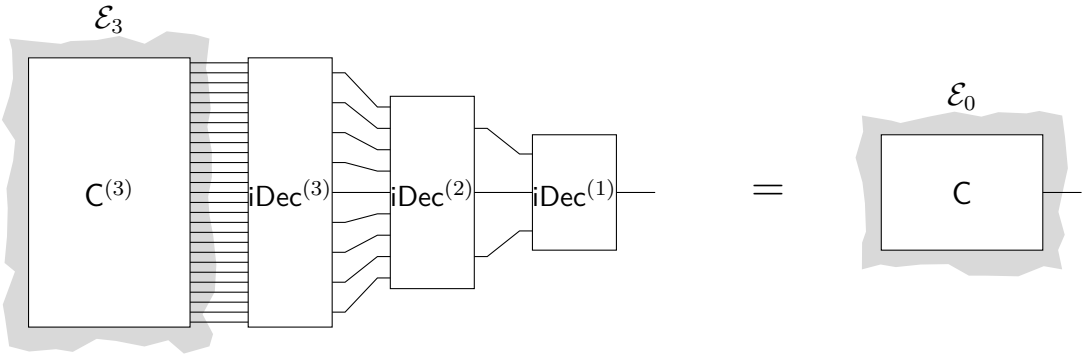
\begin{figure}
\centering

\begin{tikzpicture}[scale=0.85]
\def\boxwidth{1.3}
\def\grayboxwidth{2.5} 
\def\boxheight{4.2}
\def\wirelen{0.9}
\def\bendx{0.2}
\def\outlen{0.6}
\def\circuitgap{3.5}
\def\noisemargin{0.3} 
\def\boxAheight{4.2}
\def\boxBheight{3.0}
\def\boxCheight{1.8}
\def\grayboxleft{0}

\fill[gray!30, decorate, decoration={random steps, segment length=3mm, amplitude=1mm}] 
  (\grayboxleft - \noisemargin, -\boxAheight/2 - \noisemargin) rectangle 
  (\grayboxleft + \grayboxwidth + \noisemargin, \boxAheight/2 + \noisemargin);

\fill[white] (\grayboxleft, -\boxAheight/2) rectangle (\grayboxleft + \grayboxwidth, \boxAheight/2);
\draw (\grayboxleft, -\boxAheight/2) rectangle (\grayboxleft + \grayboxwidth, \boxAheight/2);
\node at (\grayboxleft + \grayboxwidth/2, 0) {$ \mathsf{C}^{(3)}$};

\node at (\grayboxleft + \grayboxwidth/2, \boxAheight/2 + \noisemargin + 0.3) {$\cE_3$};

\def\decAxleft{\grayboxwidth + \wirelen}
\draw[fill=white] (\decAxleft, -\boxAheight/2) rectangle (\decAxleft + \boxwidth, \boxAheight/2);
\node at (\decAxleft + \boxwidth/2, 0) {\small $\mathsf{iDec}^{(3)}$};
\foreach \i in {0,...,26} {
\pgfmathsetmacro{\ypos}{\boxAheight/2 - (\i + 0.5) * \boxAheight/27}
\draw (\grayboxleft + \grayboxwidth, \ypos) -- (\decAxleft, \ypos);
}
\def\decBxleft{\grayboxwidth + \boxwidth + 2*\wirelen}
\draw[fill=white] (\decBxleft, -\boxBheight/2) rectangle (\decBxleft + \boxwidth, \boxBheight/2);
\node at (\decBxleft + \boxwidth/2, 0) {\small \small $\mathsf{iDec}^{(2)}$};
\foreach \i in {0,...,8} {
\pgfmathsetmacro{\yposA}{\boxAheight/2 - (\i + 0.5) * \boxAheight/9}
\pgfmathsetmacro{\yposB}{\boxBheight/2 - (\i + 0.5) * \boxBheight/9}
\draw (\decAxleft + \boxwidth, \yposA) -- +(\bendx,0) -- ({\decBxleft - \bendx}, \yposB) -- (\decBxleft, \yposB);
}
\def\decCxleft{\grayboxwidth + 2*\boxwidth + 3*\wirelen}
\draw[fill=white] (\decCxleft, -\boxCheight/2) rectangle (\decCxleft + \boxwidth, \boxCheight/2);
\node at (\decCxleft + \boxwidth/2, 0) {\small \small $\mathsf{iDec}^{(1)}$};
\foreach \i in {0,...,2} {
\pgfmathsetmacro{\yposB}{\boxBheight/2 - (\i + 0.5) * \boxBheight/3}
\pgfmathsetmacro{\yposC}{\boxCheight/2 - (\i + 0.5) * \boxCheight/3}
\draw (\decBxleft + \boxwidth, \yposB) -- +(\bendx,0) -- ({\decCxleft - \bendx}, \yposC) -- (\decCxleft, \yposC);
}
\draw (\decCxleft + \boxwidth, 0) -- +(\outlen, 0);
\def\leftend{\decCxleft + \boxwidth + \outlen}
\node at ({\leftend + \circuitgap/2}, 0) {\Large $=$};
\def\rightshift{\leftend + \circuitgap}

\fill[gray!30, decorate, decoration={random steps, segment length=3mm, amplitude=1mm}] 
  (\rightshift - \noisemargin, -\boxCheight/2 - \noisemargin) rectangle 
  (\rightshift + \grayboxwidth + \noisemargin, \boxCheight/2 + \noisemargin);

\fill[white] (\rightshift, -\boxCheight/2) rectangle (\rightshift + \grayboxwidth, \boxCheight/2);
\draw (\rightshift, -\boxCheight/2) rectangle (\rightshift + \grayboxwidth, \boxCheight/2);
\node at (\rightshift + \grayboxwidth/2, 0) {$\mathsf{C}$};

\node at (\rightshift + \grayboxwidth/2, \boxCheight/2 + \noisemargin + 0.3) {$\cE_0$};

\draw (\rightshift + \grayboxwidth, 0) -- +(\outlen, 0);
\end{tikzpicture}
      \caption{The statement of Corollary~\ref{cor:levelreductioniterated} illustrated for $N=1$ encoded qubits and $L=3$: Applying the ideal decoder~$\underset{L\rightarrow 0}{\mathsf{Dec}}$ after a strength-$p$ noisy execution~$\cE_{L}\bowtie\cC^{(L)}$ of the level-$L$ simulation~$\cC^{(L)}$ is equivalent to a noisy execution~$\cE_0\bowtie\cC$ of~$\mathsf{C}$ with smaller error strength $p_0$ related to~$p$.
      \label{fig:levelreductioniterated}}
      \end{figure}

Iterated level reduction is the basis for the following fundamental result. 
\begin{theorem}[Fault tolerance threshold theorem for local stochastic noise~\cite{aliferis2007level,aliferis2005quantum} paraphrased]\label{thm:mainthresholdtheoremagp}
There is a constant threshold $p_*>0$ such that the following holds.
Suppose $\cC$ is a prepare-and-measure circuit 
using  one- and two-qubit operations belonging to a certain (universal) set, and let $p(z)$, $z\in \{0,1\}^N$ be
the associated output distribution (see Eq.~\eqref{eq:samplingfromdistribution}). Let $\varepsilon>0$. 
Then there is fault-tolerant implementation~$\cC_{FT}$ of~$\cC$
with the following properties. The number of qubits and the depth of~$\cC_{FT}$ are bounded as 
\begin{align}
N(\cC_{FT})&\leq N(\cC)\cdot \mathsf{poly}(\log (|\mathsf{Loc}(\cC)|/\varepsilon))\\
\mathsf{depth}(\cC_{FT})&\leq \mathsf{depth}(\cC)\cdot 
\mathsf{poly}(\log(| \mathsf{Loc}(\cC)|/\varepsilon))\ .
\end{align}
Furthermore,
if $\cE$ is local stochastic noise on~$\cC_{FT}$  of strength $p\leq p_*$, then the output distribution~$\tilde{p}$ of 
$\cE\bowtie \cC_{FT}$ is $\varepsilon$-close in variational distance to~$p$. 
\end{theorem}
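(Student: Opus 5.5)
We prove Theorem~\ref{thm:mainthresholdtheoremagp} by taking $\cC_{FT}:=\cC^{(L)}$, the level-$L$ simulation of~$\cC$, with $L$ chosen as a function of $|\mathsf{Loc}(\cC)|/\varepsilon$. The threshold $p_*$ is the one from Corollary~\ref{cor:levelreductioniterated}. The argument has three steps: strip off the final measurements, apply iterated level reduction, and then choose $L$ and convert it into overhead bounds via Lemma~\ref{lem:qubitdepthoverheadsimulation}.

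\emph{Step 1 (reduce to the decoded circuit).} Write $\cC=\cM\circ\cC'$, where $\cC'$ consists of preparations and unitaries and $\cM$ is the final layer of computational basis measurements. By Lemma~\ref{lem:productandsubcircuitssimulation}\eqref{it:subcircuitpropertylem}, $\cC^{(L)}=\cM^{(L)}\circ\cC'^{(L)}$. The measurement gadget property (Fig.~\ref{it:measurementftprop}), applied level by level, lets us replace each level-$L$ measurement gadget by an ideal level-by-level decoder followed by a physical measurement. This holds whenever the faults are ``good'', i.e.\ correctable in the sense of the filters. Following~\cite{aliferis2007level}, this bookkeeping is already absorbed into the level-reduction statement once the measurement is included in the circuit on which we level-reduce, since the measurement is itself a $0$-Ga and the measurement gadget satisfies the required exchange rule. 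Concretely, applying Corollary~\ref{cor:levelreductioniterated} to the full circuit $\cC$ (measurements included, with the obvious reading that a decoded measurement gadget equals a physical measurement) gives local stochastic noise~$\cE_0$ of strength $p_0=p_*(p/p_*)^{(t+1)^L}$ such that the output distribution of $\cE\bowtie\cC^{(L)}$ coincides with that of $\cE_0\bowtie\cC$.

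\emph{Step 2 (bound the distance).} Decompose according to whether $\cE_0$ acts trivially. By Lemma~\ref{lem:localstochasticnoiseqoutput}\eqref{it:circnoisyclaim}, $\Pr[\cE_0\bowtie\cC\neq\cC]\le|\mathsf{Loc}(\cC)|\,p_0$. Write the output distribution $\tilde p$ as a mixture, over realizations of~$\cE_0$, of the distributions of the maps $\cE_0\bowtie\cC$. Outside an event of probability at most $|\mathsf{Loc}(\cC)|p_0$ the map equals $\cC$, so convexity of the variational distance gives $\|\tilde p-p\|_{TV}\le|\mathsf{Loc}(\cC)|\,p_0$.

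\emph{Step 3 (choose $L$ and bound overhead).} Suppose $p\le p_*/2$. (For $p$ up to $p_*$ itself we simply redefine the threshold as $p_*/2$; this is harmless since the theorem only asserts existence of a constant.) Then $p_0\le p_*2^{-(t+1)^L}$. It suffices that $(t+1)^L\ge\log_2(p_*|\mathsf{Loc}(\cC)|/\varepsilon)$, so we take $L=\lceil\log_{t+1}\log_2(p_*|\mathsf{Loc}(\cC)|/\varepsilon)\rceil$. Lemma~\ref{lem:qubitdepthoverheadsimulation} gives $N(\cC^{(L)})\le N(\cC)\nmax^L$ and $\mathsf{depth}(\cC^{(L)})\le\mathsf{depth}(\cC)\dmax^L$. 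Moreover $\nmax^L=2^{L\log\nmax}=(\log(|\mathsf{Loc}(\cC)|/\varepsilon))^{O(\log\nmax/\log(t+1))}$, which is polylogarithmic, and the same holds for the depth.

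The main obstacle is Step~1: justifying that the final measurement gadgets and the classical processing of their outcomes (the concatenated classical decoding of the transversal measurement) commute correctly with the ideal decoders. This is needed so that level reduction yields a statement about output distributions, not merely about states. I would handle it by including the measurement as a $0$-Ga in the level-reduction lemma and invoking the exchange rule of Fig.~\ref{it:measurementftprop} at every level, exactly as in~\cite{aliferis2007level}.
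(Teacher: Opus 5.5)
Your proposal is correct and follows the paper's route. The paper gives no standalone proof of this theorem; it presents it as a cited consequence (Aliferis--Gottesman--Preskill) of iterated level reduction, Corollary~\ref{cor:levelreductioniterated}. Your Steps 2--3 are exactly the computations the paper carries out in the analogous proofs of Lemma~\ref{lem:failureprobencdec} and Theorem~\ref{thm:robustimplementationstochastic}: the union bound $|\mathsf{Loc}(\cC)|\,p_0$ via Lemma~\ref{lem:localstochasticnoiseqoutput}, the margin $p\le p_*/2$, and $L=\Theta(\log\log(|\mathsf{Loc}(\cC)|/\varepsilon))$, which makes $\nmax^L$ and $\dmax^L$ polylogarithmic. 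Treating the final measurements as $0$-Gas inside level reduction is the standard way to supply the measurement bookkeeping the paper defers to the citation.
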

\noindent Summarizing Theorem~\ref{thm:mainthresholdtheoremagp}, we say that $\cC_{FT}$ fault-tolerantly implements 
$\cC$ with polylogarithmic overhead below the threshold error strength~$p_*$.

\subsubsection{Fault tolerance in 1D\label{sec:ft1Dlieterature}}
 A key ingredient of our construction is the ability to design fault-tolerant quantum circuits which only require connectivity between nearest neighbors on a line. Such fault-tolerant schemes have been described as early as Ref.~\cite{aharonov1997fault} (whilst more sophisticated methods which yield constant-rate encoding have been discovered recently \cite{gidney2025constant}). The idea is to simply add SWAP gates between circuit layers, to transport any qubits which must participate in a 2-qubit gate onto adjacent vertices. This leads to a polynomially enlarged circuit which requires only fault-tolerant gate execution between nearest neighbors. The standard approach of level reduction \cite{aliferis2007level,gottesman2024surviving} can be applied in this situation, using code concatenation \cite{knill1996concatenated} to iteratively design  circuits with lower effective noise rate, however there is some subtlety in 1D: at the lowest level of concatenation, a single faulty SWAP gate implementation will propagate errors onto two physical qubits --- this is incompatible with the requirements of fault-tolerant simulation \cite{gottesman2024surviving}. This issue can be overcome either by using a code which can correct 2-qubit errors, or by upgrading the geometry to a bilinear array \cite{gottesman2000fault} on which fault-tolerant swap operations can be realized. We opt for the latter approach here, using the work of Ref.~\cite{stephens2007universal}, in which the authors describe concrete gadgets for the 7-qubit Steane code \cite{steane1996error} to realize fault-tolerant computation in a bilinear array.

\subsubsection{Fault tolerance in a bilinear qubit array\label{sec:bilineararrayfaulttolerance}}
In Ref.~\cite{stephens2007universal}, Stephens, Fowler and Hollenberg demonstrate how to use a quasi-1D array of qubits with nearest-neighbor interactions to realize universal fault-tolerant quantum computation with a distance-$3$ code. The interaction graph is the bilinear array~$B_{2r}$ consisting of two parallel lines of $r$~qubits, see Fig.~\ref{fig:bilineararray}. Their construction relies 
on concatenating the Steane code~\cite{Steane7qubit}, an quantum $[[7,1,3]]$-code, with itself.

 Syndrome extraction circuits of Steane type~\cite{SteaneEC97}
can be used in the construction of the error correction gadget. These require the use of encoded ancillary states.
As proposed in~\cite{stephens2007universal}, verification and postselection routines associated with fault-tolerant preparation of encoded ancillas can be sidestepped using the
``ancilla decoding'' method of DiVincenzo and Aliferis~\cite{divincenzo2007effective}. Here an adaptive  Pauli correction is applied to the data qubits depending on measurement results obtained by measuring the ancilla qubits after their use.  One can easily check that the corresponding adaptive circuits are linear and read-once (see Section~\ref{sec:adaptivevsnonadaptivepauli}). 

As a CSS-code the $7$-qubit code  allows for a transversal implementation of the $CNOT$ gate. It also has  transversal implementations of the single-qubit Clifford gates~$\{X, Z, H, S\}$. One of the main challenges overcome by~\cite{stephens2007universal} is to realize fault tolerance gadgets (see Section~\ref{sec:FTgagdetssimulation}) 
for the $SWAP$ and $T$ gates with nearest-neighbor operations on the bilinear array. As shown in~\cite{stephens2007universal}, these components can then be used to recompile
a geometrically local circuit on a line (a linear nearest-neighbor  circuit in the terminology of~\cite{stephens2007universal}) into a  fault-tolerant circuit 
on a bilinear array. The provided threshold proof relies on the extended rectangles framework~\cite{aliferis2005quantum} and provides explicit threshold estimates.

Using these gadgets, it is argued in~\cite{stephens2007universal}, that  a geometrically $1D$-local circuit~$\cC$ can be recompiled into a fault-tolerant circuit which only nearest-neighbor operations on the bilinear array. Here we call a circuit~$\cC$ on $N$~qubits geometrically $1D$-local  if it is $\pathgraph_{N}$-local 
on the path graph $\pathgraph_N$ (see Fig.~\ref{fig:linegraphfigure}), i.e., if it only uses nearest-neighbor operations when the qubits are arranged on a line.  The main finding of~\cite{stephens2007universal}
 (combined with 
Theorem~\ref{thm:mainthresholdtheoremagp})
can then be summarized as follows.
\begin{theorem}[Fault tolerance on the bilinear qubit array~\cite{stephens2007universal}, paraphrased informally]\label{thm:ftbilineararrayconstruct}
There is a constant threshold error strength~$p_*>0$ such that the following holds.  Let $\varepsilon>0$ and  $\cC$ be a $1D$-local prepare-and-measure circuit~$\cC$ composed of a certain universal gate set. Then there a circuit~$\cC^{\mathsf{bilinear}}$
whose output distribution is $\varepsilon$-close in variational distance 
to that of $\cC$ in the presence of local stochastic noise of strength $p\leq p_*$. The circuit~$\cC^{\mathsf{bilinear}}$ is  local on the bilinear qubit array. This construction has a polylogarithmic overhead. 
\end{theorem}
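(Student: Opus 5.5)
The plan is to derive the statement from the concatenation machinery of Section~\ref{sec:levelreductionftthresholdthm}, instantiated with the $[[7,1,3]]$ Steane code and the bilinear-array gadgets of~\cite{stephens2007universal}. Here $t=1$. Given a $1D$-local circuit~$\cC$ on $N$ qubits, I first compile it into a $1D$-local circuit on the path $\pathgraph_N$ using only logical nearest-neighbour two-qubit gates. Wherever a gadget would need a non-adjacent interaction, I insert SWAP gates; this costs only a constant factor, since $\cC$ is already $1D$-local. Then I take the level-$L$ simulation~$\cC^{(L)}$ built from the substitution rules~\eqref{it:substitutionruleprep}--\eqref{it:substitutionrulemeasurement}, using the gadgets of~\cite{stephens2007universal} at every level. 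The key structural fact to check, by induction on~$L$, is that these gadgets preserve bilinear-array locality under recursion. A level-$1$ gadget acting on two neighbouring logical blocks (arranged as two rows of a bilinear strip) only needs nearest-neighbour operations on the physical bilinear array. Applying the substitution to a $B_{2r}$-local circuit therefore again yields a $B_{2r'}$-local circuit, up to the logical SWAP gadgets that~\cite{stephens2007universal} provide. The result is that $\cC^{(L)}$ is local on a bilinear array of length $N\cdot \nmax^{L}$.

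Next I verify that the gadgets satisfy the conditions of Fig.~\ref{fig:ft-definitionsgadgets}. This is exactly what~\cite{stephens2007universal} establishes. The subtle case is SWAP, where a single fault must not spread to two qubits of the same block. On the bilinear array this is achieved by moving the blocks' qubits along separate rows, so that one fault affects at most one qubit per block. With these conditions in hand, Corollary~\ref{cor:levelreductioniterated} gives
\begin{align}
\left(\underset{L\rightarrow 0}{\mathsf{iDec}}\right)^{\otimes N}\circ(\cE_L\bowtie\cC^{(L)})=\cE_0\bowtie\cC
\end{align}
with $\cE_0$ of strength $p_0=p_*(p/p_*)^{2^L}$. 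Since the final measurement gadget satisfies Fig.~\ref{it:measurementftprop}, the output distribution of $\cE_L\bowtie\cC^{(L)}$ equals that of $\cE_0\bowtie\cC$. By Lemma~\ref{lem:localstochasticnoiseqoutput}\eqref{it:circnoisyclaim}, this differs from the ideal distribution of~$\cC$ except with probability at most $|\mathsf{Loc}(\cC)|\,p_0$. That bound in turn bounds the variational distance.

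Finally I choose parameters. For $p\le p_*/2$, taking $L=\lceil \log_2\log_2(|\mathsf{Loc}(\cC)|/\varepsilon)\rceil+O(1)$ ensures $|\mathsf{Loc}(\cC)|p_0\le\varepsilon$. Lemma~\ref{lem:qubitdepthoverheadsimulation} then bounds the qubit and depth overhead by $\nmax^L,\dmax^L=\mathsf{poly}(\log(|\mathsf{Loc}(\cC)|/\varepsilon))$, which is polylogarithmic as claimed. The threshold in the theorem is then $p_*/2$, where $p_*$ is fixed by the constant~$c$ of Lemma~\ref{lem:level reductionoriginal} for these gadgets.

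I expect the main obstacle to be the locality-preserving recursion. The level-reduction argument is stated for abstract circuits, so one has to check that each level-$(j+1)$ gadget placed on the bilinear layout is still built only from level-$j$ operations available as bilinear-local $j$-gadgets, including the wait and SWAP locations needed to route ancilla blocks used in Steane-type error correction. I would handle this by fixing a canonical layout: each logical qubit of level~$j$ occupies a contiguous segment of both rows, and ancilla blocks are interleaved between data blocks. I would then check that every gadget of~\cite{stephens2007universal} is expressible in that layout. This step is essentially the content of~\cite{stephens2007universal}, which I would cite rather than reprove.
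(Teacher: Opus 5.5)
Your proposal follows essentially the same route as the paper, which gives no independent argument for this statement but simply combines the gadgets and threshold analysis of~\cite{stephens2007universal} with the concatenated threshold theorem (Theorem~\ref{thm:mainthresholdtheoremagp}). Your level reduction, union bound over $\mathsf{Loc}(\cC)$ and choice of $L$ spell out the latter part correctly. As in the paper, the genuinely nontrivial ingredients are taken from~\cite{stephens2007universal} rather than proved. These are that the Steane-code gadgets (notably SWAP and error correction) satisfy the conditions of Fig.~\ref{fig:ft-definitionsgadgets} under the bilinear constraint, and that the bilinear layout is preserved under recursion.
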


\section{Robust implementation of circuits\label{sec:noisyencodingdecoding}}
Here we extend the standard quantum fault tolerance framework
outlined in Section~\ref{sec:fault tolerance} to apply more generally in situations where the circuit
acts on (quantum) input states, and may produce (not only a sample from a distribution, but) a general quantum output state. Such extensions were recently discussed in~\cite{christandl2026fault,belzig2026constant}.

Our approach is particularly simple: We argue that the fault tolerance  framework pioneered in~\cite{aliferis2005quantum,aliferis2007level} already essentially treats this setup when a few minor modifications are applied. In particular, noisy encoding and decoding circuits are added and analyzed. Such ideas appeared as early as Ref.~\cite{knill1996concatenated}, and recent works~\cite{christandl2026fault,he2025composable} have given alternative general formulations of the problem.

In Section~\ref{sec:robustimplementation}, we introduce the relevant notion  of a robust implementation of a circuit.  In Section~\ref{sec:ftgadgetsencoding} we discuss encoding and decoding gadgets, and 
then apply this to concatenated codes in Section~\ref{sec:concatenatedencoding}.
In Section~\ref{sec:levelreductionnoisy}, we argue that the level-reduction
technique of~\cite{aliferis2005quantum}  applies to our generalized setting with noisy encoders and decoders. 
In Section~\ref{sec:robustimplementationconcatenated} we summarize our main results showing that robust implementations can be obtained by fault tolerance constructions with concatenated codes.

\subsection{Robust implementations of circuits: Definition}\label{sec:robustimplementation}
Concretely, we consider 
the problem of rendering 
the following more general circuits fault-tolerant (cf.~Section~\ref{sec:FTgagdetssimulation}). We consider a circuit~$\mathsf{C}$ on $N$ (``logical'') qubits, with $N_{\mathsf{in}}$-qubit input and $N_{\mathsf{out}}$-qubit output consisting of the following steps:
\begin{enumerate}[(i)]
\item\label{it:prepstepgeneralcircuitgeneralized} A system of $N_{\mathsf{in}}$ qubits is taken as input.
\item The computational basis state~$\ket{0}^{\otimes (N - N_{\mathsf{in}})}$ is prepared on the remaining $N-N_{\mathsf{in}}$ qubits.\label{it:prepmvb}
\item\label{it:evolutionstepgeneralcircuitgeneralized}
A sequence $U_1,\ldots,U_s$ of one- and two-qubit gates (unitaries) from a specified gate set is applied.
\item A subset of $N-N_{\mathsf{out}}$ qubits are measured, producing a  sample $z \in \{0,1\}^{N - N_{\mathsf{out}}}$.\label{it:measurementstepgeneralizmv}
\item The remaining~$N_{\mathsf{out}}$ qubits are taken as output.\label{it:measurementstepgeneralcircuitgeneralized}
\end{enumerate}
The following definition will be convenient for a circuit~$\cC$ 
consisting of the steps~\eqref{it:prepstepgeneralcircuitgeneralized}--\eqref{it:measurementstepgeneralcircuitgeneralized}.
\begin{definition}[Robust circuit implementation]\label{def:ftimplementation}
Consider a circuit $\cC$ on $N$ qubits, with $N_{\mathsf{in}}$-qubit input and $N_{\mathsf{out}}$-qubit output. Let $\cC_{FT}$ be another circuit with the same number of input and output qubits. We say that $\cC_{FT}$ fault-tolerantly implements $\cC$ with robustness $f : [0,1] \rightarrow \RR$ under local stochastic (Pauli) noise if the following holds for any $p\in [0,1]$. For any local stochastic (Pauli) noise $\cE$ on $\cC_{FT}$, we have
\begin{align}
    \prob[\cE \bowtie \cC_{FT} \neq \cC] \leq f(p)\ .
\end{align}
In other words, the noisy realization of $\cC_{FT}$ has the same action as a CPTP map as $\cC$, except with probability $f(p)$. For brevity, we refer to $\cC_{FT}$ as an  $f$-robust implementation of $\cC$ in this case.
\end{definition}

\subsection{Fault tolerance gadgets for encoding and decoding\label{sec:ftgadgetsencoding}}

To obtain a robust circuit implementation~$\cC_{FT}$ starting from a circuit~$\cC$ as specified by~\eqref{it:prepstepgeneralcircuitgeneralized}--\eqref{it:measurementstepgeneralcircuitgeneralized}, we follow the gadget approach for concatenated fault tolerance: 
 we first replace each operation in~\eqref{it:prepmvb}--\eqref{it:measurementstepgeneralizmv} 
by the associated gadget,
recursively defining simulations at different levels following th procedure described in Section~\ref{sec:FTgagdetssimulation}.  The  resulting level-$L$ circuit $\cC^{(L)}$ acts on $N_{\mathsf{in}}$ logical qubits each represented by $n^L$ physical qubits via the concatenated code $\cL^{\circ L}$ (and similarly for the output systems).  For our purposes, however,
 it will be important to consider a circuit mapping directly between unencoded (physical) qubits, acting by encoding into the level-$L$ logical space, applying $\cC^{(L)}$, and then decoding the logical information.

In order to include inputs and outputs in our circuit $\cC$, it will  be necessary to define level-$1$ encoding and decoding gadgets  to be inserted on the input and output wires.
These should be contrasted to the use of ``ideal decoders'':  In \cite{aliferis2007level} for example, the ideal decoders are viewed as mathematical proof tools which are not physically implemented, 
see the discussion in Section~\ref{sec:ftgadgetsdiscussion}.
Instead, here we   view both the encoder~$\enc$ as well as the decoder~$\dec$  as concrete (adaptive) circuits (themselves composed of $0$-Gas). We will ilustrate these by the diagrams shown in Fig.~\ref{fig:ftencodingdecodinggadgetsrep}.
\begin{figure}[htbp]
    \centering
    \begin{subfigure}[b]{0.2\textwidth}
        \centering
        \begin{quantikz}
            \qw & \decgate[fill=gray!30]{} & \qw
            \arrow[from=1-1,to=1-2,black,line width=\encwidth,-]{}
        \end{quantikz}
        \caption{Decoder gadget}
    \end{subfigure}
   \qquad\qquad 
    \begin{subfigure}[b]{0.2\textwidth}
        \centering 
        \begin{quantikz}
            \qw &\encgate[fill=gray!30]{} & \qw
            \arrow[from=1-2,to=1-3,black,line width=\encwidth,-]{}
        \end{quantikz}
        \caption{Encoder gadget}
   \end{subfigure}
        \caption{Diagrammatic notation for decoder and encoding gadgets. As before, thick (thin) lines represent encoded (decoded) qubits. 
These gadgets should be contrasted with the ideal decoder (see Fig.~\ref{fig:ftgadgetcomponentsnotions}).
}
    \label{fig:ftencodingdecodinggadgetsrep}
\end{figure}
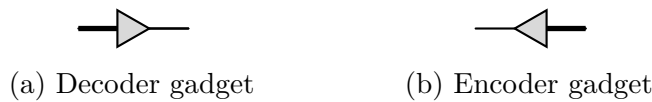

The decoder gadget~$\dec$ is a circuit which --- in the absence of faults --- maps from $n$ physical qubits into $1$ logical qubit, decoding from the code space~$\cL$ and correcting up to $t$~errors. 
In other words, the decoder gadget~$\dec$ is a particular realization 
of the ideal decoder~$\idec$ defined by Eq.~\eqref{eq:conditionerrorrecovery}.  That is, as a CPTP map, the decoder gadget~$\dec:\cB((\mathbb{C}^2)^{\otimes n})\rightarrow\cB(\mathbb{C}^2)$
has the same action on states of the form~$E\ket{\Psi}$, where~$\ket{\Psi}\in\cL$ belongs to the code space and~$E$ is a correctable Pauli error.  This means that  the decoder gadget~$\dec$ can be used as a substitute for the ideal decoder~$\idec$ in the conditions defining various gadgets (see Fig.~\ref{fig:ft-definitionsgadgets}). We illustrate this in Fig.~\ref{fig:decodergadgetprop}.
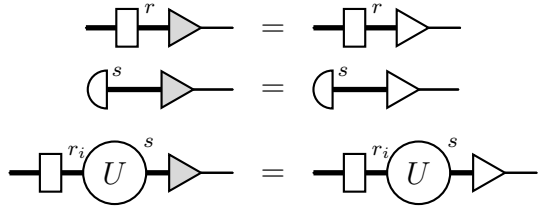
\begin{figure}[htbp]
        \centering
        $\begin{aligned}
                \begin{quantikz}[column sep=0.4cm]
            \qw & \filtr & \decgate[fill=gray!30]{} & \qw
            \arrow[from=1-1,to=1-3,black,line width=\encwidth,-]{}
        \end{quantikz}
&\;=\;
        \begin{quantikz}[column sep=0.4cm]
            \qw & \filtr & \decgate{} & \qw
            \arrow[from=1-1,to=1-3,black,line width=\encwidth,-]{}
        \end{quantikz}\\
        \begin{quantikz}
        \prepgate["s"{font=\scriptsize,yshift=7,xshift=13}]{}  & \decgate[fill=gray!30]{} & \qw
        \arrow[from=1-1,to=1-2,black,line width=\encwidth,-]{}
        \end{quantikz} 
        &\;=\;
        \begin{quantikz}
            \prepgate["s"{font=\scriptsize,yshift=7,xshift=13}]{}  & \decgate{} & \qw
        \arrow[from=1-1,to=1-2,black,line width=\encwidth,-]{}
        \end{quantikz}\\
        \begin{quantikz}[column sep=0.4cm]
            \qw & \filtri & \ftgates{U} & \decgate[fill=gray!30]{} & \qw
            \arrow[from=1-1,to=1-4,black,line width=\encwidth,-]{}
        \end{quantikz}
        &\;=\;
      \begin{quantikz}[column sep=0.4cm]
            \qw & \filtri & \ftgates{U} & \decgate{} & \qw
            \arrow[from=1-1,to=1-4,black,line width=\encwidth,-]{}
        \end{quantikz}
        \end{aligned}$
        \caption{Properties of the decoder gadget ($r\leq t$, $s\leq t$ and $s+\sum_i r_i\leq t$)\label{fig:decodergadgetprop}}
\end{figure}

The  encoder gadget $\enc$, meanwhile, is a circuit which acts as a CPTP map 
$\enc:\cB((\mathbb{C}^2)\rightarrow \cB((\mathbb{C}^2)^{\otimes n})$. 
In the absence of faults, it isometrically maps $\mathbb{C}^2$ into the code space~$\cL$. Furthermore, we assume that its restriction to the code space is inverted by the ideal decoder, i.e., we have  
\begin{align}\label{eq:decenccomp}
    \idec \circ \enc = \mathsf{id}\ .
\end{align}
These conditions are illustrated in Fig.~\ref{fig:encodingdecodinggadgetcorrectness}.
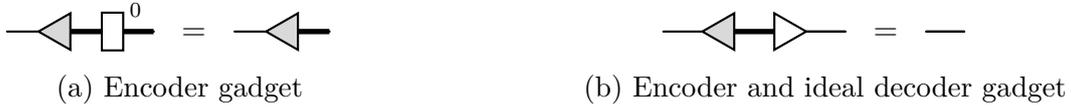
\begin{figure}[htbp]
\centering
        \begin{subfigure}[b]{0.3\textwidth}
        \centering
        $\begin{aligned}\begin{quantikz}[column sep=0.4cm]
\qw & \encgate[fill=gray!30]{} & \filtrzero & \qw
\arrow[from=1-2,to=1-4,black,line width=\encwidth,-]{}
\end{quantikz}
&\;=\;
\begin{quantikz}[column sep=0.4cm]
\qw & \encgate[fill=gray!30]{} & \qw
\arrow[from=1-2,to=1-3,black,line width=\encwidth,-]{}
\end{quantikz}
\end{aligned}$
        \caption{Encoder gadget\label{it:encodinggadgetLspace}}
    \end{subfigure}\qquad \qquad\qquad
 \begin{subfigure}[b]{0.4\textwidth}
        \centering
        $\begin{aligned}
        \begin{quantikz}
            \qw & \encgate[fill=gray!30]{} & \decgate{} & \qw
            \arrow[from=1-2,to=1-3,black,line width=\encwidth,-]{}
        \end{quantikz}
    &\;=\;
    \begin{quantikz}
        \qw & \qw
    \end{quantikz}
\end{aligned}$
        \caption{Encoder  and  ideal decoder gadget\label{it:encodingdecodinggadget}}
    \end{subfigure}
        \caption{Conditions for fault tolerance of encoding and decoding $1$-gadgets for an $[[n,1,d]]$-code, with $t=\lfloor (d-1)/2\rfloor$.
In these conditions, we assume that the encoding and decoding gadgets (gray) are executed without errors.
Condition~\eqref{it:encodinggadgetLspace}  states  that the encoding gadget maps into the code space.
The diagram~\eqref{it:encodingdecodinggadget} 
expresses the condition given by Eq.~\eqref{eq:decenccomp}: the ideal decoder inverts the encoding map.
    \label{fig:encodingdecodinggadgetcorrectness}}
\end{figure}

For an $[[n,1,d]]$-stabilizer code~$\cL$, the encoding and decoding gadgets can be constructed as follows: There is a Clifford unitary~$U$ which maps~$\mathbb{C}^2\otimes (\mathbb{C}\ket{0^{n-1}})$ isometrically into~$\cL$.
Decomposing $U$ into  Clifford gates belonging to the considered gate set, we can define a circuit~$\enc$ which consists in preparing~$\ket{0^{n-1}}$ on $n-1$~qubits, and applying~$U$. In other words, its action is given by~$\enc(\rho)=U(\rho\otimes\proj{0^{n-1}})U^\dagger$. 
An associateding decoding circuit~$\dec$ is obtained by applying the inverse of~$U$ (again by applying Clifford gates), measuring  and tracing out $n-1$~qubits in the computational basis, and applying a suitably chosen Pauli correction~$C(s)$
depending on the measurement outcome~$s\in \{0,1\}^{n-1}$. That is, the decoding map acts as
\begin{align}
\dec(\rho)&=\sum_{s\in \{0,1\}^{n-1}} 
 C(s)(I\otimes \bra{s})(U^\dagger \rho U)(I\otimes \ket{s})C(s)^\dagger\ .
\end{align}
The circuits~$(\enc,\dec)$ here are specified in terms of level-$0$ gadgets. Correspondingly, we write $\enc^{(1)} = \enc$ and $\dec^{(1)}=\dec$, referring to these circuits as the level-$1$ encoding and decoding gadgets.

\subsection{Concatenated encoding and decoding\label{sec:concatenatedencoding}}
Clearly, as the final $N_{\mathsf{out}}$-qubit state prepared this way is not encoded, the resulting state will at best be corrupted by noise of strength of the order of the physical noise strength (and similar errors will occur during the input phase). We show that this is in fact achievable, and has useful practical applications.

For any $j\geq 2$, we again define a level-$j$ to level-$(j-1)$ decoder. 
The map $\mathsf{Dec}^{(j)}$ is defined recursively as the result of applying the substitution rules~\eqref{it:substitutionruleprep}--\eqref{it:substitutionrulepartialtrace} to~$\mathsf{Dec}^{(j-1)}$. 
 We denote the composed map defined by applying these maps in succession (see  Eq.~\eqref{eq:composeddecoder}) as 
$\underset{L\rightarrow 0}{\mathsf{Dec}}$.

For any $j\geq 2$, we define $\enc^{(j)}$ recursively by applying the substitution rules~\eqref{it:substitutionruleprep}-\eqref{it:substitutionrulemeasurement} to $\enc^{(j-1)}$. This is a map taking $n^{j-1}$ physical qubits to $n^j$ qubits, which increases the concatenation level from $j-1$ to $j$. Note that by the decomposition rule given by Eq.~\eqref{eq:compositioncrs}, it is clear that $\dec^{(j)}\circ \enc^{(j)}$ acts as the identity on the $(j-1)$-level logical space by construction.  We thus define the ideal level-by-level encoder by
\begin{align}
    \cenc{0\rightarrow L} := \enc^{(L)} \circ \dots \circ \enc^{(1)}\ ,\label{eq:composedencoder}
\end{align}
which takes an unencoded physical qubit into a corresponding logical state in the $L$-fold concatenated code. In analogy with Eq.~\eqref{eq:parallelconcdec}, we have
\begin{align}
    \left( \cenc{0\rightarrow L} \right)^{\otimes N} = (\enc^{(L)})^{\otimes N}\circ \dots\circ (\enc^{(1)})^{\otimes N}\ ,
\end{align}
and by iterating our previous observation we see that
\begin{align}
    \cdec{L\rightarrow 0}\circ \cenc{0\rightarrow L} = \mathsf{id}\ .
\end{align}

We will be thus be studying the circuit
\begin{align}
\left( \cdec{L\rightarrow 0} \right)^{\otimes N_{\mathsf{out}}} \circ \cC^{(L)} \circ \left(\cenc{0\rightarrow L}\right)^{\otimes N_{\mathsf{in}}}
\end{align}
as well as noisy executions thereof. The number of qubits and the depth of the encoding and decoding steps can be bounded as follows by adapting the definition the gadget-dependent constants $\nmax,\dmin,\dmax$.
\begin{lemma}[Qubit and depth overhead of encoding and decoding]\label{lem:qubitdepthoverheadsimulationb}
Let $\nmax, \dmin, \dmax$ be the gadget-dependent constants defined in Eq.~\eqref{eq:dminmaxdef},
where the level-$1$ encoding and 
decoding gadgets~$\mathsf{Enc}^{(1)}$ and $\mathsf{Dec}^{(1)}$ are included in the set of $1$-Gas.
The depth and qubit count of the encoding and decoding circuits can be bounded by
\begin{align}
\begin{matrix}
n^L  &\leq  & N\left(\cenc{0\rightarrow L}\right) &\leq \nmax^{L+1}\\
\dmin^{L-1}  &\leq   &\mathsf{depth}\left(\cenc{0\rightarrow L}\right) &\leq \dmax^{L+1} 
\end{matrix}\quad\textrm{ and }\quad
\begin{matrix}
\nmin^{L-1} &\leq  & N\left(\cdec{L\rightarrow 0}\right) &\leq & \nmax^{L+1}\\
\dmin^{L-1}  &\leq  & \mathsf{depth}\left(\cdec{L\rightarrow 0}\right) &\leq &\dmax^{L+1}
\end{matrix}\ .
\end{align}
In particular, the number of qubits and the circuit depth of~$\cenc{0\rightarrow L}$ and~$\cdec{L\rightarrow 0}$ scale as $e^{\Theta(L)}$.
\end{lemma}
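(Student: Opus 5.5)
The bounds should follow from the recursive structure of $\cenc{0\rightarrow L}=\enc^{(L)}\circ\cdots\circ\enc^{(1)}$ and $\cdec{L\rightarrow 0}=\dec^{(1)}\circ\cdots\circ\dec^{(L)}$, combined with Lemma~\ref{lem:qubitdepthoverheadsimulation}. Recall that $\enc^{(j)}$ is the level-$(j-1)$ simulation of the level-$1$ gadget $\enc^{(1)}$: it arises by applying the substitution rules~\eqref{it:substitutionruleprep}--\eqref{it:substitutionrulemeasurement} $j-1$ times. Likewise, $\dec^{(j)}$ is the level-$(j-1)$ simulation of $\dec^{(1)}$. With $\enc^{(1)},\dec^{(1)}$ included among the $1$-Gas, the constants $\nmax,\dmin,\dmax$ also control these gadgets: $N(\enc^{(1)}),N(\dec^{(1)})\le\nmax$, and $\mathsf{depth}(\enc^{(1)}),\mathsf{depth}(\dec^{(1)})\le\dmax$.

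First I will bound each factor. By Lemma~\ref{lem:qubitdepthoverheadsimulation} applied to $\cC=\enc^{(1)}$ at level $j-1$,
\begin{align}
N(\enc^{(j)})\le \nmax^{j-1}N(\enc^{(1)})\le\nmax^{j},\qquad \mathsf{depth}(\enc^{(j)})\le\dmax^{j-1}\mathsf{depth}(\enc^{(1)})\le\dmax^{j}.
\end{align}
The same bounds hold for $\dec^{(j)}$.

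For the composed encoder, the qubit count is the maximum over the factors, since the qubit registers are reused. This maximum is attained at $j=L$, which gives at most $\nmax^{L}\le\nmax^{L+1}$. The depth is additive, so
\begin{align}
\mathsf{depth}\left(\cenc{0\rightarrow L}\right)\le\sum_{j=1}^L\dmax^{j}\le\dmax^{L+1},
\end{align}
where the geometric sum is absorbed using $\dmax\ge2$; this holds since $\dmax$ contains an EC depth plus a gadget depth, each at least $1$. The decoder is treated identically.

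For the lower bounds on qubit count, note that the output of $\cenc{0\rightarrow L}$ lives in $\cL^{\circ L}$, which forces at least $n^L$ qubits. For the decoder, the first stage $\dec^{(L)}$ acts on a block that is itself encoded at level $L-1$. By the lower bound of Lemma~\ref{lem:qubitdepthoverheadsimulation}, it therefore uses at least $n^{L-1}N(\dec^{(1)})\ge n^{L-1}$ qubits. I read the quantity $\nmin$ in the statement as this lower constant $n$, or as the minimum gadget width, whichever the paper intends; the argument is the same either way.

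For the lower bounds on depth, it suffices to look at the last factor, $\enc^{(L)}$ or $\dec^{(L)}$. It is a level-$(L-1)$ simulation of a circuit of depth at least $1$, so by Lemma~\ref{lem:qubitdepthoverheadsimulation} its depth is at least $\dmin^{L-1}$.

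The main obstacle is bookkeeping rather than substance. One has to make sure the lower bound in Lemma~\ref{lem:qubitdepthoverheadsimulation} applies to the adaptive circuits $\enc^{(1)}$ and $\dec^{(1)}$, which contain partial traces and classically controlled Pauli corrections. The modified substitution rule~\eqref{it:substitutionrulemeasurement}, with an extra EC, together with rule~\eqref{it:substitutionrulepartialtrace}, is meant to guarantee exactly this. One also has to handle the slack in the exponents ($L+1$ rather than $L$), which comes from the geometric sum and the gadget's own size. The $e^{\Theta(L)}$ scaling then follows immediately.
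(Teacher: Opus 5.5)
Your proposal is correct and follows essentially the paper's argument: view $\enc^{(j)}$ and $\dec^{(j)}$ as level-$(j-1)$ simulations of the level-$1$ gadgets, apply Lemma~\ref{lem:qubitdepthoverheadsimulation} to get the per-level bounds $\nmax^{j}$ and $\dmax^{j}$, and combine over $j\in[L]$ with a geometric sum. The differences are cosmetic: the paper bounds the qubit count by the sum $\sum_{j=1}^{L}N(\enc^{(j)})\leq\nmax^{L+1}$ instead of a maximum, which avoids any convention about reusing registers and is covered by your geometric-sum estimate; it also leaves the lower bounds implicit, whereas you spell them out.
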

\begin{proof} As an immediate consequence of Lemma~\ref{lem:qubitdepthoverheadsimulation} we have 
\begin{align}
N(\enc^{(j)})&\leq \dmax^{j-1} N(\enc^{(1)}) \leq \nmax^j\qquad\textrm{ for every }\qquad j\in [L]\ .
\end{align}
Hence we may bound
\begin{align}
 N\left(\cenc{0\rightarrow L} \right) \leq \sum_{j=1}^L N(\enc^{(j)}) \leq \sum_{j=1}^L \nmax^j \leq \nmax^{L+1}\ .
\end{align}
It is clear that $n^L\leq 
 N\left(\cenc{0\rightarrow L} \right)$.

The corresponding results for number of qubits and the circuit depth for the decoding circuit follow by identical calculations.
\end{proof}

\subsection{Level reduction with noisy encoding and decoding\label{sec:levelreductionnoisy}}
  We are interested in the effect of noise on  concatenated encoders and decoders, respectively. As we argue below in detail, the dominant 
contribution to the errors comes from the lowest layers of encoding and decoding, and  the overall error can be bounded independently of the concatenation level.

 The first step towards this result is to establish that the effective noise strength can be reduced in the level-$1$ simulation of a circuit --- this boils down to the level reduction lemma below, illustrated in Fig.~\ref{fig:levelreductionillustrated}, which we adapt from \cite{aliferis2007level}. Subsequently, Lemma~\ref{lem:levelreductionencdec} establishes a version of this result with noisy encoders and decoders, whilst Lemma~\ref{lem:itlevelreductionencdec} iterates this to obtain a statement about level-$L$ circuit simulation for any $L$. Similar arguments were made as far back as~\cite{knill1996concatenated}, and appear implicitly in other works such as~\cite{gottesman2014fault}.

\begin{figure}
\centering
\begin{subfigure}[b]{0.9\textwidth}
    \centering
        \begin{quantikz}[column sep=0.4cm]
            \qw & \encgate[fill=gray!30]{} & \EC & \ftgate{U^{(1)}} & \EC & \decgate[fill=gray!30]{} & \qw
            \arrow[from=1-2,to=1-6,black,line width=\encwidth,-]{}
        \end{quantikz}
        $\;=\;$
        \begin{quantikz}[column sep=0.4cm]
            \qw & \encgate[fill=gray!30]{} & \EC & \ftgate{U^{(1)}} & \EC & \decgate{} & \qw
            \arrow[from=1-2,to=1-6,black,line width=\encwidth,-]{}
        \end{quantikz}
    \caption{First step: after the error correction step projects onto the codespace, the action of the decoder gadget can identified with the ideal decoder $\idec$, see Fig.~\ref{fig:decodergadgetprop}. This can be propagated backwards through the circuit as in Fig.~\ref{fig:levelreductionbasic}.}
\end{subfigure}
\begin{subfigure}[b]{0.9\textwidth}
    \centering
    \begin{quantikz}
        \qw & \encgate[fill=gray!30]{} & \EC & \decgate{} & \ftgate{U} & \qw
        \arrow[from=1-2,to=1-4,black,line width=\encwidth,-]{}
    \end{quantikz}
    $\;=\;$
    \begin{quantikz}
        \qw & \ftgate{U} & \qw
    \end{quantikz}
    \caption{Last step: the ideal decoder meets an encoder gadget at the beginning of the circuit. In the absence of faults, their composition acts as the identity and leaves only the desired logical circuit acting on the physical space, see Fig.~\ref{fig:encodingdecodinggadgetcorrectness}.}
\end{subfigure}
\caption{Illustration of the modifications
used to establish Lemma~\ref{lem:level reduction}. This should be compared to Fig.~\ref{fig:levelreductionbasic}.
\label{fig:levelreductionadvanced}}
\end{figure}

\begin{lemma}[Level reduction with ideal encoding and decoding, Lemma~3 of~\cite{aliferis2007level} paraphrased]\label{lem:level reduction}
Fix an $[[n,1,d]]$-code $\cL$ and associated level-0 and level-1 gadgets satisfying the fault tolerance properties given in Fig.~\ref{fig:ft-definitionsgadgets}. Let $t := \lfloor (d-1)/2\rfloor$. Let $\dec$ and $\enc$ be decoding and encoding circuits respectively for the code $\cL$, and let $\cC$ be an arbitrary $N$-qubit circuit with $N_{\mathsf{in}}$ input qubits and $N_{\mathsf{out}}$ output qubits. Let $\cC^{(1)}$ denote the level-1 simulation of $\cC$. Then there is a constant $c > 0$ depending only on the code and the gadgets used such that the following holds: for any local stochastic noise $\cE_1$ on $\cC$ of strength $p_1 = p$, there is local stochastic noise $\cE_0$ of strength $p_0 = cp_1^{t+1}$ such that
\begin{align}
    \dec^{\otimes N_{\mathsf{out}}} \circ \left( \cE_1 \bowtie \cC^{(1)} \right) \circ \enc^{\otimes N_{\mathsf{in}}} = \cE_0 \bowtie \cC\ .
\end{align}

\end{lemma}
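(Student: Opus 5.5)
Following the extended-rectangle argument of~\cite{aliferis2007level}, adapted to the input/output setting as sketched in Fig.~\ref{fig:levelreductionadvanced}, I would prove the statement as follows.

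First, partition $\cC^{(1)}$ into extended rectangles (exRecs). Each $1$-Ga gets its own exRec, consisting of the $1$-Ga together with the leading $1$-EC(s) on its input blocks and the trailing $1$-EC(s) on its output blocks. For input wires, the leading part is replaced by the encoder gadget $\enc$ attached to that wire. For output wires, the trailing part is the final $1$-EC together with the decoder gadget $\dec$. Since the encoder and decoder act on unencoded qubits, I do not place them in separate rectangles: each is merged into the exRec of the adjacent gate. Each exRec then has at most a constant number $A$ of locations, determined by the gadgets together with $\enc,\dec$. Call an exRec \emph{bad} if it contains at least $t+1$ faulty locations; the constant $A$ enters through the union bound over such subsets.

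Second, I would show by the standard left-moving argument that, for a fault configuration, $\idec$ (resp.\ $\dec$ at the outputs) can be pulled backwards through every \emph{good} exRec. Each good exRec then becomes the ideal $0$-Ga. At the output end, the properties in Fig.~\ref{fig:decodergadgetprop} let us replace $\dec$ by $\idec$ after a good exRec, so the ideal-decoder calculus of Fig.~\ref{fig:ft-definitionsgadgets} applies. At the input end, a good exRec starting with $\enc$ has at most $t$ faults. By Fig.~\ref{fig:encodingdecodinggadgetcorrectness} and the EC/gate properties, $\idec$ moved to the encoder yields $\idec\circ\enc=\mathsf{id}$, leaving the bare gate. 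Faults inside $\enc$ count toward the $t$ budget, so the preparation-type rule of Fig.~\ref{it:preparationftprop} is used for the encoder's internal faults. This is the main new point I would have to verify: the encoder gadget must satisfy an analogue of Fig.~\ref{it:preparationftprop}, namely that $s\le t$ faults in $\enc$ followed by the EC and $\idec$ act as the identity. I would impose this as part of the gadget conditions, or derive it for the Clifford-unitary encoder by treating faults as weight-$\le s$ errors filtered by the subsequent EC. Bad exRecs are replaced by a faulty $0$-location whose action is some CPTP map, i.e.\ a comb element. Causal consistency of the resulting comb follows as in~\cite{aliferis2007level}: truncate the bad exRec by deleting its leading ECs whenever the preceding exRec was bad.

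Third, define $\cE_0$ by letting $\supp(\cE_0)$ be the set of $0$-locations whose exRec is bad, and bound its stochasticity. Overlapping exRecs share ECs, so I would take, as usual, a family of pairwise disjoint bad sets. For any $k$ locations of $\cC$, the event that all of their exRecs are bad requires disjoint fault sets of size $t+1$ in each. Each exRec overlaps only a constant number of neighbours, so $\Pr\le (c'\binom{A}{t+1}p^{t+1})^{k}$ with a geometric constant $c'$. Setting $c=c'\binom{A}{t+1}$ gives strength $cp^{t+1}$. The main obstacle is this overlap/disjointness bookkeeping together with verifying the encoder boundary condition. Everything else is copied from~\cite{aliferis2007level}.
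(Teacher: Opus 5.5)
\textbf{Gap: encoder and decoder faults cannot be counted toward the $t$-fault budget.} This is the step that fails. The ``encoder analogue'' of Fig.~\ref{it:preparationftprop} that you propose to impose or derive cannot hold for any encoder of a bare qubit. Take the Clifford encoder $\enc(\rho)=U(\rho\otimes|0^{n-1}\rangle\langle 0^{n-1}|)U^\dagger$. A single $X$ fault on the input wire at the start of $\enc$ becomes $U(X\otimes I)U^\dagger$, which acts on $\cL$ as the logical $\bar{X}$. No subsequent EC can detect it, and $\idec$ returns $X\rho X$ instead of $\rho$. Likewise, a single fault on the decoded output qubit in the last layer of $\dec$ is beyond any correction. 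Fig.~\ref{fig:decodergadgetprop} is only asserted for a fault-free decoder gadget (cf.\ the caption of Fig.~\ref{fig:encodingdecodinggadgetcorrectness}), so you cannot use it after a faulty $\dec$.

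Consequently an exRec with one such fault is ``good'' in your sense yet does not act as the ideal $0$-Ga. It would have to be declared bad, which happens with probability $\Theta(p)$ and is incompatible with strength $cp^{t+1}$. In fact, with noisy $\enc,\dec$ the conclusion is false. This is why the paper treats that case separately in Lemma~\ref{lem:levelreductionencdec}, where the noise is localized away from $\enc$ and $\dec$ at a first-order cost $p\,\nmax\dmax(N_{\mathsf{out}}+N_{\mathsf{in}})$ in the failure probability.

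\textbf{Fix.} Read the statement as intended: $\cE_1$ acts only on $\cC^{(1)}$. The maps $\dec^{\otimes N_{\mathsf{out}}}$ and $\enc^{\otimes N_{\mathsf{in}}}$ sit outside the $\bowtie$, hence ``ideal encoding and decoding''. Then no new gadget condition is needed:
\begin{enumerate}
\item A fault-free $\enc$ feeds a codeword into the first gate $1$-Ga on each input block, so the gate property of Fig.~\ref{it:gateftpropreq} applies with $r_i=0$.
\item A fault-free $\dec$ after a trailing EC with at most $t$ faults coincides with $\idec$ by Fig.~\ref{fig:decodergadgetprop}.
\item Once the ideal decoder has been pulled back to the inputs, $\idec\circ\enc=\mathsf{id}$ by Eq.~\eqref{eq:decenccomp}.
\end{enumerate}
With $\enc$ and $\dec$ removed from the fault count, your exRec decomposition, truncation and counting are the argument of~\cite{aliferis2007level}, and your proof coincides with the paper's.
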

\begin{proof}
    The formulation and proof of this result follow almost exactly the proof of Lemma~3 in~\cite{aliferis2007level}  (see Lemma~\ref{lem:level reductionoriginal} in Section~\ref{sec:levelreductionftthresholdthm}) with only a slight modifications for the input and output qubits (in~\cite{aliferis2007level}, circuits beginning with state preparations and ending with measurements are considered).
    Fig.~\ref{fig:levelreductionadvanced} illustrates some of the proof compontents.
         Using the property of measurement gadgets given in Fig.~\ref{it:measurementftprop}, an ideal decoder is introduced which is then ``pulled through'' the circuit from right (last time step) to left. 
    
    In our formulation, we consider a circuit~$\cC$ which outputs a quantum state. The ideal decoder on the output wires $\dec^{\otimes N_{\mathsf{out}}}$ is explicitly included in the statement of the lemma, and can be pulled through the remainder of the level-1 simulation as argued in~\cite{aliferis2005quantum,aliferis2007level} to obtain a noisy implementation of $\cC$. When the ideal decoders are pulled all the way back to the input wires, they meet the ideal encoders $\enc^{\otimes N_{\mathsf{in}}}$, which they compose with to give identity maps by their defining property (Eq.~\eqref{eq:decenccomp}). We refer to~\cite{aliferis2007level} for further details. 
    \end{proof}

    Below we state and prove a version of Lemma~\ref{lem:level reduction} which accounts for noise also acting in the encoding and decoding steps. The conclusion is essentially the same up to some probability of error proportional to the size of the encoding and decoding circuits. The proof idea is illustrated in Figure \ref{fig:sketch level reduction inout}.

\begin{lemma}[Level reduction with noisy encoding and decoding]\label{lem:levelreductionencdec}
    Fix an error-correcting code and associated fault-tolerant gadgets as in Lemma~\ref{lem:level reduction}.
 Let $\nmax, \dmin, \dmax$ be the gadget-dependent constants defined in Eq.~\eqref{eq:dminmaxdef}.
 Let $\cC$ be an arbitrary $N$-qubit circuit with $N_{\mathsf{in}}$ input qubits and $N_{\mathsf{out}}$ output qubits. Let $\cC^{(1)}$ denote the level-1 simulation of $\cC$. Then there exists a constant $c > 0$ depending only on the code and the gadgets used such that the following holds: for any local stochastic noise $\cE_1$ of strength $p = p_1$ on the composite circuit $\dec^{\otimes N_{\mathsf{out}}}\circ \cC^{(1)} \circ \enc^{\otimes N_{\mathsf{in}}}$, there is local stochastic noise $\cE_0$ of strength $p_0 = cp_1^{t+1}$ such that
    \begin{align}
        \prob\left[\cE_1 \bowtie \left(\dec^{\otimes N_{\mathsf{out}}} \circ \cC^{(1)} \circ \enc^{\otimes N_{\mathsf{in}}}\right) \neq \cE_0 \bowtie \cC \right] \leq p\nmax\dmax (N_{\mathsf{out}} + N_{\mathsf{in}})\ .
    \end{align}
\end{lemma}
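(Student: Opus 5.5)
The plan is to reduce to Lemma~\ref{lem:level reduction} by localizing the noise away from the encoding and decoding gadgets. Set $L:=\mathsf{Loc}(\cC^{(1)})$, viewed as a subset of the locations of the composite circuit $\cC_{\mathrm{comp}}:=\dec^{\otimes N_{\mathsf{out}}}\circ \cC^{(1)} \circ \enc^{\otimes N_{\mathsf{in}}}$. Consider the localized comb $\cE_1^{\localization L}$ from Eq.~\eqref{eq:combrestriction}. By construction its support lies in $L$, so it is in fact a comb on $\cC^{(1)}$ alone. Moreover, $\Pr[F\subseteq\supp(\cE_1^{\localization L})]\leq \Pr[F\subseteq \supp(\cE_1)]\leq p^{|F|}$, since the localization either equals $\cE_1$ or is trivial. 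Hence $\cE_1^{\localization L}$ is local stochastic noise of strength $p$ on $\cC^{(1)}$.

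With trivial noise on the gadgets, the composite execution factorizes as
\begin{align}
\cE_1^{\localization L}\bowtie \cC_{\mathrm{comp}}=\dec^{\otimes N_{\mathsf{out}}}\circ\left(\cE_1^{\localization L}\bowtie \cC^{(1)}\right)\circ \enc^{\otimes N_{\mathsf{in}}}\ .
\end{align}
Lemma~\ref{lem:level reduction}, applied to the noise $\cE_1^{\localization L}$, then yields local stochastic noise $\cE_0$ of strength $cp^{t+1}$ on $\cC$ that equals this composite.

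The event that $\cE_1\bowtie\cC_{\mathrm{comp}}\neq \cE_0\bowtie\cC$ is therefore contained in the event $\cE_1\neq\cE_1^{\localization L}$. By Lemma~\ref{lem:localstochasticnoiseqoutput}\eqref{it:circnoisycomposedclaim}, that event has probability at most $p\,|\mathsf{Loc}(\cC_{\mathrm{comp}})\setminus L|$. The remaining locations are exactly those of the $N_{\mathsf{in}}$ encoder and $N_{\mathsf{out}}$ decoder gadgets.

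It remains to count those gadget locations. By Eq.~\eqref{eq:upperboundciruitsizedep}, each gadget has at most $N(G)\,\mathsf{depth}(G)$ locations. Since $\enc^{(1)},\dec^{(1)}$ are counted among the $1$-Gas, $N(G)\leq\nmax$ and $\mathsf{depth}(G)\leq\dmax$. Together these give the bound $p\nmax\dmax(N_{\mathsf{in}}+N_{\mathsf{out}})$.

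The main subtlety is whether Lemma~\ref{lem:level reduction} can be invoked with noise that depends on the randomness of $\cE_1$ through the localization. I expect this to be harmless: the lemma is applied pointwise to the random comb, and the resulting $\cE_0$ is a random variable whose local stochasticity is inherited from that of $\cE_1^{\localization L}$. I would also check that the wire and location identification between the gadgets and $\cC^{(1)}$ is consistent, meaning gadget outputs feed directly into the input wires of $\cC^{(1)}$. If it is not, any extra boundary wait locations would be absorbed into the same constant.
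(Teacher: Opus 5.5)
Your proposal is correct and follows essentially the same route as the paper. Both arguments localize $\cE_1$ to the locations of $\cC^{(1)}$, bound the probability that this localization differs from $\cE_1$ via Lemma~\ref{lem:localstochasticnoiseqoutput}\eqref{it:circnoisycomposedclaim} together with the count $|\mathsf{Loc}(G)|\leq N(G)\,\mathsf{depth}(G)\leq \nmax\dmax$ for each encoder and decoder gadget, and then apply Lemma~\ref{lem:level reduction} to the localized noise, which is still local stochastic of strength $p$.
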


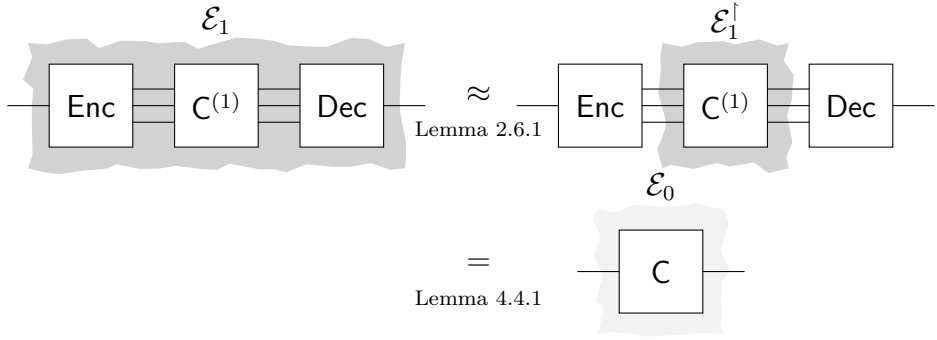
\begin{figure}
    \centering
    \begin{tikzpicture}
\def\bw{1.1} 
\def\bh{1.1}  
\def\gap{0.55}
\def\wl{0.55}  
\def\nm{0.28} 
\def\nwires{3} 
\tikzset{
  noisefill/.style={fill=gray!35,
    decorate,
    decoration={random steps, segment length=2.5mm, amplitude=0.8mm}},
  box/.style={draw, fill=white, minimum width=\bw cm,
              minimum height=\bh cm, inner sep=2pt}
}
\tikzset{
  weaknoisefill/.style={fill=gray!10,
    decorate,
    decoration={random steps, segment length=2.5mm, amplitude=0.8mm}},
  box/.style={draw, fill=white, minimum width=\bw cm,
              minimum height=\bh cm, inner sep=2pt}
}

\def\EncX{0}
\def\CX{\EncX + \bw + \gap}
\def\DecX{\CX  + \bw + \gap}

\def\blobL{\EncX - \bw/2 - \nm}
\def\blobR{\DecX + \bw/2 + \nm}
\def\blobB{-\bh/2 - \nm}
\def\blobT{ \bh/2 + \nm}

\fill[noisefill]
  (\blobL, \blobB) rectangle (\blobR, \blobT);

\pgfmathsetmacro{\leftBlobCentreX}{(\EncX + \DecX) / 2}
\node at (\leftBlobCentreX, \bh/2 + \nm + 0.3) {$\mathcal{E}_1$};

\node[box] (Enc) at (\EncX, 0) {$\mathsf{Enc}$};
\node[box] (C1L) at (\CX,   0) {$\mathsf{C}^{(1)}$};
\node[box] (Dec) at (\DecX, 0) {$\mathsf{Dec}$};

\foreach \i in {1,...,\nwires}{
  \pgfmathsetmacro{\yy}{(\i - (\nwires+1)/2)*0.22}
  \draw (\EncX+\bw/2, \yy) -- (\CX-\bw/2, \yy);
}
\foreach \i in {1,...,\nwires}{
  \pgfmathsetmacro{\yy}{(\i - (\nwires+1)/2)*0.22}
  \draw (\CX+\bw/2, \yy) -- (\DecX-\bw/2, \yy);
}
\draw (\EncX-\bw/2, 0) -- (\EncX-\bw/2-\wl, 0);
\draw (\DecX+\bw/2, 0) -- (\DecX+\bw/2+\wl, 0);

\def\relX{\DecX + \bw/2 + \wl + 0.7}
\node at (\relX, 0.15)   {$\approx$};
\node at (\relX, -0.3) {\scriptsize Lemma \ref{lem:localstochasticnoiseqoutput}};

\def\REncX{\relX + 1.6}
\def\RCX{\REncX + \bw + \gap}
\def\RDecX{\RCX  + \bw + \gap}

\fill[noisefill]
  (\RCX-\bw/2-\nm, -\bh/2-\nm) rectangle (\RCX+\bw/2+\nm, \bh/2+\nm);

\node at (\RCX, \bh/2 + \nm + 0.3) {$\mathcal{E}_1^{\upharpoonright}$};

\node[box] (REnc) at (\REncX, 0) {$\mathsf{Enc}$};
\node[box] (RC1)  at (\RCX,   0) {$\mathsf{C}^{(1)}$};
\node[box] (RDec) at (\RDecX, 0) {$\mathsf{Dec}$};

\foreach \i in {1,...,\nwires}{
  \pgfmathsetmacro{\yy}{(\i - (\nwires+1)/2)*0.22}
  \draw (\REncX+\bw/2, \yy) -- (\RCX-\bw/2,  \yy);
  \draw (\RCX+\bw/2,   \yy) -- (\RDecX-\bw/2, \yy);
}
\draw (\REncX-\bw/2, 0) -- (\REncX-\bw/2-\wl, 0);
\draw (\RDecX+\bw/2, 0) -- (\RDecX+\bw/2+\wl, 0);

\def\rowY{-2.2}

\node at (\relX, \rowY+0.1)   {$=$};
\node at (\relX, \rowY-0.35) {\scriptsize Lemma \ref{lem:level reduction}};

\def\CbX{\relX + 2.4}
\fill[weaknoisefill]
  (\CbX-\bw/2-\nm, \rowY-\bh/2-\nm) rectangle (\CbX+\bw/2+\nm, \rowY+\bh/2+\nm);

\node at (\CbX, \rowY + \bh/2 + \nm + 0.3) {$\mathcal{E}_0$};

\node[box] at (\CbX, \rowY) {$\mathsf{C}$};

\draw (\CbX-\bw/2, \rowY) -- (\CbX-\bw/2-\wl, \rowY);
\draw (\CbX+\bw/2, \rowY) -- (\CbX+\bw/2+\wl, \rowY);

\end{tikzpicture}
    
    \caption{Proof of Lemma~\ref{lem:levelreductionencdec}, illustrated for a circuit $\cC^{(1)}$ with $N_{\mathsf{in}} = N_{\mathsf{out}} = 1$. Given a noisy implementation of $\dec \circ \cC^{(1)}\circ \enc$, we can apply Lemma \ref{lem:localstochasticnoiseqoutput} to show that the noise can be localized to the subcircuit $\cC^{(1)}$, up to some small probability of error. Applying Lemma \ref{lem:level reduction} establishes that this is then equivalent to an implementation of $\cC$ under weaker noise.}
    \label{fig:sketch level reduction inout}
\end{figure}

\begin{proof}
    Define $\cE_1^{\localization}$ to be the localization of $\cE_1$ to $\cC^{(1)}$, i.e., without any errors on the encoding or decoding steps. By Lemma~\ref{lem:localstochasticnoiseqoutput}\eqref{it:circnoisycomposedclaim}, we see that
    \begin{align}\label{eq:levelreductionrestriction}
        \prob [ \cE_1 \neq \cE_1^{\localization}] &\leq p\cdot \left( |\mathsf{Loc}(\dec^{\otimes N_{\mathsf{out}}})| + |\mathsf{Loc} (\enc^{\otimes N_{\mathsf{in}}})|\right) \\
        &\leq p \nmax\dmax (N_{\mathsf{out}} + N_{\mathsf{in}})\ ,
    \end{align}
    where we have used the fact that
    \begin{align}
        |\mathsf{Loc}(\dec)| \leq N(\dec) \cdot \mathsf{depth}(\dec) \leq \nmax\dmax\ ,
    \end{align}
    and similarly for $\enc$. Furthermore, the definition of $\cE_1^{\localization}$ via Eq.~\eqref{eq:combrestriction} immediately implies that $\cE_1^{\localization}$ is also local stochastic noise of strength $p$. Hence, by Lemma~\ref{lem:level reduction}, there exists local stochastic noise $\cE_0$ of strength $p_0 = cp_1^{t+1}$ such that
    \begin{align}\label{eq:levelreductionusinthelemma}
        \dec^{\otimes N_{\mathsf{out}}} \circ \left( \cE_1^{\localization} \bowtie \cC^{(1)} \right)\circ \enc^{\otimes N_{\mathsf{in}}} = \cE_0 \bowtie \cC\ .
    \end{align}
    Combining Eq.~\eqref{eq:levelreductionrestriction} with Eq.~\eqref{eq:levelreductionusinthelemma} yields the desired result.
\end{proof}

Inductively applying Lemma~\ref{lem:levelreductionencdec} $L$ times, we arrive at the following result (which is essentially a version of \cite[Lemma 3]{aliferis2007level}). The proof of this lemma is sketched in Fig.~\ref{fig:sketch iterated level reduction inout}.

\begin{lemma}[Iterated level reduction with noisy encoding and decoding]\label{lem:itlevelreductionencdec}
    Fix an error-correcting code and associated fault-tolerant gadgets as in Lemma~\ref{lem:level reduction}. 
Let $\nmax, \dmin, \dmax$ be the gadget-dependent constants defined in Eq.~\eqref{eq:dminmaxdef},
where the level-$1$ encoding and 
decoding gadgets~$\mathsf{Enc}^{(1)}$ and $\mathsf{Dec}^{(1)}$ are included in the set of $1$-Gas. 
Let $\cC$ be an arbitrary $N$-qubit circuit with $N_{\mathsf{in}}$ input qubits and $N_{\mathsf{out}}$ output qubits. Fix $L \geq 1$, and let $\cC^{(L)}$ denote the level-$L$ simulation of $\cC$. Then there is a constant $p_\ast > 0$ depending only on the code and gadgets used such that the following holds: for any stochastic noise $\cE_L$ of strength $p = p_L < p_\ast/2$ on the composite circuit $(\cdec{L\rightarrow 0})^{\otimes N_{\mathsf{out}}} \circ \cC^{(L)} \circ (\cenc{0 \rightarrow L})^{\otimes N_{\mathsf{in}}}$, there is local stochastic noise $\cE_0$ on $\cC$ of strength
    \begin{align}
        p_0 = p_\ast (p/p_\ast)^{(t+1)^L}\ ,
    \end{align}
    such that
    \begin{align}
        \prob \left[\cE_L \bowtie \left((\cdec{L\rightarrow 0})^{\otimes N_{\mathsf{out}}} \circ \cC^{(L)} \circ (\cenc{0\rightarrow L})^{\otimes N_{\mathsf{in}}} \right) \neq \cE_0 \bowtie \cC \right] \leq p\cdot 2 \nmax\dmax(N_{\mathsf{out}} + N_{\mathsf{in}})\ .
    \end{align}
\end{lemma}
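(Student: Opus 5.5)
The plan is induction on $L$, peeling off one concatenation level at a time with Lemma~\ref{lem:levelreductionencdec}. The key structural fact is that the level-$j$ composite circuit is itself a level-$1$ simulation wrapped by noisy level-$1$ gadgets. By Eq.~\eqref{eq:composeddecoder}, Eq.~\eqref{eq:composedencoder} and Lemma~\ref{lem:productandsubcircuitssimulation}, we have
\begin{align}
(\cdec{j\rightarrow 0})^{\otimes N_{\mathsf{out}}}\circ\cC^{(j)}\circ(\cenc{0\rightarrow j})^{\otimes N_{\mathsf{in}}}
=(\dec^{(1)})^{\otimes N_{\mathsf{out}}}\circ\Big[(\cdec{j-1\rightarrow 0})^{\otimes N_{\mathsf{out}}}\circ\cC^{(j-1)}\circ(\cenc{0\rightarrow j-1})^{\otimes N_{\mathsf{in}}}\Big]^{(1)}\circ(\enc^{(1)})^{\otimes N_{\mathsf{in}}} .
\end{align}
Here $[\cdot]^{(1)}$ denotes the level-$1$ simulation. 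This holds because $\dec^{(k)}$ and $\enc^{(k)}$ are by definition the level-$1$ simulations of $\dec^{(k-1)}$ and $\enc^{(k-1)}$, and simulation commutes with composition and tensor products. Equivalently, one can regroup the circuit as an inner level-$(j-1)$ simulation $\cC'^{(j-1)}$ of the circuit $\cC':=\dec^{\otimes N_{\mathsf{out}}}\circ\cC^{(1)}\circ\enc^{\otimes N_{\mathsf{in}}}$, wrapped by outer encoders and decoders. I would use the first form, stripping the outermost physical-level $\enc^{(1)},\dec^{(1)}$, because those are the wrappers to which Lemma~\ref{lem:levelreductionencdec} applies at strength $p$.

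Concretely, applying Lemma~\ref{lem:levelreductionencdec} with $\cC$ replaced by the bracketed circuit $\cC_{j-1}$ gives local stochastic noise $\cE_{j-1}$ on $\cC_{j-1}$ of strength $cp_j^{t+1}$. The resulting map agrees with the original noisy composite except with probability at most $p_j\nmax\dmax(N_{\mathsf{out}}+N_{\mathsf{in}})$. Note that $\cC_{j-1}$ is exactly the level-$(j-1)$ composite, still including its own noisy encoders and decoders at level $j-1$, so the induction hypothesis applies to it. Iterating with $p_{j-1}=cp_j^{t+1}$ and setting $p_*:=c^{-1/t}$ yields $p_0=p_*(p/p_*)^{(t+1)^L}$, exactly as in Corollary~\ref{cor:levelreductioniterated}. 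A union bound over the $L$ steps then bounds the total failure probability by
\begin{align}
\nmax\dmax(N_{\mathsf{out}}+N_{\mathsf{in}})\sum_{j=1}^{L}p_j ,
\end{align}
where each reduced circuit agreeing with the next is an event of the stated probability.

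The remaining step is to bound $\sum_j p_j$ by $2p$. Since $p<p_*/2$, we have $p_j/p_*=(p/p_*)^{(t+1)^{L-j}}$. Hence $p_{L-k}\le p\,(p/p_*)^{(t+1)^k-1}\le p\,2^{-k}$, using $(t+1)^k-1\ge k$ for $t\ge1$. The geometric series then gives $\sum_j p_j\le 2p$, which is the claimed bound.

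I expect the main obstacle to be justifying that Lemma~\ref{lem:levelreductionencdec} applies to the peeled circuit. There are two issues. First, the decoders and encoders at intermediate levels are adaptive circuits, so they must satisfy the gadget hypotheses; in particular, the lemma's gadget-level proof must tolerate a $\dec^{(1)}$ gadget appearing inside a level-$1$ simulation. My plan is to treat $\enc$ and $\dec$ as admissible $0$-Gas with corresponding $1$-Gas $\enc^{(2)},\dec^{(2)}$, so that Lemma~\ref{lem:level reduction} covers them exactly as it covers ordinary gates. Second, the location-counting bound $|\mathsf{Loc}(\dec^{(1)})|\le\nmax\dmax$ must refer only to the outermost physical layer. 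This is precisely why I peel from the outside rather than the inside: each step only ever pays for constant-size level-$1$ wrappers.
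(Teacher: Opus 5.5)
Your proposal is correct and is essentially the paper's proof. The paper also writes $\tilde{\cC}_j=\dec^{\otimes N_{\mathsf{out}}}\circ[\tilde{\cC}_{j-1}]^{(1)}\circ\enc^{\otimes N_{\mathsf{in}}}$, applies Lemma~\ref{lem:levelreductionencdec} once per level with $p_{j-1}=cp_j^{t+1}$ and $p_*=c^{-1/t}$, takes a union bound over the $L$ steps, and bounds $\sum_j p_j\le 2p$ by the same geometric series. One minor point: you need not treat $\enc,\dec$ as atomic $0$-Gas, because $\tilde{\cC}_{j-1}$ is itself a circuit of elementary operations (including adaptive Paulis and partial traces). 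Lemma~\ref{lem:levelreductionencdec}, stated for arbitrary circuits, therefore applies to it directly, which is how the paper tacitly proceeds.
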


\begin{figure}
    \centering
    \begin{tikzpicture}
\def\bw{1.3}   
\def\tallbh{1.5}
\def\smallbh{1.0}
\def\smallgap{0.45}
\def\largegap{0.65} 
\def\wl{0.55}        
\def\nm{0.28}      
\def\capoff{0.32}  
\def\spreadnine{0.14} 
\def\spreadthree{0.22}

\tikzset{
  noisefill/.style={fill=gray!35, decorate,
    decoration={random steps, segment length=2.5mm, amplitude=0.8mm}},
  tallbox/.style={draw, fill=white, minimum width=\bw cm,
              minimum height=\tallbh cm, inner sep=3pt, align=center},
  smallbox/.style={draw, fill=white, minimum width=\bw cm,
              minimum height=\smallbh cm, inner sep=3pt, align=center}
}
\tikzset{
  weakernoisefill/.style={fill=gray!20, decorate,
    decoration={random steps, segment length=2.5mm, amplitude=0.8mm}},
  tallbox/.style={draw, fill=white, minimum width=\bw cm,
              minimum height=\tallbh cm, inner sep=3pt, align=center},
  smallbox/.style={draw, fill=white, minimum width=\bw cm,
              minimum height=\smallbh cm, inner sep=3pt, align=center}
}
\tikzset{
  weakestnoisefill/.style={fill=gray!10, decorate,
    decoration={random steps, segment length=2.5mm, amplitude=0.8mm}},
  tallbox/.style={draw, fill=white, minimum width=\bw cm,
              minimum height=\tallbh cm, inner sep=3pt, align=center},
  smallbox/.style={draw, fill=white, minimum width=\bw cm,
              minimum height=\smallbh cm, inner sep=3pt, align=center}
}

\newcommand{\wires}[5]{%
  \foreach \ii in {1,...,#1}{
    \pgfmathsetmacro{\wy}{#4 + (\ii - (#1+1)/2)*#5}
    \draw (#2, \wy) -- (#3, \wy);
  }
}

\def\RowOneY{0}
\def\LAEncX{0}
\def\LACx{\LAEncX + \bw + \largegap}
\def\LADecX{\LACx  + \bw + \largegap}

\fill[noisefill]
  (\LAEncX-\bw/2-\nm, \RowOneY-\tallbh/2-\nm)
  rectangle
  (\LADecX+\bw/2+\nm, \RowOneY+\tallbh/2+\nm);
\node at ({(\LAEncX+\LADecX)/2}, \RowOneY+\tallbh/2+\nm+\capoff)
  {$\mathcal{E}^{(L)}$};

\node[tallbox] at (\LAEncX, \RowOneY)
  {$\cenc{0\rightarrow L}$};
\node[tallbox] at (\LACx,   \RowOneY) {$\mathsf{C}^{(L)}$};
\node[tallbox] at (\LADecX, \RowOneY)
  {$\cdec{L\rightarrow 0}$};

\wires{9}{\LAEncX+\bw/2}{\LACx-\bw/2}{\RowOneY}{\spreadnine}
\wires{9}{\LACx+\bw/2}{\LADecX-\bw/2}{\RowOneY}{\spreadnine}
\draw (\LAEncX-\bw/2, \RowOneY) -- (\LAEncX-\bw/2-\wl, \RowOneY);
\draw (\LADecX+\bw/2, \RowOneY) -- (\LADecX+\bw/2+\wl, \RowOneY);

\def\EqOneX{\LADecX+\bw/2+\wl+0.65}
\node[align=center] at (\EqOneX, \RowOneY)
  {$=$};

\def\ROutEncX{\EqOneX+1.7}
\def\RBEncX{\ROutEncX+\bw+\smallgap}
\def\RCx{\RBEncX+\bw+\largegap}
\def\RBDecX{\RCx+\bw+\largegap}
\def\ROutDecX{\RBDecX+\bw+\smallgap}

\fill[noisefill]
  (\ROutEncX-\bw/2-\nm, \RowOneY-\tallbh/2-\nm)
  rectangle
  (\ROutDecX+\bw/2+\nm, \RowOneY+\tallbh/2+\nm);
\node at ({(\ROutEncX+\ROutDecX)/2}, \RowOneY+\tallbh/2+\nm+\capoff)
  {$\mathcal{E}^{(L)}$};

\node[smallbox] at (\ROutEncX, \RowOneY) {$\mathsf{Enc}$};
\node[tallbox]  at (\RBEncX,   \RowOneY)
  {$\cenc{1\rightarrow L}$};
\node[tallbox]  at (\RCx,      \RowOneY) {$\mathsf{C}^{(L)}$};
\node[tallbox]  at (\RBDecX,   \RowOneY)
  {$\cdec{L \rightarrow 1}$};
\node[smallbox] at (\ROutDecX, \RowOneY) {$\mathsf{Dec}$};

\draw (\ROutEncX-\bw/2, \RowOneY) -- (\ROutEncX-\bw/2-\wl, \RowOneY);
\draw (\ROutDecX+\bw/2, \RowOneY) -- (\ROutDecX+\bw/2+\wl, \RowOneY);

\wires{3}{\ROutEncX+\bw/2}{\RBEncX-\bw/2}{\RowOneY}{\spreadthree}
\wires{9}{\RBEncX+\bw/2}{\RCx-\bw/2}{\RowOneY}{\spreadnine}
\wires{9}{\RCx+\bw/2}{\RBDecX-\bw/2}{\RowOneY}{\spreadnine}
\wires{3}{\RBDecX+\bw/2}{\ROutDecX-\bw/2}{\RowOneY}{\spreadthree}

\def\RowTwoY{-3.2}
\def\RelTwoX{\EqOneX}

\node[align=center] at (\RelTwoX, \RowTwoY)
  {$\approx$\\ \scriptsize Lemma \ref{lem:levelreductionencdec}};

\def\TEncX{\RBEncX}
\def\TCx{\RCx}
\def\TDecX{\RBDecX}

\fill[weakernoisefill]
  (\TEncX-\bw/2-\nm, \RowTwoY-\smallbh/2-\nm)
  rectangle
  (\TDecX+\bw/2+\nm, \RowTwoY+\smallbh/2+\nm);
\node at ({(\TEncX+\TDecX)/2}, \RowTwoY+\smallbh/2+\nm+\capoff)
  {$\mathcal{E}^{(L-1)}$};

\node[smallbox] at (\TEncX, \RowTwoY)
  {$\cenc{0 \rightarrow L-1}$};
\node[smallbox] at (\TCx,   \RowTwoY) {$\mathsf{C}^{(L-1)}$};
\node[smallbox] at (\TDecX, \RowTwoY)
  {$\cdec{L-1\rightarrow 0}$};

\wires{3}{\TEncX+\bw/2}{\TCx-\bw/2}{\RowTwoY}{\spreadthree}
\wires{3}{\TCx+\bw/2}{\TDecX-\bw/2}{\RowTwoY}{\spreadthree}
\draw (\TEncX-\bw/2, \RowTwoY) -- (\TEncX-\bw/2-\wl, \RowTwoY);
\draw (\TDecX+\bw/2, \RowTwoY) -- (\TDecX+\bw/2+\wl, \RowTwoY);

\def\RowThreeY{-5.8}

\node at (\RelTwoX, \RowThreeY) {$\approx\ \cdots\ \approx$};

\def\CbX{\TCx}
\def\Cbw{1.1}
\def\Cbh{1.1}
\fill[weakestnoisefill]
  (\CbX-\Cbw/2-\nm, \RowThreeY-\Cbh/2-\nm)
  rectangle
  (\CbX+\Cbw/2+\nm, \RowThreeY+\Cbh/2+\nm);
\node at (\CbX, \RowThreeY+\Cbh/2+\nm+\capoff) {$\mathcal{E}^{(0)}$};

\node[draw, fill=white, minimum width=\Cbw cm, minimum height=\Cbh cm]
  at (\CbX, \RowThreeY) {$\mathsf{C}$};

\draw (\CbX-\Cbw/2, \RowThreeY) -- (\CbX-\Cbw/2-\wl, \RowThreeY);
\draw (\CbX+\Cbw/2, \RowThreeY) -- (\CbX+\Cbw/2+\wl, \RowThreeY);

\end{tikzpicture}
    
    \caption{Proof of Lemma \ref{lem:itlevelreductionencdec}. The noisy implementation of $\cdec{L\rightarrow 0} \circ \cC^{(L)} \circ \cenc{0\rightarrow L}$ is viewed as a noisy implementation of the level-1 simulation of $\cdec{0 \rightarrow L-1} \circ \cC^{(L-1)}\circ \cenc{0 \rightarrow L-1}$, sandwiched with the ideal encoding and decoding maps. Lemma \ref{lem:levelreductionencdec} can be applied to establish that, with high probability, this has the same action as an implementation of the latter circuit at a lower noise rate. This argument can be iterated $L$ times to establish that, except with some bounded probability of error, the circuit has identical action to $\cC$ under very weak noise.}
    \label{fig:sketch iterated level reduction inout}
\end{figure}

\begin{proof}
    Let $p = p_L > 0$. For notational convenience, we define the circuit $\tilde{\cC}_j$ by
    \begin{align}
        \tilde{\cC}_j := \left( \cdec{j\rightarrow 0}\right)^{\otimes N_{\mathsf{out}}} \circ \cC^{(j)} \circ \left(\cenc{0\rightarrow j} \right)^{\otimes N_{\mathsf{in}}}
    \end{align}
    for $j=0,\dots,L$. Note that
    
    \begin{align}
        \tilde{\cC}_L = \dec^{\otimes N_{\mathsf{out}}} \circ \left[\left( \cdec{L\rightarrow 1}\right)^{\otimes N_{\mathsf{out}}} \circ \cC^{(L)} \circ \left(\cenc{1\rightarrow L} \right)^{\otimes N_{\mathsf{in}}} \right] \circ \enc^{\otimes N_{\mathsf{in}}}\ .
    \end{align}
    The term in the square brackets is a level-1 simulation of $\tilde{\cC}_{L-1}$, so we may readily apply Lemma~\ref{lem:levelreductionencdec} to obtain local stochastic noise $\cE_{L-1}$ of strength $p_{L-1} = c p_L^{t+1}$ such that
    \begin{align}
        \prob\left[\cE_L \bowtie \tilde{\cC}_L \neq \cE_{L-1} \bowtie \tilde{\cC}_{L-1} \right] \leq p_L \nmax\dmax (N_{\mathsf{out}} + N_{\mathsf{in}})\ ,
    \end{align}
    for some constant $c> 0$ as given by Lemma~\ref{lem:levelreductionencdec}. We may apply this procedure inductively, at each step obtaining local stochastic noise $\cE_{j-1}$ of strength $p_{j-1} = cp_j^{t+1}$ such that
    \begin{align}
        \prob\left[\cE_j \bowtie \tilde{\cC}_j \neq \cE_{j-1} \bowtie \tilde{\cC}_{j-1} \right] \leq p_j \nmax\dmax (N_{\mathsf{out}} + N_{\mathsf{in}})\ .
    \end{align}
    Noting that $\tilde{\cC}_0 = \cC$, and applying a union bound, we can compute
    \begin{align}
        \prob\left[ \cE_L \bowtie \tilde{\cC}_L \neq \cE_0 \bowtie \cC \right] &\leq \sum_{j=1}^L \prob\left[ \cE_j \bowtie \tilde{\cC}_j \neq \cE_{j-1} \bowtie \tilde{\cC}_{j-1}\right] \\
        &\leq \sum_{j=1}^L p_j \nmax\dmax (N_{\mathsf{out}} + N_{\mathsf{in}})\ .
    \end{align}
    By the inductive definition, we have $p_j = p_\ast(p/p_\ast)^{(t+1)^{L-j}}$, where $p_\ast := c^{-1/t}$. Hence the result follows by noting that
    \begin{align}
        \sum_{j=1}^L p_j \leq p_\ast \sum_{k=0}^\infty \left(\frac{p}{p_\ast}\right)^{(t+1)^k} \leq p_\ast \left(\frac{p}{p_\ast} \right) \sum_{k=0}^\infty \left( \frac{p}{p_\ast}\right)^k \leq p\sum_{k=0}^\infty 2^{-k} = 2p\ .
    \end{align}
\end{proof}

Note that the statement of Lemma~\ref{lem:itlevelreductionencdec} is 
qualitatively different from
the results obtained in Refs.~\cite{christandl2026fault,he2025composable,belzig2026constant}, which also discuss circuits with quantum inputs and outputs. Ref.~\cite{christandl2026fault} gives an upper bound on 
the diamond-norm distance between a noisy implementation of the level-$L$ simulation of~$\mathsf{C}$ followed by decoding
and $\cF\circ \mathsf{C}$, where $\cF$ is local stochastic noise of a certain strength (depending on~$p$). In other words, the comparison is to an ideal implementation of~$\mathsf{C}$ {\em followed} by local stochastic noise. In contrast, our result compares the noisy encoded circuit $\tilde{\cC}_L$ to an imperfect implementation~$\cE_0\bowtie \mathsf{C}$
of~$\mathsf{C}$ with local stochastic noise~$\cE_0$ of doubly exponentially weaker strength. This formulation is thus closer in spirit to the idea of level reduction as in the fault tolerance construction~\cite{aliferis2005quantum,aliferis2007level}. This is also similar to the approach of \cite{he2025composable}, whose Proposition~5.2 bounds the total variation distance between the outputs of $\cE_L\bowtie\tilde{\cC}_L$ and the noiseless logical circuit $\cC$, albeit without specializing to the case of concatenated error correction.

One advantage of our formulation is that it immediately applies to any construction where a level-reduction lemma analogous to Lemma~\ref{lem:level reduction} can be established. Indeed, level reduction is our only building block and used in a black-box manner. In particular,  it applies to the case of distance-$3$ codes such as the $7$-qubit Steane code we use in our work.
We note that adapting~\cite{christandl2026fault} to this setting would require additional work: Appendix~C of~\cite{christandl2026fault} only discusses the framework of~\cite{aharonov1997fault} based on rectangles (not extended rectangles). This reasoning only applies to codes correcting at least two errors. 

\subsection{Robust implementations with concatenated codes\label{sec:robustimplementationconcatenated}}
We next show Lemma~\ref{lem:failureprobencdec}, which establishes that the failure probability of a level-$L$ circuit (including possibly noisy encoding and decoding steps) can be suppressed independently of the circuit size by increasing $L$: Using Lemma~\ref{lem:itlevelreductionencdec}, we arrive at an explicit bound on the error probability.
\begin{lemma}[Level-$L$ simulation with noisy encoding and decoding]\label{lem:failureprobencdec}
Let $t = \lfloor (d-1) / 2\rfloor$.
  Let $\nmax, \dmax$ be the gadget-dependent constants defined in Eq.~\eqref{eq:dminmaxdef}. Let $p_{\ast}>0$ be the fault tolerance error threshold strength  (see Theorem~\ref{thm:mainthresholdtheoremagp}).
Define
\begin{align}\label{eq:flprobust}
    f_L(p) := \left\{ \begin{array}{ll}
        p\cdot 2\nmax\dmax (N_{\mathsf{out}} + N_{\mathsf{in}}) + p_\ast (p / p_\ast)^{(t+1)^L} |\mathsf{Loc}(\cC)| & \text{for $p \leq p_\ast / 2$} \\
        1& \text{otherwise}
    \end{array}\right.\ .
\end{align}
Then the following holds. Let $\cC$ be a  circuit on $N$ qubits with $N_{\mathsf{in}}$-qubit input and $N_{\mathsf{out}}$-qubit output. Let $\cC^{(L)}$ denote the level-$L$ simulation of $\cC$ constructed using an error-correcting code~$\cL$ of distance~$d$ and associated gadgets, see Section~\ref{sec:FTgagdetssimulation}. 
 Then the following holds for the circuit (cf.~Lemma~\ref{lem:itlevelreductionencdec})
\begin{align}\label{eq:cftconstruction}
    \cC_{FT} = \left( \cdec{L\rightarrow 0} \right)^{\otimes N_{\mathsf{out}}} \circ \cC^{(L)} \circ \left( \cenc{0 \rightarrow L} \right)^{\otimes N_{\mathsf{in}}}\ .
\end{align}
The circuit~$\cC_{FT}$ $f_L$-robustly implements~$\cC$.  Furthermore, its number of qubits and circuit depth are bounded as
\begin{align}
\begin{matrix}
  n^LN& \leq & N(\cC_{FT})& \leq& \nmax^L N + \nmax^{L+1} (N_{\mathsf{out}} + N_{\mathsf{in}})\\
  \dmin^L\mathsf{depth}(\cC)+ 2\dmin^{L-1}&\leq&  \mathsf{depth}(\cC_{FT}) &\leq &\dmax^L\mathsf{depth}(\cC) + 2\dmax^{L+1}\ .
\end{matrix}
\end{align}
In particular, we have $N(\cC_{FT})/N=e^{\Theta(L)}$ and $\mathsf{depth}(\cC_{FT})/\mathsf{depth}(\cC)=e^{\Theta(L)}$.
\end{lemma}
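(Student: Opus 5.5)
The plan is to combine Lemma~\ref{lem:itlevelreductionencdec} with Lemma~\ref{lem:localstochasticnoiseqoutput}\eqref{it:circnoisyclaim} via a union bound, and then read off the resource counts from Lemmas~\ref{lem:qubitdepthoverheadsimulation} and~\ref{lem:qubitdepthoverheadsimulationb}.

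For robustness, fix $p\in[0,1]$ and local stochastic noise $\cE$ of strength~$p$ on $\cC_{FT}$. If $p>p_\ast/2$ then $f_L(p)=1$ and nothing needs to be shown, so assume $p\leq p_\ast/2$. (Lemma~\ref{lem:itlevelreductionencdec} is stated for $p<p_\ast/2$; the boundary case is handled either by the same geometric-series estimate, which only needs $p/p_\ast\leq 1/2$, or by a harmless strict/non-strict adjustment.) Since $\cC_{FT}$ is exactly the composite circuit $\tilde{\cC}_L$ in Lemma~\ref{lem:itlevelreductionencdec}, that lemma gives local stochastic noise $\cE_0$ on $\cC$ of strength $p_0=p_\ast(p/p_\ast)^{(t+1)^L}$ such that
\begin{align}
\Pr\left[\cE\bowtie\cC_{FT}\neq \cE_0\bowtie\cC\right]\leq p\cdot 2\nmax\dmax(N_{\mathsf{out}}+N_{\mathsf{in}})\ .
\end{align}
Lemma~\ref{lem:localstochasticnoiseqoutput}\eqref{it:circnoisyclaim} then gives $\Pr[\cE_0\bowtie\cC\neq\cC]\leq |\mathsf{Loc}(\cC)|\,p_0$. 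We may couple $\cE$ and $\cE_0$ on a common probability space, since $\cE_0$ is constructed from $\cE$. The union bound $\Pr[X\neq Z]\leq\Pr[X\neq Y]+\Pr[Y\neq Z]$ then yields $\Pr[\cE\bowtie\cC_{FT}\neq\cC]\leq f_L(p)$, which is $f_L$-robustness in the sense of Definition~\ref{def:ftimplementation}.

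For the resources, note that the three stages are composed sequentially. Hence the depth is additive:
\begin{align}
\mathsf{depth}(\cC_{FT})=\mathsf{depth}(\cdec{L\rightarrow 0})+\mathsf{depth}(\cC^{(L)})+\mathsf{depth}(\cenc{0\rightarrow L})\ .
\end{align}
The parallel copies $(\cdot)^{\otimes N_{\mathsf{in}}}$ do not increase depth. Inserting the bounds $\dmin^L\mathsf{depth}(\cC)\leq\mathsf{depth}(\cC^{(L)})\leq\dmax^L\mathsf{depth}(\cC)$ from Lemma~\ref{lem:qubitdepthoverheadsimulation}, together with $\dmin^{L-1}\leq\mathsf{depth}\leq\dmax^{L+1}$ for each of the encoder and decoder from Lemma~\ref{lem:qubitdepthoverheadsimulationb}, gives the stated depth bounds.

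For qubits, the lower bound $n^LN\leq N(\cC^{(L)})\leq N(\cC_{FT})$ is immediate. For the upper bound, the level-$L$ blocks of $\cC^{(L)}$ account for at most $\nmax^LN$ qubits. The additional workspace of the $N_{\mathsf{in}}$ encoders and $N_{\mathsf{out}}$ decoders is at most $\nmax^{L+1}$ per copy, which gives the additive term $\nmax^{L+1}(N_{\mathsf{in}}+N_{\mathsf{out}})$. The $e^{\Theta(L)}$ statements follow because all four bounds are exponential in~$L$.

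The main subtlety I expect is the qubit count. One has to argue carefully that the encoder and decoder qubits can be identified with the corresponding block qubits of $\cC^{(L)}$, or else counted separately, so that the sum of the three stages matches the stated upper bound rather than overcounting. This is a matter of bookkeeping: the encoder's output register is the input block of $\cC^{(L)}$. In the worst case one simply counts them separately, which still fits within the stated bound.
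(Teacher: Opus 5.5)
Your proposal is correct and is essentially the paper's proof: robustness follows from Lemma~\ref{lem:itlevelreductionencdec} combined with Lemma~\ref{lem:localstochasticnoiseqoutput}\eqref{it:circnoisyclaim} via a union bound, and the resource bounds are read off from Lemmas~\ref{lem:qubitdepthoverheadsimulation} and~\ref{lem:qubitdepthoverheadsimulationb}; your handling of the coupling of $\cE$ and $\cE_0$ and of the boundary case $p=p_\ast/2$ is, if anything, more explicit than the paper's. The only tacit assumptions, which the paper shares, are $N_{\mathsf{in}},N_{\mathsf{out}}\geq 1$ (needed for the additive $2\dmin^{L-1}$ in the depth lower bound) and $N_{\mathsf{in}},N_{\mathsf{out}}\leq N$ (needed when dividing by $N$ to get $N(\cC_{FT})/N=e^{\Theta(L)}$).
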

\begin{proof}
To show that~$\cC_{FT}$ $f_L$-robustly implements~$\cC$, it suffices to show that 
    \begin{align}\label{eq:upperbounderrorprobabilitypzerm}
        \prob\left[ \cE_L \bowtie \cC_{FT}\neq \cC\right] &\leq p\cdot 2\nmax\dmax (N_{\mathsf{out}} + N_{\mathsf{in}}) + p_\ast (p/p_\ast)^{(t+1)^L} |\mathsf{Loc}(\cC)|\ 
\end{align}
for any local stochastic noice~$\cE_L$ on $\cC_{FT}$ of  strength $p < p_\ast/2$. This follows immediately from Lemma~\ref{lem:itlevelreductionencdec}, using that $\cE_0\bowtie \cC = \cC$ except with probability $p_0|\mathsf{Loc}(\cC)|$ by Lemma~\ref{lem:localstochasticnoiseqoutput}\eqref{it:circnoisyclaim}.

The bounds on $N(\cC_{FT})$ and $\mathsf{depth}(\cC_{FT})$  follow from  Lemma~\ref{lem:qubitdepthoverheadsimulation} and Lemma~\ref{lem:qubitdepthoverheadsimulationb}.
\end{proof}
We note that the upper bound 
on the failure probability provided by the function~$f_L(p)$  is at least of order~$O(p)$ (even for constant~$N$), i.e., no improvement is made over the physical error strength by the level-$L$ encoding.  This may suggest that the construction and result of Lemma~\ref{lem:failureprobencdec}  are useless for fault tolerance, where the general spirit is to gain accuracy. Nevertheless, they are immediately applicable to long-range entanglement generation as we argue below.

Combining Lemma~\ref{lem:failureprobencdec}  with the $[[7,1,3]]$-code
and associated fault tolerance gadgets (see e.g., Ref.~\cite{stephens2007universal} and  Section~\ref{sec:bilineararrayfaulttolerance}), we obtain the following result.
\begin{theorem}[Robust implementation of circuits against local stochastic noise]\label{thm:robustimplementationstochastic}
There is a constant~$\mathsf{g}\geq 49$ and $p_*>0$ such that the following holds. 
For $N\in\mathbb{N}$ and $\varepsilon>0$, set 
\begin{align}
p_0(N,\varepsilon)&:=\min \left\{p_*/2, \textfrac{\varepsilon}{(8\mathsf{g}N)}\right\}\ .
\end{align}
Then the following holds for any circuit~$\cC$ composed of operations belonging to a certain universal gate set. 
Let 
\begin{align}
\upperboundL &\geq |\mathsf{Loc}(\cC)|
\end{align}
be an upper bound on the number of circuit locations of~$\cC$. 
There is a circuit~$\cC_{\mathsf{FT}}$ 
which realizes~$\cC$ except with probability~$\varepsilon$ in the presence of local stochastic noise of any strength~$p<p_0(N,\varepsilon)$.
Its number of qubits and circuit depth are bounded as 
\begin{align}
N(\cC_{\mathsf{FT}}) &= N(\cC)\cdot \Theta\left(\mathsf{poly}\left(\log 1/\varepsilon, \log \upperboundL\right)\right)\label{eq:numberofqubitslocn}\\
\mathsf{depth}(\cC_{\mathsf{FT}}) &= \mathsf{depth}(\cC)\cdot\Theta\left(\mathsf{poly}\left(\log 1/\varepsilon, \log\upperboundL\right)\right)\ .\label{eq:circuitdepthlocnb}
\end{align}
\end{theorem}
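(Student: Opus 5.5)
We plan to apply Lemma~\ref{lem:failureprobencdec} to the concatenated $[[7,1,3]]$ Steane code, so $t=1$ and $(t+1)^L=2^L$. We fix gadgets (for instance those of Ref.~\cite{stephens2007universal}, with level-$1$ encoder/decoder gadgets $\enc^{(1)},\dec^{(1)}$ built from the encoding Clifford unitary as in Section~\ref{sec:ftgadgetsencoding}). These give constants $\nmax,\dmin,\dmax$ and a threshold $p_*$ from Lemma~\ref{lem:itlevelreductionencdec}. We set $\mathsf{g}:=\max\{49,\nmax\dmax\}$; the lower bound $49$ holds because $\nmax\geq n=7$ and $\dmax\geq 7$ can be assumed, or we simply impose it. We then take $\cC_{\mathsf{FT}}$ to be the circuit of Eq.~\eqref{eq:cftconstruction}, with $L$ chosen as described next.

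For the error bound, we use $N_{\mathsf{in}}+N_{\mathsf{out}}\leq 2N$ and $p<p_0(N,\varepsilon)\leq \varepsilon/(8\mathsf{g}N)$. The first term of $f_L(p)$ in Eq.~\eqref{eq:flprobust} is then at most $4\mathsf{g}Np\leq \varepsilon/2$. For the second term we use $p<p_*/2$, so $p_*(p/p_*)^{2^L}|\mathsf{Loc}(\cC)|\leq p_*2^{-2^L}\upperboundL$. It suffices to choose $L$ such that $2^{L}\geq \log_2(2p_*\upperboundL/\varepsilon)$, that is,
\begin{align}
L:=\left\lceil \log_2\log_2\left(2\max\{p_*,1\}\,\upperboundL/\varepsilon\right)\right\rceil ,
\end{align}
which makes the second term at most $\varepsilon/2$. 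This gives $f_L(p)\leq\varepsilon$. By Definition~\ref{def:ftimplementation}, $\Pr[\cE\bowtie\cC_{\mathsf{FT}}\neq\cC]\leq\varepsilon$.

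For the resources, Lemma~\ref{lem:failureprobencdec} gives $N(\cC_{\mathsf{FT}})/N=e^{\Theta(L)}$ and $\mathsf{depth}(\cC_{\mathsf{FT}})/\mathsf{depth}(\cC)=e^{\Theta(L)}$. The explicit form is sandwiched between $n^L$ (respectively $\dmin^L$) and $\nmax^{L+1}$-type expressions (respectively $\dmax^{L+1}$ plus the additive encoder/decoder depth). With $L=\log_2\log_2(\upperboundL/\varepsilon)+O(1)$ we have $e^{\Theta(L)}=(\log(\upperboundL/\varepsilon))^{\Theta(1)}=\Theta(\mathsf{poly}(\log 1/\varepsilon,\log\upperboundL))$. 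This is Eqs.~\eqref{eq:numberofqubitslocn} and~\eqref{eq:circuitdepthlocnb}. The additive depth term $2\dmax^{L+1}$ is absorbed into the multiplicative factor since $\mathsf{depth}(\cC)\geq 1$. Similarly, the qubit term $\nmax^{L+1}(N_{\mathsf{in}}+N_{\mathsf{out}})\leq 2\nmax^{L+1}N$ is absorbed.

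The main obstacle is not the arithmetic but ensuring that the hypotheses of Lemma~\ref{lem:failureprobencdec} hold for the chosen gadget set. The gadgets must satisfy the conditions of Figs.~\ref{fig:ft-definitionsgadgets}, \ref{fig:decodergadgetprop} and~\ref{fig:encodingdecodinggadgetcorrectness}, including the encoder/decoder gadgets, so that the level reduction of Lemma~\ref{lem:level reduction} applies with a constant $c$ and hence $p_*=c^{-1/t}$. For a universal gate set we will rely on the Steane-code gadgets (transversal Cliffords, Steane-type EC, and a fault-tolerant $T$ gadget) as in~\cite{aliferis2007level,stephens2007universal}. We take the extended-rectangle level reduction as given, since it works for distance-$3$ codes as noted after Lemma~\ref{lem:itlevelreductionencdec}.
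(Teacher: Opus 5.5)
Your proposal is correct and follows essentially the same route as the paper: specialize Lemma~\ref{lem:failureprobencdec} to the Steane code with $t=1$, take $\mathsf{g}=\nmax\dmax$, use $N_{\mathsf{in}}+N_{\mathsf{out}}\leq 2N$ to bound the first term of $f_L$ by $\varepsilon/2$, choose $L=\lceil\log\log(\Theta(\upperboundL/\varepsilon))\rceil$ so the doubly-exponential term is at most $\varepsilon/2$, and convert $e^{\Theta(L)}$ into $\mathsf{poly}(\log 1/\varepsilon,\log\upperboundL)$. Your extra safeguards (taking $\max\{49,\nmax\dmax\}$ and $\max\{p_*,1\}$, and bounding $f_L(p)$ directly rather than evaluating at $p_0$ and invoking monotonicity) are harmless refinements of the paper's argument.
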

\noindent It may not be clear at this point why using an upper bound~$\upperboundL$ on~$|\mathsf{Loc}(\cC)|$ is helpful. However, as we argue below (see Section~\ref{sec:2dsimulation}), the two quantities $N(\cC_{\mathsf{FT}})$ and $\mathsf{depth}(\cC_{\mathsf{FT}})$ translate into a geometry in certain settings, and making circuits larger can in fact be useful to create long-range entanglement. 

\begin{proof}
Specializing to the $[[7,1,3]]$-code, we have $t=1$.  
Since this is a $7$-qubit code, we have 
$\nmax\geq 7$, $\dmax\geq 7$ (cf. Lemma~\ref{lem:qubitdepthoverheadsimulation}), see e.g., the explicit constructions in~\cite{stephens2007universal}. We set $\mathsf{g}:=\nmax\cdot\dmax$ such that $\mathsf{g}\geq 49$.

Let us write $p_0=p_0(N,\varepsilon)$ for brevity. 
Using Eq.~\eqref{eq:upperboundfrobust1D} 
and the fact that $N_{\mathsf{out}}\leq N$ and $N_{\mathsf{in}}\leq N$, 
we conclude that
\begin{align}
f_L(p_0)&\leq \varepsilon/2+p_* |\mathsf{Loc}(\cC)|\cdot \left(1/2\right)^{2^L}\\
&\leq \varepsilon/2+p_* \upperboundL\cdot \left(1/2\right)^{2^L}
\end{align}
In particular, we have 
\begin{align}
f_L(p_0)&\leq\varepsilon\qquad\textrm{ whenever }\qquad 
\log\left(\log (2p_*/\varepsilon)+\log \upperboundL\right)\leq L\ ,
\end{align}
where $\log$ denotes the logarithm to base~$2$. Since $p_*<1/2$, this is satisfied if 
\begin{align} 
\log \left(\log 1/\varepsilon+\log \upperboundL\right)&\leq L\ , 
\end{align}
i.e., for the choice
\begin{align}
L_0:=\lceil\log \left(\log 1/\varepsilon+\log \upperboundL\right) \rceil
\end{align}
The claim now follows from Lemma~\ref{lem:failureprobencdec}, the monotonicity of the function $f_L$, and the fact that   for any constant $g>0$, we have 
\begin{align}
g^{L_0}&=\left(\log 1/\varepsilon+\log\upperboundL\right)^{\log g}\\
&=\Theta\left(\mathsf{poly}\left(\log 1/\varepsilon, \log \upperboundL\right)\right)
\end{align}
where we used that $g^{\log x}=x^{\log g}$ for $x>0$ and $g\geq 1$. 
\end{proof}

\section{Pauli errors and Clifford circuits }\label{sec:errorpropagation}
Here we analyze the effect of Pauli errors on general Clifford circuits (including mid-circuit measurements and corresponding corrections). 
The main goal is to argue that  the fault tolerance construction for the bilinear qubit array of
Ref.~\cite{stephens2007universal} gives rise to a $1D$-local quantum fault tolerance scheme for Clifford circuits, which is robust to local stochastic Pauli noise, see Section~\ref{ft:ftimplementationprepmeasure}.

We proceed in the following steps: Section~\ref{sec:adaptivevsnonadaptivepauli} discusses adaptivity  in Clifford circuits, and reviews (standard) arguments showing that adaptive Pauli corrections can be postponed to the end of the circuit. 
We also discuss how space-time regions of the circuit can be ``cleaned'' of Pauli noise by applying corresponding commutation relations. This is the basis for the circuit transformation rules we discuss in Section~\ref{sec:noisepreservingtransformations}: Here we show that certain modifications of a circuit are compatible with local stochastic Pauli noise. In particular, in Section~\ref{sec:circuitlocalitytransformations}, we show how to relate circuits associated with different architectures (locality constraints) while keeping control of the noise. This is the basis for the result on $1D$-local fault tolerance established in Section~\ref{ft:ftimplementationprepmeasure}.

\subsection{Adaptive versus  non-adaptive Clifford circuits and Pauli noise\label{sec:adaptivevsnonadaptivepauli}}
Throughout, we consider Clifford circuits composed of the following operations:
\begin{enumerate}[(i)]
\item preparation of a single qubit in the computational basis state $|0\rangle$.\label{it:prepcliff}
\item single- and two-qubit Clifford operations (including the identity, whose inclusion is convenient for formal purposes). Specifically, we use the single-qubit Pauli gates $\{X,Y,Z\}$, the single qubit Hadamard gate~$H$ and the phase gate~$S=\mathsf{diag}(1,i)$, as well as the two-qubit  $\CNOT$ gate.\label{it:unitarycliff}
\item\label{it:measurementcliff}
measurement of any qubit in the computational basis. This provides a measurement result $x \in\{0,1\}$ (which we assume is recorded).
\item\label{it:adaptivecliff}
classically controlled Pauli operators (of any weight) applied to any subset of qubits.
\end{enumerate}
Let us denote all operations defined by~\eqref{it:prepcliff}--\eqref{it:measurementcliff} as well as the identity operations associated with all subsets of qubits by~$\mathsf{Cliff}_{\mathsf{na}}$.
The set of all operations~\eqref{it:prepcliff}--\eqref{it:adaptivecliff} (i.e., including adaptive Pauli operations) will be denoted by~$\mathsf{Cliff}$.

A widely used statement is the fact that an adaptive Clifford circuit~$\mathsf{C}$  (i.e., composed of operations belonging to~$\mathsf{Cliff}$) is equivalent to a non-adaptive Clifford circuit~$\tilde{\mathsf{C}}$ (i.e., composed only of operations belonging to~$\mathsf{Cliff}_{\mathsf{na}}$) up to a Pauli correction which is efficiently computable from the measurement results.  The following lemma formalizes a folklore result stating that this transformation is robust to Pauli noise. We note that the assumption that the noise is of Pauli type is essential for its  proof. 

Without loss of generality, we consider a circuit whose last layer of operations consists in (possibly adaptive) Pauli operations. 

\begin{lemma}[Adaptive and non-adaptive Clifford circuits]\label{lem:adaptivenonadaptiveClifford}
Let $\mathsf{C}=\cC_D\circ \cdots \circ\cC_1$ be a depth-$D$ circuit composed of~$\mathsf{Cliff}$.
Suppose that the last layer~$C_D$ only consists of single-qubit Paulis and adaptive operations, i.e., operations belonging to $\mathsf{Cliff}\backslash \mathsf{Cliff}_{\mathsf{na}}$. 
Then there is a depth-$D$ circuit $\tilde{\mathsf{C}}=\tilde{\mathsf{C}}_D\circ\cdots\circ\tilde{\mathsf{\mathsf{C}}}_1$
composed of~$\mathsf{Cliff}$ 
which (has an efficient description that) is efficiently computable from (an efficient description of) $\mathsf{C}$ and has  the following properties:
\begin{enumerate}[(i)]
\item for each $t\in [D-1]$, the  layer~$\tilde{\mathsf{C}}_t$ is composed of~$\mathsf{Cliff}_{\mathsf{na}}$, i.e., is non-adpative.
\item
the circuit~$\mathsf{\tilde{C}}$ has the same interaction graph as~$\mathsf{C}$,
\item
For any Pauli error~$\tilde{F}$ on~$\tilde{\mathsf{C}}$, there is a Pauli error $F$~on~$\mathsf{C}$ with the following properties:
We have 
\begin{align}
F\bowtie\mathsf{C}&=\tilde{F}\bowtie\tilde{\mathsf{C}}\label{eq:paulipropagatedidentical}
\end{align}
and 
\begin{align}
\supp(F)\subseteq \supp(\tilde{F})\ .
\end{align}
\end{enumerate}
\end{lemma}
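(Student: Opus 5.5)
We construct $\tilde{\mathsf{C}}$ by deferring all adaptive Pauli corrections to the last layer, and we argue by induction on the number of layers containing adaptive operations. Let $t<D$ be the earliest layer of $\mathsf{C}$ containing a classically controlled Pauli $P(x)$ acting on some qubits. In $\tilde{\mathsf{C}}$ we replace this operation in layer $t$ by the identity on the same qubits. Since every subsequent non-adaptive operation is a Clifford unitary, a preparation or a computational basis measurement, we push $P(x)$ forward through $\mathsf{C}_{t+1},\ldots,\mathsf{C}_{D-1}$ using the Clifford commutation rules.

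The propagation through each operation type works as follows.
\begin{itemize}
\item Through a Clifford unitary $U$, $P$ becomes $UPU^\dagger$, another Pauli whose support stays inside the qubits touched by $U$.
\item Through a computational-basis measurement, an $X$-component of $P$ flips the outcome, so we replace the recorded bit $x_j$ by $x_j\oplus b_j(x)$, computed classically. A $Z$-component acts trivially and is dropped.
\item Through a preparation of $\ket{0}$, the Pauli on that qubit is simply discarded.
\end{itemize}
At the end, the accumulated Pauli $Q(x)$, a function of the (relabelled) outcomes, is merged into the final adaptive layer $\tilde{\mathsf{C}}_D$. This gives exactly the same instrument: the classical post-processing relabels outcomes bijectively and is absorbed into the adaptive Paulis of layer $D$. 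All classical maps are affine over $\mathbb{F}_2$ and efficiently computable from a stabilizer-tableau description. No gate is moved or added, only adaptive Paulis are turned into identities, so the interaction graph is unchanged and layers $1,\ldots,D-1$ become non-adaptive.

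For the noise statement (iii), we take the Pauli error $\tilde{F}$ on $\tilde{\mathsf{C}}$ and set $F:=\tilde{F}$ on the identical wire set $\cW_{\mathsf{C}}=\cW_{\tilde{\mathsf{C}}}$, so that $\supp(F)\subseteq\supp(\tilde F)$ trivially. We then check Eq.~\eqref{eq:paulipropagatedidentical}. The key point is that the deferral identity holds layer by layer for any Pauli inserted between layers: a fixed Pauli $\tilde{F}(t,\cdot)$ commutes with $P(x)$ up to a global phase, which is irrelevant at the channel level. Hence the effect of the adaptive $P(x)$ can still be pushed forward past the inserted error wires. On the measurement side, the error's $X$ part flips outcomes in both circuits identically, so the propagated corrections remain computed from the noisy outcomes consistently in both circuits. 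Therefore $F\bowtie\mathsf{C}$ and $\tilde F\bowtie\tilde{\mathsf{C}}$ agree as instruments for every outcome history.

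The main obstacle is bookkeeping this equality rigorously in the presence of errors. Measurement errors change outcomes, and the relabelling functions $b_j$ depend on earlier outcomes, so we must verify that the same classical functions work for every noise realization. We would handle this by induction over layers. The invariant is that after layer $s$, the CP maps for history $x^{(s)}$ satisfy
\[
\bigl(F\bowtie\mathsf{C}\bigr)^{x^{(s)}}=\mathcal{Q}_s(x)\circ\bigl(\tilde F\bowtie\tilde{\mathsf{C}}\bigr)^{\pi_s(x^{(s)})},
\]
where $\pi_s$ is the outcome relabelling and $\mathcal{Q}_s(x)$ is conjugation by the pending Pauli. This uses only that Pauli channels commute with Pauli channels and conjugate through Cliffords to Pauli channels, which is exactly why the statement requires Pauli noise. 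If outcome relabelling makes the instrument comparison awkward, we would instead absorb the relabelling into the definition of $\tilde{\mathsf{C}}$'s final adaptive layer. The latter is allowed because the classical control of $\tilde{\mathsf{C}}_D$ may be any efficiently computable function of the full history.
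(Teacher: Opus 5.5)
Your proposal is correct and is essentially the paper's argument: adaptive Pauli corrections are commuted forward to the last layer through Cliffords (by conjugation), measurements (by flipping outcomes and updating all later classical controls) and preparations (by absorption), with signs irrelevant because the controls are classical, so the same Pauli error $F=\tilde{F}$ works for both circuits. The paper does this as a sequence of intermediate circuits that move one adaptive Pauli past one layer at a time, whereas you track a single global Pauli frame with one inductive invariant; this is only a bookkeeping difference (your remark that all classical maps are affine holds only if the original controls are affine, but the lemma does not need it).
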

Eq.~\eqref{eq:paulipropagatedidentical} implies that~$\tilde{\mathsf{C}}$ and $\mathsf{C}$ implement the identical functionality in the absence of noise, and a noisy execution  of~$\tilde{\mathsf{C}}$ followed by 
 an efficiently computable Pauli correction produces identical output as a noisy execution of the original circuit~$\mathsf{C}$ with related Pauli noise.
\begin{proof}
For a depth-$D$ circuit~$\mathsf{C}=\cO_D\circ\cdots\circ \cO_1$, let $N_{\leq D-1}(\mathsf{C})$ denote the number of adaptive Pauli operations in the operation layers~$\cO_1,\ldots,\cO_{D-1}$. 

We iteratively construct a sequence~$\{\tilde{\mathsf{C}}^{(j)}\}_{j=0}^{K-1}$ of intermediate depth-$D$ circuits such that  the last circuit~$\tilde{\mathsf{C}}:=\tilde{\mathsf{C}}^{(K-1)}$ has the required properties. We  set $\tilde{\mathsf{C}}^{(0)}:=\mathsf{C}$ and show  that  for each $j\in \{0,\ldots,K-1\}$, the following holds:
\begin{enumerate}[(i)]
\item\label{it:firstpropertytoprovea}
Compared to~$\tilde{\mathsf{C}}^{(j)}$, the circuit~$\tilde{\mathsf{C}}^{(j+1)}$
either has $N_{\leq D-1}(\tilde{\mathsf{C}}^{(j+1)})=N_{\leq D-1}(\tilde{\mathsf{C}}^{(j)})-1$
or has one adaptive Pauli correction ``moved to the right'', i.e., moved to a later layer.
\item\label{it:firstpropertytoproveb}
The circuit~$\tilde{\mathsf{C}}^{(j+1)}$ has the same interaction graph as $\tilde{\mathsf{C}}^{(j)}$
\item\label{it:firstpropertytoprovec}
For any Pauli error~$\tilde{F}$ on~$\tilde{\mathsf{C}}^{(j+1)}$, there is a
Pauli error~$F$ on~$\tilde{\mathsf{C}}^{(j)}$ 
such that
$F\bowtie \tilde{\mathsf{C}}^{(j)}=\tilde{F}\bowtie \tilde{\mathsf{C}}^{(j+1)}$.
\end{enumerate}
Note that
 if $N_{\leq D-1}(\mathsf{C}^{(j)})=0$, we are done.
 
Assuming that  $N_{\leq D-1}(\tilde{\mathsf{C}}^{(j)})>0$, we construct~$\tilde{\mathsf{C}}^{(j+1)}$ as follows from the circuit~$\tilde{\mathsf{C}}^{(j)}=\cO_D^{(j)}\circ\cdots\circ \cO_1^{(j)}$.
 Let $t\in \{1,\ldots,D-1\}$ be the maximal time before the last time step~$D$ (operation layer) such that the layer~$\cO_t^{(j)}$ contains an adaptive Pauli correction. Fix one such adaptive Pauli correction.
We define~$\tilde{\mathsf{C}}^{(j+1)}$ by commuting the adaptive Pauli correction ``past'' the operation in layer~$\cO_{t+1}^{(j)}$ following it according to the rules specified in Fig.~\ref{fig:all-transformations}. These are chosen in such a way as to satisfy properties~\eqref{it:firstpropertytoprovea}--\eqref{it:firstpropertytoprovec}. 

\begin{figure}
\centering
\includegraphics{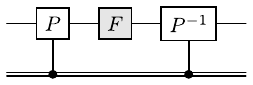} \qquad $=$\qquad\includegraphics{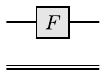}
\caption{A key identity: Conjugating a Pauli~$F$ by a classically controlled Pauli~$\mathsf{ctrl}P$ results in~$F$ up to (possibly) a sign (if $F$ and $P$ anticommute), but the sign can be ignored as the controlling register is classical.~\label{fig:controlPidentity}}
\end{figure}

\begin{figure}
\centering
\renewcommand{\thesubfigure}{a\arabic{subfigure}}
\setcounter{subfigure}{0}
\begin{minipage}{\textwidth}
\centering
\begin{subfigure}{0.45\textwidth}
\centering
\includegraphics{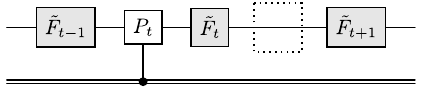}
\caption{Circuit~$\tilde{\mathsf{C}}^{(j)}$ with  noise~$(F_{t-1},F_t,F_{t+1}) = (\tilde{F}_{t-1},\tilde{F}_t,\tilde{F}_{t+1})$}
\end{subfigure}\hfill
\begin{subfigure}{0.45\textwidth}
\centering
\includegraphics{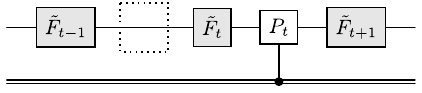}
\caption{Definition of the circuit~$\tilde{\mathsf{C}}^{(j+1)}$.}
\end{subfigure}
\renewcommand{\thesubfigure}{\alph{subfigure}}
\setcounter{subfigure}{0}
\subcaption{Wait location: 
Here the circuit~$\tilde{\mathsf{C}}^{(j+1)}$ is obtained by commuting the adaptive Pauli forward.
The fact that this has the same action even with errors is a consequence of the identity shown in Fig.~\ref{fig:controlPidentity}.}
\label{fig:waitcliff}
\end{minipage}

\vspace{1em}

\renewcommand{\thesubfigure}{b\arabic{subfigure}}
\setcounter{subfigure}{0}
\begin{minipage}{\textwidth}
\centering
\begin{subfigure}{0.45\textwidth}
\centering
\includegraphics{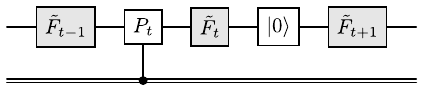}
\caption{Circuit~$\tilde{\mathsf{C}}^{(j)}$ with  noise~$(F_{t-1},F_t,F_{t+1}) = (\tilde{F}_{t-1},\tilde{F}_t,\tilde{F}_{t+1})$}
\end{subfigure}\hfill
\begin{subfigure}{0.45\textwidth}
\centering
\includegraphics{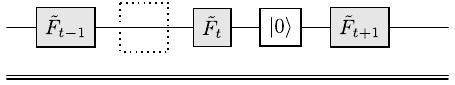}
\caption{Definition of the circuit~$\tilde{\mathsf{C}}^{(j+1)}$.}
\end{subfigure}
\renewcommand{\thesubfigure}{\alph{subfigure}}
\setcounter{subfigure}{1}
\subcaption{Preparation: 
Here the circuit~$\tilde{\mathsf{C}}^{(j+1)}$ is obtained by omitting the controlled Pauli in layer~$t$. This gives the same action as the controlled Pauli has no effect.}
\label{fig:preparationcliff}
\end{minipage}

\vspace{1em}

\renewcommand{\thesubfigure}{c\arabic{subfigure}}
\setcounter{subfigure}{0}
\begin{minipage}{\textwidth}
\centering
\begin{subfigure}{0.45\textwidth}
\centering
\includegraphics{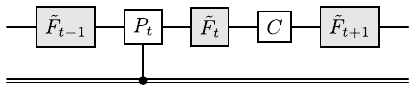}
\caption{Circuit~$\tilde{\mathsf{C}}^{(j)}$ with  noise~$(F_{t-1},F_t,F_{t+1}) = (\tilde{F}_{t-1},\tilde{F}_t,\tilde{F}_{t+1})$}
\end{subfigure}\hfill
\begin{subfigure}{0.45\textwidth}
\centering
\includegraphics{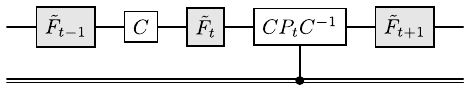}
\caption{Definition of the circuit~$\tilde{\mathsf{C}}^{(j+1)}$.}
\end{subfigure}
\renewcommand{\thesubfigure}{\alph{subfigure}}
\setcounter{subfigure}{2}
\subcaption{Clifford: 
Here the circuit~$\tilde{\mathsf{C}}^{(j+1)}$ is obtained by 
commuting the controlled-Pauli forward through the circuit. We note that $P':=CP_tC^{-1}$ is
a one- or two-qubit Pauli in general (we only illustrate the case of one qubit). The fact that these have the identical action under the (specified) Pauli noise is again a consequence of the identity shown in Fig.~\ref{fig:controlPidentity}
applied to $E_t$ and $P'$.}
\label{fig:cliff}
\end{minipage}

\vspace{1em}

\renewcommand{\thesubfigure}{d\arabic{subfigure}}
\setcounter{subfigure}{0}
\begin{minipage}{\textwidth}
\centering
\begin{subfigure}{0.45\textwidth}
\centering
\includegraphics{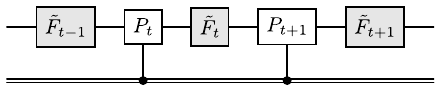}
\caption{Circuit~$\tilde{\mathsf{C}}^{(j)}$ with  noise~$(F_{t-1},F_t,F_{t+1}) = (\tilde{F}_{t-1},\tilde{F}_t,\tilde{F}_{t+1})$}
\end{subfigure}\hfill
\begin{subfigure}{0.45\textwidth}
\centering
\includegraphics{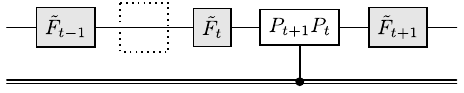}
\caption{Definition of the circuit~$\tilde{\mathsf{C}}^{(j+1)}$.}
\end{subfigure}
\renewcommand{\thesubfigure}{\alph{subfigure}}
\setcounter{subfigure}{3}
\subcaption{Adaptive Pauli: 
Here the circuit~$\tilde{\mathsf{C}}^{(j+1)}$ is obtained by combining the adaptive Paulis into one. The fact that these have the identical action under the (specified) Pauli noise is again a consequence of the identity shown in Fig.~\ref{fig:controlPidentity}.}
\label{fig:adaptivepauli}
\end{minipage}

\vspace{1em}

\renewcommand{\thesubfigure}{e\arabic{subfigure}}
\setcounter{subfigure}{0}
\begin{minipage}{\textwidth}
\centering
\begin{subfigure}{0.45\textwidth}
\centering
\includegraphics{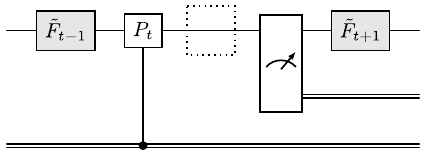}
\caption{Circuit~$\tilde{\mathsf{C}}^{(j)}$ with  noise~$(F_{t-1},F_t,F_{t+1}) = (\tilde{F}_{t-1},I,\tilde{F}_{t+1})$}
\end{subfigure}\hfill
\begin{subfigure}{0.45\textwidth}
\centering
\includegraphics{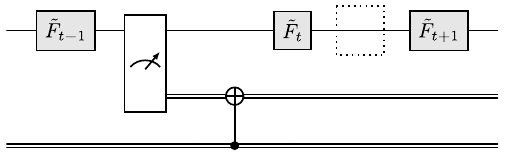}
\caption{Definition of the circuit~$\tilde{\mathsf{C}}^{(j+1)}$.}
\end{subfigure}
\renewcommand{\thesubfigure}{\alph{subfigure}}
\setcounter{subfigure}{4}
\subcaption{Measurement: 
Here the circuit~$\tilde{\mathsf{C}}^{(j+1)}$ is obtained by 
commuting the controlled-Pauli forward through the circuit. 
The CNOT-like symbol signifies that the definition of all subsequent classically controlled Paulis has to be updated by replacing the measurement result~$x_t$ by $1-x_t$ if the control is such that $P_t\in \{X,Y\}$.}
\label{fig:measurementcliff}
\end{minipage}

\caption{Circuit transformations for propagating adaptive Paulis forward past a subsequent operation. The circuit~$\tilde{\mathsf{C}}^{(j+1)}$ (right column)
with Pauli errors~$(\tilde{F}_{t-1},\tilde{F}_t,\tilde{F}_{t+1})$ produces the same output as
the circuit~$\tilde{\mathsf{C}}^{(j)}$ with the ``derived'' Pauli noise~$(F_{t-1},F_t,F_{t+1})$
 (left column)}
\label{fig:all-transformations}
\end{figure}
\end{proof}

In the following, we consider propagation of noise through circuits. 
Consider first the case of a unitary circuit. Here we have the following (again following~\cite[Lemma 11]{bravyi2020quantum}).

\begin{lemma}[Pauli error propagation through a Clifford circuit]\label{lem:unitarycliffordcircuit}
Let $\mathsf{C}$ be a depth-$D$ Clifford circuit consisting only of one- and two-qubit Clifford unitaries. 
Let $F\sim \cN^{\pauli}_{\mathsf{C}}(p)$ be local stochastic Pauli noise on~$\mathsf{C}$. Then there is local  stochastic Pauli noise~$F'\sim \cN^{\pauli}_{\mathsf{C}}(p)$
of strength 
\begin{align}
p'= D\cdot \left(2p^{2^{-(D-1)}}\right)^{1/D}\ .\label{eq:pddeltadefinition}
\end{align} with the following properties:
\begin{enumerate}[(i)]
\item\label{it:supportpropagatedtoend}
$\supp(F')\subseteq \{D\}\times [n]$. 
\item
$F\bowtie \mathsf{C}=F'\bowtie\mathsf{C}$. 
\end{enumerate}
\end{lemma}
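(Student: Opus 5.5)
We push each layer's Pauli error forward through the remaining Clifford layers and then combine the $D$ resulting error patterns using Lemma~\ref{lem:pauli noise properties}\eqref{item:pauli noise property 3}. Write $F=F_1\cdot F_2\cdots F_D$, where $F_t$ is the restriction of $F$ to the wires $\{t\}\times[n]$, i.e., $F_t(w)=F(w)$ for $w\in\{t\}\times[n]$ and $F_t(w)=I$ otherwise. By Lemma~\ref{lem:pauli noise properties}\eqref{item:pauli noise property 1}, each $F_t\sim\cN^{\pauli}_{\mathsf{C}}(p)$, since $\supp(F_t)\subseteq\supp(F)$.

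For each $t$, define $G_t$ as the Pauli obtained by conjugating the layer-$t$ error through the unitary layers $\cC_{t+1},\ldots,\cC_D$. That is, $G_t$ is placed on the wires $\{D\}\times[n]$ and equals $(\cC_D\cdots\cC_{t+1})F_t(\cC_D\cdots\cC_{t+1})^\dagger$ up to phase. Since all gates are Clifford unitaries and all errors are Paulis, and global phases act trivially on the channel, we get $F\bowtie\mathsf{C}=F'\bowtie\mathsf{C}$ with $F':=G_1\cdots G_D$. This gives property~\eqref{it:supportpropagatedtoend} and the identity claim directly.

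The main step is bounding the locality strength of each $G_t$. Each layer consists of one- and two-qubit gates acting on disjoint qubits, so the support of a conjugated Pauli can at most double per layer, and only through the partner in a two-qubit gate. We prove by induction on the number $k=D-t$ of layers traversed that the propagated error is local stochastic with strength $q_k\le 2q_{k-1}^{1/2}$, starting from $q_0=p$.

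For one layer, fix a set $W$ of output wires at the next time slice. A wire in $W$ can be in the support after the gate only if at least one of the at most two input wires of its gate is faulty. Group $W$ by gates, noting that one or two wires of $W$ may share a gate. For each group, choose one input wire in the support; these chosen wires are distinct across gates. There are at most $2^{|W|}$ such choices, and each chosen set has size at least $|W|/2$. The union bound then gives probability at most $2^{|W|}q^{|W|/2}=(2q^{1/2})^{|W|}$.

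Iterating gives $q_k\le 2^{1+1/2+\cdots+2^{-(k-1)}}p^{2^{-k}}\le 4p^{2^{-k}}$. For the stated constant it is cleaner to note that the combinatorial factor $2$ is only needed once per wire, which would give $q_{D-1}\le 2p^{2^{-(D-1)}}$. Every $G_t$ has $k\le D-1$, so all have strength at most $\max_t q_{D-t}=2p^{2^{-(D-1)}}$, using that $q_k$ is monotone in $k$ for $p\le1$.

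Applying Lemma~\ref{lem:pauli noise properties}\eqref{item:pauli noise property 3} to the $D$ possibly dependent noises $G_1,\ldots,G_D$ yields strength $D\cdot(2p^{2^{-(D-1)}})^{1/D}$, which is Eq.~\eqref{eq:pddeltadefinition}.

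The expected obstacle is getting exactly the constant $2$ inside Eq.~\eqref{eq:pddeltadefinition}. The naive per-layer recursion accumulates a factor close to $4$. The first attempt will therefore track, for a fixed output set $W$, the full backward light cone through all $k$ layers at once. The counting is then over subsets of the at most $2^{k}|W|$ ancestor wires at the initial layer, with a single factor $2^{|W|}$-type bound on choosing a hitting set of size at least $|W|2^{-k}$. If that fails, the fallback is to state the bound with constant $4$, which is equivalent for all later uses since only $p'\to0$ as $p\to0$ at fixed $D$ matters.
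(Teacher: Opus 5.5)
Your decomposition matches the paper's proof. Both restrict $F$ to each time slice, push each slice forward through the remaining Clifford layers, bound the strength of each pushed error, and combine the $D$ pieces with Lemma~\ref{lem:pauli noise properties}\eqref{item:pauli noise property 3}. The paper takes the one-layer step from \cite[Lemma 11]{bravyi2020quantum}, which gives per-layer strength $g(q)=\sqrt{2q}$. The gap is in your own one-layer bound, and it costs you the constant. As written you get $(2q^{1/2})^{|W|}$, which iterates to $q_k\le 2^{2-2^{1-k}}p^{2^{-k}}$. That puts $4$ rather than $2$ inside Eq.~\eqref{eq:pddeltadefinition}. Your assertion that ``the factor $2$ is only needed once per wire'' is not an argument, and the light-cone plan is left unverified. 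So the lemma as stated is not proved. Your constant-$4$ fallback proves a weaker statement, even though that weaker statement would suffice downstream.

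The fix is to keep the counting per gate rather than per wire. Let $m\ge |W|/2$ be the number of gates touching $W$. The union bound then has at most $2^m$ choices of one faulty input wire per gate, and each choice has probability at most $q^m$. Pairing these factors gives $\Pr[W\subseteq\supp(G')]\le (2q)^m\le (2q)^{|W|/2}=(\sqrt{2q})^{|W|}$ when $2q\le 1$; when $2q>1$ the bound holds trivially. Bounding the choices by $2^{|W|}$ and the probability by $q^{|W|/2}$ separately is what loses the factor $\sqrt{2}$ per layer.

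Iterating $g(x)=\sqrt{2x}$ gives $g^{\circ k}(p)=2^{1-2^{-k}}p^{2^{-k}}\le 2p^{2^{-k}}$. For $p\le 1$ this is nondecreasing in $k$ and dominates $p$, so every $G_t$, including $G_D=F_D$, has strength at most $2p^{2^{-(D-1)}}$. Your final combination step then yields exactly $D\,(2p^{2^{-(D-1)}})^{1/D}$.
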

Statement~\eqref{it:supportpropagatedtoend} means that the error~$F'$ only acts at the end of the circuit. It is the result of propagating~$F$ to the end of the circuit.
\begin{proof}
Let us write $\mathsf{C}=\cC_D\circ\cdots \circ\cC_1$ where each operation layer~$\cC_t$ is a tensor product of one- and two-qubit Clifford operations.
We similarly factor the stochastic Pauli error as $F=F_D\cdot \cdots F_1$ where for each $t\in [D]$, the local stochastic Pauli error~$F_t$ only includes the errors acting after the layer~$t$, i.e., 
\begin{align}
\supp(F_t)=\supp(F)\cap \left(\{t\}\times [n]\right)\qquad\textrm{ and }\qquad F_t(w)=F(w)\textrm{ for every }w\in \supp(F_t)\ .
\end{align}
By Lemma~\ref{lem:pauli noise properties}\eqref{item:pauli noise property 1}, each $F_t$ is local stochastic with strength~$p$. Let us think of~$F_t$ as a (random) $n$-qubit Pauli~$\tilde{F}_t$ (acting at time~$t$) instead of a ``space-time'' Pauli.
Then we can define an $n$-qubit Pauli
\begin{align}
F' = F_D' F_{D-1}' \dots F_1'\label{eq:ftildedef}
\end{align}
where
\begin{align}
F_1'&=(\cC_2\cdots \cC_D)^\dagger F_1(\cC_2\cdots \cC_D)\\
F_2'&=(\cC_3\cdots \cC_D)^\dagger F_2(\cC_3\cdots \cC_D)\\
\vdots\nonumber\\
F_D' &= F_D
\end{align}
are the result of commuting each of these Paulis to the right.
Then, identifying $F'$ with the Pauli noise which acts as the unitary $F'$ at the last layer, it is clear that
\begin{align}
F\bowtie\cC&= F'\bowtie\cC\ .
\end{align}
It remains to show that $F'$ is local stochastic  with parameter as stated. To do so, we show the following.
\begin{claim}\label{claim:paulinoiseprop} Define the function $g(x):=\sqrt{2x}$. Then the following holds. The 
error $F_{D-k}'$ (interpreted as a spacetime Pauli with support after layer $D-k$)
is local stochastic of strength $g^{\circ k}(p)$. 
\end{claim}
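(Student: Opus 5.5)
The plan is to prove Claim~\ref{claim:paulinoiseprop} by induction on~$k$, with a one-layer propagation estimate as the inductive step. Set $F'_{D-k}:=(\cC_{D-k+1}\cdots\cC_D)^\dagger F_{D-k}(\cC_{D-k+1}\cdots\cC_D)$, so that $F'_{D-k}$ is obtained from $F'_{D-k+1}$-type objects by conjugating with one additional layer. More precisely, $F'_{D-k}$ is obtained from the Pauli $G_{k-1}:=(\cC_{D-k+1}\cdots\cC_{D-1})^\dagger F_{D-k}(\cdots)$ by conjugation with the single layer~$\cC_D$; equivalently, it is obtained from $F_{D-k}$ by $k$ successive single-layer conjugations.

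For the base case $k=0$ we have $F'_D=F_D$, which is local stochastic of strength $p=g^{\circ 0}(p)$ by Lemma~\ref{lem:pauli noise properties}\eqref{item:pauli noise property 1}. Identifying a Pauli supported on one time slice with a Pauli on~$[n]$, and moving it to another time slice, does not change the probabilities $\Pr[W\subseteq\supp]$. It is therefore enough to prove the following one-layer statement for $n$-qubit random Paulis.

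\emph{One-layer statement.} Let $E$ be a random $n$-qubit Pauli with $\Pr[T\subseteq\supp(E)]\leq q^{|T|}$ for all $T\subseteq[n]$. Let $\cC$ be a depth-$1$ layer of one- and two-qubit Clifford unitaries. Then $E':=\cC E\cC^\dagger$ (or $\cC^\dagger E\cC$, the argument is identical) satisfies $\Pr[W\subseteq\supp(E')]\leq(2q)^{|W|/2}$.

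The proof of the one-layer statement goes as follows. If $2q\geq 1$ the bound is trivial, so assume $2q<1$. Fix $W\subseteq[n]$ and let $\Gamma$ be the set of gates of~$\cC$ whose output qubits meet~$W$. Write $a$ for the number of gates in $\Gamma$ with two qubits in~$W$ and $b$ for the number with exactly one qubit in~$W$, so that $|W|=2a+b$ and $|\Gamma|=a+b$. Conjugating by a Clifford gate maps a Pauli that is trivial on the gate's qubits to one that is trivial there. Hence $W\subseteq\supp(E')$ forces $E$ to act nontrivially on at least one input qubit of every gate in~$\Gamma$. So the event implies that there exists a set~$T$ consisting of one chosen qubit from each gate in~$\Gamma$ with $T\subseteq\supp(E)$. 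There are at most $2^{a+b}$ such~$T$, and each satisfies $|T|=a+b$. A union bound gives
\begin{align}
\Pr[W\subseteq\supp(E')]\leq 2^{a+b}q^{a+b}=(2q)^{a+b}\leq(2q)^{(2a+b)/2}=\bigl(\sqrt{2q}\bigr)^{|W|}\ ,
\end{align}
where the last inequality uses $a+b\geq(2a+b)/2$ together with $2q<1$.

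Applying the one-layer statement $k$ times then gives strength $g^{\circ k}(p)$ for $F'_{D-k}$, which is the claim. The only point requiring care, and the step I expect to be the main obstacle, is the counting in the one-layer bound. A naive count yields $2\sqrt q$; the improvement to $\sqrt{2q}$ comes from charging the $2^{\#\text{gates}}$ choices against the exponent $a+b$ rather than against $|W|/2$, which is exactly where $2q<1$ is used.

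Lemma~\ref{lem:unitarycliffordcircuit} then follows from the claim. We have $F'=F'_D\cdots F'_1$ with $F'_{D-k}$ of strength $g^{\circ k}(p)\leq g^{\circ(D-1)}(p)=2^{1-2^{-(D-1)}}p^{2^{-(D-1)}}\leq 2p^{2^{-(D-1)}}$. Lemma~\ref{lem:pauli noise properties}\eqref{item:pauli noise property 3} with $r=D$ then yields the strength $p'$ of Eq.~\eqref{eq:pddeltadefinition}.
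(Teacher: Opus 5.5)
Your proof is correct and takes essentially the same route as the paper. Both arguments induct on the number of layers the error is pushed through and apply the one-layer bound $q\mapsto\sqrt{2q}$ at each step. The only difference is that the paper cites this one-layer bound from Lemma~11 of~\cite{bravyi2020quantum}, whereas you prove it directly: you pick one faulty input qubit per gate meeting $W$, count at most $2^{a+b}$ such choices, and use $|W|=2a+b$. This makes your argument self-contained and fixes the constant $\sqrt{2q}$ explicitly.
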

\begin{proof}
In~\cite[Lemma 11]{bravyi2020quantum}, it is shown that conjugating a strength~$p$ local stochastic Pauli by a depth-$1$ Clifford circuit composed of one- and two-qubit Cliffords results in a local stochastic Pauli  of strength~$g(p)$.

This immediately implies the induction base case $k=1$: Since $F_{D-1}$ is a local stochastic Pauli error of strength~$p$, it follows that $F_{D-1}'$ is a local stochastic Pauli error of strength~$g(p)$.

For $k>1$, observe that $F'_{D-k}$ is obtained by commuting~$F_{D-k}$ through $k$ depth-$1$ circuits. This implies the claim by $k$-fold application of~\cite[Lemma 11]{bravyi2020quantum}. 
\end{proof}

By Eq.~\eqref{eq:ftildedef}, we see that $F'=F'_D F_{D-1}' \dots F_1'$
is a product of $D$ local stochastic Pauli errors (all acting after the last layer) with error strengths~$p$ (for $F_D'$) and $g^{\circ k}(p)$ with $k\in [D-1]$. 
By Lemma~\ref{lem:pauli noise properties}~\eqref{item:pauli noise property 3}, it follows that $F'$ is local stochastic with strength
\begin{align}
p'&=D\cdot \left(\max\left\{p,\max_{k\in [D-1]}g^{\circ k}(p)\right\}\right)^{1/D}\ .
\end{align}
This implies the claim as $g^{\circ k}(p)=2^{1-1/2^k}p^{1/2^k}\leq 2p^{1/2^k}$.
\end{proof}

In the following, we use a relatively straightforward generalization of Lemma~\ref{lem:unitarycliffordcircuit}
which considers subcircuits of~$\mathsf{C}$. We show that local stochastic Pauli noise can essentially be cleaned out of these subcircuits. This is a useful proof tool we use in the following.

\begin{figure}
\centering
\includegraphics[width=9cm]{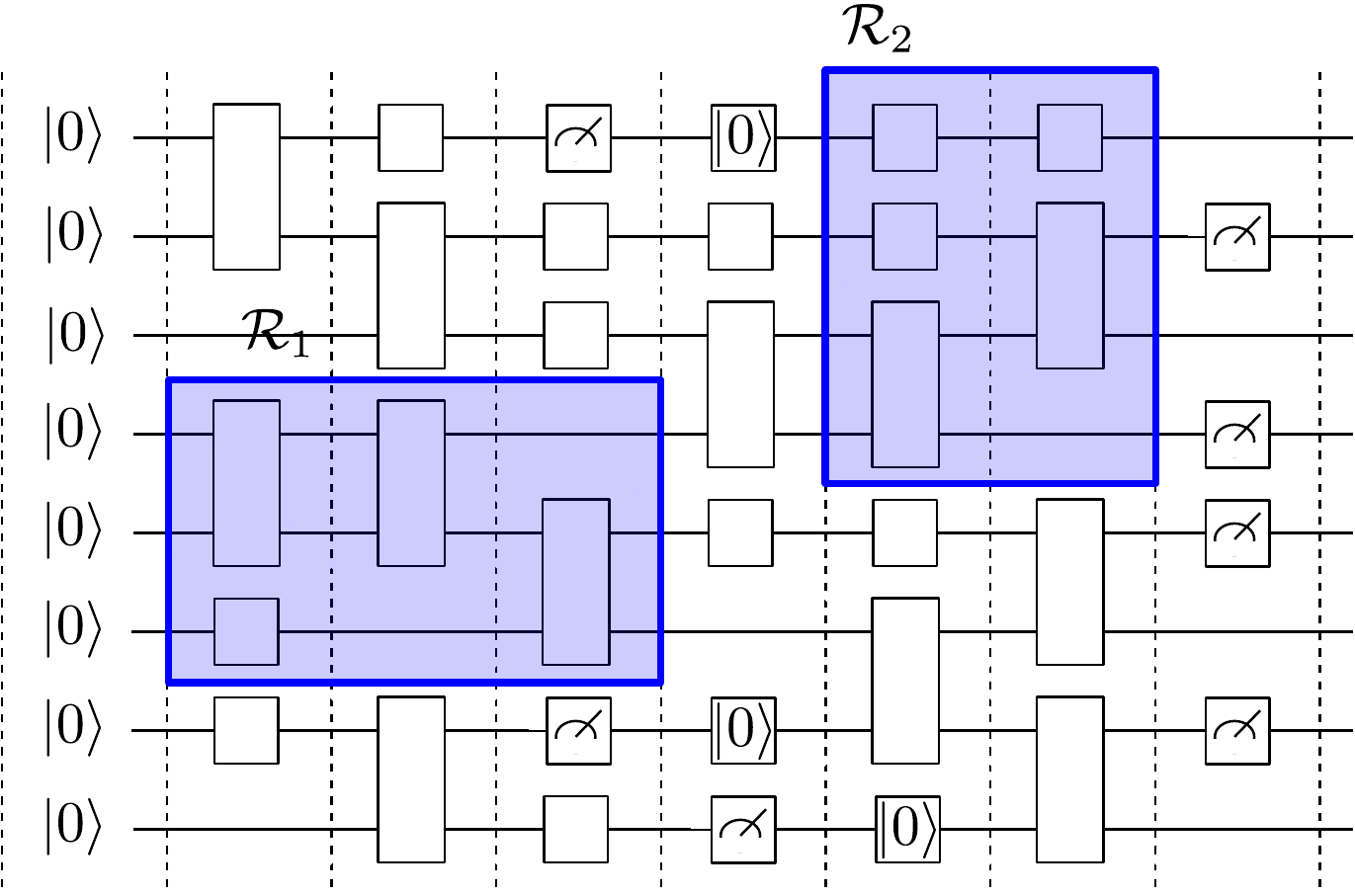}
\caption{Two non-intersecting valid rectangles~$\cR_1$, $\cR_2$ associated with unitary subcircuits of a circuit~$\cC$ \label{fig:rectangles}}
\end{figure}

To formally state this property, consider a depth-$D$ Clifford circuit~$\cC=\cC_D\circ\cdots \circ \cC_1$ with set of wires~$\cW_{\cC}=[D]\times [n]$. We call a subset~$\cR=\cR(\Omega,t,\Delta)\subseteq \cW_{\cC}$ of the form
\begin{align}
\cR=\{t,t+1,\ldots,t+\Delta\}\times \Omega\ 
\end{align}
for some subset~$\Omega\subset [n]$ of qubits and integers~$t,\Delta\in [D]$ a rectangle, see Fig.~\ref{fig:rectangles}. 
The interior $\mathsf{int}(\cR)$ of~$\cR$ is defined as the set of wires
\begin{align}
\mathsf{int}(\cR)=\{t+1,t+2,\ldots,t+\Delta-1\}\times\Omega\ .
\end{align}
We also define the left- and right-boundaries of~$\cR$ as 
\begin{align}
\partial_\ell \cR&=\{t\}\times \Omega\\
\partial_r \cR&=\{t+\Delta\}\times \Omega\ .
\end{align}
The parameter~$\Delta$ is called the depth of~$\cR$. A rectangle~$\cR$ is called valid for~$\cC$ if every operation in the layers~$\cC_{t+1},\ldots,\cC_{t+\Delta}$ of $\mathsf{C}$ has all input and output wires belonging to~$\cR$ (i.e., there are no two-qubit operations crossing the boundary of~$\cR$). A valid rectangle defines a subcircuit~$\cC^{\cR}$ of $\cC$ of $\cR$ acting only on the qubits~$\Omega$ and having depth~$\Delta$. 
We call the subcircuit~$\cC^{\cR}$ unitary if it only contains one- and two-qubit unitaries. We then have the following.
\begin{lemma}[Circuit error cleaning for unitary subcircuits]\label{lem:cleaninglemma}
Let $\cC$ be a depth-$D$ Clifford circuit.
Let $\cR_1,\ldots,\cR_m$ be pairwise disjoint valid rectangles for~$\cC$ such that every 
subcircuit~$\cC^{\cR_j}$, $j\in [m]$ is unitary. 
Let $\Delta:=\max_{j\in [m]}\Delta(\cR_j)$ be the maximal depth of a rectangle. 
Let $F\sim\cN^{\pauli}_\cC(p)$. Then 
there is local stochastic Pauli noise~$F^\prime\sim \cN^{\pauli}_{\cC}(p')$ of strength
\begin{align}
p'&= 2 \Delta^{1/2}\cdot \left(2p^{2^{-(\Delta-1)}}\right)^{1/2 \Delta}\label{eq:pprimestrengthmv}
\end{align}
 with the following properties: We have 
\begin{align}
F\bowtie\cC&=F'\bowtie\cC\label{eq:fbowtieceqa}
\end{align}
and 
\begin{align}
\supp(F')\cap \left(\partial_\ell\cR_j\cap \mathsf{int}(\cR_j)\right)&=\emptyset\ \textrm{ for each }j\in [m]\ \label{eq:disjointfprimercj}
\end{align}
with certainty.
\end{lemma}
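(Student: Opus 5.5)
The plan is to split the noise $F$ into two parts, push the errors inside each rectangle to its right boundary, and then recombine the parts using Lemma~\ref{lem:pauli noise properties}\eqref{item:pauli noise property 3} with $r=2$. Since the rectangles $\cR_j$ are pairwise disjoint, define $F_{\mathsf{in}}$ as the restriction of $F$ to $\bigcup_j (\cR_j\setminus\partial_r\cR_j)$, i.e.\ to the left boundaries and interiors of all rectangles. Let $F_{\mathsf{out}}$ be the restriction to the complement. Then $F=F_{\mathsf{out}}\cdot F_{\mathsf{in}}$, and by Lemma~\ref{lem:pauli noise properties}\eqref{item:pauli noise property 1} both factors are local stochastic of strength $p$. (I read the condition in Eq.~\eqref{eq:disjointfprimercj} as requiring that $F'$ vanish on $\partial_\ell\cR_j\cup\mathsf{int}(\cR_j)$; with an intersection it would be vacuous.)

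Next, handle $F_{\mathsf{in}}$. Each subcircuit $\cC^{\cR_j}$ is a valid unitary Clifford circuit on $\Omega_j$ of depth $\Delta_j\le\Delta$, so its layers $t_j+1,\ldots,t_j+\Delta_j$ act only within $\Omega_j$. I would apply the construction in the proof of Lemma~\ref{lem:unitarycliffordcircuit} to all rectangles simultaneously. The Pauli acting after layer $t_j+k$ on $\Omega_j$, for $k\in\{0,\ldots,\Delta_j-1\}$, is conjugated through the remaining layers of $\cC^{\cR_j}$, which deposits it on $\partial_r\cR_j$. Because the rectangles are disjoint and valid, doing this for all $j$ at once produces the same noisy channel. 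Equivalently, one can view all rectangles as a single "padded" unitary circuit: in each rectangle, group the errors by relative time $k$, and conjugate each group through a depth-$(\Delta_j-k)$ layer of one- and two-qubit Cliffords, padding with identities where needed.

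To control the strength of this propagated error, I would use \cite[Lemma 11]{bravyi2020quantum} together with Claim~\ref{claim:paulinoiseprop}. The group with relative time $k$ becomes local stochastic of strength at most $g^{\circ(\Delta_j-k)}(p)\le 2p^{2^{-(\Delta-1)}}$. This still holds for the union over $j$, since Lemma 11 applies to a single depth-1 Clifford layer acting on all qubits. Multiplying the at most $\Delta$ groups with Lemma~\ref{lem:pauli noise properties}\eqref{item:pauli noise property 3} gives a Pauli $F'_{\mathsf{in}}$, supported on $\bigcup_j\partial_r\cR_j$, of strength $q=\Delta(2p^{2^{-(\Delta-1)}})^{1/\Delta}$, and it satisfies $F_{\mathsf{in}}\bowtie\cC=F'_{\mathsf{in}}\bowtie\cC$. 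The main obstacle is making rigorous that moving the noise inside the rectangles commutes with the untouched noise $F_{\mathsf{out}}$ elsewhere. This should follow because Paulis compose, and the conjugation only uses gates internal to the rectangles, so the time-ordering with the outside noise is preserved (errors on $\partial_r\cR_j$ simply multiply).

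Finally, set $F':=F_{\mathsf{out}}\cdot F'_{\mathsf{in}}$. Its support avoids $\partial_\ell\cR_j\cup\mathsf{int}(\cR_j)$, and $F'\bowtie\cC=F\bowtie\cC$. Lemma~\ref{lem:pauli noise properties}\eqref{item:pauli noise property 3} with $r=2$ gives strength $2\max\{p,q\}^{1/2}=2q^{1/2}=2\Delta^{1/2}(2p^{2^{-(\Delta-1)}})^{1/(2\Delta)}$, matching Eq.~\eqref{eq:pprimestrengthmv}. Here I use $q\ge p$ for $p\le1$, which is the only interesting regime.
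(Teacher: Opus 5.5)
Your overall strategy matches the paper's proof: split $F$, push the errors inside each rectangle to $\partial_r\cR_j$ using \cite[Lemma 11]{bravyi2020quantum} and Claim~\ref{claim:paulinoiseprop}, then recombine with Lemma~\ref{lem:pauli noise properties}\eqref{item:pauli noise property 3} for $r=2$. The problem is the per-group strength bound once you put $\partial_\ell\cR_j$ into $F_{\mathsf{in}}$. A wire $(t_j,q)\in\partial_\ell\cR_j$ carries an error applied \emph{before} layer $t_j+1$. So your $k=0$ group must be conjugated through all $\Delta_j$ layers of $\cC^{\cR_j}$, and its strength is $g^{\circ\Delta_j}(p)=2^{1-2^{-\Delta_j}}p^{2^{-\Delta_j}}$. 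For $\Delta_j=\Delta$ this is \emph{not} bounded by $2p^{2^{-(\Delta-1)}}$: the inequality $2^{1-2^{-\Delta}}p^{2^{-\Delta}}\le 2p^{2^{-(\Delta-1)}}$ is equivalent to $p/2\le p^2$, i.e.\ $p\ge 1/2$. Your construction therefore only gives $q=\Delta(2p^{2^{-\Delta}})^{1/\Delta}$ and a final strength $2\Delta^{1/2}(2p^{2^{-\Delta}})^{1/(2\Delta)}$, which is weaker than Eq.~\eqref{eq:pprimestrengthmv}. For $\Delta=1$, for example, you get $2\sqrt{2}\,p^{1/4}$ instead of $2\sqrt{2}\,p^{1/2}$. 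The closing claim that your bound matches the stated formula is therefore false.

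The paper avoids this by cleaning only the interiors. It factors $F=F_{\mathsf{int}}\cdot F_{\mathsf{int}^c}$ with $\mathsf{int}=\bigcup_j\mathsf{int}(\cR_j)$, so left-boundary errors stay in the untouched factor. The deepest propagation is then through $\Delta-1$ layers, which is exactly where the exponent $2^{-(\Delta-1)}$ comes from.

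You are right that the intersection in Eq.~\eqref{eq:disjointfprimercj} is vacuous. However, the condition the paper proves, and the one Lemma~\ref{lem:subcircuitsubstitution} uses, is $\supp(F')\cap\mathsf{int}(\cR_j)=\emptyset$. That suffices there because noise on $\partial_\ell\cR_j$ acts before the substituted block and noise on $\partial_r\cR_j$ acts after it.

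If you move $\partial_\ell\cR_j$ from $F_{\mathsf{in}}$ into $F_{\mathsf{out}}$, the rest of your argument goes through verbatim. If you insist on the union reading, you must state the weaker strength with $2^{-\Delta}$ in place of $2^{-(\Delta-1)}$.
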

\begin{proof}
Let $\mathsf{int}=\bigcup_{j\in [m]}\mathsf{int}(\cR_j)$ denote the union of the interiors of all rectangles, and let $\mathsf{int}^c:=\cW_\cC\backslash \mathsf{int}$ denote the complement.
We factor the noise~$F$ as 
\begin{align}
F&=F_{\mathsf{int}}\cdot F_{\mathsf{int}^c}\ 
\end{align}
where the factors are the restriction of $F$ to these two sets. 
We note that $F_{\mathsf{int}}$ is a union of components acting inside each rectangle~$\cR_j$. 
 By reasoning analogous to Lemma~\ref{lem:unitarycliffordcircuit}, all components of the error~$F_{\mathsf{int}} $
 can simultaneously be propagated ``outside'' of each rectangle, resulting in an error~$F'_{\partial_r}$
 supported  on the union $\partial_r:=\bigcup_{j\in [m]}\cR_j$ of right boundaries of each rectangle~$\cR_j$, 
 with error strength
 \begin{align}
p''= \Delta\cdot \left(2p^{2^{-(\Delta-1)}}\right)^{1/\Delta}\ .\label{eq:Deltapprimeupperbound}
\end{align}
see Eq.~\eqref{eq:pddeltadefinition}. (Here we used that the quantity in Eq.~\eqref{eq:Deltapprimeupperbound} is monotonically increasing in~$\Delta$, hence rectangles of smaller depth are also covered by this bound.)
Since $F_{\mathsf{int}^c}$ is a local stochastic Pauli error of strength~$p$ by Lemma~\ref{lem:pauli noise properties}~\eqref{item:pauli noise property 1} 
and $p< p''$, it follows from Lemma~\ref{lem:pauli noise properties}~\eqref{item:pauli noise property 3} 
that the error
\begin{align}
F'&=F_{\mathsf{int}^c}\cdot F'_{\partial_r}
\end{align}
is local stochastic of strength~$p'$ as given in Eq.~\eqref{eq:pprimestrengthmv}. Clearly, Eq.~\eqref{eq:fbowtieceqa} is satisfied by definition of~$F'$, hence the claim follows. The measurement result~$z$ from the qubit(s) $\Omega_1$ classically control a Pauli~$P(z)$ on the qubit(s)~$\Omega_2$. 
\end{proof}

\begin{figure}
\centering
\includegraphics[width=7cm]{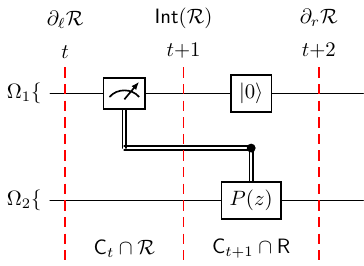}
\caption{An adaptive rectangle~$\cR$\label{fig:adaptiverectangle}}
\end{figure}
We also need a cleaning lemma for certain  rectangles 
containing adaptive Paulis. Specifically, let us define an adaptive rectangle~$\cR=\cR(\Omega,t,\Delta)\subset \cW_{\cC}$ as a rectangle with the following properties, see Fig.~\ref{fig:adaptiverectangle} for an illustration:
\begin{enumerate}
\item
The depth of~$\cR$ is $\Delta=2$.
\item
There is a partition $\Omega=\Omega_{1}\cup \Omega_2$ of the set $\Omega$ of qubits belonging to~$\cR$ such that 
the following holds: The depth-$1$ subcircuit~$\cC_t\cap \Omega_1$ consists of single-qubit measurements on all qubits belonging to~$\Omega_1$, yielding a measurement result $z\in \{0,1\}^{\Omega_1}$. Furthermore, the depth-$1$ subcircuit $\cC_{t+1}\cap \Omega_2$ consists in adaptive Pauli correction~$P(z)\in \pauli^{\Omega_2}$ applied to the qubits belonging to~$\Omega_2$ which depends on this measurement result, and  state preparation of the state~$\ket{0}$ on every qubit belonging to~$\Omega_1$.
\end{enumerate}
Define the ``left'' or ``in''-boundary of an adaptive rectangle~$\cR$ as 
\begin{align}
\partial_{\ell}\cR:=\{ (t,q)\ |\ (t,q)\in \cR \}\ 
\end{align}
and similarly the ``right'' or ``out''-boundary as 
\begin{align}
\partial_{r}\cR:=\{ (t+2,q)\ |\ (t+2,q)\in \cR \}\ .
\end{align}
The interior of~$\cR$ consists of the wires
\begin{align}
\mathsf{Int}(\cR):=\{(t+1,q)\ |\ (t+1,q)\in\cR\}\ .
\end{align}

We say that the adaptive rectangle~$\cR$ is ``read-once'' in the circuit~$\cC$ if the measurement results~$z\in \{0,1\}^{\Omega_1}$ are not used elsewhere in the execution of~$\cC$ (i.e., if they only determine the Pauli~$P(z)$ in the layer~$\cC_{t+1}$). 

Let us call the adaptive Pauli correction~$P(z)$ linear if there are linear maps~$A,B:\mathbb{F}_2^{\Omega_1}\rightarrow\mathbb{F}_2^{\Omega_2}$ such that 
\begin{align}
P(z)&=X(Az) Z(Bz)\qquad\textrm{ for all }\qquad z\in \mathbb{F}_2^{\Omega_1}\ ,
\end{align}
where we write~$X(y)=\prod_{q\in \Omega_2}X_q^{y_q}$ for $y\in \mathbb{F}_2^{\Omega_1}$ and similarly define $Z(y)$. We call an adaptive rectangle~$\cR$ linear if the corresponding Pauli correction is linear.
\begin{lemma}[Circuit error cleaning for adaptive read-once rectangles]\label{lem:cleaninglemmaadaptive}
Let $\cC$ be a depth-$D$ Clifford circuit.
Let $\cR_1,\ldots,\cR_m$ be pairwise disjoint adaptive  rectangles for~$\cC$ where each $\cR_j$, $j\in [m]$ is linear and read-once in~$\cC$.  Let 
\begin{align}
\begin{matrix}
U&:=&\max_{j\in [m]}|\Omega_1(\cR_j)|\\
V&:=&\max_{j\in [m]}|\Omega_2(\cR_j)|
\end{matrix}\label{eq:UVdefinition}
\end{align}
be the maximal number of single-qubit measurement performed in a rectangle, and the maximal number of qubits on which a Pauli is performed.

Let $F\sim\cN^{\pauli}_\cC(p)$. Then 
there is local stochastic Pauli noise of strength
\begin{align}
p'&=5(Up)^{1/(5V)}\label{eq:pprimeadaptivepropagated }
\end{align}
on~$\cC$ with the following properties: We have 
\begin{align}
F\bowtie\cC&=F'\bowtie\cC\label{eq:fbowtieceqaadaptive}
\end{align}
and 
\begin{align}
\supp(F')\cap\left(\mathsf{Int}(\cR_j)\cap\partial_{\ell}\cR_j\right)=\emptyset\ \textrm{ for each }j\in [m]\ \label{eq:adapmvadaptive}
\end{align}
with certainty.
\end{lemma}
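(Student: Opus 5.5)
The plan mirrors the proof of Lemma~\ref{lem:cleaninglemma}, with the identity of Fig.~\ref{fig:controlPidentity} taking the role of Clifford conjugation. I factor $F=F_{\mathsf{in}}\cdot F_{\mathsf{out}}$, where $F_{\mathsf{in}}$ is the restriction of $F$ to the wires $\bigcup_j \partial_\ell\cR_j\cup\mathsf{Int}(\cR_j)$ and $F_{\mathsf{out}}$ is the restriction to the complement. By Lemma~\ref{lem:pauli noise properties}\eqref{item:pauli noise property 1}, both factors are local stochastic of strength $p$. Each component of $F_{\mathsf{in}}$ is then pushed out of its rectangle $\cR_j$ onto the right boundary $\partial_r\cR_j$, restricted to the qubits $\Omega_2(\cR_j)$. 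This is done rectangle by rectangle. Because the rectangles are disjoint, these moves do not interfere.

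Consider a single rectangle $\cR$ and split into cases by wire type.
\begin{enumerate}[(a)]
\item An error on $\partial_\ell\cR$ on a measured qubit $q\in\Omega_1$ acts just before the measurement. A $Z$ component is irrelevant. An $X$ component flips the outcome $z_q$, so $z$ becomes $z+e_q$. By linearity, $P(z+e_q)=P(z)\,X(Ae_q)Z(Be_q)$ up to a phase. Because the rectangle is read-once, this flip has no effect elsewhere. Hence the error is equivalent to the fixed Pauli $X(Ae_q)Z(Be_q)$ placed on $\partial_r\cR\cap\Omega_2$, after the correction, using Fig.~\ref{fig:controlPidentity} to commute through $P(z)$.
\item An error on the interior wires of $\Omega_1$ sits between measurement and re-preparation, so it can simply be dropped.
\item An error on $\partial_\ell\cR$ or $\mathsf{Int}(\cR)$ on a qubit of $\Omega_2$ commutes past the adaptive Pauli up to an irrelevant sign (Fig.~\ref{fig:controlPidentity}), so it moves to $\partial_r\cR$ on the same qubit.
\end{enumerate}
Together these give an error $F'_{\partial_r}$ supported on $\bigcup_j\partial_r\cR_j$ such that $F\bowtie\cC=(F_{\mathsf{out}}\cdot F'_{\partial_r})\bowtie\cC$. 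Equation~\eqref{eq:adapmvadaptive} then holds by construction.

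The main obstacle is bounding the strength of $F'_{\partial_r}$. I would write it as a product of a few local stochastic pieces and then apply Lemma~\ref{lem:pauli noise properties}\eqref{item:pauli noise property 3}; the factor $5$ in $p'$ suggests five factors. These are $F_{\mathsf{out}}$, the directly moved $\Omega_2$-errors from case (c), possibly split by boundary versus interior, and the propagated measurement-flip errors from case (a), possibly split into $X$ and $Z$ parts.

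For the measurement-flip part, fix a target set $W$ of wires on right boundaries. Any wire $(t+2,r)$ in $W$ with $r\in\Omega_2(\cR_j)$ can be nontrivial only if some measured qubit of $\cR_j$ was faulty. The rectangle $\cR_j$ contains at most $V$ wires of $W$. Therefore the wires of $W$ force at least $\lceil |W\cap\cR_j|/V\rceil$ distinct rectangles, and in each of them some fault among its $U$ measured qubits. A union bound over the choice of faulty qubit in each touched rectangle gives a probability of at most $(Up)^{|W|/V}$. So this part has strength $(Up)^{1/V}$.

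The moved $\Omega_2$-errors have strength at most $p$, using Lemma~\ref{lem:pauli noise properties}\eqref{item:pauli noise property 1},\eqref{item:permutationclaimb}, and $(Up)^{1/V}\ge p$ when $Up\le1$. Combining five factors whose strengths are all at most $(Up)^{1/V}$ via Lemma~\ref{lem:pauli noise properties}\eqref{item:pauli noise property 3} gives $5\big((Up)^{1/V}\big)^{1/5}=5(Up)^{1/(5V)}$, as claimed. If $Up>1$, the bound $p'\ge1$ holds trivially.
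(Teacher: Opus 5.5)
Your proposal is correct and follows essentially the same route as the paper. You push the left-boundary and interior $\Omega_2$-errors to $\partial_r\cR_j$, drop the post-measurement $\Omega_1$-errors, and use linearity plus the read-once property to replace measurement flips by the fixed Pauli $X(Ae_j)Z(Be_j)$; you then bound that piece by $(Up)^{1/V}$ with the same union bound over touched rectangles and combine at most five factors via Lemma~\ref{lem:pauli noise properties}\eqref{item:pauli noise property 3}, and your explicit treatment of $Up>1$ and of splitting the moved $\Omega_2$-errors into permuted restrictions of $F$ tidies points the paper glosses over.
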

\begin{proof}
The proof is similar to the proof of~\cite[Lemma A.2]{choe2025fault}. Consider a fixed adaptive, linear read-once rectangle~$\cR_{j}$ with $\Omega_j=\Omega_{j,1}\cup \Omega_{j,2}$. 
Let~$F\in \pauli^{\cW_\cC}$ be arbitrary, and let 
 $F_j=F^\ell_j\cdot F^{\mathsf{Int}}_j\cdot F^r_j$  denote the restriction of~$F$ to~$\cR_j$, where  
 \begin{align}
 F^\ell_{j}&=F|_{\partial_\ell \cR_j}\\
 F^{\mathsf{Int}}_{j}&=F|_{\mathsf{Int}(\mathsf{\cR_j})}\\
 F^r_j&=F|_{\partial_r \cR_j}
 \end{align} are the restrictions of $F$ to the left (input),
  interior (middle) and right (output) wires of~$\cR_j$, respectively. 
Each of these can further be partitioned according to the partition~$\Omega_j$, i.e.,
\begin{align}
F^{\ell}_{j}&=F^{\ell}_{j,1}\otimes F^{\ell}_{j,2}\\
 F^{\mathsf{Int}}_{j}&= F^{\mathsf{Int}}_{j,1}\otimes  F^{\mathsf{Int}}_{j,2}\\
F^{r}_{j}&=F^{r}_{j,1}\otimes F^{r}_{j,2}
\end{align}
where $F^{\ell}_{j,1}$ acts on the qubits~$\Omega_{j,1}$ before they are measured, whereas $F^{\mathsf{Int}}_{j,1}$ and $F^{r}_{j,1}$ act on these qubits after the measurement. Similarly,  $F^{\ell}_{j,2}$ and $F^{\mathsf{Int}}_{j,2}$ are the errors which act on the remaining qubits (belonging to~$\Omega_{j,2}$) either before the Pauli operator has been applied, and $F^{r}_{j,2}$ are the errors applied after the Pauli. 
We define an associated error~$F'_j\in \pauli^{\cW_{\cC}}$ by propagating~$F^\ell_j$ through the subcircuit~$\cC\cap \cR_j$.
Assume that the measurement yielded the outcome~$z_j=(z_{j,q})_{q\in \Omega_{j,1}}\in \{0,1\}^{\Omega_{j,1}}$. Then it is easy to check that 
replacing~$F_j$ locally by~$F'_j$ defined by 
\begin{align}
(F'_j)|_{\partial_r\cR_j} &=F^{r}_{j,1}\otimes F^{r}_{j,2}F^{\mathsf{Int}}_{j,2}F^{\ell}_{j,2} P_j(z_j\oplus e_j)P_j(z_j)\label{eq:tildeFjealldef}
\end{align}
(and identities elsewhere) 
gives the identical circuit action (ignoring the classical output~$z$). Here the vector~$e_j\in \{0,1\}^{\Omega_{j,1}}$  defined as 
\begin{align}
(e_j)_q&=\begin{cases}
1 \qquad&\textrm{ if } (F^{\ell}_{j,1})_q\in\{X,Y\}\\
0&\textrm{ otherwise }
\end{cases}\label{eq:ejqdefinition}
\end{align}
for each $q\in \Omega_{j,1}$ indicates whether the error $F^{\ell}_1$ flipped the measurement outcome~$z_{q,j}$ of the  qubit~$q$. This follows from the read-once property which implies that the classical outcome~$z$ can be discarded. The error $F^{\ell}_1$ may change the measurement results, but this amounts to applying the erroneous adaptive Pauli~$P_j(z_j\oplus e_j)$ (instead of~$P_j(z_j)$), and is taken into account by the definition of~$\tilde{F}^r_j$.  The error $F^{\mathsf{int}}_{j,1}$ occurs right before the state repreparation step and can thus be omitted.

Because of the assumption that~$\cR_j$ is linear, 
the term~$P_j(z_j\oplus e_j)P_j(z_j)=P(e_j)$ in Eq.~\eqref{eq:tildeFjealldef} does not depend on $z_j$, i.e., we have 
\begin{align}
\begin{matrix}
(F'_j)|_{\partial_r\cR_j} &=&\left(F^{r}_{j,1}\otimes F^{r}_{j,2}F^{\mathsf{Int}}_{j,2}F^{\ell}_{j,2} P_j(e_j)\right)& \qquad\textrm{ where }\qquad e_j=e_j(F^\ell_{j,1})\ ,
\end{matrix}\label{eq:tildefelljsubstition}
\end{align}
see Eq.~\eqref{eq:ejqdefinition}.

Let $F\sim\cN^{\pauli}_\cC(p)$. Define $F'$ by substituting~$F|_{\cR_j}$ by $F'_j$  (see Eq.~\eqref{eq:tildefelljsubstition}) for each $j\in [m]$. Then $F'$ satisfies~\eqref{eq:fbowtieceqaadaptive} (i.e., has the same overall effect on $\cC$ as $F$) and has the support property expressed by Eq.~\eqref{eq:adapmvadaptive}.
It remains to show that $F'$ is local stochastic Pauli noise of strength~$p'$.

We can factor
each $F'_j|_{{\partial_r\cR}}$ as
\begin{align}
F'_j=G_j\cdot H_j\cdot K_j\cdot L_j
\end{align}
where the latter Paulis are defined by 
\begin{align}
\left(G_j\right)|_{\partial_r \cR_j}&=F^r_{j,1}\otimes F^{r}_{j,2}=F^r_j\\
\left(H_j\right)|_{\partial_r \cR_j}&=I\otimes F^{\mathsf{Int}}_{j,2}\\
\left(K_j\right)|_{\partial_r \cR_j}&=I\otimes F^{\ell}_{j,2}\\
\left(L_j\right)|_{\partial_r\cR_j} &=I\otimes P_j(e_j)\ .
\end{align}
Define 
\begin{align}
G=\prod_{j=1}^m G_j\qquad H =\prod_{j=1}^m H_j\qquad K =\prod_{j=1}^m K_j\qquad 
L=\prod_{j=1}^m L_j\ .
\end{align}
Factoring $F=F_{\cR}\cdot F_{\cR^c}$ where $\cR=\bigcup_{j=1}^m \cR_j$, 
and where $F_{\cR^c}$ denotes the restriction of~$F$ to $\cW\backslash \cR$, 
we then have 
\begin{align}
F'&=F_{\cR}\cdot F_{\cR^c}\\
&=G\cdot H\cdot K\cdot L\cdot F_{\cR^c}\ .\label{eq:ghlfcrcm}
\end{align}
Clearly, $F_{\cR^c}$, $G$, $H$ and $L$ are (permuted) restrictions of~$F$, hence local stochastic Pauli noise of strength~$p$.
 We show the following about the stochastic Pauli noise~$L$.
\begin{claim}\label{claim:localstochasticlnpc}
We have $L\sim \cN^{\pauli}_{\cC}(p'')$
where 
\begin{align}
p''&=(Up)^{1/V}\ .
\end{align}
\end{claim}
\begin{proof}Since $\supp(L)\subseteq 
\bigcup_{j=1}^m \{t_j+1\}\times \Omega_{j,2}$ by definition, it suffices to consider a subset~$W=\bigcup_{j=1}^m W_j$ where $W_j\subseteq \{t+1\}\times \Omega_{j,2}$.
Observe that for $W_j\neq \emptyset$, $W_j\subseteq \supp(L)$ implies that $e_j\neq 0$, which in turn implies that $F_{j,1}^\ell\neq I$. In other words, at least one wire~$w_j\in \{t\}\times \Omega_{j,1}$ must belong to the support of~$F$. 
Let $\{j_1,\ldots,j_s\}=\{j\in [m]\ |\ W_j\neq \emptyset\}$. It follows that
\begin{align}
\Pr\left[W\subseteq \supp(L)\right]&\leq \Pr\left[ \exists 
(w_{j_1},\ldots,w_{j_s})\in \Omega_{j_1,1}\times \cdots\times  \Omega_{j_s,1}
\textrm{ with } \{w_{j_1},\ldots,w_{j_s}\}\subseteq \supp(F)\right]\\
&\leq \left(\prod_{r=1}^s |\Omega_{j_r,1}|\right)\cdot p^{s}\\
&\leq (Mp)^s
\end{align}
where we used the union bound, the fact that $F$ is local stochastic and the definition of~$J$.
Also observe that
\begin{align}
|W|&\leq \sum_{r=1}^s |W_{j_r}|\leq s\cdot V
\end{align}
It follows that $s\geq |W|/V$ and thus
\begin{align}
\Pr\left[W\subseteq \supp(L)\right]&\leq \left((Up)^{1/V}\right)^{|W|}\ .
\end{align}
This is the claim.
\end{proof}

Combining Claim~\ref{claim:localstochasticlnpc}
and the fact that $G,H,K,F_{\cR^c}\sim \cN^{\pauli}_{\cC}(p)$
with Eq.~\eqref{eq:ghlfcrcm}, 
we can apply 
Lemma~\ref{lem:pauli noise properties}~\eqref{item:pauli noise property 3}  to obtain $F\sim \cN^{\pauli}_{\cC}(p')$
where
\begin{align}
p'&=5 \max\{p,p''\}^{1/5}\\
&=5 (p'')^{1/5}\  .
\end{align}
This concludes the proof.
\end{proof}

\subsection{Circuit transformations preserving local stochastic Pauli noise\label{sec:noisepreservingtransformations}}
Let $\cC$ be a given Clifford circuit.  In this section, we discuss various transformations defining a new circuit~$\tilde{\mathsf{C}}$ derived from~$\cC$. We show that these transformations preserve local stochastic Pauli noise. More precisely, we argue that a noisy implementation~$\tilde{F}\bowtie \tilde{\mathsf{C}}$ of $\tilde{\mathsf{C}}$ with local stochastic Pauli noise~$\tilde{F}\sim\cN^{\pauli}_{\cC}(\tilde{p})$
is equivalent to a noisy implementation~$F\bowtie\mathsf{C}$ of the original circuit~$\mathsf{C}$ with local stochastic Pauli noise~$F\sim \cN^{\pauli}_{\cC}(p)$ of noise strength $p=p(\tilde{p})$ related to $\tilde{p}$. Specifically, we often show that $p=\Lambda\tilde{p}^\lambda$. The following definition will be convenient.
\begin{definition}
Let $\cC$ and $\tilde{\cC}$ be two circuits.
We write
\begin{align}
\tilde{\cC} &\gtrsim_{(\Lambda,\lambda)}\cC
\end{align}
if the following holds for some positive constants~$(\Lambda,\lambda)$.
For any $\tilde{p}\in [0,1]$ 
and local stochastic Pauli noise~$\tilde{F}\sim\cN^{\pauli}_{\tilde{\cC}}(\tilde{p})$, there is local stochastic Pauli noise~$F\sim \cN^{\pauli}_{\cC}(p)$ with $p =\Lambda\tilde{p}^{\lambda}$ 
such  that 
\begin{align}
F\bowtie \cC&=\tilde{F}\bowtie\tilde{\cC}\ .
\end{align}
\end{definition}
We often omit the constants~$(\Lambda,\lambda)$, simply writing~$\tilde{\cC} \gtrsim\cC$. The significance of this relationship is that it guarantees that the (family of) circuit(s) $\tilde{\cC}$ inherit(s) fault tolerance properties from $\cC$ in the following sense. Suppose that the circuit(s) $\cC$ are robust to noise below some threshold rate $p_\ast > 0$. Then any noisy implementation of $\tilde{\cC}$ with noise rate $\tilde{p}$ is equivalent to a noisy implementation of $\cC$ with noise rate $p = \Lambda\tilde{p}^\lambda$, which is robust to noise if $p < p_\ast$. Informally speaking, this implies that the noisy implementation of $\tilde{\cC}$ is robust to noise if $\tilde{p} < \tilde{p}_{\ast}$ is below
the (smaller) threshold error strength $\tilde{p}_{\ast} = (p_\ast / \Lambda)^{1/\lambda} >0$.

In fact, $\gtrsim$ defines a preorder on circuits, as formalized below.

\begin{lemma}\label{lem:circuitpreorder}
    Let $\cC_1$, $\cC_2$, and $\cC_3$ be circuits such that $\cC_2 \gtrsim \cC_1$ and $\cC_3 \gtrsim \cC_2$. Then $\cC_3 \gtrsim \cC_1$. 
\end{lemma}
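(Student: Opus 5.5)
The plan is to compose the two relations directly by chaining their noise reductions. Suppose $\cC_2 \gtrsim_{(\Lambda_1,\lambda_1)} \cC_1$ and $\cC_3 \gtrsim_{(\Lambda_2,\lambda_2)} \cC_2$. We will show that $\cC_3 \gtrsim_{(\Lambda,\lambda)} \cC_1$ with
\begin{align}
\Lambda := \Lambda_1\Lambda_2^{\lambda_1}\qquad\textrm{ and }\qquad \lambda:=\lambda_1\lambda_2\ .
\end{align}

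First fix $\tilde{p}\in[0,1]$ and local stochastic Pauli noise $F_3\sim\cN^{\pauli}_{\cC_3}(\tilde p)$. By the second relation, there is local stochastic Pauli noise $F_2\sim\cN^{\pauli}_{\cC_2}(p_2)$ with $p_2=\Lambda_2\tilde p^{\lambda_2}$ such that $F_2\bowtie\cC_2=F_3\bowtie\cC_3$. Next we apply the first relation to $F_2$ at strength $p_2$. This gives $F_1\sim\cN^{\pauli}_{\cC_1}(p_1)$ with $p_1=\Lambda_1 p_2^{\lambda_1}=\Lambda_1\Lambda_2^{\lambda_1}\tilde p^{\lambda_1\lambda_2}$ and $F_1\bowtie\cC_1=F_2\bowtie\cC_2$. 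Transitivity of equality of the (random) CPTP maps then gives $F_1\bowtie\cC_1=F_3\bowtie\cC_3$, which is exactly the required statement.

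The one technical point is that the definition of $\gtrsim$ quantifies only over strengths $\tilde p\in[0,1]$, while $p_2=\Lambda_2\tilde p^{\lambda_2}$ may exceed $1$ when $\Lambda_2>1$. This case is harmless. If $p_2\geq 1$, then the condition $\Pr[W\subseteq\supp(F_2)]\leq p_2^{|W|}$ is vacuous, so $F_2$ is local stochastic of strength $1$ as well. Applying the first relation at strength $1$ yields $F_1$ of strength $\Lambda_1\leq \Lambda_1 p_2^{\lambda_1}$. Since local stochasticity of strength $q$ implies it for any $q'\geq q$, we may still take $p_1=\Lambda_1 p_2^{\lambda_1}$, and the same applies to $p_1$ itself. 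Alternatively, one reads the definition as allowing arbitrary nonnegative strengths, in which case no case distinction is needed. This bookkeeping is the only step needing care; everything else is substitution.

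Finally, we note that the equalities $F\bowtie\cC=\tilde F\bowtie\tilde\cC$ are understood pointwise on the joint probability space, so $F_1$ can be constructed as a function of $F_2$, which is itself a function of $F_3$. This ensures the composed statement holds with certainty rather than only in distribution.
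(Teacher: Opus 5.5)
Your proof is correct and is essentially the paper's argument: chain the two relations to get composite constants $(\Lambda_1\Lambda_2^{\lambda_1},\lambda_1\lambda_2)$. The paper writes $p_1=(CB^c)p_1^{bc}$, a typo for $p_3^{bc}$ that your version gets right, and your handling of the case $p_2>1$ is a harmless refinement the paper simply skips.
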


\begin{proof}
    Let $p_3 \in [0,1]$ and let $F_3 \sim \cN_{\cC_3}^{\pauli}(p_3)$ be local stochastic Pauli noise on $\cC_3$. Since $\cC_3 \gtrsim \cC_2$, there exist constants $B,b>0$ such that
    \begin{align}\label{eq:F3 and F2}
        F_3 \bowtie \cC_3 = F_2 \bowtie \cC_2\ ,
    \end{align}
    where $F_2 \sim \cN_{\cC_2}^{\pauli}(p_2)$, and $p_2 = Bp_3^b$. But since $\cC_2 \gtrsim \cC_1$, there exist some constants $C,c>0$ such that
    \begin{align}\label{eq:F2 and F1}
        F_2 \bowtie \cC_2 = F_1 \bowtie \cC_1\ ,
    \end{align}
    where $F_1 \sim \cN_{\cC_1}^{\pauli}(p_1)$, and $p_1 = Cp_2^c = (CB^c) p_1^{bc}$. Combining Eq.~\eqref{eq:F3 and F2} with Eq.~\eqref{eq:F2 and F1} implies the claim.
\end{proof}

Thus we can apply several such transformations in succession: If 
\begin{align}
\cC_r\gtrsim \cC_{r-1}\gtrsim\cdots \gtrsim \cC_1\gtrsim \cC\ 
\end{align}
for some constant~$r=O(1)$ and $\cC$ has a certain threshold property, then the same 
is the case for $\cC_r$.
In this way, the problem of establishing 
a threshold for a circuit~$\tilde{\cC}$ is reduced to showing that it can be obtained by transforming a circuit~$\cC$ with a known threshold by a  (constant-length) sequence of transformation steps or ``moves''.

We first discuss the transformation of an adaptive circuit~$\mathsf{C}$ into an essentially non-adaptive circuit~$\tilde{\mathsf{C}}$. 
\begin{lemma}[Adaptive and non-adaptive Clifford circuits and local stochastic Pauli noise]\label{lem:adaptivenonadaptivetransform}
Let $\cC$ be a depth-$D$ Clifford circuit. 
Then there is a depth-$D$ Clifford circuit~$\tilde{\mathsf{C}}$
 with the same interaction graph as~$\mathsf{C}$ and adaptivity only in its last layer 
 such that 
 \begin{align}
 \tilde{\cC}\gtrsim_{(1,1)}\cC\ .
 \end{align}
 \end{lemma}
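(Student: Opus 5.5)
The plan is to derive this from Lemma~\ref{lem:adaptivenonadaptiveClifford} together with Lemma~\ref{lem:pauli noise properties}\eqref{item:pauli noise property 1}.

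First I normalize $\cC$ so that the hypothesis of Lemma~\ref{lem:adaptivenonadaptiveClifford} holds. That lemma asks that the last layer consist only of single-qubit Paulis and adaptive operations. If $\cC$ does not already have this form, I will treat its last layer as one that may additionally contain (trivial) classically controlled Paulis. Any identity operation can be viewed as an adaptive Pauli whose control function is constant $I$. I will check that the commutation moves of Fig.~\ref{fig:all-transformations} only ever push adaptive Paulis forward into this final layer and merge them there (Fig.~\ref{fig:adaptivepauli}). This keeps the depth equal to $D$ and leaves the interaction graph unchanged, since adaptive Paulis act on subsets of qubits already adjacent to where they are created. If necessary, I will state explicitly that a final layer of classically controlled Paulis is allowed, which is the convention fixed just before Lemma~\ref{lem:adaptivenonadaptiveClifford}.

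Second, I apply Lemma~\ref{lem:adaptivenonadaptiveClifford} to obtain $\tilde{\cC}$. It has depth $D$, the same interaction graph, and is non-adaptive in layers $1,\dots,D-1$. For every Pauli error $\tilde F$ on $\tilde\cC$ it supplies a Pauli error $F=\Phi(\tilde F)$ on $\cC$ with $F\bowtie\cC=\tilde F\bowtie\tilde\cC$ and $\supp(F)\subseteq\supp(\tilde F)$. Since both circuits have depth $D$ on $n$ qubits, they share the wire set $[D]\times[n]$, so the two supports can be compared directly.

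Third, I take $\tilde F\sim\cN^{\pauli}_{\tilde\cC}(\tilde p)$ and define the random variable $F:=\Phi(\tilde F)$, applying the deterministic map $\Phi$ pointwise to each realization. For each fixed realization the supports satisfy $\supp(F)\subseteq\supp(\tilde F)$, so this inclusion holds with certainty. Then for any $W\subseteq\cW_\cC$ we get $\Pr[W\subseteq\supp(F)]\le\Pr[W\subseteq\supp(\tilde F)]\le\tilde p^{|W|}$, which is exactly the argument of Lemma~\ref{lem:pauli noise properties}\eqref{item:pauli noise property 1}. Hence $F\sim\cN^{\pauli}_\cC(\tilde p)$, which gives $\tilde\cC\gtrsim_{(1,1)}\cC$.

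The main obstacle is confirming that $\Phi$ is a genuine function of $\tilde F$, with no dependence on the measurement outcomes. The proof of Lemma~\ref{lem:adaptivenonadaptiveClifford} works through the iterated moves of Fig.~\ref{fig:all-transformations}, and at each step the derived noise $(F_{t-1},F_t,F_{t+1})$ is specified purely in terms of $(\tilde F_{t-1},\tilde F_t,\tilde F_{t+1})$. In the measurement case the error is simply set to $I$ on one wire, which only shrinks the support. Signs from the identity of Fig.~\ref{fig:controlPidentity} are irrelevant. So composing the finitely many steps yields a deterministic, support-non-increasing map, and the probability bound above goes through.
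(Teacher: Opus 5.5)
Your proposal is correct and follows the paper's proof, which likewise just combines Lemma~\ref{lem:adaptivenonadaptiveClifford} (an equivalent error $F$ on $\cC$ with $\supp(F)\subseteq\supp(\tilde F)$) with Lemma~\ref{lem:pauli noise properties}\eqref{item:pauli noise property 1}. You also make explicit two points the paper leaves implicit: that $\tilde F\mapsto F$ is a deterministic map not depending on measurement outcomes, and that both circuits share the wire set $[D]\times[n]$. One caveat: viewing identities as trivially controlled Paulis does not by itself handle non-Pauli gates in the last layer, so what you actually need is your fallback to the convention fixed before Lemma~\ref{lem:adaptivenonadaptiveClifford}, which is exactly what the paper relies on.
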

 
 \begin{proof}
  Let $\tilde{F}\sim\cN^{\pauli}_{\tilde{\mathsf{C}}}(p)$ be local stochastic Pauli noise  of strength~$p$. We need to show that there is local stochastic Pauli noise $F\sim\cN^{\pauli}_{\mathsf{C}}(p)$ such that 
 \begin{align}
 F\bowtie \mathsf{C}&=\tilde{F}\bowtie \tilde{\mathsf{C}}\ .
 \end{align}
  This follows immediately from
 Lemma~\ref{lem:adaptivenonadaptiveClifford}
 and Lemma~\ref{lem:pauli noise properties}~\eqref{item:pauli noise property 1}.
\end{proof}

Next we discuss an ``inflation move'' which inserts layers of identity operations, see   Fig.~\ref{fig:inflation} for an illustration.

\begin{figure}
\centering
\begin{subfigure}{0.95\textwidth}
\centering
\includegraphics[width=8cm]{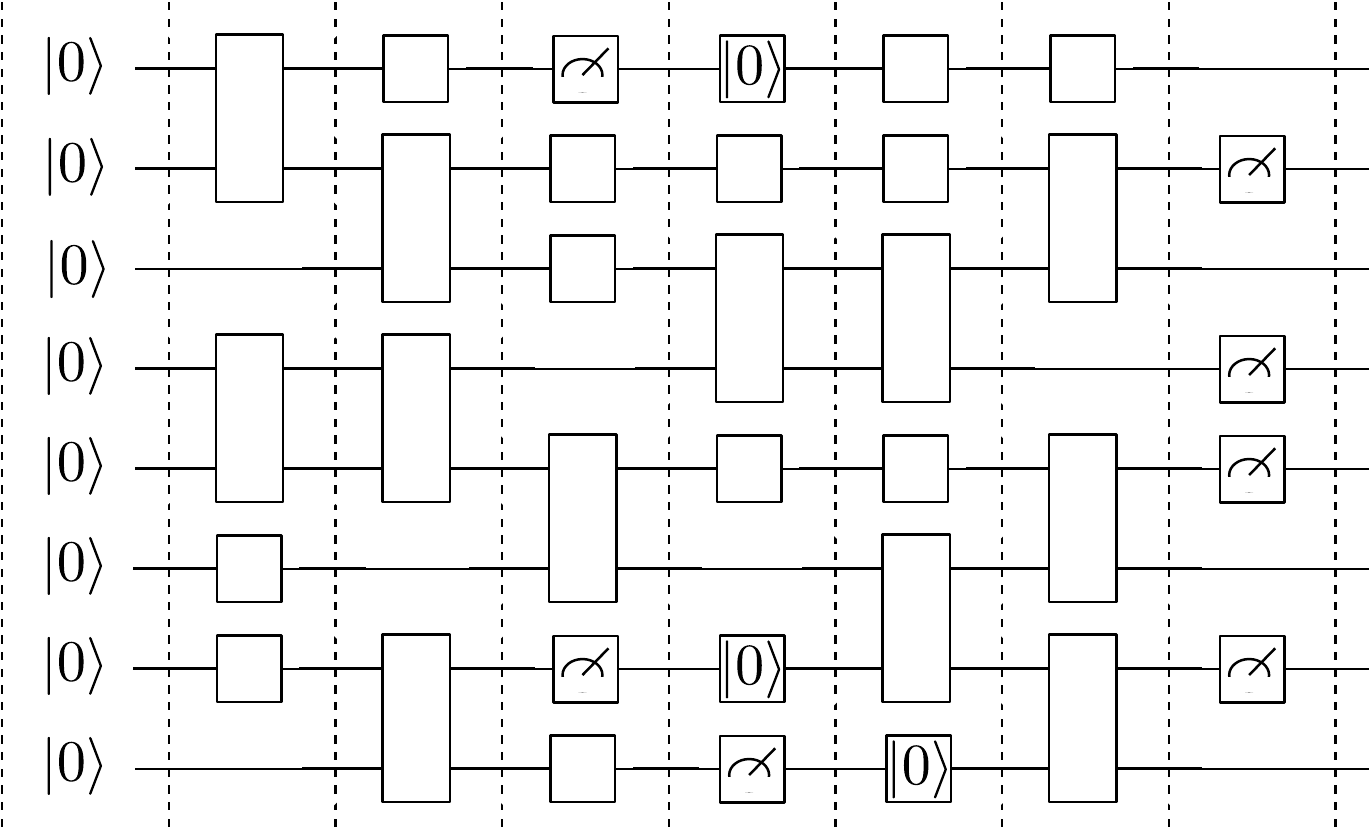}
\caption{The original circuit~$\cC$ \label{fig:beforemanipulation}}
\end{subfigure}\\
\begin{subfigure}{0.95\textwidth}
\centering
\includegraphics[width=0.9\textwidth]{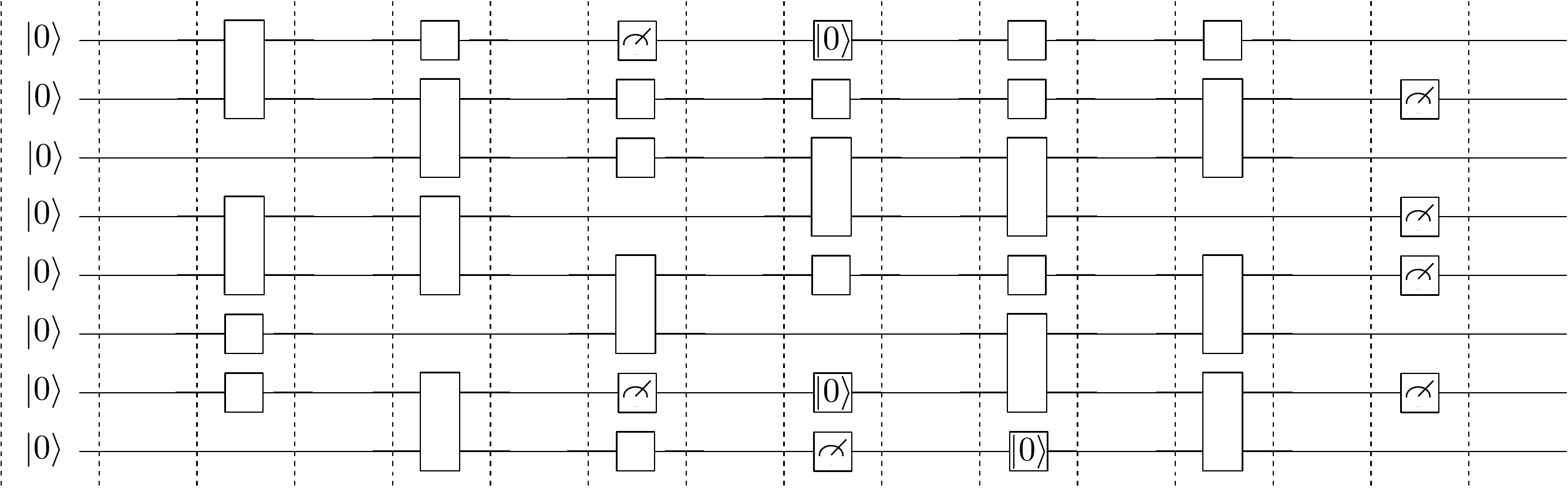}
\caption{The circuit~$\tilde{\cC}$ obtained by applying the inflation transformation\label{fig:inflated}}
\end{subfigure}\qquad 
\caption{Illustration of circuit inflation of depth $m=1$\label{fig:inflation}}
\end{figure}

\begin{lemma}[Circuit inflation]\label{lem:circuitinflation}
Let $\mathsf{C}=\mathsf{C}_D\circ\cdots\circ\cC_1$ be a product of $D$ subcircuits~$\cC_1,\ldots,\cC_D$. 
Let $m\in\mathbb{N}$.  Define the circuit
 \begin{align}
\tilde{\cC}^{(m)}&=\mathsf{id}^{\circ m}\circ \cC_D\circ\cdots \circ\mathsf{id}^{\circ m}\circ\cC_2\circ \mathsf{id}^{\circ m}\circ \cC_1\ ,\label{eq:insertmvavb}
 \end{align}
 Then $\tilde{\cC}^{(m)} \gtrsim \cC$.
 \end{lemma}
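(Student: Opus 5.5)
We will show $\tilde{\cC}^{(m)}\gtrsim_{(\Lambda,\lambda)}\cC$ with explicit constants depending only on~$m$. The idea is to move every error that sits inside an inserted identity block onto the last wire of the preceding original layer. This only regroups errors along a single qubit's wire and involves no propagation through nontrivial gates.

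Concretely, each wire of~$\cC$ is a pair $(t,j)$, the qubit~$j$ after layer~$\cC_t$. The corresponding wires of $\tilde{\cC}^{(m)}$ are the $m+1$ pairs $(\tau_t+k,j)$ with $k=0,\ldots,m$, where $\tau_t$ is the position of $\cC_t$ in~$\tilde{\cC}^{(m)}$. These are the wire after $\cC_t$ and the wires after each of the $m$ following identity layers. Together they partition $\cW_{\tilde{\cC}^{(m)}}$. Given $\tilde{F}\sim\cN^{\pauli}_{\tilde{\cC}^{(m)}}(\tilde p)$, we define
\begin{align}
F(t,j):=\prod_{k=0}^{m}\tilde{F}(\tau_t+k,j)\ .
\end{align}
Identity layers commute with Pauli channels, and consecutive single-qubit Pauli conjugations on the same wire compose into a single Pauli. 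Hence $\tilde F\bowtie\tilde{\cC}^{(m)}=F\bowtie\cC$ exactly, with no sign issues because Pauli conjugation ignores phases. If some $\cC_t$ is itself a multi-layer subcircuit, we apply the same grouping only to the wires at its output and leave its internal wires unchanged.

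It remains to bound the strength of~$F$. For $k=0,\ldots,m$, let $F_k$ be the noise on $\cW_\cC$ given by $F_k(t,j):=\tilde F(\tau_t+k,j)$. Each $F_k$ is obtained from~$\tilde F$ by restricting to a subset of wires and then relabeling them injectively. By Lemma~\ref{lem:pauli noise properties}\eqref{item:pauli noise property 1} and~\eqref{item:permutationclaimb}, each $F_k$ is local stochastic of strength~$\tilde p$. Strictly speaking, that lemma is stated for permutations of a fixed wire set; the injective version follows directly from the definition, since $\Pr[W\subseteq\supp F_k]=\Pr[\iota_k(W)\subseteq\supp\tilde F]\le\tilde p^{|W|}$. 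Since $F=F_0\cdots F_m$, Lemma~\ref{lem:pauli noise properties}\eqref{item:pauli noise property 3} gives strength $(m+1)\tilde p^{1/(m+1)}$. So we may take $\Lambda=m+1$ and $\lambda=1/(m+1)$.

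We do not expect a serious obstacle. The only point that needs care is bookkeeping: the map from wires of $\tilde{\cC}^{(m)}$ to wires of $\cC$ must be well defined, with each wire of $\cC$ receiving exactly $m+1$ preimages and the blocks disjoint. The wires before the first layer need no treatment, since in this formalism wires exist only after layers. We could also use a direct union bound over which block member is faulty, which gives strength $(m+1)\tilde p$, i.e. $(\Lambda,\lambda)=(m+1,1)$. This is a better constant, and we would prefer it if the write-up allows: $\Pr[W\subseteq\supp F]\le\sum_{\text{choices}}\tilde p^{|W|}=((m+1)\tilde p)^{|W|}$.
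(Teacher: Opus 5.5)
Your proposal is correct and follows essentially the same route as the paper. Like the paper, you aggregate the errors on the $m+1$ wires of each wait block onto the output wire of the preceding subcircuit, leave the interior wires of multi-layer subcircuits untouched, and bound the strength using restriction and products of local stochastic Pauli noise (Lemma~\ref{lem:pauli noise properties}). The only difference is the bookkeeping in the bound. The paper keeps the interior errors as a separate factor and applies property~(iii) twice, obtaining strength $2(m+1)^{1/2}p^{1/(2(m+1))}$. If you fold the interior wires into your $F_0$, which you should state explicitly for the multi-layer case, you get $(m+1)\tilde p^{1/(m+1)}$. Your direct union bound over block members (valid because all the factors come from one $\tilde F$ on disjoint wires) gives the sharper $(m+1)\tilde p$.
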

We call the circuit~$\tilde{\cC}^{(m)}$ defined by~\eqref{eq:insertmvavb} the result of applying a depth-$m$ inflation move to~$\cC$.
We often consider the case where each subcircuit~$\cC_t$, $t\in [D]$ is a depth-$1$ circuit.
In this case, the inflated circuit~$\tilde{\cC}^{(1)}$ is obtained by adding an idle (identity) layer after each layer of the original circuit~$\cC$, thereby doubling the depth.
More generally, we have $\mathsf{depth}(\tilde{\cC}^{(m)})=(m+1)\mathsf{depth}(\cC)$.

\begin{proof}
For $j\in [D]$, we write $D_j = \mathsf{depth}(\cC_j)$ and let
\begin{align}
\tilde{t}_j=\left(\sum_{r=1}^j D_j \right)+m(j-1)\qquad\textrm{ for }\qquad j\in [D]\ .
\end{align}
be the time slice associated with wires emanating from the subcircuit~$\cC_j$ in the circuit~$\tilde{\cC}$. 
Let
\begin{align}
\bigcup_{j=1}^D \mathsf{int}(\cC_j)&=\bigcup_{r=1}^{D} \{\tilde{t}_j-D_j+1,\tilde{t}_{j}-1\}\times [n]
\end{align}
denote the wires fully contained in the interior of one of the subcircuits~$\cC_1,\ldots,\cC_D$.
The remaining wires are
\begin{align}
\cW_{\tilde{\cC}}\backslash \bigcup_{t=1}^D \mathsf{int}(\cC_t)
&=\bigcup_{j=1}^D \left(\{\tilde{t}_j,\tilde{t}_j+1,\ldots,\tilde{t}_j+m\}\times [n]\right)
\end{align}
where $\left(\{\tilde{t}_j,\tilde{t}_j+1,\ldots,\tilde{t}_j+m\}\times [n]\right)$ denotes the wires between~$\cC_j$ and $\cC_{j+1}$ (respectively after $\cC_D$ for $j=D$).

Define
\begin{align}
t_j&=\sum_{r=1}^j D_j\qquad\textrm{ for }\qquad j\in [D]\ 
\end{align}
as the time slice associated with wires emanating from the circuit~$\cC_j$ in the circuit~$\cC$. 
Suppose $\tilde{F}\sim \cN^{\pauli}_{\tilde{\cC}}(p)$. Define a Pauli $F_{\mathsf{wait}}$ on~$\cW_{\cC}$
with support only within~$\{t_j\ |\ j\in [D]\}\times [n]$ by 
\begin{align}
F_{\mathsf{wait}}\left((t_j,q)\right)&= \prod_{k=0}^m \tilde{F}\left((\tilde{t}_j+k,q)\right)\qquad\textrm{ for all }\qquad j\in [D]\textrm{ and }q\in [n]\ .
\end{align}
The error $F_{\mathsf{wait}}$ is the ``aggregated'' error arising in the wait location (i.e., between the subcircuits).
We also define an error $F_0$ with support in the interior of the subcircuits~$\cC_1,\ldots,\cC_D$ by 
\begin{align}
F_0\left(t_j-D_j+k,q\right)&=\tilde{F}(\tilde{t}_j-D_j+k,q)\qquad\textrm{ for }\qquad k\in \{1,\ldots,D_j-1\}\
\end{align}
for $j\in [D]$. The error~$F_0$
encompasses all errors in the interior of the subcircuits. Defining
 \begin{align}
 F&:=F_{\mathsf{wait}}\cdot F_0\label{eq:fdefinitionproductmvd}
 \end{align}
it is easy to check that  $F\bowtie \cC = \tilde{F}\bowtie \tilde{\cC}^{(m)}$.

It remains to check that $F$ is local stochastic with strength $p^{(m)}$, for some $p^{(m)} > 0$.
By  Lemma~\ref{lem:pauli noise properties}~\eqref{item:pauli noise property 1},
 the error~$F_0$ satisfies~
 \begin{align}
 F_0\sim \cN^{\pauli}_{\cC}(p)\label{eq:fzeropaulicpm}
 \end{align}
since it is a restriction of the local stochastic Pauli error~$\tilde{F}$ which has strength~$p$. The error $F_{\mathsf{wait}}$ can be factored as
\begin{align}
F_{\mathsf{wait}}&=\prod_{k=0}^m F_{\mathsf{wait}}^{(k)}\label{eq:fproductwait}
\end{align}
where~$F_{\mathsf{wait}}^{(k)}$ has support only within~$\{t_j\ |\ j\in [D]\}\times [n]$ and is defined as
\begin{align}
F_{\mathsf{wait}}^{(k)}\left((t_j,q)\right)&= \tilde{F}\left((\tilde{t}_j+k,q)\right)\qquad\textrm{ for all }\qquad j\in [D]\textrm{ and }q\in [n]\ .
\end{align}
Clearly, each $F_{\mathsf{wait}}^{(k)}$ is obtained by restricting~$\tilde{F}$, hence we have 
\begin{align}
F_{\mathsf{wait}}^{(k)}\sim \cN^{\pauli}_{\cC}(p)\qquad\textrm{ for each }\qquad k\in \{0,\ldots,m\}\ \label{eq:localstchasticwaitk}
\end{align}
by  Lemma~\ref{lem:pauli noise properties}~\eqref{item:pauli noise property 1}. Combining Eqs.~\eqref{eq:fproductwait}  and~\eqref{eq:localstchasticwaitk} with Lemma~\ref{lem:pauli noise properties}~\eqref{item:pauli noise property 3}, we conclude that
\begin{align}
F_{\mathsf{wait}}\sim\cN^{\pauli}_{\cC}(p')\qquad\textrm{ where }\qquad p'=(m+1) p^{1/(m+1)}\ .\label{eq:localstochasticfwait}
\end{align}
Combining Eq.~\eqref{eq:fzeropaulicpm}, Eq.~\eqref{eq:localstochasticfwait} and the definition of $F$ (see Eq.~\eqref{eq:fdefinitionproductmvd}) with  Lemma~\ref{lem:pauli noise properties}~\eqref{item:pauli noise property 3}.
 gives
 \begin{align}
F\sim\cN^{\pauli}_{\cC}(p^{(m)})
\end{align}
with 
\begin{align}
p^{(m)}&= 2\max\{ p^{1/2},(m+1)^{1/2} p^{1/(2(m+1))} \}\\
&=2 (m+1)^{1/2}p^{\frac{1}{2(m+1)}}\ 
 \end{align}
 as claimed.
\end{proof}

Another convenient tool is the following (unitary) subcircuit substitution move, whose analysis is based on the circuit error cleaning Lemma~\ref{lem:cleaninglemma}. 
It is illustrated in Fig.~\ref{fig:rectanglessubstituted}.
\begin{figure}
\centering
\includegraphics[width=9cm]{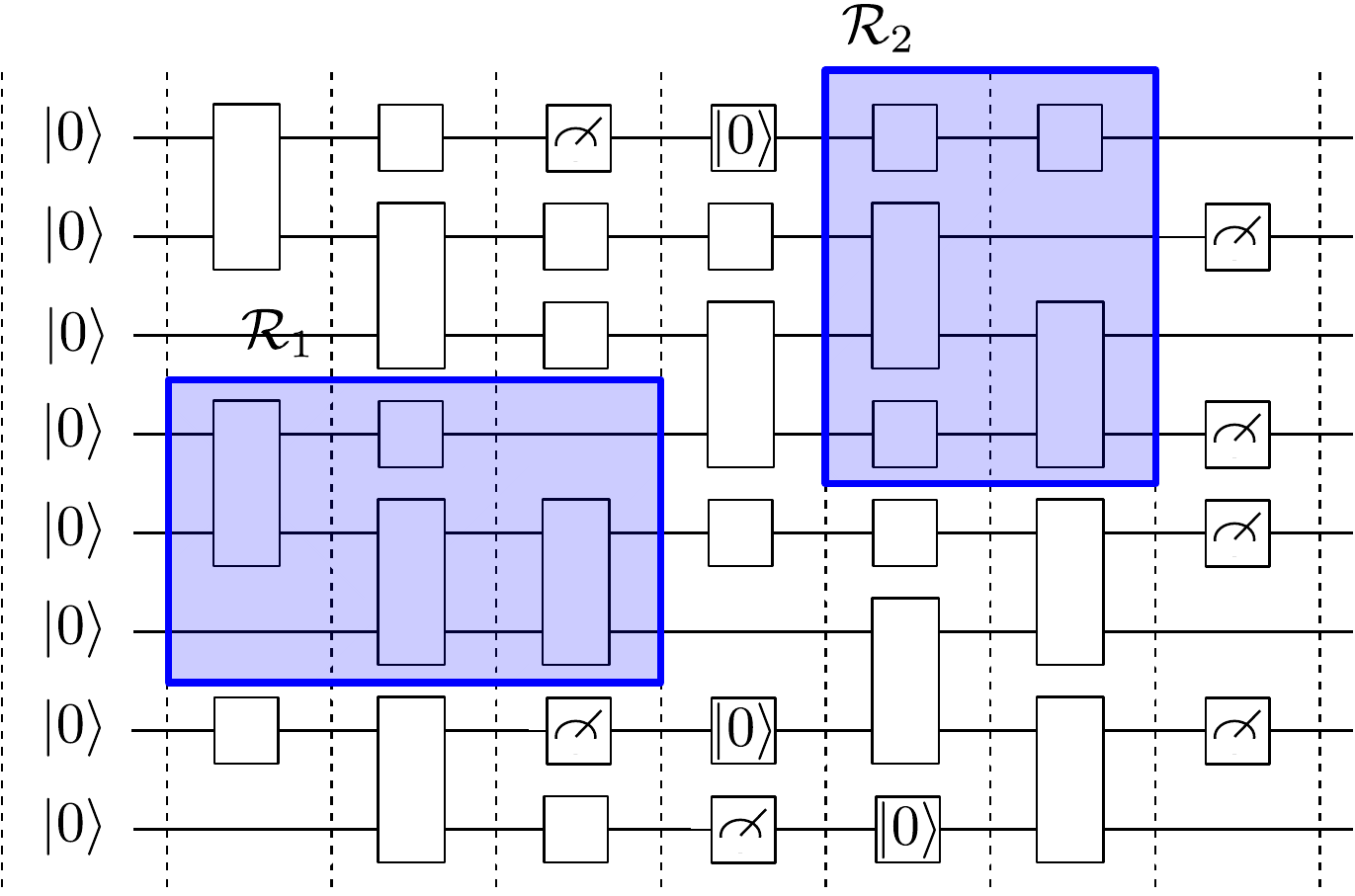}
\caption{The circuit~$\tilde{\cC}$ obtained by substituting unitary subcircuits in the circuit~$\cC$ defined in Fig.~\ref{fig:rectangles}\label{fig:rectanglessubstituted}}
\end{figure}

\begin{lemma}[Substitution of unitary subcircuits]\label{lem:subcircuitsubstitution}
Let $\cC$ be a depth-$D$ Clifford circuit. Let $\cR_1,\ldots,\cR_m$ be pairwise disjoint valid rectangles for~$\cC$ such that every 
subcircuit~$\cC^{\cR_j}$, $j\in [m]$ is unitary. For every $j\in [m]$, let $\tilde{\cC}^{\cR_j}$ be a unitary Clifford circuit defined on the qubits~$\Omega(\cR_j)$ associated with the rectangle~$\cR_j$. Assume further that $\tilde{\cC}^{\cR_j}$ is of depth $\Delta(\cR_j)$ and has the same action as $\cC^{\cR_j}$ as a unitary, i.e.,
\begin{align}
\cC^{\cR_j}&=\tilde{\cC}^{\cR_j}\qquad\textrm{ for every }j\in [m]\ .\label{eq:identicalactionsubcircuit}
\end{align}
Define a depth-$D$ Clifford circuit~$\tilde{\cC}$ by substituting the subcircuit~$\cC^{\cR_j}$ in $\cC$ by the circuit~$\tilde{\cC}^{\cR_j}$ for each $j\in [m]$. Assume that the maximal depth of a rectangle, $\Delta := \max_{j \in [m]} \Delta(\cR_j)$ is a constant, $\Delta = O(1)$. Then $\tilde{\cC} \gtrsim \cC$.
\end{lemma}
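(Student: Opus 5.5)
We need to show: for any local stochastic Pauli noise $\tilde F\sim\cN^{\pauli}_{\tilde\cC}(\tilde p)$ there is $F\sim\cN^{\pauli}_{\cC}(p)$ with $p=\Lambda\tilde p^\lambda$ and $F\bowtie\cC=\tilde F\bowtie\tilde\cC$. The plan is to first clean the noise out of the interiors of the substituted rectangles in $\tilde\cC$. Once the interiors are noise-free, we swap each noiseless $\tilde\cC^{\cR_j}$ for $\cC^{\cR_j}$ using Eq.~\eqref{eq:identicalactionsubcircuit}, and the remaining noise can be read as noise on $\cC$.

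\emph{Step 1 (cleaning in $\tilde\cC$).} The rectangles $\cR_1,\ldots,\cR_m$ are also pairwise disjoint valid rectangles for $\tilde\cC$. This holds because each $\tilde\cC^{\cR_j}$ acts only on $\Omega(\cR_j)$, has the same depth $\Delta(\cR_j)$, and is unitary. Applying Lemma~\ref{lem:cleaninglemma} to $\tilde\cC$ and $\tilde F$ gives local stochastic Pauli noise $\tilde F'$ on $\tilde\cC$ with $\tilde F\bowtie\tilde\cC=\tilde F'\bowtie\tilde\cC$. Its strength is $p'=2\Delta^{1/2}(2\tilde p^{2^{-(\Delta-1)}})^{1/(2\Delta)}$, which is of the form $\Lambda\tilde p^\lambda$ with $\Lambda=2^{1+1/(2\Delta)}\Delta^{1/2}$ and $\lambda=2^{-(\Delta-1)}/(2\Delta)$, both constants because $\Delta=O(1)$. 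The noise $\tilde F'$ has no support in the interiors of the $\cR_j$. I would verify from the proof of Lemma~\ref{lem:cleaninglemma} that the interior errors are pushed to the right boundaries $\partial_r\cR_j$ (the statement's $\partial_\ell\cR_j\cap\mathsf{int}(\cR_j)$ appears to be a typo for $\mathsf{int}(\cR_j)$). Any remaining support on $\partial_\ell\cR_j$ and $\partial_r\cR_j$ sits on wires entering or leaving the rectangle.

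\emph{Step 2 (substitution).} Both circuits have the same wire set $\cW=[D]\times[n]$. Define $F$ on $\cW_\cC$ by $F(w):=\tilde F'(w)$. Outside $\bigcup_j\mathsf{int}(\cR_j)$ the two circuits coincide layer by layer, and inside each rectangle the noise $\tilde F'$ is trivial. So within $\tilde F'\bowtie\tilde\cC$ each rectangle contributes the noiseless unitary $\tilde\cC^{\cR_j}$, sandwiched between Pauli channels on its boundary wires. By Eq.~\eqref{eq:identicalactionsubcircuit} it can be replaced by $\cC^{\cR_j}$, which gives $F\bowtie\cC=\tilde F'\bowtie\tilde\cC=\tilde F\bowtie\tilde\cC$. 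Since $F$ equals $\tilde F'$ under the identification of wires, $F\sim\cN^{\pauli}_{\cC}(p')$. Hence $\tilde\cC\gtrsim_{(\Lambda,\lambda)}\cC$.

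\emph{Main obstacle.} The delicate point is the bookkeeping in Step 1. We must confirm that the cleaning lemma, applied to $\tilde\cC$, really leaves the interiors error-free. We must also check that noise on the boundary layers $\partial_\ell\cR_j$ and $\partial_r\cR_j$ is placed consistently, meaning before and after the unitary subcircuit and in the same positions in both circuits. If needed, I would reprove the relevant cleaning step directly. Each interior error is conjugated through the remaining layers of $\tilde\cC^{\cR_j}$ using Lemma~\ref{lem:unitarycliffordcircuit}, which is valid because only Clifford unitaries are involved. The resulting errors are then combined with the untouched exterior noise via Lemma~\ref{lem:pauli noise properties}\eqref{item:pauli noise property 3}. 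This route yields the same constant-exponent bound.
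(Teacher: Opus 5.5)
Your proposal is correct and follows the paper's proof essentially verbatim. The paper also applies the cleaning Lemma~\ref{lem:cleaninglemma} to $\tilde{\cC}$ to obtain equivalent noise of strength $\Lambda\tilde p^{\lambda}$ with no support in the interiors $\mathsf{int}(\cR_j)$, and then replaces each now-noiseless $\tilde{\cC}^{\cR_j}$ by $\cC^{\cR_j}$ using Eq.~\eqref{eq:identicalactionsubcircuit}. Your extra bookkeeping fills in details the paper leaves implicit: validity of the rectangles for $\tilde{\cC}$, the common wire set, the explicit constants, and reading the support condition as disjointness from $\mathsf{int}(\cR_j)$, which is what the cleaning lemma's proof actually delivers.
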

We call the circuit~$\tilde{\cC}$ obtained in this way the result of applying a depth-$\Delta$ substitution move to~$\cC$.
\begin{proof}
Let $\tilde{F}\sim\cN^{\pauli}_{\tilde{\mathsf{C}}}(p)$. We can apply  the circuit error cleaning Lemma~\ref{lem:cleaninglemma} to obtain a local stochastic error~$F\sim \cN^{\pauli}_{\mathsf{C}}(p')$ 
such that
\begin{align}
\tilde{F}\bowtie\tilde{\cC}&=F\bowtie\tilde{\cC}\label{eq:fbtcone}
\end{align}
such that $F$ has trivial action in the interior of each rectangle~$\cR_j$, $j\in [m]$, see Eq.~\eqref{eq:disjointfprimercj}. 
The latter property together with the assumption~\eqref{eq:identicalactionsubcircuit} means that we can replace~$\tilde{\cC}_{\cR_j}$ by $\cC_{\cR_j}$ for each $j\in [m]$ in the expression $F\bowtie\tilde{\cC}$.
It follows that
\begin{align}
F\bowtie\tilde{\cC}&=F\bowtie\cC\ .\label{eq:fbttwo} 
\end{align}
Combining Eqs.~\eqref{eq:fbtcone} and~\eqref{eq:fbttwo} concludes the proof.
\end{proof}
A useful consequence of 
the substitution lemma (Lemma~\ref{lem:subcircuitsubstitution})
and the inflation lemma (Lemma~\ref{lem:circuitinflation})
is the following result, which shows that every circuit is essentially equivalent to one in an ``alternating form''. By the latter, we mean that operation layers consisting of measurement and state preparation alternate with 
layers containing unitaries only.
\begin{lemma}[Circuits in alternating form]\label{lem:alternatingformcircuit}
Let $\cC=\cC_D\circ \cdots\circ \cC_1$ be a depth-$D$ Clifford circuit.
Then there is a circuit~$\tilde{\cC}=\tilde{\cC}_{2D}\circ\cdots  \circ \tilde{\cC}_1$ of depth~$2D$
such that the following holds:
\begin{enumerate}[(i)]
\item\label{it:alternatingone}
Each operation layer $\tilde{\cC}_t$ with odd~$t$ only consists of state preparation (of~$\ket{0}$), single-qubit measurements
and identities (idling).
\item\label{it:alternatingtwo}
Each operation layer~$\tilde{\cC}_t$ with even~$t$ only consists of one- and two-qubit unitaries.
\end{enumerate} 
Moreover, we have $\tilde{\cC} \gtrsim \cC$.
\end{lemma}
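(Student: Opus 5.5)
The plan is to build $\tilde{\cC}$ in two moves: first an inflation move, then a substitution move. The constants compose by the preorder Lemma~\ref{lem:circuitpreorder}.

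\emph{Step 1 (inflation).} Apply Lemma~\ref{lem:circuitinflation} with $m=1$, taking each depth-$1$ layer $\cC_t$ as a subcircuit. This gives a depth-$2D$ circuit
\begin{align}
\cC' = \mathsf{id}\circ\cC_D\circ\cdots\circ\mathsf{id}\circ\cC_1,
\end{align}
with $\cC'\gtrsim\cC$. Now the original layer $\cC_t$ sits at position $2t-1$ and is followed by an idle layer at position $2t$. Each operation in $\cC_t$ is one of three kinds: a preparation or measurement, a unitary gate, or an identity. Preparations, measurements and identities are already allowed in odd layers. The problem is the unitaries, which sit in odd layers and must be moved into the idle even layer right after them.

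\emph{Step 2 (substitution).} For each one- or two-qubit unitary gate $g$ in $\cC_t$ acting on qubits $\Omega_g$, consider the rectangle
\begin{align}
\cR_g=\{2t-2,2t-1,2t\}\times\Omega_g
\end{align}
of depth $\Delta=2$. It spans layers $2t-1$ and $2t$, and those layers restricted to $\Omega_g$ are $g$ followed by $\mathsf{id}$, so no operation crosses its boundary. Hence $\cR_g$ is valid and the subcircuit $\cC'^{\cR_g}=\mathsf{id}\circ g$ is unitary.

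Replace this subcircuit by $g\circ\mathsf{id}$, i.e.\ idle in layer $2t-1$ and apply $g$ in layer $2t$. Both have depth $2$ and implement the same unitary, so condition~\eqref{eq:identicalactionsubcircuit} holds. Lemma~\ref{lem:subcircuitsubstitution} with constant $\Delta=2$ then gives $\tilde{\cC}\gtrsim\cC'$.

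After the substitution, odd layers contain only preparations, measurements and identities, and even layers contain only unitaries (identities counted as unitaries). This is exactly properties~\eqref{it:alternatingone} and~\eqref{it:alternatingtwo}. Chaining the two moves via Lemma~\ref{lem:circuitpreorder} yields $\tilde{\cC}\gtrsim\cC$.

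\emph{Main obstacle.} The step needing care is checking that the rectangles are pairwise disjoint and valid. Two gates in the same layer act on disjoint qubit sets, so their rectangles do not overlap. Rectangles from consecutive odd layers $t$ and $t+1$ share only the boundary time slice $2t$. If that overlap violates the strict pairwise-disjointness needed by Lemma~\ref{lem:cleaninglemma}, I would handle it in one of two ways:
\begin{enumerate}
\item perform the substitutions in two rounds, first for odd $t$ and then for even $t$, each round being one substitution move, and compose again via the preorder; or
\item inflate with $m=2$ instead, so that rectangles of different layers are separated, and then pad the result to depth exactly $2D$ by merging idle layers.
\end{enumerate}
The first option is the one I would try first, since it keeps the depth at exactly $2D$.

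Finally, the statement allows $\cC$ to be adaptive, and Lemma~\ref{lem:subcircuitsubstitution} only concerns the unitary rectangles. The remaining operations (preparations, measurements, classically controlled Paulis) are left untouched and are simply counted among the allowed odd-layer operations.
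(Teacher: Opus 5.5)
Your proposal is correct and is essentially the paper's proof. The paper likewise applies a depth-$1$ inflation move and then two depth-$2$ substitution moves that replace $\mathsf{id}\circ\cU$ by $\cU\circ\mathsf{id}$, which matches your two-round variant.

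One small correction concerns your final remark. Property~(i) does not allow classically controlled Paulis in odd layers, so these must be moved into the following even layer together with the unitaries. This is harmless: for each fixed measurement record they act as Pauli unitaries, and conjugating a Pauli error by them changes it only by an irrelevant sign (Fig.~\ref{fig:controlPidentity}).
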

We say that a circuit $\tilde{\cC}$ satisfying~\eqref{it:alternatingone} and~\eqref{it:alternatingtwo} 
is in alternating form.

\begin{proof}
This follows immediately by applying a depth-$1$ inflation move, and two subcircuit substition moves of depth~$2$, see
Fig.~\ref{fig:alternatingcircuitproof} for an illustration. 

A noisy implementation~$\tilde{F}\bowtie\tilde{\cC}$ of~$\tilde{\cC}$ with $\tilde{F}\sim\cN^{\pauli}_{\tilde{\cC}}(p)$  is equivalent to a noisy execution of the circuit~$\cC'$ after the inflation move 
with local stochastic noise~$F'\sim\cN^{\pauli}_{\cC'}(p')$  of strength
\begin{align}
p'&=2\sqrt{2}p^{1/4}\ ,
\end{align}
see Lemma~\ref{lem:circuitinflation} with $m=1$, i.e., $F'\bowtie \cC'=\tilde{F}\bowtie\tilde{\cC}$. 
By Lemma~\ref{lem:subcircuitsubstitution} with $\Delta=1$,
there is $F''\sim \cN^{\pauli}_{\cC''}(p'')$ with
\begin{align}
p''&=4 (2p'^{1/2})^{1/2}=4\sqrt{2}p'^{1/4}
\end{align}
such that the circuit~$\cC''$ after the first substitution satisfies $F'\bowtie\cC'=F''\bowtie\cC''$.
It follows that
the final circuit~$\cC$ (obtained after another substitution) satisfies~$F''\bowtie\cC''=F\bowtie \cC$
with~$F\sim \cN^{\pauli}_{\cC}(p''')$ where 
\begin{align}
p'''&=4\sqrt{2}p''^{1/4}=8\cdot 2^{7/32}\cdot p^{1/64}<10 p^{1/64}\ .
\end{align}
The claim follows from this. 

\end{proof}

\begin{figure}
\centering
\begin{subfigure}{0.95\textwidth}
\centering
\includegraphics[width=7cm]{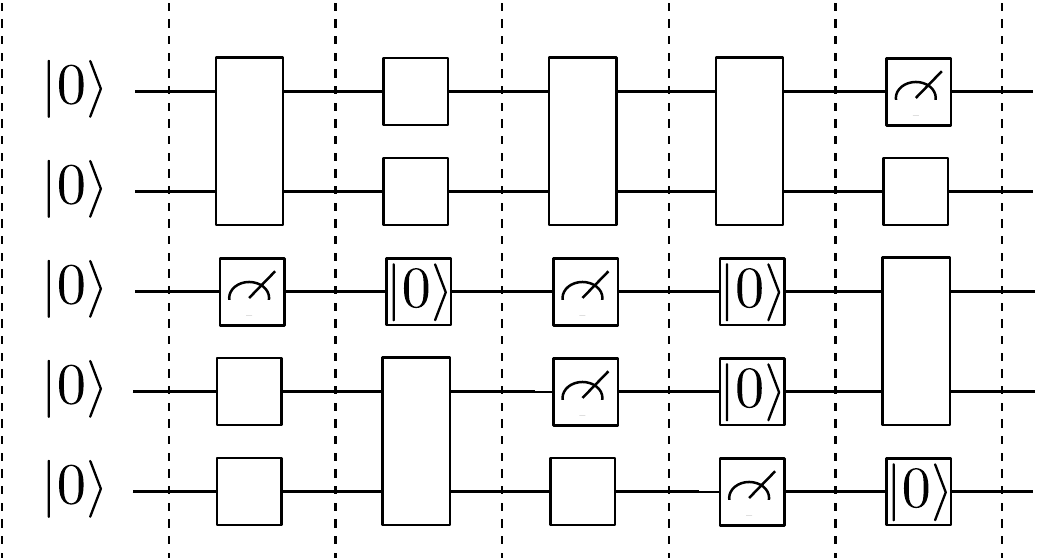}
\caption{The  circuit~$\cC$ with measurements, preparations and unitaries in any layer.}
\end{subfigure}
\begin{subfigure}{0.95\textwidth}
\centering
\includegraphics[width=14cm]{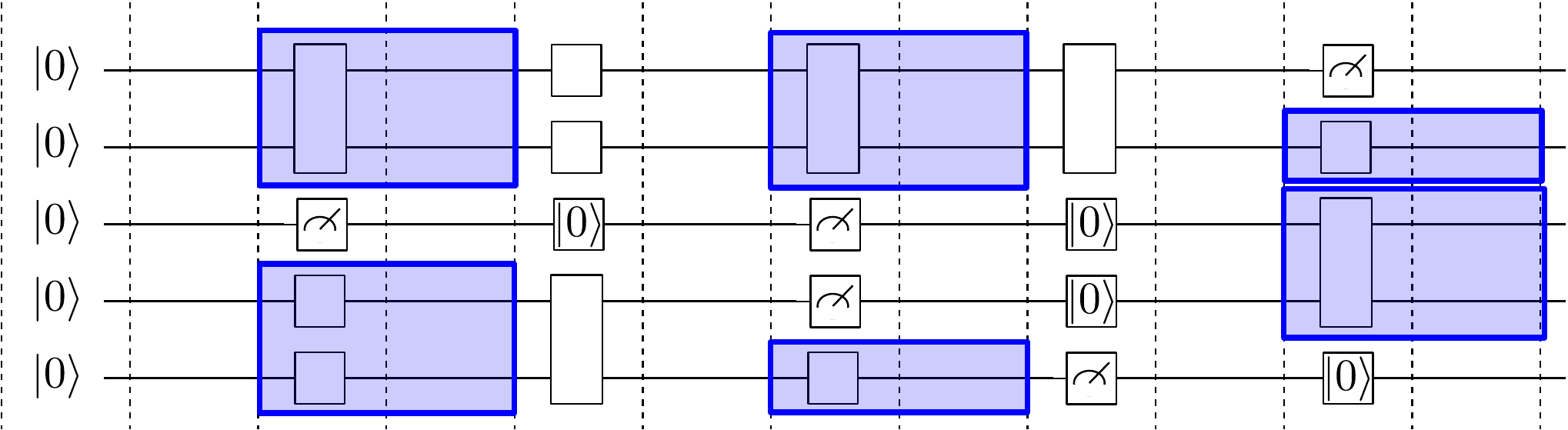}
\caption{The first substitution move applied to the inflated circuit. }
\end{subfigure}
\begin{subfigure}{0.95\textwidth}
\centering
\includegraphics[width=14cm]{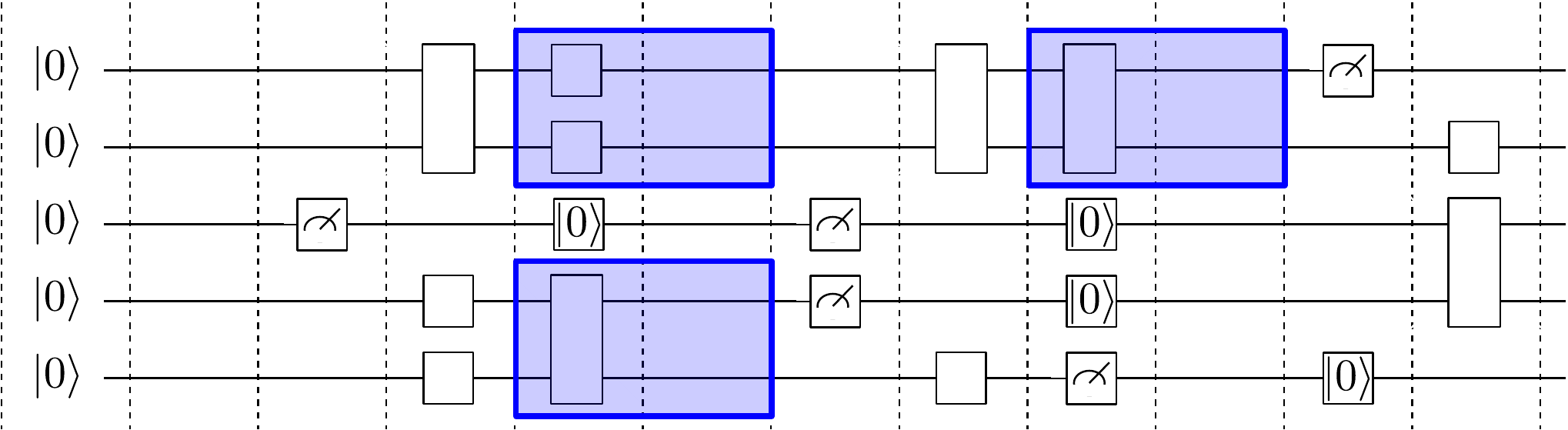}
\caption{The second substitution move applied to the inflated circuit.}
\end{subfigure}
\begin{subfigure}{0.95\textwidth}
\centering
\includegraphics[width=14cm]{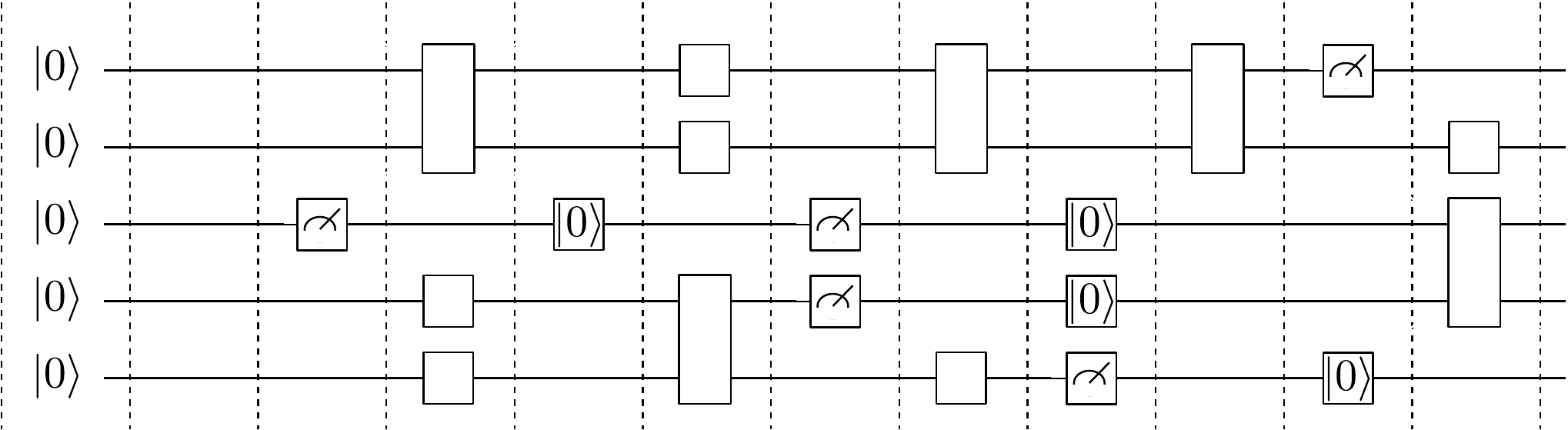}
\caption{The circuit~$\tilde{\cC}$ in alternating form.}
\end{subfigure}
\caption{Transforming a circuit~$\cC$ into an equivalent circuit~$\tilde{\cC}$ of alternating form.
In the substitution moves,  a subcircuit of the form~$\mathsf{id}\circ \cU$ (where $U$ is a one- or two-qubit unitary) is replaced by the circuit~$\cU\circ\mathsf{id}$ in every rectangle.\label{fig:alternatingcircuitproof}}
\end{figure}

\subsection{Circuit locality transformations and noise\label{sec:circuitlocalitytransformations}}
Let $G=(V,E)$ be a graph. We call a circuit~$\cC$
consisting of one- and two-qubit operations $G$-local if its qubit count is $N(\cC)=|V|$, and if the qubits can be assigned to the vertices of~$V$ in such a way that every two-qubit operation in~$\cC$ is between nearest neighbors (i.e., associated with an edge) on~$G$.

We use the term $1D$-local for a circuit~$\cC$
which is $\mathsf{P}_{N}$-local on the path graph~$\mathsf{P}_{N}$ for some $N\in\mathbb{N}$. 
We also say that a circuit~$\cC$ is $2D$-local if it is
$G_{\ell_X,\ell_Y}$-local for the rectangular grid graph
$G_{\ell_X,\ell_Y}:=\mathsf{P}_{\ell_X}\square \mathsf{P}_{\ell_Y}$ for some $\ell_X,\ell_Y\in\mathbb{N}$. 

In the following, we describe  a kind of ``normal form'' of a $G$-local circuit~$\cC$. 
Let~$\chi'=\chi'(G)$ be the edge chromatic number of~$G$, i.e., the minimum number of colors required to color the edges such that no two adjacent edges share the same color. 
We note that $\chi'(G)\leq d_{\max}(G)+1$  by Vizing's theorem~\cite{Vizing64}, where $d_{\max}(G)$ is the maximum degree of~$G$, and $\chi'(G)\leq d_{\max}(G)$ for any bipartite graph~$G$ according to Kőnig's line coloring theorem~\cite{konig1931}.
\begin{lemma}[Normal form of $G$-local circuit]\label{lem:circuitnormalform}
Let $\cC$ be a $G$-local circuit of depth~$D$.
Then there is 
a disjoint partition~$E=E_{1}\cup\cdots \cup E_{\chi'}$ and 
 a $G$-local circuit 
 \begin{align}
 \tilde{\cC}&=\left(\left(\tilde{\cC}_{D,\chi'}\circ \cdots\circ \tilde{\cC}_{D,1}\right)\circ
\tilde{\cC}^{\mathsf{pm}}_D\right)
\circ \cdots\circ 
 \left(\left(\tilde{\cC}_{1,\chi'}\circ \cdots\circ \tilde{\cC}_{1,1}\right)\circ \tilde{\cC}^{\mathsf{pm}}_1\right)
  \end{align}
  of depth
  $(\chi'+1)\cdot D$ with the same action as~$\cC$ (i.e., $\cC=\tilde{\cC}$) and the following properties:
\begin{enumerate}[(i)]
\item
For each $t\in [\tilde{D}]$, the depth-$1$ circuit~$\tilde{\cC}^{\mathsf{pm}}_t$
consists of single-qubit state preparation, measurement and identities only.
\item
For each $t\in [D]$, the circuit $\tilde{\cC}_{t,\alpha}$ is unitary, and local on the subgraph of~$G$ induced by the subset of edges~$E_\alpha$, $\alpha\in [\chi']$.
\item\label{it:normalformGlocalsteps}
The circuit~$\tilde{\cC}$ is obtained from~$\cC$ by application of a depth-$1$ inflation move, two subcircuit substitution moves of depth~$2$, a 
depth-$(\chi'-1)$ circuit inflation move, followed by a subcircuit substitution move.
 \end{enumerate}
\end{lemma}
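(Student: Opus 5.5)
We build $\tilde{\cC}$ in two stages, following the recipe in item~\eqref{it:normalformGlocalsteps}. Stage one uses Lemma~\ref{lem:alternatingformcircuit} essentially verbatim. A depth-$1$ inflation move (Lemma~\ref{lem:circuitinflation}) followed by two depth-$2$ substitution moves (Lemma~\ref{lem:subcircuitsubstitution}) turns $\cC$ into a depth-$2D$ circuit $\cC'$ in alternating form. In $\cC'$, odd layers contain only preparations, measurements and idles; these become the layers $\tilde{\cC}^{\mathsf{pm}}_t$. Even layers contain only one- and two-qubit unitaries.

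Each move is local: it either inserts identities or swaps $\mathsf{id}\circ\cU$ with $\cU\circ\mathsf{id}$ on the qubits that $\cU$ acts on. It therefore preserves $G$-locality, since every two-qubit gate keeps its original pair of qubits. Each move also preserves the action of the circuit as a CPTP map in the noiseless case.

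Stage two splits each unitary layer according to an edge coloring. Fix an optimal proper edge coloring $E=E_1\cup\cdots\cup E_{\chi'}$ of $G$, so each $E_\alpha$ is a matching. Apply a depth-$(\chi'-1)$ inflation move that inserts $\chi'-1$ identity layers after each layer of $\cC'$. Treat the unitary layer together with its $\chi'-1$ idle layers as a block of depth $\chi'$.

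Inside each block, perform a single substitution move. For each qubit, the rectangle is the set of qubits touched by that qubit's gate in the block; this is one vertex or one edge $e$, taken over the $\chi'$ time steps of the block. The substitution places the gate on $e$ at the time slot given by its color $\alpha(e)$, and identities elsewhere. A single-qubit gate can go in slot $1$, or be absorbed into any slot.

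These rectangles are pairwise disjoint, because the gates in one layer act on disjoint qubit sets. They are valid and unitary, and each has depth $\chi'=O(1)$. Lemma~\ref{lem:subcircuitsubstitution} therefore applies. After the substitution, the layer in slot $\alpha$ contains two-qubit gates only on edges of $E_\alpha$; since $E_\alpha$ is a matching, this is a legitimate depth-$1$ layer, which we call $\tilde{\cC}_{t,\alpha}$.

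Counting depth, the odd pm layers of $\cC'$ would also be inflated. We instead apply the $(\chi'-1)$ inflation only after the unitary layers, via the subcircuit-grouping form of Lemma~\ref{lem:circuitinflation} with grouped subcircuits. Alternatively, we absorb the extra idles following the pm layers by one more trivial substitution. Either way the total depth is $(\chi'+1)D$.

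Finally, chain the relations with the preorder Lemma~\ref{lem:circuitpreorder}; equality of action follows because every move is action-preserving. The main obstacle is bookkeeping: we must make sure the inflation adds idle layers only where they are needed, so that the depth comes out exactly $(\chi'+1)D$. We must also check that the rectangles of the final substitution are valid, meaning no gate crosses a block boundary.
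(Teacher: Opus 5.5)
Your proposal is correct and takes the paper's route. You bring $\cC$ into alternating form via Lemma~\ref{lem:alternatingformcircuit}, apply a depth-$(\chi'-1)$ inflation to the grouped subcircuits $\cC'_{2t}\circ\cC'_{2t-1}$ so that idles follow only unitary layers, and use one substitution move to replace each $\mathsf{id}^{\circ(\chi'-1)}\circ\cC'_{2t}$ by $\tilde{\cC}_{t,\chi'}\circ\cdots\circ\tilde{\cC}_{t,1}$ with $\tilde{\cC}_{t,\alpha}=\prod_{e\in\Omega_t\cap E_\alpha}\cO_e$. You are also more careful than the paper about single-qubit gates and about $G$-locality being preserved. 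Drop your fallback of absorbing the surplus idles after the $\tilde{\cC}^{\mathsf{pm}}_t$ layers by a further substitution: substitution moves preserve depth, so that route could not bring the depth down to $(\chi'+1)D$, and your main route doesn't need it.
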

\begin{proof}
We refer to~\cite[Lemma A.1]{brayietal20} for a similar proof. Choose any disjoint partition $E=E_{1}\cup \cdots \cup E_{\chi'}$ defined by an edge coloring of~$G$. We note that every color class~$E_\alpha$ is a matching, i.e., a set of edges where no two edges share a vertex.
Let 
\begin{align}
\cC'&=\cC'_{2D}\circ\cdots\circ \cC'_{1}\ 
\end{align}
be the circuit obtained by bringing~$\cC$ into alternating form, see Lemma~\ref{lem:alternatingformcircuit}. Recall
that each layer $\cC'_t$ for odd $t$ only consists of single-qubit state preparation and measurement, whereas each layer~$\cC_t$ with even~$t$ is unitary.

Consider the circuit
\begin{align}
\cC''&=\left(\left(\mathsf{id}^{\circ \chi'-1}\circ \cC'_{2D}\right)\circ\tilde{\cC}'_{2D-1}\right)\circ\cdots\circ \left(\left(\mathsf{id}^{\circ \chi'-1}\circ\cC'_{2}\right)\circ\cC'_{1}\right)
\end{align}
 obtained from~$\cC'$ with a depth-$(\chi'-1)$ inflation move between each pair of layers. 
Every even operation layer~$\cC'_{2t}$, $t\in [D]$ is a depth-$1$ $G$-local circuit, i.e., a product
\begin{align}
\cC'_{2t}&=\prod_{e\in \Omega_t} \cO_e
\end{align}
where $\Omega_t\subseteq E$ is a subset of edges and $\cO_e$ acts on  (one or both) the qubits defining~$e$.
It follows that
\begin{align}
\left(\mathsf{id}^{\circ \chi'-1}\circ \cC'_{2t}\right)&=\tilde{\cC}_{t,\chi'}\circ\cdots \circ\tilde{\cC}_{t,1}\ ,\label{eq:circuitsubstitutionappliedlem}
\end{align}
where
\begin{align}
\tilde{\cC}_{t,\alpha}&=\prod_{e\in \Omega_t\cap E_\alpha} \cO_e\qquad\textrm{ for }\qquad \alpha\in [\chi']\ 
\end{align}
is a depth-$1$ circuit.  We set
\begin{align}
\tilde{\cC}^{\mathsf{pm}}_t&= \cC'_{2t-1}\qquad\textrm{ for }\qquad t\in [D]\ .
\end{align}
The claim follows from this by applying the substitutions defined by Eq.~\eqref{eq:circuitsubstitutionappliedlem}.
\end{proof}

Let $G=(V,E)$ and $G'=(V',E')$ be graphs with the same vertex set we assume to be $V=V'=[n]$.
We are interested in emulating the behavior of a $G'$-local circuit~$\cC'$ by a $G$-local circuit~$\cC$.
This can typically be achieved by applying neighboring swaps to bring qubits close to each other.  We refer to~\cite{childsetal19} for a systematic treatment of associated optimization problems. The following quantity is of central interest to us.
\begin{definition}
Let $G=(V,E)$ and $G'=(V',E')$ be graphs with a common vertex set~$V=V'=[n]$.
Assume that $\pi\in S_n$ is such that it can be realized by a sequence of transpositions of neighboring vertices in the graph~$G$ (this is always the case for connected $G$). We denote by $\mathsf{pdepth}_G(\pi)$ the ``parallel depth'' of~$\pi$, i.e.,  the minimum number of layers of simultaneous disjoint  transpositions of neighboring vertices in the graph~$G$ required to factor~$\pi$. 
\end{definition}
We can then define a circuit~$U_\pi$ composed of $\mathsf{pdepth}_G(\pi)$ SWAP gates  which permutes the qubits according to~$\pi$.

Applied to circuits, it suffices to consider pairs of qubits associated with certain matchings of~$G'$; only these need to be brought close to each other. The following definitions are useful.
\begin{definition}
Let $G=(V,E)$ and $G'=(V',E')$ be graphs with vertex set~$V=V'=[n]$.
Let $E'=\bigcup_{\alpha=1}^r E'_\alpha$ be a disjoint partition of the edges of~$E'$ associated with an edge coloring.  Then we say that $G$ is compatible with $\left(G',\{E'_\alpha\}_{\alpha=1}^r\right)$ 
if for every $\alpha\in [r]$, there is a permutation~$\pi_\alpha\in S_n$ such that 
\begin{align}
\pi_\alpha(E'_\alpha)\subset E\qquad\textrm{ for every }\qquad \alpha\in [r]\ .\label{eq:pialphaealphaprime}
\end{align}
Here we write $\pi_\alpha(E'_\alpha)=\{\pi_\alpha(e)\ |\ e\in E'_\alpha\}$ where $\pi_\alpha(\{u,v\})=\{\pi_\alpha(u),\pi_\alpha(v)\}$.
\end{definition}
In other words, $G$ is compatible with $\left(G',\{E'_\alpha\}\right)$
if application of~$\pi_\alpha$ makes the endpoints of edges in~$E'_\alpha$ neighboring in the graph~$G$. In particular, if  $G$ is compatible with $\left(G',\{E'_\alpha\}_{\alpha=1}^r\right)$, 
then each $G'$-local depth-$1$ circuit~$\cC'_\alpha$
which is local on the subgraph induced by~$E'_\alpha$  has equivalent action to the circuit
\begin{align}
\cC_\alpha = \cU_{\pi_\alpha}^{-1} \circ \big( \cU_{\pi_\alpha}\circ \cC'_\alpha\circ \cU^{-1}_{\pi_\alpha} \big) \circ \cU_{\pi_\alpha} \ .\label{eq:equivalentcircuitupcmv}
\end{align}
Here we write~$\cU(\rho)=U\rho U^\dagger$ for the CPTP map defined by a unitary~$U$. 
With the factorization of~$U_{\pi_\alpha}$ into layers of SWAP gates and interpreting the term in square brackets as a $G$-local depth-1 circuit, we see that $\cC_\alpha$ is $G$-local and has depth
\begin{align}
\mathsf{depth}(\cC_\alpha)&=2\mathsf{pdepth}_G(\pi)+1\ .
\end{align}
Replacing~$\cC'_\alpha$ by~\eqref{eq:equivalentcircuitupcmv} can thus be seen as a subcircuit substitution.
Applying this reasoning to each circuit~$\cC'_{t,\alpha}$  in the normal form of a $G'$-local circuit~$\cC'$ (see Lemma~\ref{lem:circuitnormalform}) we obtain Lemma~\ref{lem:geometrychanges} below which transforms a $G'$-local circuit into a $G$-local circuit. The following definition will be convenient.

\begin{definition}[$\Delta$-reduction between graphs]
Given two graphs $G'=(V',E')$ and $G=(V,E)$ with common vertex set~$V'=V=[n]$, we say that
$G^\prime$ $\Delta$-reduces to $G$ if the following holds. Let $\chi'=\chi'(G')$ be the edge chromatic number of~$G'$.
Then there is a disjoint partition~$E'=\bigcup_{\alpha=1}^{\chi'}E'_\alpha$ associated with an edge coloring of~$G'$
such that $G$ is compatible with $\left(G',\{E'_\alpha\}_{\alpha=1}^{\chi'}\right)$
with associated permutations~$\{\pi_\alpha\}_{\alpha\in [\chi']}\subset S_n$ (see Eq.~\eqref{eq:pialphaealphaprime}) satisfying
\begin{align}
\max_{\alpha\in [r]}\mathsf{pdepth}_G(\pi_\alpha)&\leq \Delta\ .
\end{align}
\end{definition}
The main statement we need to change from one geometric layout to a different one is the following lemma.
\begin{lemma}[Geometry change]\label{lem:geometrychanges}
Let $G'=(V',E')$ and $G=(V,E)$ be such that $G^\prime$ $\Delta$-reduces to $G$.
Suppose $\cC'$ is a $G'$-local circuit of depth~$D'$. Then there is a $G$-local circuit~$\cC$ with the following properties. 
\begin{enumerate}[(i)]
\item
The circuit~$\cC$ has the same action as~$\cC'$.
\item
$\mathsf{depth}(\cC)=(\chi'(2\Delta+1)+1)D'$.
\item
The circuit~$\cC$ is obtained from~$\cC'$ by 
first applying the moves specified in Lemma~\ref{lem:circuitnormalform}~\eqref{it:normalformGlocalsteps}, and subsequently 
 applying  a depth-$2\Delta$ inflation move and a depth-$(2\Delta+1)$ substitution move.
\end{enumerate}
\end{lemma}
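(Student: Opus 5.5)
We construct $\cC$ from $\cC'$ in three stages and track both the action and the depth at each stage. The noise-compatibility properties are then automatic, since every stage is a move already shown to satisfy $\gtrsim$.

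\emph{Normal form.} First we apply Lemma~\ref{lem:circuitnormalform} to the $G'$-local circuit $\cC'$. We use the edge partition $E'=\bigcup_{\alpha=1}^{\chi'}E'_\alpha$ supplied by the definition of $\Delta$-reduction; any edge coloring is admissible in that lemma, so we may take this one. This gives a circuit of depth $(\chi'+1)D'$ with the same action as $\cC'$. For each $t\in[D']$ it has one preparation/measurement layer $\tilde{\cC}^{\mathsf{pm}}_t$ followed by $\chi'$ unitary depth-$1$ layers $\tilde{\cC}_{t,\alpha}$, each local on the matching $E'_\alpha$. The moves used here are those listed in Lemma~\ref{lem:circuitnormalform}\eqref{it:normalformGlocalsteps}.

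\emph{Inflation.} Next we apply a depth-$2\Delta$ inflation move (Lemma~\ref{lem:circuitinflation}), where the subcircuits $\cC_j$ of that lemma are taken to be the individual depth-$1$ layers $\tilde{\cC}_{t,\alpha}$ and the layers $\tilde{\cC}^{\mathsf{pm}}_t$. To get exactly the claimed depth, we inflate only the unitary layers. Formally, we group each $\tilde{\cC}^{\mathsf{pm}}_t$ together with the following layer $\tilde{\cC}_{t,1}$ into a single subcircuit. We then insert the $2\Delta$ identity layers before $\tilde{\cC}_{t,1}$ in that group by relabeling the inflation so that idles are placed where needed; if this relabeling is awkward, we instead inflate every layer and absorb the surplus identities afterwards. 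In both versions every unitary layer $\tilde{\cC}_{t,\alpha}$ now sits in a block $\mathsf{id}^{\circ\Delta}\circ\tilde{\cC}_{t,\alpha}\circ\mathsf{id}^{\circ\Delta}$ of depth $2\Delta+1$ acting on all of $[n]$. This block is a valid unitary rectangle with $\Omega=[n]$.

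\emph{Substitution.} Finally we apply a depth-$(2\Delta+1)$ substitution move (Lemma~\ref{lem:subcircuitsubstitution}). Each block is replaced by
\begin{align}
\cU_{\pi_\alpha}^{-1}\circ\big(\cU_{\pi_\alpha}\circ\tilde{\cC}_{t,\alpha}\circ\cU_{\pi_\alpha}^{-1}\big)\circ\cU_{\pi_\alpha}\ ,
\end{align}
as in Eq.~\eqref{eq:equivalentcircuitupcmv}. Here $U_{\pi_\alpha}$ is factored into $\mathsf{pdepth}_G(\pi_\alpha)\le\Delta$ layers of SWAPs along edges of $G$, and is padded with identity layers to depth exactly $\Delta$. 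The conjugated middle layer $\cU_{\pi_\alpha}\circ\tilde{\cC}_{t,\alpha}\circ\cU_{\pi_\alpha}^{-1}$ is a depth-$1$ circuit whose two-qubit gates act on $\pi_\alpha(E'_\alpha)\subset E$, so it is $G$-local. The replacement therefore has the same unitary action and the same depth $2\Delta+1$, and the substitution lemma applies because $\Delta=O(1)$. The preparation/measurement layers are single-qubit and hence trivially $G$-local.

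\emph{Depth count.} Each $t$ contributes $1+\chi'(2\Delta+1)$ layers, giving $\mathsf{depth}(\cC)=(\chi'(2\Delta+1)+1)D'$. Equality of action follows from the equality of actions at each stage.

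The only delicate point is the bookkeeping in the inflation step: we must make sure the identities are placed only around the unitary layers, so that the depth formula holds exactly and each substituted rectangle is valid (no operation crosses its boundary). Since every rectangle spans all qubits and contains only unitary layers, validity is immediate once the placement is fixed. Everything else is a direct invocation of the earlier lemmas.
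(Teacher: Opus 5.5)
Your proposal is correct and follows the paper's proof. Both use three steps: the normal form built with the edge coloring from the $\Delta$-reduction, a depth-$2\Delta$ inflation of the unitary layers, and a depth-$(2\Delta+1)$ substitution of each block by the SWAP-conjugated circuit $\cU_{\pi_\alpha}^{-1}\circ(\cU_{\pi_\alpha}\circ\tilde{\cC}_{t,\alpha}\circ\cU_{\pi_\alpha}^{-1})\circ\cU_{\pi_\alpha}$.

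The only wrinkle is the symmetric blocks $\mathsf{id}^{\circ\Delta}\circ\tilde{\cC}_{t,\alpha}\circ\mathsf{id}^{\circ\Delta}$, which you do not need. With your grouping of $\tilde{\cC}^{\mathsf{pm}}_t$ with $\tilde{\cC}_{t,1}$, the inflation lemma directly produces blocks $\mathsf{id}^{\circ 2\Delta}\circ\tilde{\cC}_{t,\alpha}$, which are exactly what the paper substitutes. The substitution lemma only requires the replacement to have the same depth and the same unitary action, so neither the relabeling nor the fallback of absorbing surplus identities is needed.
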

\begin{proof}
Let
\begin{align}
\cC''&=\left(\left(\cC'_{D',\chi'}\circ \cdots\circ \cC'_{D',1}\right)\circ
\cC'^{\mathsf{pm}}_{D'}\right)
\circ \cdots\circ 
 \left(\left(\cC'_{1,\chi'}\circ \cdots\circ \cC'_{1,1}\right)\circ \cC'^{\mathsf{pm}}_1\right)
  \end{align}
  be the depth-$[(\chi'+1)\cdot D']$ circuit obtained by 
  applying the moves given in Lemma~\ref{lem:circuitnormalform}~\eqref{it:normalformGlocalsteps}, i.e., the normal form of the $G'$-local circuit~$\cC'$.
  Applying a depth-$2\Delta$ inflation move to each of the $\cC_{t,\alpha}'$ layers gives the circuit
\begin{align}
\cC''&=\left(\left(\cC''_{D',\chi'}\circ \cdots\circ \cC''_{D',1}\right)\circ
\cC'^{\mathsf{pm}}_{D'}\right)
\circ \cdots\circ 
 \left(\left(\tilde{\cC}''_{1,\chi'}\circ \cdots\circ \tilde{\cC}''_{1,1}\right)\circ \cC'^{\mathsf{pm}}_1\right)
  \end{align}
  where for each $t\in [D']$ and $\alpha\in [\chi']$, the circuit
\begin{align}
\cC''_{t,\alpha}:=\mathsf{id}^{\circ 2\Delta}\circ \cC'_{t,\alpha}
\end{align}
is  unitary and of depth~$2\Delta+1$. In particular, this means that~$\cC''$ is of depth $(\chi'(2\Delta+1)+1)D'$.

We note that since for each $t\in [D']$, the operation  layer $\cC'^{\mathsf{pm}}_t$ is a tensor product of single-qubit operations (measurements, state preparation or identities), each layer~$\cC'^{\mathsf{pm}}_t$ is $G$-local. It remains to argue that the remaining operation layers can be realized by $G$-local operations. By construction,  for each~$t\in [D']$ and~$\alpha\in [\chi']$, the circuit~$\cC''_{t,\alpha}$
has non-trivial support only on the qubits associated with endpoints of edges belonging to~$E_\alpha$. By definition, it therefore has the same action as
\begin{align}
\mathsf{id}^{\circ 2\Delta}\circ \cC'_{t,\alpha}&= \cU_{\pi_\alpha}^{-1} \circ \big(\cU_{\pi_{\alpha}} \circ \cC'_{t,\alpha} \circ \cU_{\pi_\alpha}^{-1} \big) \circ \cU_{\pi_\alpha} = \cC_{t,\alpha} \ ,\label{eq:vdbadv}
\end{align}
where as in Eq.~\eqref{eq:equivalentcircuitupcmv}, each $\cC_{t,\alpha}$ can be realized as a $G$-local depth-$(2\Delta+1)$ circuit.
Applying a depth-$(2\Delta+1)$ substitution move to
$\cC''_{t,\alpha}$ by $\cC_{t,\alpha}$ thus gives the claim. 
\end{proof}

We often consider circuit families and associated graph families. We say one family of graphs $\Delta$-reduces to another if this is the case for each corresponding pair of graphs.  Following standard computer science terminology, we do not explicitly write out the family as  e.g., $G=\{G_n\}_{n\in \Gamma}$, $\Gamma\subseteq \mathbb{N}$.
\begin{corollary}[Compatible locality and local stochastic errors]\label{cor:compatiblev}
Let $G'=(V',E')$ and $G=(V,E)$ be (families of) graphs indexed by some subset of integers~$n$, where $V=V'=[n]$, such that $G^\prime$ $\Delta$-reduces to $G$. Assume that $\chi'=\chi'(G')=O(1)$
and  $\Delta=O(1)$  (i.e., a constant).
Let $\cC'$ be a depth-$D'$ $G'$-local circuit (where $D'=D'(n)$ can depend on~$n$). Then there is a $G$-local circuit $\cC$ of depth $\mathsf{depth}(\cC) = O(D')$ such that $\cC' \gtrsim \cC$.
\end{corollary}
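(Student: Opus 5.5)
The plan is to take $\cC$ to be exactly the $G$-local circuit produced by Lemma~\ref{lem:geometrychanges}, and then chain the noise relations of the individual moves using the preorder property of $\gtrsim$ (Lemma~\ref{lem:circuitpreorder}).

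\emph{Construction and depth.} By assumption $G'$ $\Delta$-reduces to $G$. Lemma~\ref{lem:geometrychanges} therefore gives a $G$-local circuit $\cC$ with the same action as $\cC'$ and
\begin{align}
\mathsf{depth}(\cC)=(\chi'(2\Delta+1)+1)D'\ .
\end{align}
Since $\chi'=O(1)$ and $\Delta=O(1)$, this is $O(D')$, with the implied constant independent of $n$.

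\emph{Sequence of moves.} By Lemma~\ref{lem:geometrychanges}(iii), $\cC$ arises from $\cC'$ by a fixed, $n$-independent number of moves. These are the moves of Lemma~\ref{lem:circuitnormalform}\eqref{it:normalformGlocalsteps}:
\begin{itemize}
\item a depth-$1$ inflation,
\item two depth-$2$ substitutions,
\item a depth-$(\chi'-1)$ inflation,
\item a substitution.
\end{itemize}
These are followed by a depth-$2\Delta$ inflation and a depth-$(2\Delta+1)$ substitution, so seven moves in total.

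\emph{Constants of each move.} For each move I invoke the relevant lemma.
\begin{itemize}
\item Inflation moves: Lemma~\ref{lem:circuitinflation} gives the relation with $(\Lambda,\lambda)=(2(m+1)^{1/2},1/(2(m+1)))$. Here $m\in\{1,\chi'-1,2\Delta\}$ is constant.
\item Substitution moves: Lemma~\ref{lem:subcircuitsubstitution} applies. Its constants come from the cleaning Lemma~\ref{lem:cleaninglemma} and depend only on the maximal rectangle depth, which is $2$, $\chi'$ or $2\Delta+1$, hence constant.
\end{itemize}
Each move thus yields a relation $\gtrsim_{(\Lambda_i,\lambda_i)}$ whose constants depend only on $\chi'$ and $\Delta$. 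The relations point in the direction in which these lemmas are stated: a noisy execution of the transformed circuit equals a noisy execution of the pre-move circuit, at noise strength $\Lambda_i\tilde p^{\lambda_i}$.

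\emph{Chaining.} Iterating Lemma~\ref{lem:circuitpreorder} over the seven moves gives a single relation with $\Lambda=\prod$-type combinations of the $\Lambda_i$ and $\lambda=\prod_i\lambda_i>0$. Both are independent of $n$ and of $D'$, which is exactly what the corollary requires for the whole family.

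One caveat concerns direction. The moves transform $\cC'$ into $\cC$, so the lemmas as stated produce the relation in the direction ``noisy $G$-local circuit $\cC$ equals noisy $\cC'$''. This is the direction needed to transfer a threshold from $\cC'$ to $\cC$, and I read the relation in the corollary in that sense.

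The main obstacle is checking that each substitution move is legitimate. The rectangles used must be valid, pairwise disjoint and unitary, with replacement subcircuits of equal depth. For the final substitution, a rectangle consists of the support of $E'_\alpha$ over the $2\Delta+1$ layers following $\cC'_{t,\alpha}$. In the conjugated form $\cU_{\pi_\alpha}^{-1}\circ(\cdots)\circ\cU_{\pi_\alpha}$ of Eq.~\eqref{eq:vdbadv}, the SWAP layers may touch qubits outside that support. I would handle this by taking the rectangle to be all $n$ qubits over those $2\Delta+1$ layers. This is valid because the inflated layers are identities on the remaining qubits, and Lemma~\ref{lem:subcircuitsubstitution} only needs the depth, not the width, of the rectangle to be constant.
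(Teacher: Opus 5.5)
Your proposal is correct and matches the paper's (one-line) proof: take $\cC$ from Lemma~\ref{lem:geometrychanges}, then chain the constant-parameter relations from Lemmas~\ref{lem:circuitinflation} and~\ref{lem:subcircuitsubstitution} via Lemma~\ref{lem:circuitpreorder}, with constants depending only on $\chi'$ and $\Delta$. Your remark on direction is also right: the lemmas give $\cC\gtrsim\cC'$, which is the direction the paper actually uses ($\cC^{\mathsf{1D}}\gtrsim\cC^{\mathsf{bilinear}}$ in the proof of Theorem~\ref{thm:ftpathgraph}), so the $\cC'\gtrsim\cC$ in the statement should be read that way.
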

\begin{proof}
This is an immediate consequence of Lemma~\ref{lem:geometrychanges} for geometry changes as well as
Lemmas~\ref{lem:circuitinflation} and~\ref{lem:subcircuitsubstitution} characterizing the transformation of local stochastic Pauli noise under circuit inflation and substitution.
\end{proof}

For our purposes, the most important  application of Corollary~\ref{cor:compatiblev} will be to translate between the path graph
$\pathgraph_{2r}=([2r],E(2r))$ with an even number $n=2r$ (where we assume $r$ to be even also) of vertices and edges
\begin{align}
E(2r)&=\{(j,j+1)\ |\ j\in [2r-1]\}\ ,
\end{align}
and the bilinear array $B'_{2r}=([2r],E'(2r))$ defined by the edges
\begin{align}
E'(2r)&=E_1'\cup E'_2\cup E'_3
\end{align}
where
\begin{align}
\begin{matrix}
E_1'&=&\left\{(2a-1,2a)\ |\ a\in [r-1]\right\}\\
E'_2&=&\left\{(4a-3,4a-1)\ |\ a\in [r/2]\right\}\cup \left\{(4a-2,4a)\ |\ a\in [r/2]\right\}\\
E'_3&=&\left\{(4a-1,4a+1)\ |\ a\in [r/2-1]\right\}\cup \left\{(4a,4a+2)\ |\ a\in [r/2-1]\right\}\ ,
\end{matrix}
\end{align}
see Fig.~\ref{fig:lineversusbilineararray}
for an illustration.
The disjoint partition $E'(2r)=E_1'\cup E'_2\cup E'_3$ defines a $3$-coloring of the edges of~$B'_{2r}$. 
In particular, with 
\begin{align}
\pi_1&=\mathsf{id}\\
\pi_2 &= \prod_{k=0}^{r/2-1}\left(4k+2\quad 4k+3\right)\\
\pi_3 &= \prod_{k=1}^{r/2-1}  \left(4k\quad 4k+1\right)
\end{align}
one can check that the path graph $\pathgraph_{2r}$ $\Delta$-reduces to the bilinear array graph~$B_{2r}$ with $\Delta=1$, see Fig.~\ref{fig:lineversusbilineararray}.

\subsection{Fault-tolerant implementation of prepare-and-measure Clifford circuits in 1D\label{ft:ftimplementationprepmeasure}}
The fact that the path graph~$\pathgraph_{2r}$ reduces to the bilinear array graph~$B_{2r}$
implies that  a fault tolerance construction on the bilinear array~$B_{2r}$ translates into a fault tolerance construction on the path graph~$\pathgraph_{2r}$ (which is ``truly'' 1D).

 In particular, invoking the fault tolerance construction of~\cite{stephens2007universal} for prepare-and-measure circuits (see Theorem~\ref{thm:ftbilineararrayconstruct}), we immediately obtain a robust implementation of a $1D$-local Clifford circuit of prepare-and-measure form, see Result~\ref{res:ftimplementation1Dprepmeas}. We state this as follows to clarify how the thresholds are related.

\begin{theorem}[Fault tolerance on the path graph]\label{thm:ftpathgraph}
Let $p_*>0$ be the fault tolerance threshold error strength of the bilinear qubit array construction (see Theorem~\ref{thm:ftbilineararrayconstruct}).
Then there are constants $(\Lambda_{\mathsf{1D}},\lambda_{\mathsf{1D}})$ such that the following holds for any $\varepsilon>0$.  Let  $\cC$ be a $1D$-local prepare-and-measure circuit~$\cC$ composed of Clifford unitaries. Then there a circuit~$\cC_{FT}$
whose output distribution is $\varepsilon$-close in variational distance 
to that of $\cC$ in the presence of local stochastic Pauli noise of strength 
\begin{align}
p&\leq (p_*/\Lambda_{\mathsf{1D}})^{1/\lambda_{\mathsf{1D}}}\ .
\end{align}
The circuit~$\cC_{FT}$ is  $1D$-local and the construction has a polylogarithmic overhead. 
\end{theorem}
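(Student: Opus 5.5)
The plan is to transfer the bilinear-array construction of Theorem~\ref{thm:ftbilineararrayconstruct} to the path graph through the preorder $\gtrsim$. Let $\cC$ be a $1D$-local prepare-and-measure Clifford circuit and fix $\varepsilon>0$. First, view $\cC$ as local on a bilinear array by the trivial identification of the path graph with a snake through the array. Then apply Theorem~\ref{thm:ftbilineararrayconstruct} to obtain a circuit $\cC^{\mathsf{bilinear}}$, local on $B'_{2r}$ for suitable $r$, whose output distribution is $\varepsilon$-close to that of $\cC$ under any local stochastic Pauli noise of strength at most $p_*$. This circuit uses only Clifford gadgets for the Steane code: transversal Cliffords, SWAPs, Steane EC with ancilla decoding. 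Hence it is an adaptive Clifford circuit, composed of $\mathsf{Cliff}$, with linear read-once adaptive Pauli corrections. If the path graph has odd length or $r$ is odd, pad with idle qubits.

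Next, make the circuit essentially non-adaptive. By Lemma~\ref{lem:adaptivenonadaptivetransform} there is a circuit $\cC_1$ with the same interaction graph, adaptivity only in its last layer, and $\cC_1\gtrsim_{(1,1)}\cC^{\mathsf{bilinear}}$. The final Pauli layer only relabels the classical outcomes by an efficiently computable function, since the last operations are measurements, or Paulis immediately before them. So the output distribution is unchanged up to classical post-processing. This step matters because the geometry-change moves (inflation and substitution via Lemmas~\ref{lem:circuitinflation} and~\ref{lem:subcircuitsubstitution}) are stated for Clifford circuits whose substituted rectangles are unitary. The substitution rectangles in Lemma~\ref{lem:geometrychanges} consist only of the unitary layers $\cC'_{t,\alpha}$ of the normal form, so mid-circuit classically controlled Paulis must not sit inside them. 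I expect this bookkeeping, ensuring every rectangle used in the moves is a valid unitary rectangle, to be the main obstacle. If adaptive Paulis remain inside a unitary layer, I would first push them to the end with Lemma~\ref{lem:adaptivenonadaptiveClifford}. Alternatively, I would treat them as separate layers, placing them among the preparation/measurement layers in the alternating form.

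Then apply Corollary~\ref{cor:compatiblev}, using the stated fact that $\pathgraph_{2r}$ $\Delta$-reduces to $B'_{2r}$ with $\Delta=1$ and $\chi'=3$. This yields a $\pathgraph_{2r}$-local circuit $\cC_{FT}$ with $\mathsf{depth}(\cC_{FT})=O(\mathsf{depth}(\cC_1))$ and $\cC_1\gtrsim\cC_{FT}$. The transformation direction used in the corollary's proof is that a noisy execution of the transformed circuit is equivalent to a noisy execution of the original with noise $\Lambda\tilde p^\lambda$. By Lemma~\ref{lem:circuitpreorder} the constants compose into a pair $(\Lambda_{\mathsf{1D}},\lambda_{\mathsf{1D}})$ independent of $\cC$ and $\varepsilon$. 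They are fixed products of the constants from a constant number of inflation and substitution moves, at most of depth $2\Delta+1=3$, together with the alternating-form and normal-form steps.

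Finally, conclude. Take any local stochastic Pauli noise $\tilde F$ of strength $p\le(p_*/\Lambda_{\mathsf{1D}})^{1/\lambda_{\mathsf{1D}}}$ on $\cC_{FT}$. There is noise $F$ on $\cC^{\mathsf{bilinear}}$ of strength $\Lambda_{\mathsf{1D}}p^{\lambda_{\mathsf{1D}}}\le p_*$ with $\tilde F\bowtie\cC_{FT}=F\bowtie\cC^{\mathsf{bilinear}}$, up to the known final Pauli frame. Theorem~\ref{thm:ftbilineararrayconstruct} therefore gives $\varepsilon$-closeness of the output distributions. The overhead remains polylogarithmic because every move increases the depth by a constant factor and leaves the qubit count unchanged.
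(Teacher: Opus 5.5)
Your proposal is correct and follows the paper's proof: it takes the bilinear-array construction $\cC^{\mathsf{bilinear}}$, applies Corollary~\ref{cor:compatiblev} with $\Delta=1$ to pass to the path graph, and reads off the constants $(\Lambda_{\mathsf{1D}},\lambda_{\mathsf{1D}})$ from the noise-strength map, using that noise on the path-graph circuit corresponds to noise on $\cC^{\mathsf{bilinear}}$. The paper applies the corollary directly to $\cC^{\mathsf{bilinear}}$, so your preliminary de-adaptivization is a harmless extra precaution and the snake embedding is unnecessary; also, the relation you actually use should be written $\cC_{FT}\gtrsim\cC_1$, matching the paper's $\cC^{\mathsf{1D}}\gtrsim\cC^{\mathsf{bilinear}}$.
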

\begin{proof}
Let $\cC^{\mathsf{bilinear}}$ be the fault-tolerant implementation of~$\cC$ given in Theorem~\ref{thm:ftbilineararrayconstruct}. We use the fact that the path graph $\pathgraph_{2R}$ $\Delta$-reduces to the bilinear array graph~$B_{2R}$ with~$\Delta=1$, see the remarks after Corollary~\ref{cor:compatiblev}. 
This means we can use Corollary~\ref{cor:compatiblev}
to obtain a $1D$-local circuit 
$\cC^\mathsf{1D}$ of depth
\begin{align}
\mathsf{depth}(\cC^\mathsf{1D})&=O\left(\mathsf{depth}(\cC^{\mathsf{bilinear}})\right)
\end{align}
which uses 
\begin{align}
N(\cC^\mathsf{1D})&= N(\cC^{\mathsf{bilinear}})
\end{align}
qubits and is such that $\cC^{\mathsf{1D}} \gtrsim \cC^{\mathsf{bilinear}}$. In particular, there exist constants $\Lambda_{\mathsf{1D}},\lambda_{\mathsf{1D}}>0$ such that every noisy implementation of $\cC^{\mathsf{1D}}$ under stochastic Pauli noise of strength $p$ is equivalent to a noisy implementation of $\cC^{\mathsf{bilinear}}$ under Pauli noise of strength at most $\Lambda_{\mathsf{1D}}p^{\lambda_{\mathsf{1D}}}$.
The claim follows from this and Theorem~\ref{thm:ftbilineararrayconstruct}. 
\end{proof}

\begin{figure}
\centering
\begin{subfigure}{0.12\textwidth}
\centering
\includegraphics[height=5cm]{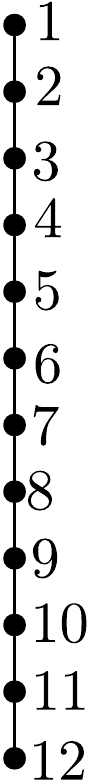}
\caption{The path graph~$\pathgraph_{2r}$\label{fig:linegraphfigure}}
\end{subfigure}\quad
\begin{subfigure}{0.3\textwidth}
\centering
\includegraphics[height=5cm]{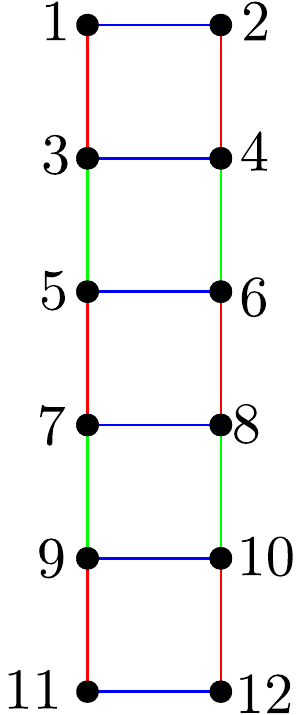}
\caption{The bilinear array $B_{2r}$ with the partition of edges~$E_1'$ (blue), $E'_2$ (red) and $E'_3$ (green) .\label{fig:bilineararray}}
\end{subfigure}\quad
\begin{subfigure}{0.25\textwidth}
\centering
\includegraphics[height=5cm]{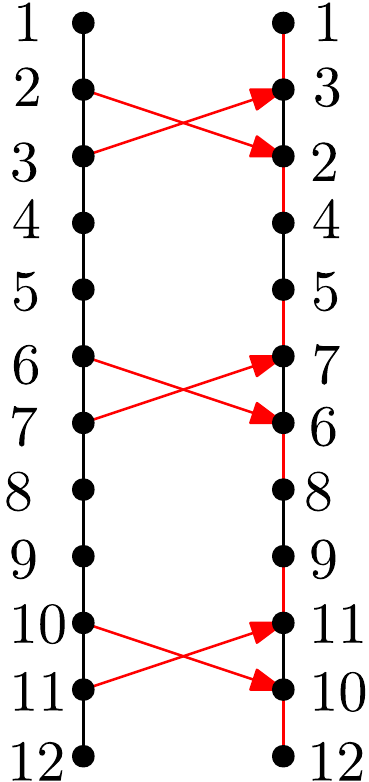}
\caption{The permutation $\pi_2$}
\end{subfigure}\quad
\begin{subfigure}{0.25\textwidth}
\centering
\includegraphics[height=5cm]{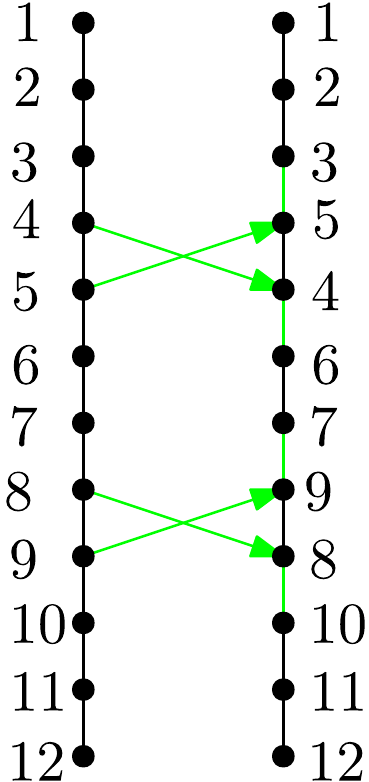}
\caption{The permutation $\pi_3$}
\end{subfigure}
\caption{The path graph~$\pathgraph_{2r}$ and the bilinear array graph $B_{2r}$ illustrated for $r=6$\label{fig:lineversusbilineararray}}
\end{figure}

\section{Robust, geometrically local implementation of circuits\label{sec:ftimplementcliff}}
In this section we explain how to fault-tolerantly implement circuits under certain structural restrictions. We then consider different geometries:
In  Section~\ref{sec:implementationbilinear}, we discuss geometrically local circuit implementation in bilinear arrays, before upgrading this to truly $1D$-local circuit implementation in Section~\ref{sec:ftimplementation1D}.
 
\subsection{Fault-tolerant circuit implementations in a bilinear array\label{sec:implementationbilinear}}
Applying the gadgets from Ref.~\cite{stephens2007universal}  (see Section~\ref{sec:bilineararrayfaulttolerance})
with Lemma~\ref{lem:failureprobencdec}, we obtain the following result. It applies to Clifford circuits~$\cC$ with any number of input respectively output qubits.
\begin{lemma}[Robust Clifford implementation in a bilinear array]\label{lem:bilinearimplementation}
Let $\mathsf{g}\geq 49$ be the constant from Theorem~\ref{thm:robustimplementationstochastic}. 
Let $\cC$ be a 1D-local (unitary) Clifford circuit on $N$-qubits. 
For $L \in \NN$  define 
\begin{align}
f^{\mathsf{bilinear}}_L(p):=
\left\{ \begin{array}{ll}
        p\cdot 4\mathsf{g} N + p_\ast (p / p_\ast)^{2^L} |\mathsf{Loc}(\cC)| & \text{for $p \leq p_\ast / 2$} \\
        1& \text{otherwise}
    \end{array}\right.\ ,\label{eq:fbilinearlp}
\end{align}
Then there is an adaptive Clifford circuit $\cC^{\mathsf{bilinear}}$ which $f_L$-robustly implements $\cC$ with the following properties:
\begin{enumerate}[(i)]
    \item The circuit $\cC^{\mathsf{bilinear}}$ is $B_{2R}$-local, where $B_{2R}$ is the bilinear array graph with $R$ qubits on each line (see Fig.~\ref{fig:bilineararray}) with
\begin{align}
R&=N e^{\theta(L)}\ .\label{eq:Rnelbound}
\end{align}
    \item The depth of $\cC^{\mathsf{bilinear}}$ is bounded as
\begin{align}\label{eq:dnelbound}
\mathsf{depth}(\cC^{\mathsf{bilinear}})&=\mathsf{depth}(\cC)e^{\Theta(L)}\ .
\end{align}
\end{enumerate}
Moreover, the $N_{\mathsf{in}}$ input qubits of $\cC^{\mathsf{bilinear}}$ are located on sites $Q_1,\dots,Q_{N_{\mathsf{in}}}$ which have pairwise distance  $e^{\Theta(L)}$ on the path graph. The $N_{\mathsf{out}}$ output qubits are similarly separated.
\end{lemma}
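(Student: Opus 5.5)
The plan is to instantiate Lemma~\ref{lem:failureprobencdec} with the $[[7,1,3]]$ Steane code and the bilinear-array gadgets of Ref.~\cite{stephens2007universal}, augmented by geometrically local encoding and decoding gadgets. Concretely, I will set
\begin{align}
\cC^{\mathsf{bilinear}}:=\left( \cdec{L\rightarrow 0} \right)^{\otimes N_{\mathsf{out}}} \circ \cC^{(L)} \circ \left( \cenc{0 \rightarrow L} \right)^{\otimes N_{\mathsf{in}}}\ ,
\end{align}
where $\cC^{(L)}$ is the level-$L$ simulation built from the gadgets of~\cite{stephens2007universal}. These gadgets satisfy the conditions of Fig.~\ref{fig:ft-definitionsgadgets}, and level reduction (Lemma~\ref{lem:level reduction}) holds for them via the extended-rectangle analysis of~\cite{stephens2007universal}.

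For $\enc^{(1)},\dec^{(1)}$, I will take the Clifford encoding unitary of the Steane code together with its inverse, followed by a syndrome-dependent Pauli correction. Each is compiled into nearest-neighbour operations on a constant-size piece of the bilinear array using SWAP gadgets. The recursive substitution rules then make $\cenc{0\rightarrow L}$ and $\cdec{L\rightarrow 0}$ local on the array automatically. I include these gadgets among the $1$-Gas, so that $\nmax,\dmax$ are enlarged by constants and $\mathsf{g}=\nmax\dmax$ remains the constant of Theorem~\ref{thm:robustimplementationstochastic}.

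The bound on the failure probability then follows directly from Eq.~\eqref{eq:flprobust}. With $t=1$, $(t+1)^L=2^L$, and $2\nmax\dmax(N_{\mathsf{in}}+N_{\mathsf{out}})\le 4\mathsf{g}N$ because $N_{\mathsf{in}},N_{\mathsf{out}}\le N$. The function $f_L$ is therefore dominated by $f^{\mathsf{bilinear}}_L$.

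For the resource bounds, I will use Lemma~\ref{lem:failureprobencdec}, which gives $N(\cC^{\mathsf{bilinear}})=N e^{\Theta(L)}$ and $\mathsf{depth}=\mathsf{depth}(\cC)e^{\Theta(L)}+e^{\Theta(L)}$. Since $\mathsf{depth}(\cC)\ge1$, the second expression is $\mathsf{depth}(\cC)e^{\Theta(L)}$. Laying the qubits out on $B_{2R}$ then gives $R=Ne^{\Theta(L)}$.

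The geometric statement about the inputs also needs an argument. Because $\cC$ is $1D$-local and the simulation of~\cite{stephens2007universal} maps the $j$-th logical qubit to a contiguous block of the array, the level-$L$ blocks are laid out in order along the array. Each block has $e^{\Theta(L)}$ physical qubits, with lower bound $7^L$, and these blocks include the ancilla blocks that the gadgets need. I will place the physical input qubit $Q_j$ at the site where $\cenc{0\rightarrow L}$ for logical qubit $j$ begins. Consecutive $Q_j$ are then separated by a full block, so the distance between them is $e^{\Theta(L)}$; the upper bound uses $\nmax^{L+1}$. The output qubits are handled in the same way.

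I expect the main obstacle to be certifying that level reduction and the locality structure both survive once the encoder and decoder gadgets are placed on the array. Two points need care. First, the SWAP-compiled $\enc^{(1)}$ and $\dec^{(1)}$ must still satisfy Figs.~\ref{fig:encodingdecodinggadgetcorrectness} and~\ref{fig:decodergadgetprop}. Only fault-free behaviour is needed there, since the proof of Lemma~\ref{lem:levelreductionencdec} localizes the noise away from these gadgets. Second, the layout must keep each encoder and decoder within its own block. To handle both, I will argue that faults in the level-$1$ encoder and decoder are never needed in the level-reduction step, and that the decoder's classically controlled correction is a $0$-Ga, so adaptivity introduces no locality issue. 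The circuit is adaptive Clifford, as the lemma allows.
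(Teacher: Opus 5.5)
Your proposal is correct and follows the paper's proof. Like the paper, you define $\cC^{\mathsf{bilinear}}$ via Eq.~\eqref{eq:cftconstruction} with the Steane-code gadgets of~\cite{stephens2007universal}, specialize Eq.~\eqref{eq:flprobust} using $t=1$ and $N_{\mathsf{in}},N_{\mathsf{out}}\le N$, read off qubit count and depth from Lemma~\ref{lem:failureprobencdec} with $2R=N(\cC^{\mathsf{bilinear}})$, and obtain the input/output spacing from the contiguous $e^{\Theta(L)}$-sized blocks. Your added care about compiling $\enc^{(1)},\dec^{(1)}$ locally on the array, which the paper leaves implicit, is consistent with it. Since Lemma~\ref{lem:itlevelreductionencdec} already counts these gadgets among the $1$-Gas, $\mathsf{g}$ is the same constant as in Theorem~\ref{thm:robustimplementationstochastic}, rather than something enlarged relative to it.
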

\begin{proof}
As in the proof of Theorem~\ref{thm:robustimplementationstochastic}, we define $\cC^{\mathsf{bilinear}}$ by Eq.~\eqref{eq:cftconstruction} using the gadgets for the $[[7,1,3]]$-code introduced in Ref.~\cite{stephens2007universal}. Since the latter are constructed 
for the bilineary array geometry, this ensures that the circuit~$\cC^{\mathsf{bilinear}}$
is $B_{2R}$-local on a bilinear array~$B_{2R}$. 
The function~$f^{\mathsf{bilinear}}_L$ introduced in Eq.~\eqref{eq:fbilinearlp}
 is obtained by specializing Eq.~\eqref{eq:flprobust} to the $[[7,1,3]]$-code and using that $N_{\mathsf{in}},N_{\mathsf{out}}\leq N$.

By Lemma~\ref{lem:failureprobencdec}
 and because $2R=N(\cC^{\mathsf{bilinear}})$ we have 
    \begin{align}
\begin{matrix}
7^L N&\leq    &     2R &\leq &\nmax^L N + \nmax^{L+1}(N_{\mathsf{out}} + N_{\mathsf{in}})\\
  \dmin^L\mathsf{depth}(\cC)+2\dmin^{L-1}& \leq& \mathsf{depth}(\cC^{\mathsf{bilinear}}) &\leq& \dmax^L \mathsf{depth}(\cC) + 2\dmax^{L+1}
\end{matrix}\ .
    \end{align}
The  bounds~\eqref{eq:Rnelbound} and~\eqref{eq:dnelbound} on $R$ and $\mathsf{depth}(\cC^{\mathsf{bilinear}})$ follow from this.

The statement about the distance between the input qubits follows from the fact that the $N_{\mathsf{in}}$ logical qubits, after encoding, are associated with contiguous blocks of physical qubits placed next to each other on the bilinear array by construction (see Ref.~\cite{stephens2007universal}). Here each block consists of the data qubit associated with one logical qubit and the auxiliary qubits used to act on it.  In particular, such a block consists of
\begin{align}
N(\cC^{\mathsf{bilinear}})&=
N\left(\left( \cdec{L\rightarrow 0} \right)^{\otimes N_{\mathsf{out}}} \circ \cC^{(L)} \circ \left(\cenc{0\rightarrow L}\right)^{\otimes N_{\mathsf{in}}}\right)\\
&=e^{\Theta(L)}
\end{align}
qubits, where we used 
Lemma~\ref{lem:qubitdepthoverheadsimulation}
 and Lemma~\ref{lem:qubitdepthoverheadsimulationb}. The claim follows from this.
\end{proof}

\subsection{Fault-tolerant implementation in 1D\label{sec:ftimplementation1D}}
Here we argue that the construction given in Lemma~\ref{lem:bilinearimplementation} can be used to obtain a fault-tolerant $1D$-local circuit implementation scheme. The corresponding parameters are  as follows.
\begin{lemma}\label{lem:implementation1D}
Let $\mathsf{g}\geq 49$ and $p_\ast>0$ be the constants from Theorem~\ref{thm:robustimplementationstochastic}. 
 and Lemma~\ref{lem:bilinearimplementation}, respectively.  
There are constants $\Lambda_{\mathsf{1D}},\lambda_{\mathsf{1D}} > 0$   such that the following holds.
Let $\cC$ be a 1D-local (unitary) Clifford circuit on $N$ qubits with any number of input and output qubits. 
 For any $L \in \NN$ define the function $f_L^{\mathsf{1D}} : [0,1] \rightarrow \RR$ by
\begin{align}
    f_L^{\mathsf{1D}}(p) &:= \left\{ \begin{array}{ll}
        4\mathsf{g} \Lambda_{\mathsf{1D}}p^{\lambda_{\mathsf{1D}}} N + p_\ast \left(\frac{\Lambda_{\mathsf{1D}}p^{\lambda_{\mathsf{1D}}}}{p_\ast} \right)^{2^L} |\mathsf{Loc}(\cC)|& \text{for $p\leq \left(\frac{p_\ast}{2\Lambda_{\mathsf{1D}}}\right)^{1/\lambda_{\mathsf{1D}}}$} \\
        1 & \text{otherwise}
    \end{array}\right.\  .\label{eq:upperboundfrobust1D}
\end{align} Then there is an adaptive Clifford circuit $\cC^{\mathsf{1D}}$ which $f_L^{\mathsf{1D}}$-robustly implements $\cC$ with the following properties:
\begin{enumerate}[(i)]
    \item The circuit $\cC^{\mathsf{1D}}$ is 1D-local, i.e., local with respect to the path graph $\pathgraph_{2R}$ (see Fig.~\ref{fig:linegraphfigure}) on $2R$ qubits, where $R=N e^{\Theta(L)}$.
    \item The depth of $\cC^{\mathsf{1D}}$ satisfies 
$\mathsf{depth}(\cC^{\mathsf{1D}})=e^{\Theta(L)}$.
\item
The $N_{\mathsf{in}}$ input qubits  of $\cC^{\mathsf{1D}}$ are located on sites $Q_1,\dots,Q_{N_{\mathsf{in}}}$ which have pairwise distance~$e^{\Theta(L)}$ on the path graph~$P_{\mathsf{2R}}$. The $N_{\mathsf{out}}$ output qubits are similarly separated.
\end{enumerate}
\end{lemma}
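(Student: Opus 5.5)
The plan is to start from the bilinear-array implementation of Lemma~\ref{lem:bilinearimplementation} and transfer it to the path graph with Corollary~\ref{cor:compatiblev}, much as in the proof of Theorem~\ref{thm:ftpathgraph}.

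First, fix $L$ and let $\cC^{\mathsf{bilinear}}$ be the adaptive Clifford circuit from Lemma~\ref{lem:bilinearimplementation}, which $f^{\mathsf{bilinear}}_L$-robustly implements $\cC$ on $B_{2R}$ with $R=Ne^{\Theta(L)}$. Since this circuit is adaptive, apply Lemma~\ref{lem:adaptivenonadaptivetransform} to push all adaptive Paulis into the last layer. This costs nothing in noise strength, since the relation is $\gtrsim_{(1,1)}$. Because the final adaptive layer consists of single-qubit (classically controlled) Paulis only, it is trivially local on any graph.

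Next, the path graph $\pathgraph_{2R}$ $\Delta$-reduces to $B_{2R}$ with $\Delta=1$ and $\chi'=3$, using the permutations $\pi_1,\pi_2,\pi_3$ given after Corollary~\ref{cor:compatiblev}. That corollary therefore yields a $\pathgraph_{2R}$-local circuit $\cC^{\mathsf{1D}}$ of depth $O(\mathsf{depth}(\cC^{\mathsf{bilinear}}))$ with $\cC^{\mathsf{1D}}\gtrsim_{(\Lambda_{\mathsf{1D}},\lambda_{\mathsf{1D}})}\cC^{\mathsf{bilinear}}$. The constants $\Lambda_{\mathsf{1D}},\lambda_{\mathsf{1D}}$ come from the fixed, constant-length sequence of inflation and substitution moves in Lemmas~\ref{lem:circuitnormalform} and~\ref{lem:geometrychanges}, composed via Lemma~\ref{lem:circuitpreorder}. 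They are therefore independent of $N$, $L$ and $\cC$.

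Then, given local stochastic Pauli noise $F\sim\cN^{\pauli}_{\cC^{\mathsf{1D}}}(p)$, we obtain $F'\sim\cN^{\pauli}_{\cC^{\mathsf{bilinear}}}(\Lambda_{\mathsf{1D}}p^{\lambda_{\mathsf{1D}}})$ with $F\bowtie\cC^{\mathsf{1D}}=F'\bowtie\cC^{\mathsf{bilinear}}$. Applying the robustness of $\cC^{\mathsf{bilinear}}$ (Definition~\ref{def:ftimplementation}) gives
\[
\Pr[F\bowtie\cC^{\mathsf{1D}}\neq\cC]\leq f^{\mathsf{bilinear}}_L(\Lambda_{\mathsf{1D}}p^{\lambda_{\mathsf{1D}}}).
\]
Substituting into Eq.~\eqref{eq:fbilinearlp} gives exactly Eq.~\eqref{eq:upperboundfrobust1D}. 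The case split $p\le(p_*/(2\Lambda_{\mathsf{1D}}))^{1/\lambda_{\mathsf{1D}}}$ is precisely the condition $\Lambda_{\mathsf{1D}}p^{\lambda_{\mathsf{1D}}}\le p_*/2$.

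The geometric claims follow directly. The qubit count is unchanged at $2R$. The depth is a constant multiple of $\mathsf{depth}(\cC^{\mathsf{bilinear}})=\mathsf{depth}(\cC)e^{\Theta(L)}$; for the statement as written, $\mathsf{depth}(\cC)$ is absorbed, reading the claim for fixed $\cC$. The moves only permute qubits inside local SWAP sandwiches that are undone, so the vertex labelling is preserved. Hence the input and output sites remain those of Lemma~\ref{lem:bilinearimplementation}, and the separation $e^{\Theta(L)}$ carries over, since the vertex labelling $[2R]$ of $B_{2R}$ agrees with the path order up to constant distortion.

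The main obstacle is making sure the $\gtrsim$ machinery applies to circuits with quantum inputs and outputs, and to the adaptive gadgets (ancilla decoding) of the bilinear construction. Two points need checking. First, the adaptive-to-nonadaptive reduction must keep the robust-implementation property; here I would argue that the equality $F\bowtie\cC^{\mathsf{1D}}=F'\bowtie\cC^{\mathsf{bilinear}}$ holds as CPTP maps, Pauli frame included. Second, the inflation moves must not place noise on input and output wires in a way that breaks Lemma~\ref{lem:bilinearimplementation}; this is handled because Lemma~\ref{lem:bilinearimplementation} already treats noisy encoders and decoders.
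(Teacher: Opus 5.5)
Your proposal is correct and follows essentially the same route as the paper. Like the paper, you take $\cC^{\mathsf{bilinear}}$ from Lemma~\ref{lem:bilinearimplementation}, use the $\Delta=1$ reduction between the path graph and the bilinear array via Corollary~\ref{cor:compatiblev} to obtain a $1D$-local $\cC^{\mathsf{1D}}\gtrsim\cC^{\mathsf{bilinear}}$ with unchanged qubit sites, and read off $f^{\mathsf{1D}}_L(p)=f^{\mathsf{bilinear}}_L(\Lambda_{\mathsf{1D}}p^{\lambda_{\mathsf{1D}}})$. The only difference is your preliminary application of Lemma~\ref{lem:adaptivenonadaptivetransform}, which the paper omits by applying the geometry change directly to the adaptive circuit; this step is harmless, since it costs only $(1,1)$, and it arguably makes the use of the inflation and substitution moves cleaner.
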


\begin{proof}
The proof is analogous to that of Theorem~\ref{thm:ftpathgraph}.
Starting from a robust implementation $\cC^{\mathsf{bilinear}}$ of~$\cC$ on the bilinear qubit array (see Lemma~\ref{lem:bilinearimplementation}), one uses the fact that the path graph $\pathgraph_{2R}$ $1$-reduces to the bilinear array graph~$B_{2R}$. This results in a circuit~$\cC^{\mathsf{1D}}$ which is $1D$-local and has  identical functionality as~$\cC^{\mathsf{bilinear}}$. 
 The bounds on circuit depth, qubit count, and the locations of the input and output qubits then follow directly from 
the bounds given for~$\cC^{\mathsf{bilinear}}$ in  Lemma~\ref{lem:bilinearimplementation}.  Following similar reasoning as in the proof of Theorem~\ref{thm:ftpathgraph}, it can be argued that 
$\cC^\mathsf{1D}$ $f^\mathsf{1D}_L$-robustly implements $\cC$ with 
\begin{align}
f^\mathsf{1D}_{L}(p)&=f^{\mathsf{bilinear}}_L(\Lambda_{\mathsf{1D}}p^{\lambda_{\mathsf{1D}}})\ 
\end{align}
as required.
\end{proof}

Lemma~\ref{lem:implementation1D}
implies the following general statement. 
\begin{theorem}[Fault-tolerant $1D$-local Clifford circuit implementation]
\label{thm:implementation1D}
Let $N\in\mathbb{N}$ and  $\varepsilon>0$ be arbitrary. 
Let $\mathsf{g}\geq 49$, $p_\ast > 0$, $\Lambda_{\mathsf{1D}}$ and $\lambda_{\mathsf{1D}}$ be the constants from Lemma~\ref{lem:bilinearimplementation}.
Define 
\begin{align}
p_0^{\mathsf{1D}}(N,\varepsilon):=\min\left\{
\left(\frac{p_\ast}{2\Lambda_{\mathsf{1D}}}\right)^{1/\lambda_{\mathsf{1D}}}, \left(\frac{\varepsilon}{8N\mathsf{g}\Lambda_{\mathsf{1D}}}\right)^{1/\lambda_{\mathsf{1D}}}\right\}\ ,\label{eq:pzerooneddef}
\end{align}
Let $\cC$ be a $1D$-local (unitary)  Clifford circuit~$\cC$ on $N$~qubits, and let 
\begin{align}
\upperboundL&\geq |\mathsf{Loc}(\cC)|
\end{align}
be an upper bound on the number of circuit locations of~$\cC$. 
 Then there is an adaptive Clifford circuit $\cC^{\mathsf{1D}}$
which implements the identical functionality as~$\cC$ except with probability~$\varepsilon$ in the presence of local stochastic Pauli noise of any strength~$p\leq p^{\mathsf{1D}}_0(N,\varepsilon)$.
The circuit $\cC^{\mathsf{1D}}$ is 1D-local, 
and the $N_{\mathsf{in}}$ input qubits of $\cC^{\mathsf{1D}}$ are located on sites $Q_1,\dots,Q_{N_{\mathsf{in}}}$ which have pairwise distance $\Theta\left(\mathsf{poly}\left(\log 1/\varepsilon, \log\upperboundL\right)\right)$ on the path graph. The $N_{\mathsf{out}}$ output qubits are similarly separated. The circuit~$\cC^{\mathsf{1D}}$ has a number of qubits and depth which satisfy 
\begin{align}     N(\cC^{\mathsf{1D}})&= 
 N(\cC)\cdot \Theta\left(\mathsf{poly}\left(\log 1/\varepsilon, \log\upperboundL\right)\right)\\
       \mathsf{depth}(\cC^{\mathsf{1D}})&= 
  \mathsf{depth}(\cC)\cdot \Theta\left(\mathsf{poly}\left(\log 1/\varepsilon, \log\upperboundL\right)\right)\ .
\end{align}
\end{theorem}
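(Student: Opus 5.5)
The plan is to instantiate Lemma~\ref{lem:implementation1D} with a concatenation level $L$ chosen as a function of $(\varepsilon,\upperboundL)$, mirroring the proof of Theorem~\ref{thm:robustimplementationstochastic}. We take $\cC^{\mathsf{1D}}$ to be the circuit provided by Lemma~\ref{lem:implementation1D}, which $f^{\mathsf{1D}}_L$-robustly implements $\cC$. It then suffices to show that $f^{\mathsf{1D}}_L(p)\leq\varepsilon$ for all $p\leq p_0^{\mathsf{1D}}(N,\varepsilon)$, and that the overheads $e^{\Theta(L)}$ become $\Theta(\mathsf{poly}(\log 1/\varepsilon,\log\upperboundL))$ for our choice of $L$.

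First, $p\leq p_0^{\mathsf{1D}}(N,\varepsilon)\leq (p_*/(2\Lambda_{\mathsf{1D}}))^{1/\lambda_{\mathsf{1D}}}$ places us in the first branch of Eq.~\eqref{eq:upperboundfrobust1D}, and it also gives $\Lambda_{\mathsf{1D}}p^{\lambda_{\mathsf{1D}}}/p_*\leq 1/2$. The second bound in the definition~\eqref{eq:pzerooneddef} gives $\Lambda_{\mathsf{1D}}p^{\lambda_{\mathsf{1D}}}\leq \varepsilon/(8N\mathsf{g})$, so the first term satisfies $4\mathsf{g}\Lambda_{\mathsf{1D}}p^{\lambda_{\mathsf{1D}}}N\leq\varepsilon/2$. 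The second term is bounded by $p_*(1/2)^{2^L}|\mathsf{Loc}(\cC)|\leq p_*\upperboundL\, 2^{-2^L}$. This is at most $\varepsilon/2$ once $2^L\geq\log(2p_*/\varepsilon)+\log\upperboundL$, and since $p_*<1/2$ it suffices to take
\begin{align}
L_0:=\lceil\log(\log 1/\varepsilon+\log\upperboundL)\rceil\ .
\end{align}
Because $f^{\mathsf{1D}}_L$ is monotone in $p$ on the relevant range, the bound $f^{\mathsf{1D}}_{L_0}(p)\leq\varepsilon$ holds for every $p\leq p_0^{\mathsf{1D}}$.

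For the resource bounds, we use $g^{L_0}=\Theta((\log 1/\varepsilon+\log\upperboundL)^{\log g})$ for any constant $g>1$. This converts $e^{\Theta(L_0)}$ into $\Theta(\mathsf{poly}(\log1/\varepsilon,\log\upperboundL))$ in each of the following: the qubit count $2R=N e^{\Theta(L_0)}$, the depth, and the pairwise separation of the input and output sites, all taken from Lemma~\ref{lem:implementation1D}.

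The one point needing care is the depth statement. Lemma~\ref{lem:implementation1D} as stated gives $\mathsf{depth}(\cC^{\mathsf{1D}})=e^{\Theta(L)}$ without the factor $\mathsf{depth}(\cC)$. We therefore go back to Lemma~\ref{lem:bilinearimplementation}, which gives $\mathsf{depth}(\cC)e^{\Theta(L)}$. Corollary~\ref{cor:compatiblev} changes this only by a constant factor, and the encoding and decoding stages contribute an additive $e^{\Theta(L)}$ that is absorbed into it. Matching lower bounds follow from the $\dmin^L$ and $7^L$ bounds, which justifies writing $\Theta$ rather than $O$.
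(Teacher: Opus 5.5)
Your proposal is correct and follows essentially the same route as the paper: you take the circuit from Lemma~\ref{lem:implementation1D}, bound the first term of $f^{\mathsf{1D}}_L(p)$ by $\varepsilon/2$ using the definition of $p_0^{\mathsf{1D}}(N,\varepsilon)$, and choose $L_0=\lceil\log(\log 1/\varepsilon+\log\upperboundL)\rceil$ exactly as in the proof of Theorem~\ref{thm:robustimplementationstochastic}. Your extra step is also correct: you recover the factor $\mathsf{depth}(\cC)$ from Lemma~\ref{lem:bilinearimplementation}, because Lemma~\ref{lem:implementation1D} as stated gives only $e^{\Theta(L)}$, and this repairs a point the paper glosses over.
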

Setting $\upperboundL:=|\mathsf{Loc}(\cC)|$ 
and using that $|\mathsf{Loc}(\cC)|=\Theta(N(\cC)\cdot\mathsf{depth}(\cC))$ for any $1D$-local circuit~$\cC$ 
we have 
\begin{align}
\Theta\left(\mathsf{poly}\left(\log 1/\varepsilon, \log\upperboundL\right)\right)=\Theta\left(\mathsf{poly}\left(\log 1/\varepsilon, \log N(\cC),\log \mathsf{depth}(\cC)\right)\right)\ .
\end{align}
In particular, Theorem~\ref{thm:implementation1D} shows that the qubit and depth overhead of this construction is polylogarithmic (see  Result~\ref{res:resultft1dlocal}).
\begin{proof}
This is an immediate consequence of Lemma~\ref{lem:implementation1D} together with a suitable choice of~$L$: Let us write $p_0=p^{\mathsf{1D}}_0(N,\varepsilon)$ for brevity. 
Using Eq.~\eqref{eq:upperboundfrobust1D} 
we conclude that
\begin{align}
f_L^{\mathsf{1D}}(p_0)&\leq \varepsilon/2+p_* |\mathsf{Loc}(\cC)|\cdot \left(1/2\right)^{2^L}\ .
\end{align}
The remainder of the proof is analogous to that of 
Theorem~\ref{thm:robustimplementationstochastic}.
\end{proof}

\section{Fault tolerance with mid-circuit qubit resets}\label{sec:qubitresets}
In the following, we substitute Clifford gates  in the circuit~$\cC^{\mathsf{1D}}$ by gate teleportation circuits, see Section~\ref{sec:gateteleportationsubstitution}. Postponing Pauli corrections to the end of the circuit, we obtain a circuit~$\cC^{\mathsf{1D,res}}$ which robustly implements the original circuit~$\cC$ in~$1D$, see Section~\ref{sec:robustimplementationconcatenated}.
The circuit~$\cC^{\mathsf{1D,res}}$ has the additional property that each qubit reinitialized (i.e., reset to the state~$\ket{0}$) within a constant time (circuit depth). It combines both elements of the circuit model of computation (where Choi-Jamiolkowski states are prepared by regular circuits), as well as measurement-based computation (where operations are realized by  measurements).

\subsection{Gate teleportation substitution\label{sec:gateteleportationsubstitution}}
A  building block we use repeatedly is the fact that a one- or two-qubit Clifford unitary~$C$ can be implemented~\cite{gottesmanchuanggateteleport} by a
depth-$10$ adaptive Clifford circuit, the gate teleportation circuit shown in Fig.~\ref{fig:gateteleportationcircuit}. The latter first creates 
the Choi-Jamiolkowski state of $C$ (a $2$- respectively $4$-qubit state) and then performs two Bell measurements. 
The resulting state is the input state with the Clifford applied up to a Pauli correction determined by the measurement result.

In the following, we consider transforming a circuit by replacing one- and two-qubit Clifford gates by the teleportation circuits in Fig.~\ref{fig:gateteleportationtwoqubit}. For simplicity (to keep the substitution compatible with the remainder of the circuit), we replace suitable subcircuits of the right depth. Concretely, we assume that we are replacing a subcircuit acting on one or two qubits, which consists in the application of a one- or two-qubit Clifford gate and 9 ``idle'' steps (wait locations). 

In more detail, let $\cC=\cC_D\circ \cdots \circ\cC_1$ be a depth-$D$ Clifford circuit on $n$~qubits. 
Let $C$ be a single- or two-qubit Clifford unitary. 
A rectangle~$\cR=\cR(\Omega,t,\Delta)\subset \cW_{\cC}$
is called a valid Clifford subrectangle or simply a $C$-rectangle if the following holds: its depth is $\Delta=10$, $|\Omega|\in \{1,2\}$ depending on whether~$C$ is a one- or two-qubit gate, and 
\begin{align}
\cC_t\cap \cR=C\qquad\textrm{ and }\qquad \cC_{t+j}\cap \cR=\mathsf{id}\textrm{ for }j\in \{2,\ldots,10\}\ .
\end{align}

\begin{figure}
\centering
\begin{subfigure}{0.95\textwidth}
\centering
\includegraphics[width=0.9\textwidth]{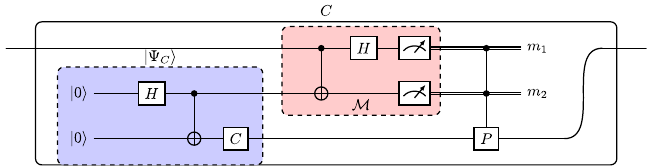}
\caption{The single-qubit gate teleportation circuit~$\cC_{\mathsf{tele}}(C)$  implementing a single-qubit Clifford unitary~$C$ \label{fig:gateteleportationsinglequbit}}
\end{subfigure}

\begin{subfigure}{0.95\textwidth}
\centering
\includegraphics[width=0.9\textwidth]{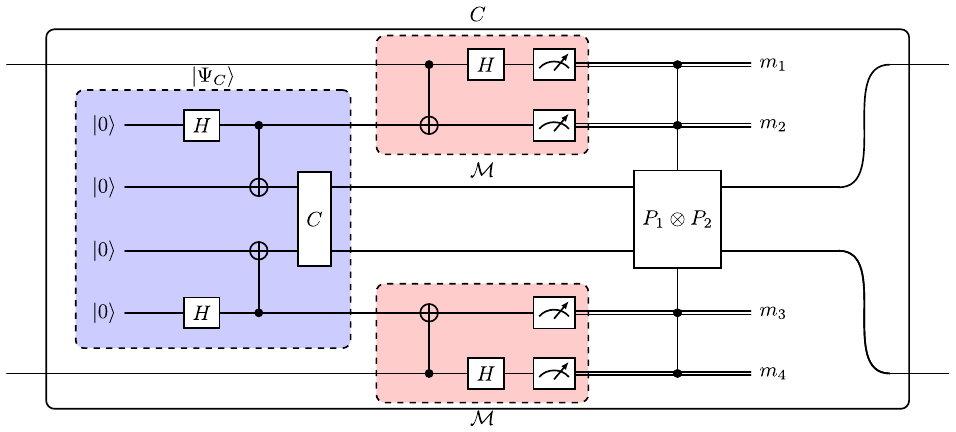}
\caption{The gate teleportation circuit~$\cC_{\mathsf{tele}}(C)$  implementing a two-qubit Clifford unitary~$C$ \label{fig:gateteleportationtwoqubit}}
\end{subfigure}
\caption{One- and two-qubit gate teleportation circuits implementing a single- respectively two-qubit Clifford unitary. We assume that the measured qubits are reinitialized in the state~$\ket{0}$ after measurement (not shown). This implies that these circuits use 
adaptive read-once Pauli corrections followed by two layers of $1D$-nearest-neighbor SWAP gates. Both circuits have depth~$10$. 
\label{fig:gateteleportationcircuit}}
\end{figure}

In this situation, we can replace the subcircuit~$\cC\cap \cR$ with the adaptive gate teleportation circuit~$\cC_{\mathsf{tele}}(C)$ in Fig.~\ref{fig:gateteleportationtwoqubit}. We note because the latter uses two respectively four auxiliary qubits, this creates a circuit~$\tilde{\cC}$ on $n+2$ or $n+4$~qubits.  However, since $\cC_{\mathsf{tele}}(C)$ discards (traces out) these auxiliary qubits, the resulting circuit~$\tilde{\cC}$ still only has an $n$-qubit output.  The auxiliary qubits are only used ``locally'' in space and time. In this sense, the derived circuit~$\tilde{\cC}$ essentially has the same input and output as~$\cC$, and statements such as~$\cC=\tilde{\cC}$ make sense.

The following lemma translates between the noise before and after applying several substitutions of this kind.
It is a consequence of the circuit error cleaning lemma for adaptive read-once rectangles (Lemma~\ref{lem:cleaninglemmaadaptive})
and the subcircuit substitution lemma for unitary subcircuits (Lemma~\ref{lem:subcircuitsubstitution}). 
\begin{lemma}[Gate teleportation substitution]\label{lem:gateteleportationsubstitution}
Let $\cC$ be a depth-$D$ circuit. Let $\cR_1,\ldots,\cR_m$ be pairwise disjoint valid  Clifford subrectangles, each realizing a one- or two-qubit Clifford~$C_j$, $j\in [m]$. 
Let $\tilde{\cC}$ be the circuit obtained by replacing the subcircuit~$\cC\cap \cR_j$ by
$\cC_{\mathsf{tele}}(C_j)$ for each $j\in [m]$ (see Fig.~\ref{fig:gateteleportationtwoqubit}). Then $\tilde{\cC} \gtrsim \cC$.
\end{lemma}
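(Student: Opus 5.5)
The strategy is the same as for Lemma~\ref{lem:subcircuitsubstitution}. We clean all noise out of the interior of each teleportation gadget in $\tilde{\cC}$, pushing it onto the output wires. We then argue that a noise-free gadget acts exactly like the original $C$-rectangle, so the resulting noise can be read as local stochastic Pauli noise on $\cC$. Concretely, let $\tilde F\sim\cN^{\pauli}_{\tilde{\cC}}(\tilde p)$. Each gadget $\cC_{\mathsf{tele}}(C_j)$ decomposes as follows:
\begin{enumerate}[(a)]
\item a unitary part of constant depth, which prepares the Choi state of $C_j$ on fresh auxiliary qubits and applies the Bell-basis rotations;
\item an adaptive, linear, read-once rectangle, consisting of the single-qubit measurements on $\Omega_{j,1}$ together with the Pauli correction $P_j(z)=X(Az)Z(Bz)$ on $\Omega_{j,2}$, followed by the reset of the measured qubits to $\ket{0}$;
\item the final two layers of nearest-neighbour SWAPs, which are again unitary.
\end{enumerate}
Linearity and the read-once property hold because the teleportation correction is $C_j$ conjugating a Pauli that is linear in the Bell-measurement outcomes, and the outcomes are used nowhere else.

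The cleaning proceeds in three passes over all $m$ disjoint gadgets simultaneously. First, apply Lemma~\ref{lem:cleaninglemma} to the unitary rectangles of type~(a), whose depth is at most $10$. This moves the errors in their interiors to their right boundaries, which are the left boundaries of the adaptive rectangles, at a cost of a constant change $\tilde p\mapsto \Lambda_1\tilde p^{\lambda_1}$. Second, apply Lemma~\ref{lem:cleaninglemmaadaptive} to the adaptive rectangles, with $U\le 4$ and $V\le 2$. This pushes errors on the measured qubits, including the accumulated left-boundary errors, into a Pauli $P_j(e_j)$ on the right boundary, and discards interior errors on the reset qubits. The strength is now $5(4p)^{1/10}$. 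Third, apply Lemma~\ref{lem:cleaninglemma} once more to the SWAP layers. Each step preserves $\bowtie$-equivalence, and composing the strength maps gives a single constant pair $(\Lambda,\lambda)$.

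After these passes the noise $F'$ on $\tilde{\cC}$ is supported only outside the gadget interiors. Within each $\cR_j$ it sits on the wires that correspond to wires of $\cR_j$ in $\cC$, namely the input boundary at time $t_j$ and the output boundary at $t_j+10$, restricted to the data qubits $\Omega(\cR_j)$. Any residual error on auxiliary qubits after the reset layer occurs before a discard, so it can be dropped; this is allowed because dropping only shrinks the support, by Lemma~\ref{lem:pauli noise properties}\eqref{item:pauli noise property 1}. A noiseless $\cC_{\mathsf{tele}}(C_j)$ equals $C_j$ followed by nine identities as a channel on the data qubits once the auxiliary qubits are traced out, because the adaptive correction exactly undoes the teleportation byproduct. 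We may therefore replace each gadget by $\cC\cap\cR_j$. Mapping the remaining wires of $\tilde{\cC}$ identically onto the wires of $\cC$ gives a Pauli error $F$ on $\cC$ with $F\bowtie\cC=F'\bowtie\tilde{\cC}=\tilde F\bowtie\tilde{\cC}$. By Lemma~\ref{lem:pauli noise properties}\eqref{item:pauli noise property 1},\eqref{item:permutationclaimb}, $F$ is local stochastic of the composed strength $\Lambda\tilde p^{\lambda}$.

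I expect the main obstacle to be the bookkeeping at the interfaces between passes. The three rectangle families of each gadget share boundaries, so I need to check that the noise deposited by one pass lands on the left boundary of the next rectangle, where the next lemma still accepts it. I also need to check that errors on the auxiliary qubits before measurement really do enter only through the flipped outcomes $e_j$. The subtle case is a $Z$ error just before a $Z$-basis measurement: it does not flip the outcome and is destroyed by the measurement and reset, so it correctly contributes nothing. Similarly, a Pauli error occurring just before a measured qubit's reset has no effect and can be dropped. A last point to verify is that the added auxiliary wires in $\tilde{\cC}$ do not inflate the support on $\cC$; this holds because after cleaning nothing remains on them.
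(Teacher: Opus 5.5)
Your proposal is correct and follows essentially the same route as the paper. You split each teleportation gadget into the non-adaptive Choi-state/Bell-rotation part, the linear read-once adaptive measurement-and-correction rectangle (with $U\le 4$, $V\le 2$), and the final SWAP layers; you clean these in three passes via Lemma~\ref{lem:cleaninglemma}, Lemma~\ref{lem:cleaninglemmaadaptive} and Lemma~\ref{lem:cleaninglemma} again; and you use that the noise-free gadget acts as $C_j$ to read the residual noise as local stochastic Pauli noise on $\cC$, which is exactly the paper's argument and is, if anything, slightly more explicit than the paper's write-up about dropping residual errors on auxiliary wires after the reset.
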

\begin{figure}
\centering
\includegraphics[width=0.8\textwidth]{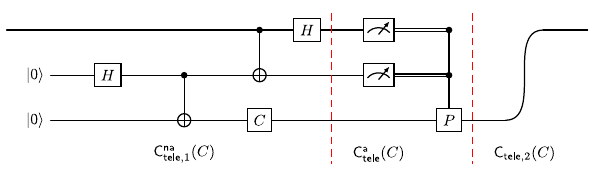} 
\caption{A convenient factorization of the one-qubit teleportation circuit. The qubits after the measurement are not shown. The last subcircuit~$\cC_{\mathsf{tele},2}$ applies (1D-)nearest neighbor SWAP gates in succession. 
~\label{fig:onequbitfactored}}
\end{figure}
\begin{proof}
Consider a  one-qubit Clifford unitary~$C$. (The argument for two-qubit Cliffords is analogous.) The circuit~$\cC_{\mathsf{tele}}(C)$ can be factored as depth-$10$ circuit as
\begin{align}
\cC_{\mathsf{tele}}(C)&=\cC^{\mathsf{na}}_{\mathsf{tele,2}}(C)\circ \cC^{\mathsf{a}}_{\mathsf{tele}}(C) \circ\cC^{\mathsf{na}}_{\mathsf{tele,1}}(C)\ ,
\end{align}
see Fig.~\ref{fig:onequbitfactored}.
Here $\cC^{\mathsf{na}}_{\mathsf{teleport,1}}(C)$ is a (non-adaptive) depth-$6$ circuit which prepares~$\ket{0}^{\otimes 2}$
and applies a unitary to it,  $\cC^{\mathsf{a}}_{\mathsf{teleport}}(C)$ is a depth-$2$ adaptive circuit which measures qubits in the computational basis and  then applies a Pauli operation depending on the measurement result. Finally,
$\cC^{\mathsf{na}}_{\mathsf{teleport,2}}(C)$ is a non-adaptive depth-$2$ circuit performing two SWAP gates. 

Correspondingly, we can partition each rectangle~$\cR_j=\cR_j(\Omega_j,t_j,10)$ in the derived circuit~$\tilde{\cC}$ as
\begin{align}
\cR_j&=\cR^{\mathsf{na},2}_j\cup \cR^{\mathsf{a}}_{j}\cup \cR^{\mathsf{na},1}_j\ ,
\end{align}
where for each $j\in [m]$, the rectangle $\cR^{\mathsf{a}}_{j}$ is an adaptive, read-once, linear rectangle, and where the rectangles share the wires
\begin{align}
\partial_{r}\cR^{\mathsf{na},1}&=\partial_\ell \cR^{\mathsf{a}}_j\label{eq:sharedwiresonev}\\
\partial_r\cR^{\mathsf{a}}_j&=\partial_\ell \cR^{\mathsf{na},2}_j\ .
\end{align}
Let $\tilde{F}\sim \cN^{\pauli}_{\tilde{C}}(
\tilde{p})$. By the cleaning Lemma~\ref{lem:cleaninglemma}
applied to the rectangles~$\{\cR^{\mathsf{na},1}_j\}_{j\in [m]}$, there is a local stochastic error~$\tilde{F}'\sim \cN^{\pauli}_{\tilde{C}}(\tilde{p}')$
of strength
\begin{align}
\tilde{p}'&=\Lambda'\tilde{p}^{\lambda'}\label{eq:tildepprimepequal}
\end{align}
for some constants $\Lambda',\lambda'>0$ 
such that
\begin{align}
\tilde{F}\bowtie\tilde{\cC}&=\tilde{F}'\bowtie\tilde{\cC}\ ,
\end{align}
and
\begin{align}
\supp(\tilde{F}')\cap \left(\partial_\ell \cR^{\mathsf{na}}_j\cup \mathsf{Int}(\cR^{\mathsf{na}}_j)\right)&=\emptyset\qquad\textrm{ for every }\qquad j\in [m]\ .
\end{align}
We next apply the cleaning Lemma~\ref{lem:cleaninglemmaadaptive}
to the local stochastic Pauli noise~$\tilde{F}'$ and the adaptive rectangles~$\cR^{\mathsf{a}}_{j}$, $j\in [m]$.
This results in local stochastic Pauli noise~$\tilde{F}''\sim \cN^{\pauli}_{\tilde{C}}(\tilde{p}')$ of strength
\begin{align}
\tilde{p}''&=5 (U\tilde{p}')^{1/(5V)}\ ,  
\label{eq:tibp8}
\end{align}
where for
the adaptive part of the one-qubit gate teleportation circuit~$\cC_{\mathsf{tele}}(C)$ we have $U=2$ and $V=2$. (If $C$ is a two-qubit Clifford, then $U=4$ and $V=2$.)
The noise~$\tilde{F}''$ has the property that
\begin{align}
\tilde{F}''\bowtie\tilde{\cC}&= \tilde{F'}\bowtie\tilde{\cC}
\end{align}
and
\begin{align}
\supp(\tilde{F}'')\cap
\left(\cR^{\mathsf{na}}_j \cup \mathsf{Int}(\cR^{\mathsf{a}}_j)\right)
&=\emptyset\ 
\end{align}
for every $j\in [m]$. 
Here we used Eq.~\eqref{eq:sharedwiresonev}.

We next apply the  cleaning Lemma~\ref{lem:cleaninglemma}
 to the rectangles~$\{\cR^{\mathsf{na},1}_j\}_{j\in [m]}$, obtaining local stochastic Pauli noise~$F\sim \cN^{\pauli}_{\tilde{C}}(p)$ of strength
 \begin{align}
 p &=\Lambda_2 (\tilde{p}'')^{\lambda_2}
 \end{align}
for some constants $\Lambda_2,\lambda_2>0$,
\begin{align}
F\bowtie\tilde{\cC}&=\tilde{F}''\bowtie\tilde{\cC}
\end{align}
and
\begin{align}
\supp(F)\cap \left(\partial_\ell \cR_j \cup \mathsf{Int}(\cR_j)\right)&=\emptyset\qquad\textrm{ for every }\qquad j\in [m]\ .
\end{align}
 In particular,~$F$ has the property that it does not act inside any Clifford teleportation circuit. Since  $\cC_{\mathsf{tele}}(C)=C$ for any two-qubit Clifford~$C$, this implies that 
\begin{align}
F\bowtie \tilde{\cC}&=F\bowtie{\cC}\ .
\end{align}
This implies the claim.
\end{proof}

\subsection{Robust implementations with mid-circuit resets}\label{sec:implementationresets}
\begin{figure}
\centering
\begin{subfigure}{0.45\textwidth}
\centering
\includegraphics[width=8cm]{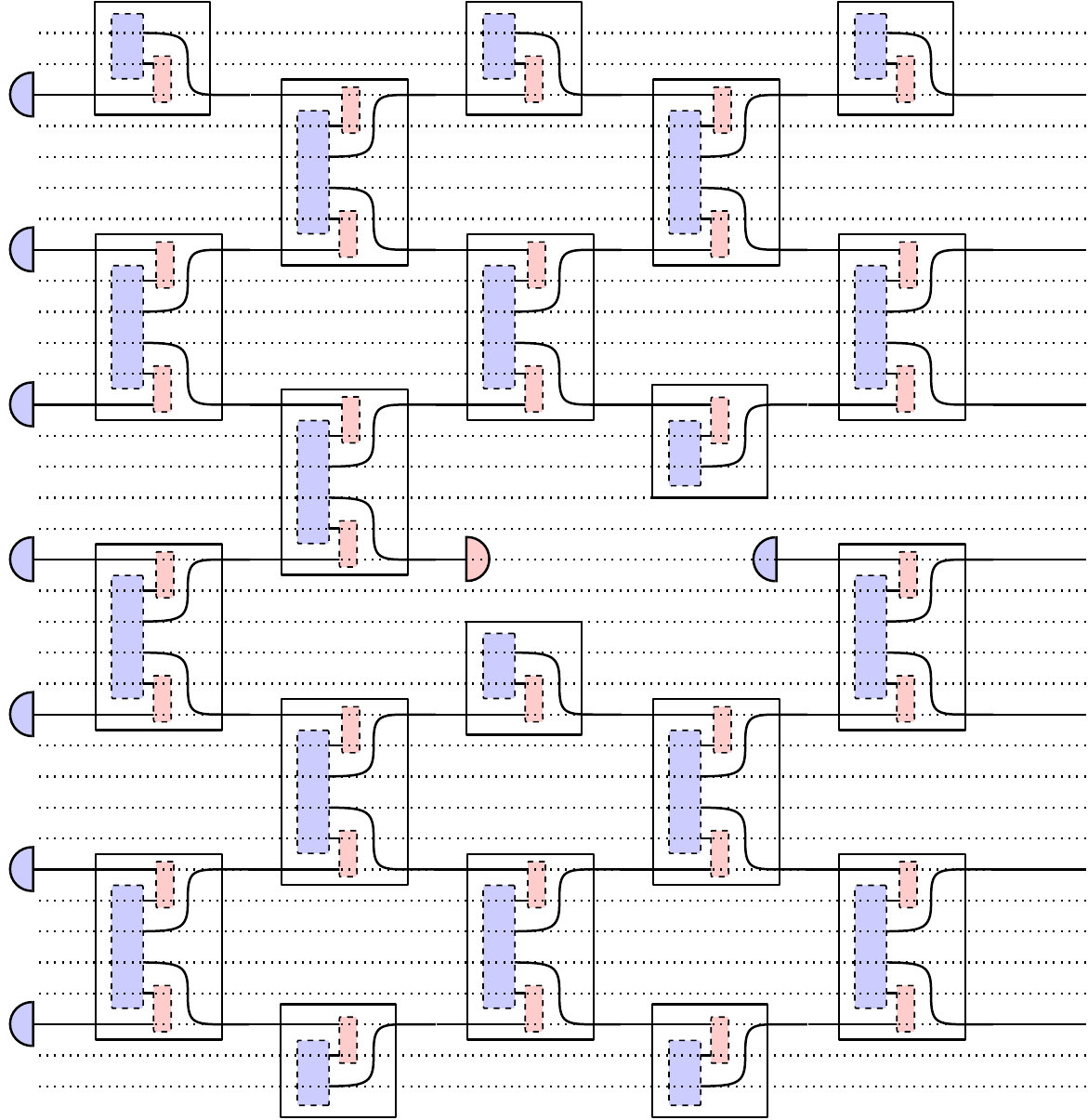}
\caption{The circuit $\cC^\mathsf{1D}_{\mathsf{post}}$\label{fig:qubitsdotted}}
\end{subfigure}
\caption{Construction of the circuit~$\cC^\mathsf{1D}_{\mathsf{post}}$ starting from a fault-tolerant $1D$-local implementation of a circuit~$\cC$.
The key building block are two- respectively one-qubit 
teleportation subcircuits, see Fig.~\ref{fig:twoqubitgateteleportationomitted}: Each blue structure corresponds to state preparation or a state preparation subcircuit.
Blue boxes correspond to Bell measurements. The lifespan of each qubit in this circuit is bounded by a constant.
The layer of adaptive Paulis at the end of the circuits are not shown, and SWAP-gates are only drawn for ``active'' qubits.
The qubits used are illustrated by dotted lines.
\label{fig:statepreparationcircuittime}}
\end{figure}

\begin{figure}
\centering
\includegraphics[width=0.6\textwidth]{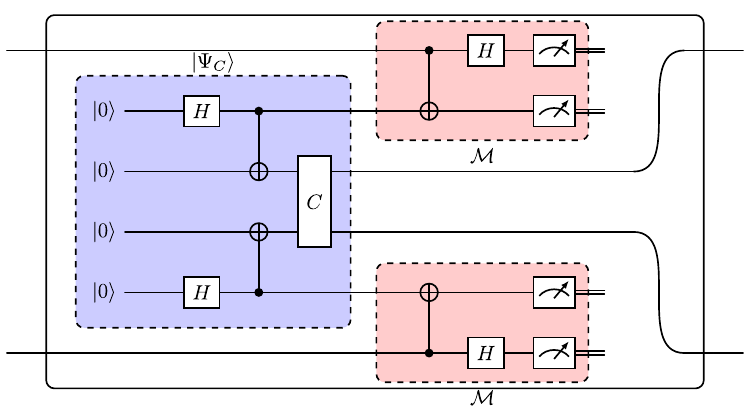}
\caption{The two-qubit gate teleportation circuit from Fig.~\ref{fig:gateteleportationtwoqubit} with the Pauli correction omitted. An analogous circuit is  used for one-qubit teleportation.\label{fig:twoqubitgateteleportationomitted}}
\end{figure}

We will now use the circuit $\cC^{\mathsf{1D}}$ to construct a new $1D$-local circuit $\cC^{\mathsf{1D}}_{\mathsf{post}}$, illustrated in Fig.~\ref{fig:statepreparationcircuittime}, which has the property that each qubit is only active for at most a constant amount of time. That is, the number of operation layers (time steps) between the preparation of a qubit in a computational basis state and subsequent measurement of the qubit is constant. We call the maximum number of operation layers between the preparation and measurement of any qubit the qubit lifespan of the circuit.

\begin{theorem}[Robust $1D$-implementation with mid-circuit resets]
\label{thm:1Dresets}
Let $N\in\mathbb{N}$ and  $\varepsilon>0$ be arbitrary. 
Let $\mathsf{g}\geq 49$, $p_\ast > 0$
be the constants used in Lemma \ref{lem:bilinearimplementation}.
Then there are constants~$\Lambda_{\mathsf{1D,res}},\lambda_{\mathsf{1D,res}}>0$ such that the following holds. Define 
\begin{align}
p_0^{\mathsf{1D,res}}(N,\varepsilon)&:=
 \min\left\{
\left(\frac{p_\ast}{2\Lambda_{\mathsf{1D,res}}}\right)^{1/\lambda_{\mathsf{1D,res}}
}, \left(\frac{\varepsilon}{8N\mathsf{g}\Lambda_{\mathsf{1D,res}}}\right)^{1/\lambda_{\mathsf{1D,res}}}\right\}\ . \label{eq:pzerooneDresdef}
\end{align}
Let $\cC$ be a $1D$-local (unitary) Clifford circuit~$\cC$ on $N$~qubits, and let
\begin{align}
\upperboundL&\geq |\mathsf{Loc}(\cC)|
\end{align}
be an upper bound on the number of circuit locations of~$\cC$. 
Then there is an adaptive Clifford circuit $\cC^{\mathsf{1D,res}}$ which  implements the same functionality as~$\cC$  except with probability~$\varepsilon$ in the presence of local stochastic Pauli noise of any strength~$p\leq p^{\mathsf{1D,res}}_0(N,\varepsilon)$. Furthermore, 
\begin{enumerate}[(i)]
    \item The circuit $\cC^{\mathsf{1D,res}}$ is non-adaptive except for its last operation layer, which consists of adaptive Pauli corrections.\label{it:nonadaptiveexceptlast}
    \item The qubit lifespan of $\cC^{\mathsf{1D,res}}$ is constant.\label{it:qubitlifespanconstant}
    \item The number of qubits and the depth of the circuit satisfy
    \begin{align}
        N(\cC^{\mathsf{1D,res}})& = N(\cC)\cdot \Theta\left(\mathsf{poly}(\log 1/\varepsilon, \log \upperboundL)\right)\\
       \mathsf{depth}(\cC^{\mathsf{1D,res}})& = \mathsf{depth}(\cC)\cdot \Theta\left(\mathsf{poly}(\log 1/\varepsilon, \log \upperboundL)\right)\ .
\end{align}
\item  The circuit $\cC^{\mathsf{1D,res}}$ is $1D$-local. The $N_{\mathsf{in}}$ input qubits of $\cC^{\mathsf{1D}}$ are located on sites $Q_1,\dots,Q_{N_{\mathsf{in}}}$ which have pairwise distance $\Theta\left(\mathsf{poly}\left(\log 1/\varepsilon, \log\upperboundL\right)\right)$ on the path graph. The $N_{\mathsf{out}}$ output qubits are similarly separated.
\end{enumerate}
\end{theorem}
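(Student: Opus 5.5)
We start from the circuit $\cC^{\mathsf{1D}}$ of Theorem~\ref{thm:implementation1D} (equivalently Lemma~\ref{lem:implementation1D}) and transform it in three steps, each of which is a relation of the form $\tilde{\cC}\gtrsim\cC$. By Lemma~\ref{lem:circuitpreorder}, the composite relation $\cC^{\mathsf{1D,res}}\gtrsim_{(\Lambda,\lambda)}\cC^{\mathsf{1D}}$ then holds with constants independent of $N$, $\varepsilon$ and $L$. Consequently, a noisy execution of $\cC^{\mathsf{1D,res}}$ at strength $p$ equals a noisy execution of the bilinear construction at strength $\Lambda_{\mathsf{1D,res}}p^{\lambda_{\mathsf{1D,res}}}$, where $(\Lambda_{\mathsf{1D,res}},\lambda_{\mathsf{1D,res}})$ absorbs $(\Lambda_{\mathsf{1D}},\lambda_{\mathsf{1D}})$. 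The failure bound then follows exactly as in the proof of Theorem~\ref{thm:implementation1D}. Substituting $p\leq p_0^{\mathsf{1D,res}}(N,\varepsilon)$ into $f^{\mathsf{bilinear}}_L(\Lambda_{\mathsf{1D,res}}p^{\lambda_{\mathsf{1D,res}}})$ gives at most $\varepsilon/2+p_*\upperboundL 2^{-2^L}$. Choosing $L=\lceil\log(\log 1/\varepsilon+\log\upperboundL)\rceil$ makes this at most $\varepsilon$ and yields the polylogarithmic qubit and depth overheads.

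\textbf{Step 1: inflation.} We apply a depth-$9$ inflation move (Lemma~\ref{lem:circuitinflation}) to the normal-form version of $\cC^{\mathsf{1D}}$, first bringing the circuit into normal form via Lemmas~\ref{lem:alternatingformcircuit} and~\ref{lem:circuitnormalform}. Afterwards, every one- or two-qubit Clifford gate is followed by nine identity layers, so the gate sits inside a valid $C$-rectangle. Rectangles coming from gates in the same original layer are pairwise disjoint.

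\textbf{Step 2: gate teleportation.} We replace every gate rectangle by $\cC_{\mathsf{tele}}(C)$, using Lemma~\ref{lem:gateteleportationsubstitution} to get $\tilde\cC\gtrsim$ the inflated circuit. Here we must be careful about auxiliary qubits. The fresh ancillas needed by all teleportations must be placed on the path so that each teleportation stays $1D$-local and the disjointness hypothesis holds. The plan is to interleave two (respectively four) auxiliary qubits next to each data qubit, which multiplies $N$ by a constant. Because teleportation moves the logical data onto fresh qubits and the old ones are measured and reset, each physical qubit is now prepared, entangled, and measured within the depth-$10$ window of a single teleportation. Wait locations between gates are handled the same way, by treating identities as Clifford gates to be teleported. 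This gives property~\eqref{it:qubitlifespanconstant}. The SWAP layers inside $\cC_{\mathsf{tele}}$ return the data to its designated site, so input and output positions, and hence their separations, are unchanged up to constant factors.

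\textbf{Step 3: postponing adaptivity.} The resulting circuit contains read-once linear adaptive Pauli corrections in its bulk. We also need to handle the adaptive corrections already present in $\cC^{\mathsf{1D}}$ from the Steane error correction via ancilla decoding. We apply Lemma~\ref{lem:adaptivenonadaptivetransform} to push all of these to a final layer, which gives $\cC^{\mathsf{1D,res}}\gtrsim_{(1,1)}\tilde\cC$ with the same interaction graph and depth, and hence property~\eqref{it:nonadaptiveexceptlast}.

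\textbf{Main obstacle.} The hard step is Step 2. We must verify that the $\gtrsim$-constants are uniform even though $\cC^{\mathsf{1D}}$ is itself adaptive, and that after the move the reset qubits really have bounded lifespan including the idle periods. The first point is handled by noting that the cleaning lemmas (Lemmas~\ref{lem:cleaninglemma} and~\ref{lem:cleaninglemmaadaptive}) act only locally inside the rectangles, so the adaptive parts of $\cC^{\mathsf{1D}}$ outside them are untouched. If the argument were to break for rectangles containing measurement gadgets, we would first apply Lemma~\ref{lem:adaptivenonadaptivetransform} to $\cC^{\mathsf{1D}}$ before teleporting.
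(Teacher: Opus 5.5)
Your proposal is correct and is essentially the paper's proof. It uses the same ingredients: a depth-$9$ inflation, gate-teleportation substitution (Lemma~\ref{lem:gateteleportationsubstitution}) of every Clifford, including identities, in every $10$th layer, postponement of adaptive Paulis via Lemma~\ref{lem:adaptivenonadaptiveClifford}, and composition of the resulting $\gtrsim$ relations with the same threshold bookkeeping.

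The only difference is the order. The paper first applies Lemma~\ref{lem:adaptivenonadaptiveClifford} to $\cC^{\mathsf{1D}}$ (your fallback), so that before teleporting the bulk contains only preparations, measurements and fixed Cliffords, and then postpones once more at the end. That ordering also makes your detour through Lemmas~\ref{lem:alternatingformcircuit} and~\ref{lem:circuitnormalform} unnecessary; those lemmas are only clearly justified once the bulk is non-adaptive.
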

\begin{proof}
Let $\cC^{\mathsf{1D}}$ be the 
circuit constructed in Theorem~\ref{thm:implementation1D}.
 The circuit~$\cC^\mathsf{1D}$ involves adaptive Pauli operations (e.g., in the error correction gadget).
We will apply a series of circuit transformations to give a sequence of circuits satisfying
\begin{align}
\cC^\mathsf{1D,res}\gtrsim\cC^\mathsf{1D,mb}\gtrsim \cC^\mathsf{1D,inflated} \gtrsim \cC^\mathsf{1D,post} \gtrsim\cC^\mathsf{1D}\ .\label{eq:successivecircuittransformationsconcatenated}
\end{align}
In more detail, these are constructed as follows:
\begin{description}
\item[$\cC^\mathsf{1D,post}$: ]
We first consider the derived circuit~$\cC^\mathsf{1D}_{\mathsf{post}}$ obtained by applying Lemma~\ref{lem:adaptivenonadaptiveClifford} to~$\cC^\mathsf{1D}$. 
The circuit~$\cC^\mathsf{1D,post}$ is non-adaptive except for its last layer. The latter consists of adaptive Pauli operations only, i.e., all corrections are postponed to the end of the circuit.

\item[$\cC^\mathsf{1D,inflated}$:]
We apply a depth-$9$ inflation move (see Lemma~\ref{lem:circuitinflation}) to $\cC^\mathsf{1D,post}$, obtaining an inflated circuit~$\cC^\mathsf{1D,inflated}$ whose (non-trivial) operation layers are sufficiently ``spread'' in time for the following step.

\item[$\cC^\mathsf{1D,mb}$:]
We transform 
the circuit~$\cC^\mathsf{1D,inflated}$  
into a ``measurement-based'' circuit~$\cC^\mathsf{1D,mb}$ using Lemma~\ref{lem:gateteleportationsubstitution}. 
That is, each 1- or 2-qubit Clifford unitary in every $10$th layer (including identity gates) is replaced by either a 1- or 2-qubit gate teleportation circuit. The circuit~$\cC^\mathsf{1D,mb}$ is  $1D$-local and adaptive.

\item[$\cC^\mathsf{1D,res}$:]
Again using Lemma~\ref{lem:adaptivenonadaptiveClifford}, 
we  transform the circuit~$\cC^\mathsf{1D,mb}$ into
a circuit $\cC^\mathsf{1D,res}$
with adaptivity only in its last layer.
\end{description}
We note that Properties~\eqref{it:nonadaptiveexceptlast} and~\eqref{it:qubitlifespanconstant}
of the circuit~$\cC^{\mathsf{1D,res}}$
 follow by construction.

By Lemma~\ref{lem:adaptivenonadaptiveClifford}, Lemma~\ref{lem:circuitinflation} and Lemma~\ref{lem:gateteleportationsubstitution}, 
we can obtain Eq.~\eqref{eq:successivecircuittransformationsconcatenated}. We conclude that $\cC^{\mathsf{1D,res}} \gtrsim \cC^{\mathsf{1D}}$, i.e., there are constants~$(\Lambda,\lambda)$ such that 
a noise execution~$\cE\bowtie \cC^{\mathsf{1D,res}}$ of $\cC^{\mathsf{1D,res}}$
with local stochastic Pauli noise~$\cE$ of strength~$p$ is equivalent to
a noisy~$\cE'\bowtie\cC^{\mathsf{1D}}$ of $\cC^{\mathsf{1D}}$ with local stochastic noise~$\cE'$ of strength  $p'=\Lambda p^\lambda$. Defining
\begin{align}
p_0^{\mathsf{1D,res}}(N,\varepsilon):=
(p_0^{\mathsf{1D}}(N,\varepsilon)/\Lambda)^{1/\lambda}\ ,
\end{align}
where $p_0^{\mathsf{1D}}(N,\varepsilon)$ is given Eq.~\eqref{eq:pzerooneddef},
it thus follows from Theorem~\ref{thm:implementation1D} that $\cE\bowtie \cC^{\mathsf{1D,res}}$ has identical action as $\cC$
except with probability at most~$\varepsilon$ for local stochastic Pauli noise~$\cE$ of strength~$p\leq p_0^{\mathsf{1D,res}}(N,\varepsilon)$. 

Recalling the definition of $p_0^{\mathsf{1D}}(N,\varepsilon)$ (see Eq.~\eqref{eq:pzerooneddef}), we obtain 
the expression
\begin{align}
p_0^{\mathsf{1D,res}}(N,\varepsilon)&:=
\frac{1}{\Lambda^{1/\lambda}}
\cdot \min\left\{
\left(\frac{p_\ast}{2\Lambda_{\mathsf{1D}}}\right)^{1/(\lambda_{\mathsf{1D}}
\lambda)}, \left(\frac{\varepsilon}{8N\mathsf{g}\Lambda_{\mathsf{1D}}}\right)^{1/(\lambda_{\mathsf{1D}}
\lambda)}\right\}\ .
\label{eq:pzerooneDres}
\end{align}
Setting
\begin{align}
\Lambda_{\mathsf{1D,res}}&:=\Lambda^{\lambda_{\mathsf{1D}}}\Lambda_{\mathsf{1D}}\\
\lambda_{\mathsf{1D,res}} &:=\lambda_{\mathsf{1D}}\lambda
\end{align}
it follows from Eq.~\eqref{eq:pzerooneDres} that 
$p_0^{\mathsf{1D,res}}(N,\varepsilon)$ has the form given in Eq.~\eqref{eq:pzerooneDresdef}.

Note that at each stage of circuit transformations the depth of the circuit and number of qubits is at most increased by a constant factor, so we have
\begin{align}
    N(\cC^{\mathsf{1D,res}}) &= O(N(\cC^{1D}))\\
\mathsf{depth}(\cC^{\mathsf{1D,res}}) &= O(\mathsf{depth}(\cC^{\mathsf{1D}}))\ ,
\end{align}
and hence the bounds on the qubit count and circuit depth follow from Theorem~\ref{thm:implementation1D}. A similar observation leads to the claim about the locations of the input and output qubits, following the corresponding claim of Theorem~\ref{thm:implementation1D}.
\end{proof}

\subsection{Robust constant-depth Clifford circuit implementation in 2D from space-time transformation}\label{sec:2dsimulation}
In the following, we consider robust implementations of Clifford circuits which can be realized in constant depth in~$2D$.  We pay particular attention to the physical location of corresponding input and output qubits. 

Our construction is as follows. The circuit $\cC^{\mathsf{1D,res}}$ constructed in Theorem~\ref{thm:1Dresets} immediately lends itself to parallelization by ``unfolding'' time into an additional spatial dimension. This procedure is illustrated in Fig.~\ref{fig:circuitunfolding}. Thanks to the constant lifespan property, this yields a $2D$-local circuit $\cC^{\mathsf{2D}}$ of constant depth.  The latter prepares a tensor product of Choi states which are then measured in the Bell basis. Importantly, the number of locations of the original circuit $\cC$ gives a bound on the spatial separation between the input and output systems of $\cC^{\mathsf{2D}}$.

\begin{theorem}[Robust constant-depth $2D$-implementation]
\label{thm:2Dimplementation}
Let $N\in\mathbb{N}$ and  $\varepsilon>0$ be arbitrary. Let $\cC$ be a $1D$-local (unitary) Clifford circuit~$\cC$ on $N$~qubits. 
Let 
\begin{align}
\upperboundL&\geq |\mathsf{Loc}(\cC)|\ 
\end{align}
be an upper bound on the number of circuit locations of~$\cC$. 
Then there is an adaptive Clifford circuit $\cC^{\mathsf{2D}}$ which  implements the same functionality as~$\cC$  except with probability~$\varepsilon$ in the presence of local stochastic Pauli noise of any strength~$p\leq p^{\mathsf{1D,res}}_0(N,\varepsilon)$, where the function $p^{\mathsf{1D,res}}_0$ is defined in Eq.~\eqref{eq:pzerooneDresdef}.  Furthermore, 
\begin{enumerate}[(i)]
   \item $\cC^{\mathsf{2D}}$ is $2D$-local, i.e., local with respect to the grid graph $G_{\ell_X,\ell_Y} := \mathsf{P}_{\ell_X}\square \mathsf{P}_{\ell_Y}$, where
        \begin{align}
            \ell_X &=N(\cC)\cdot\Theta \left(\mathsf{poly}(\log1/\varepsilon,\log \upperboundL)\right)\label{eq:ellXdist} \\
\ell_Y&=\mathsf{depth}(\cC)\cdot \Theta \left(\mathsf{poly}(\log1/\varepsilon,\log\upperboundL)\right)\ .\label{eq:ellYdist}
        \end{align}
        \item $\cC^{\mathsf{2D}}$ is of constant depth $\mathsf{depth}(\cC^{\mathsf{2D}}) = 7$, and is non-adaptive except for its last operation layer, which consists of adaptive Pauli corrections.
        \item The $N_{\mathsf{in}}$ input qubits of $\cC^{\mathsf{2D}}$ are  located in the slice $[\ell_X]\times \{1\}$ of $G_{\ell_X,\ell_Y}$ and have pairwise distance 
\begin{align}
\mathsf{dist}&=\Theta\left(\mathsf{poly}(\log1/\varepsilon,\log\upperboundL)\right)\ .\label{eq:distd}
\end{align} The $N_{\mathsf{out}}$ qubit outputs are located in the slice $[\ell_X]\times \{\ell_Y\}$ and also have pairwise distance~$\mathsf{dist}$.
\end{enumerate}
\end{theorem}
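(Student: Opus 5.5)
We construct $\cC^{\mathsf{2D}}$ by unfolding the circuit $\cC^{\mathsf{1D,res}}$ of Theorem~\ref{thm:1Dresets} in time. The robustness then follows from the threshold guaranteed there, once we show that the unfolded circuit has \emph{identical} noisy action to $\cC^{\mathsf{1D,res}}$ under a suitable relabelling of wires. We first bring $\cC^{\mathsf{1D,res}}$ into a canonical space-time form. Between any two successive (re)initializations of a physical qubit $q$ of $\cC^{\mathsf{1D,res}}$, the qubit carries a constant-lifespan ``segment'' of its worldline. Each such segment is assigned a fresh physical qubit of $\cC^{\mathsf{2D}}$, placed at grid position $(q,\tau)$, where $\tau$ is essentially the time coordinate of the segment (after rescaling by a constant block length $b$). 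Teleportation plays the role of moving information forward: by construction of $\cC^{\mathsf{1D,mb}}$, logical information passes from one segment to the next only through Bell measurements on qubits of a gate-teleportation circuit, never by a qubit surviving a reset.

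Concretely, group the time axis of $\cC^{\mathsf{1D,res}}$ into blocks of $10$ layers, matching the gate-teleportation substitution and the depth-$9$ inflation. In each block, every data qubit is consumed by a Bell measurement and a fresh Choi state is prepared. The Choi-state preparations in all blocks are independent of one another, and the Bell measurements of block $k$ only couple qubits of block $k-1$ and block $k$. Placing block $k$ in the grid columns $\{(k-1)b+1,\dots,kb\}$ for a constant $b$ therefore makes every two-qubit gate act between grid neighbours: the $1D$ nearest-neighbour gates within a block stay vertical edges, and the teleportation links become horizontal edges. The SWAP layers at the end of each teleportation circuit are absorbed by choosing where the next block's Choi state sits. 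This gives $\ell_X=\Theta(N(\cC^{\mathsf{1D,res}}))$ and $\ell_Y=\Theta(\mathsf{depth}(\cC^{\mathsf{1D,res}}))$, and Eqs.~\eqref{eq:ellXdist} and~\eqref{eq:ellYdist} follow from Theorem~\ref{thm:1Dresets}(iii). The input and output qubits sit in the first and last block at their $1D$ positions, which gives Eq.~\eqref{eq:distd} from item (iv). Since all Choi preparations (depth $6$) happen simultaneously and all Bell measurements are one layer, followed by the final adaptive Pauli layer postponed via Lemma~\ref{lem:adaptivenonadaptiveClifford}, the total depth is constant. We must check carefully that this count gives exactly $7$; this may require merging the Bell-basis rotation into the preparation unitary layers.

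For noise, the unfolding $\phi$ is a bijection between operations of $\cC^{\mathsf{1D,res}}$ and those of $\cC^{\mathsf{2D}}$ that preserves the causal order on each worldline segment. The only difference is extra wait locations: an early-prepared Choi state idles in $\cC^{\mathsf{1D,res}}$ before its Bell measurement, whereas in $\cC^{\mathsf{2D}}$ everything is parallel. We handle this by starting from the time-compressed version of the circuit and applying an inflation move (Lemma~\ref{lem:circuitinflation}), or more directly by noting that any Pauli fault on a wire of $\cC^{\mathsf{2D}}$ maps to a Pauli on the corresponding wire of $\cC^{\mathsf{1D,res}}$ with the same action. This mapping gives $\cC^{\mathsf{2D}}\gtrsim_{(1,1)}\cC^{\mathsf{1D,res}}$, modulo the constant loss from aggregating several wires into one; that loss is absorbed by Lemma~\ref{lem:pauli noise properties}(iii) and the constants $\Lambda_{\mathsf{1D,res}},\lambda_{\mathsf{1D,res}}$. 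Theorem~\ref{thm:1Dresets} then yields failure probability at most $\varepsilon$ for $p\le p_0^{\mathsf{1D,res}}(N,\varepsilon)$.

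The main obstacle is making this identification of noisy executions rigorous. The adaptivity of $\cC^{\mathsf{1D,res}}$ is confined to its final layer, but the wire sets and depths of the two circuits differ. We therefore need an explicit map from fault configurations $\tilde F$ on $\cC^{\mathsf{2D}}$ to fault configurations $F$ on $\cC^{\mathsf{1D,res}}$ satisfying $F\bowtie\cC^{\mathsf{1D,res}}=\tilde F\bowtie\cC^{\mathsf{2D}}$, with the local stochasticity parameter degraded only polynomially. We would first try to make the map wire-to-wire injective, using the fact that each wire of $\cC^{\mathsf{2D}}$ lies on a unique worldline segment of $\cC^{\mathsf{1D,res}}$. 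If the depth of a segment is compressed, we would push the excess faults forward with the cleaning Lemma~\ref{lem:cleaninglemma}.
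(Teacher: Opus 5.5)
Your proposal is correct and follows the paper's proof. The paper likewise introduces a fresh qubit at every (re)initialization of $\cC^{\mathsf{1D,res}}$, drops the SWAP layers, and lays the qubits out on an $N(\cC^{\mathsf{1D,res}})\times\mathsf{depth}(\cC^{\mathsf{1D,res}})$ grid; it then transfers the threshold of Theorem~\ref{thm:1Dresets} by noting that $\mathsf{Loc}(\cC^{\mathsf{2D}})$ is a subset of $\mathsf{Loc}(\cC^{\mathsf{1D,res}})$, which is exactly your injective wire-to-worldline map, with your version being more explicit. Because unfolding only deletes wait and SWAP locations, that map never aggregates wires, so you do not need the inflation or cleaning hedge and get exactly $p_0^{\mathsf{1D,res}}(N,\varepsilon)$ without re-tuning $\Lambda_{\mathsf{1D,res}},\lambda_{\mathsf{1D,res}}$.
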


\begin{figure}
\centering
\begin{subfigure}{0.35\textwidth}
\centering
\includegraphics[width=0.95\textwidth]{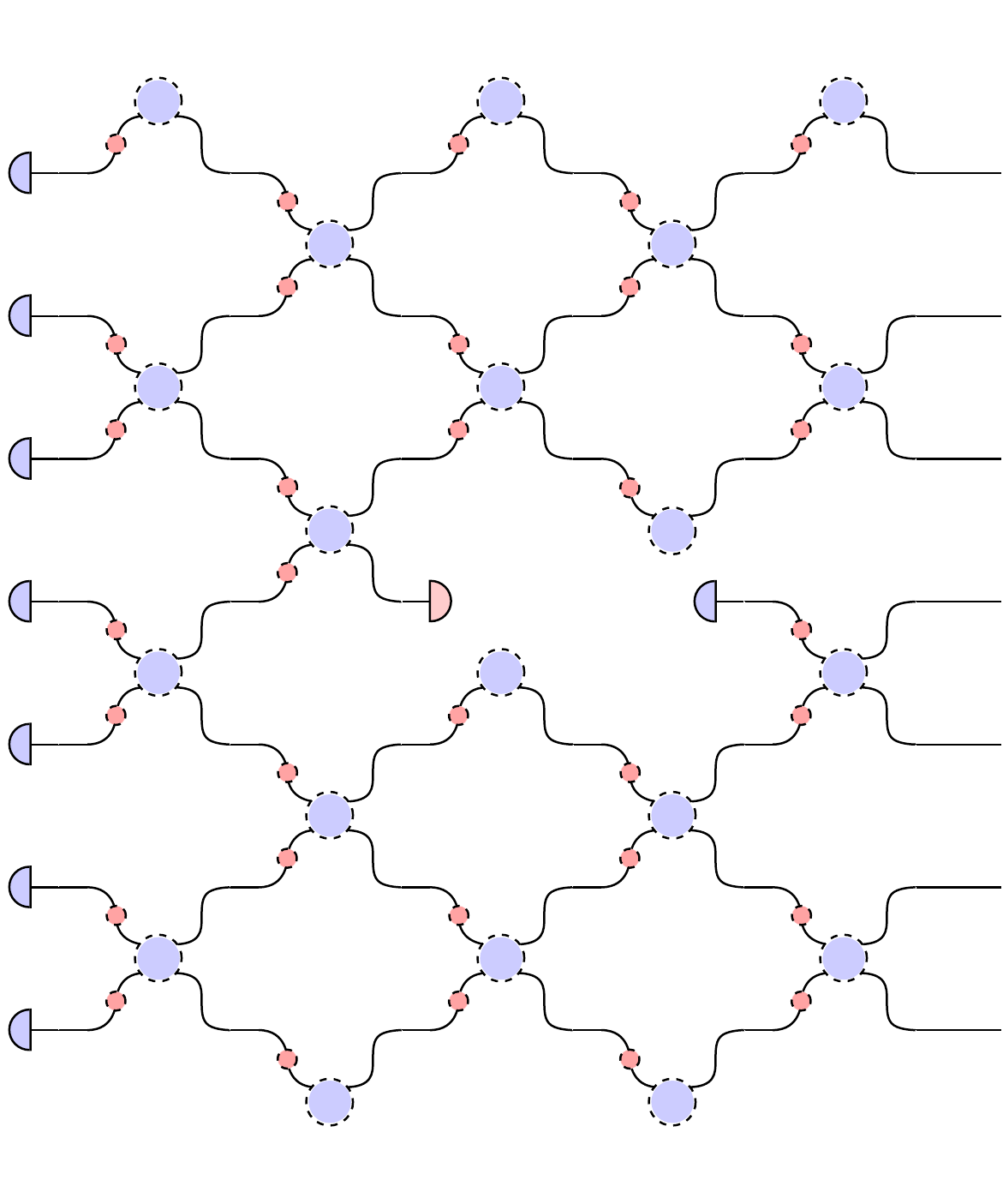}
\caption{Deformed circuit: We present each Choi state~$\ket{\Psi_C}$ by a circle.
Two-qubit Bell measurements now have two ingoing lines. This diagram is only used for illustrative purposes and the time ordering (left to right) is not strict. }
\end{subfigure}\qquad 
\begin{subfigure}{0.55\textwidth}
\centering
\includegraphics[width=0.95\textwidth]{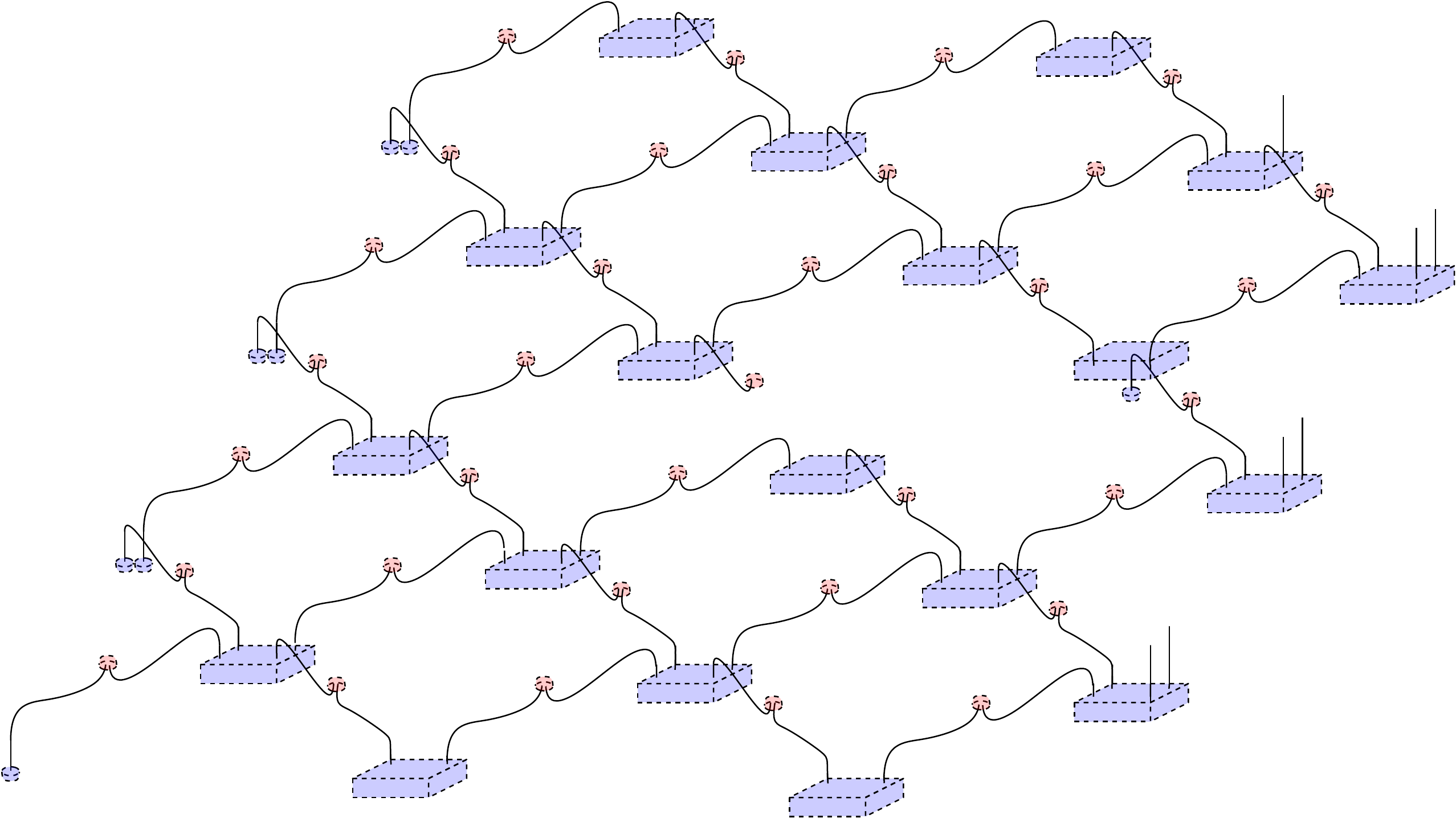}
\caption{The associated circuit~$\cC^{\mathsf{2D}}$ prepares single-qubit computational basis states and Choi states (blue) and applies single-qubit computational basis and Bell (two-qubit) measurements (red).  
Overall, the effect of the circuit is to measure the ``bulk'' qubits of  a $2D$-short-range entangled state in the computational basis. The post-measurement state of the qubits associated with (a subset of) dangling edges is the desired target state~$\ket{\Psi}$ up to a Pauli correction (efficiently) computable from the measurement outcomes.}
\end{subfigure}
\caption{Illustration of the ``unfolding'' of the $1D$-local circuit
$\cC^\mathsf{1D}_{\mathsf{post}}$ in space, resulting a constant-depth state preparation circuit~$\cC^{\mathsf{2D}}$. 
For concreteness, we illustrate the construction for an ideal circuit~$\cC$ with $N_{in}=0$ and $N_{out}=N$, i.e., the circuit starts with the state~$\ket{0^N}$ and applies a unitary $U$ to prepare the state~$\ket{\Psi}=U\ket{0^N}$. 
Residual (adaptive) Paulis depending on all measurement results at the end of the circuit are not shown.
The circuit~$\cC^{\mathsf{2D}}$ essentially prepares a $2D$-short range entangled state (by preparing a tensor product of Choi-states and applying ``Bell'' disentangling unitaries) and subsequently measures all ``bulk'' qubits in the computational basis. 
\label{fig:circuitunfolding}}
\end{figure}

\begin{proof}
We use the circuit $\cC^{\mathsf{1D,res}}$ constructed in Theorem~\ref{thm:1Dresets}, see Fig.~\ref{fig:statepreparationcircuittime}. The qubit lifespan of this circuit is constant, but qubits are reinitialized and reused. We construct the  circuit~$\cC^{\mathsf{2D}}$ by introducing a new qubit wherever a qubit is (re)initialized in the original circuit~$\cC^{\mathsf{1D,res}}$. This obviates the need for the two layers of swap gates in the teleportation circuit (see Fig.~\ref{fig:twoqubitgateteleportationomitted}), leading to a constant circuit depth of $7$.
Clearly, the constructed circuit~$\cC^{\mathsf{2D}}$ has identical action as~$\cC^{\mathsf{1D,res}}$. Furthermore, this is the case even in the presence of local stochastic Pauli noise as 
the set of circuit locations~$\mathsf{Loc}(\cC^{\mathsf{2D}})$ is a strict subset of those in the set~$\mathsf{Loc}(\cC^{\mathsf{1D,res}})$.

As illustrated in Fig.~\ref{fig:circuitunfolding}, the circuit~$\cC^{\mathsf{2D}}$ is $2D$-local when the qubits are appropriately laid out in the plane. That is, using that the circuit $\cC^{\mathsf{1D,res}}$ is $1D$-local, the qubits can be arranged on a rectangular lattice of side lengths $\ell_X,\ell_Y$, where
    \begin{align}
        \ell_X &= N(\cC^{\mathsf{1D,res}})\\
\ell_Y &= \mathsf{depth}(\cC^{\mathsf{1D,res}})\ .
\end{align}
The claim thus follows from Theorem~\ref{thm:1Dresets}.
\end{proof}

\section{Fault-tolerant one-shot entanglement generation in 2D \label{sec:longdistancegeneration}}

We now specialize the robust constant-depth $2D$-implementation from Theorem~\ref{thm:2Dimplementation} to concrete circuits generating entanglement.
 In Section~\ref{sec:entanglementgenerationlinearlattice}, we show that when this construction is applied to a two-qubit circuit mapping a product state to a Bell state, this gives a scheme for fault-tolerant one-shot entanglement generation in a geometry 
 associated with a grid graph of roughly equal side lengths. In Section~\ref{sec:polyloglattice} we then improve on this result 
 showing that one of the sides can be of only polylogarithmic length in the targeted distance. Finally, in Section~\ref{sec:localizableentanglementfinite}, we reinterpret these findings in terms of robust localizable long-range entanglement of ground states of a local Hamiltonian. We also argue that associated Gibbs states have long-range entanglement below a certain temperature. 

\subsection{Entanglement generation with linear lattice width\label{sec:entanglementgenerationlinearlattice}}

We now apply Theorem~\ref{thm:2Dimplementation}
to obtain fault-tolerant one-shot entanglement generation protocols. An immediate  consequence of Theorem~\ref{thm:2Dimplementation} is the following result. It is obtained by considering the two-qubit circuit~$\cC^{\mathsf{prep}}$ shown in Fig.~\ref{fig:Cprep}. It results in a circuit with an (essentially) square geometry of linear dimensions~$\Theta(R)\times \Theta(R)$ to establish  distance-$R$ entanglement.
We will further improve on this basic result below (see Theorem~\ref{thm:2Dentanglementgeneration}). 
\begin{lemma}[Robust single-shot entanglement generation on a square grid]\label{lem:squaregridstategeneration}
The following
 holds for any~$\varepsilon>0$ and~$R\geq 3$.
There is an adaptive Clifford circuit~$\cC^{\mathsf{prep}}_{\mathsf{robust}}$
with the following properties:
\begin{enumerate}[(i)]
\item
The circuit~$\cC^{\mathsf{prep}}_{\mathsf{robust}}$ is $2D$-local on the grid graph~$G_{\ell_X,\ell_Y} := \mathsf{P}_{\ell_X}\square \mathsf{P}_{\ell_Y}$, where 
\begin{align}
\ell_X&=\Theta(R)\\
\ell_Y&=\Theta(R)\ .
\end{align}
    \item $\cC^{\mathsf{prep}}_{\mathsf{robust}}$ is of constant depth and is non-adaptive except for its last operation layer, which consists of adaptive Pauli corrections.
    \item $\cC^{\mathsf{prep}}_{\mathsf{robust}}$ has no input qubits and two output qubits $Q_1,Q_2$ separated by distance~$\Theta(R)$.
    \item Under local stochastic Pauli noise of strength~
    \begin{align}
    p\leq p^{\mathsf{1D,res}}_0(2,\varepsilon)
    \end{align} where the function $p^{\mathsf{1D,res}}_0$ is defined in Eq.~\eqref{eq:pzerooneDresdef}, the circuit $\cC^{\mathsf{prep}}_{\mathsf{robust}}$ prepares a maximally entangled state $\ket{\Phi_{Q_1Q_2}}$ between the two distant output qubits, except with probability~$\varepsilon$. 
\end{enumerate}
\end{lemma}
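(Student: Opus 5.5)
We apply Theorem~\ref{thm:2Dimplementation} to the two-qubit Bell-pair preparation circuit $\cC^{\mathsf{prep}}$ of Fig.~\ref{fig:Cprep}. It acts on $N=2$ qubits with $N_{\mathsf{in}}=0$ and $N_{\mathsf{out}}=2$: prepare $\ket{0}^{\otimes 2}$, apply $H$ on the first qubit, then a CNOT. It is $1D$-local, has constant depth and $|\mathsf{Loc}(\cC^{\mathsf{prep}})|=O(1)$. The point is to use the freedom in the upper bound $\upperboundL\geq|\mathsf{Loc}(\cC^{\mathsf{prep}})|$: choosing $\upperboundL$ large inflates the concatenation level $L$, and hence the spatial separation between output qubits. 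We then show the resulting dimensions can be tuned to $\Theta(R)$.

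Concretely, fix $\varepsilon>0$ and $R\geq 3$. Theorem~\ref{thm:2Dimplementation} gives a constant-depth $2D$-local circuit, non-adaptive except for a final layer of Pauli corrections. Under local stochastic Pauli noise of strength $p\leq p_0^{\mathsf{1D,res}}(2,\varepsilon)$ it realizes $\cC^{\mathsf{prep}}$ except with probability $\varepsilon$. Since $\cC^{\mathsf{prep}}$ has no input and outputs $\ket{\Phi}$, with probability at least $1-\varepsilon$ the output qubits hold $\ket{\Phi_{Q_1Q_2}}$. This gives properties (ii)--(iv) directly; the two output qubits lie in the slice $[\ell_X]\times\{\ell_Y\}$ at distance $\mathsf{dist}$.

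It remains to choose $\upperboundL$ so that $\ell_X,\ell_Y,\mathsf{dist}=\Theta(R)$. With $N(\cC^{\mathsf{prep}})=2$ and $\mathsf{depth}(\cC^{\mathsf{prep}})=O(1)$, Eqs.~\eqref{eq:ellXdist}, \eqref{eq:ellYdist} and~\eqref{eq:distd} show that all three quantities are $e^{\Theta(L)}$. Here $L=\lceil\log(\log1/\varepsilon+\log\upperboundL)\rceil$, as in the proof of Theorem~\ref{thm:robustimplementationstochastic}. Picking $\upperboundL:=\exp(R^{c})$ for a suitable constant $c>0$ would make these $\Theta(R)$ if the exponents in the upper and lower bounds matched. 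They do not: the bounds involve $\dmin^L$, $\dmax^L$, $n^L$ and $\nmax^L$ with different bases, so only polynomial control $R^{\Theta(1)}$ is guaranteed.

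This mismatch is the main obstacle, and we plan to resolve it by padding rather than by an exact choice of $L$. First pick $L$ (via $\upperboundL$) so that $\max(\ell_X,\ell_Y)\leq R$ while $\mathsf{dist}\geq 1$. Then enlarge the construction to exactly $\Theta(R)$ by appending idle (wait) layers to $\cC^{\mathsf{prep}}$ before the output, i.e. take $\mathsf{depth}(\cC)\approx R/\mathrm{polylog}$, and by padding with ancilla qubits in $\ket{0}$. By Eq.~\eqref{eq:ellYdist}, $\ell_Y$ grows linearly in $\mathsf{depth}(\cC)$. Extra qubits give a linear factor in $\ell_X$ by Eq.~\eqref{eq:ellXdist}, but they change $N$ and hence the threshold, so instead we extend the grid width with dummy qubits that are prepared and measured trivially. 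Alternatively, we move the output $Q_2$ along the output boundary using a $1D$ identity subcircuit of SWAP/wait locations in $\cC$ itself, which keeps $N=2$ fixed.

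If exact linearity in both directions proves delicate, a fallback is to accept a grid with $\ell_X,\ell_Y\leq \Theta(R)$ and separation $\Theta(R)$. This follows because $\ell_Y$ is controlled linearly through $\mathsf{depth}(\cC)$, and the inputs and outputs sit at opposite sides, so we can take $Q_1$ and $Q_2$ on opposite boundaries by using a Bell pair teleported along the time direction. The threshold is unaffected throughout, since it depends only on $N=2$ and $\varepsilon$.
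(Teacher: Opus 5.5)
Your opening step is the paper's entire proof. The paper applies Theorem~\ref{thm:2Dimplementation} to $\cC^{\mathsf{prep}}$ with $\upperboundL=e^R$ and only remarks that the resulting $\mathsf{poly}(\log 1/\varepsilon,R)$ dimensions are at least of order $R$. Your objection is correct in substance, and the paper does not address it. A single integer $L$ must serve $\ell_X$, $\ell_Y$ and $\mathsf{dist}$ at once, and these grow at different exponential rates in $L$. So tuning $\upperboundL$ alone pins them only to $R^{\Theta(1)}$.

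\textbf{The gap is in your repair.} Both outputs lie in the slice $[\ell_X]\times\{\ell_Y\}$ at mutual distance $\mathsf{dist}$, which is fixed by $L$ alone (it is essentially the width of one encoded block). Idle layers appended to $\cC$ only lengthen $\ell_Y$, and dummy columns only widen $\ell_X$; neither moves $Q_1$ and $Q_2$ apart. Suppose you fix $L$ with $\max(\ell_X,\ell_Y)\le R$. Then $\mathsf{dist}$ can be as small as $R^{\alpha}$ with $\alpha<1$. Suppose instead you take $\mathsf{depth}(\cC)\approx R/\mathrm{polylog}(R)$ with $\ell_Y=\Theta(R)$. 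This forces $e^{\Theta(L)}=\mathrm{polylog}(R)$, hence $\mathsf{dist}=\mathrm{polylog}(R)$. Either way, (iii) fails.

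Moving $Q_2$ ``inside $\cC$ while keeping $N=2$'' is impossible, because a two-qubit logical circuit has only two block positions. Displacing a logical qubit by $\Theta(R)$ needs $\Theta(R/\mathsf{dist})$ further logical qubits. A SWAP chain on the decoded physical qubit is unencoded and fails with probability $\Theta(Rp)$. (Lemma~\ref{lem:failureprobencdec} in fact only charges for $N_{\mathsf{in}}+N_{\mathsf{out}}$, so measured auxiliary logical qubits could be afforded. However, the downstream theorems are stated in terms of $N$ and would have to be re-derived.)

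\textbf{Your fallback is the idea that does work.} It is the construction the paper uses for Theorem~\ref{thm:2Dentanglementgenerationgeneral}. Prepare a Bell pair locally, let one half idle as $Q_1$, and send the other half through a robust implementation of a one-qubit identity circuit. Then $Q_1$ and $Q_2$ sit on opposite sides at distance $\Theta(\ell_Y)$.

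Here, however, your claim that the threshold is unaffected is false. The $13$ locations of $\cC^{\mathsf{prep}}$ and of the idling partner are unencoded. By Lemma~\ref{lem:localstochasticnoiseqoutput}\eqref{it:circnoisycomposedclaim}, they contribute failure probability up to $13p$. This comes on top of the $\varepsilon/2$ obtained from Theorem~\ref{thm:2Dimplementation} with $N=1$ and $\varepsilon/2$; note that $p_0^{\mathsf{1D,res}}(1,\varepsilon/2)=p_0^{\mathsf{1D,res}}(2,\varepsilon)$ by Eq.~\eqref{eq:pzerooneDresdef}. You must therefore check that $13p\le\varepsilon/2$ for all $p\le p_0^{\mathsf{1D,res}}(2,\varepsilon)$. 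This holds using $\mathsf{g}\ge 49$, after taking $\Lambda_{\mathsf{1D,res}}\ge 1$ and $\lambda_{\mathsf{1D,res}}\le 1$. That choice is harmless, since enlarging $\Lambda$ or shrinking $\lambda$ preserves $\gtrsim$. But it has to be argued.

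Finally, to get $\ell_Y=\Theta(R)$, fix $\upperboundL$ (e.g.\ $\upperboundL=R$). Then $L$ is fixed, and $\ell_Y$ is affine in $\mathsf{depth}(\cC)$ with polylogarithmic slope. Choose $\mathsf{depth}(\cC)$ accordingly, and pad $\ell_X$ with idle qubits up to $\Theta(R)$.
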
 
\begin{proof}
We apply Theorem~\ref{thm:2Dimplementation} and the remark following it to the circuit~$\cC=\cC^{\mathsf{prep}}$. 
We note that  the number of qubits and the circuit depth of~$\cC^{\mathsf{prep}}$ are
\begin{align}
N(\cC^{\mathsf{prep}})&=2\\
\mathsf{depth}(\cC^{\mathsf{prep}})&=3\ .
\end{align}
Set $K:=e^R$. Then $K\geq |\mathsf{Loc}(\cC^{\mathsf{prep}})|$. The claim thus follows from Theorem~\ref{thm:2Dimplementation}
and the fact that
\begin{align}
\mathsf{poly}(\log 1/\varepsilon,\log K)=\mathsf{poly}(\log 1/\varepsilon,R)\geq \Theta(R)\ .
\end{align}
\end{proof}

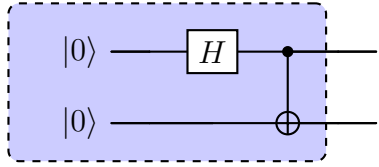
\begin{figure}
    \centering
        $\begin{aligned}
        \begin{quantikz}
\gategroup[wires=2, steps=6, background,
style={dashed, rounded corners, fill=blue!20, inner sep=6pt}]{{}}\setwiretype{n} & &
\lstick{\ket{0}} & \qw\setwiretype{q}& \gate{H} & \ctrl{1} & \qw &\qw \\
\setwiretype{n} & & \lstick{\ket{0}}& \qw\setwiretype{q} & \qw      & \targ{}  & \qw &\qw
\end{quantikz}
    \end{aligned}$
        \caption{The two-qubit circuit $\cC^{\mathsf{prep}}$, which prepares a maximally entangled state on two qubits.   This circuit acts on two qubits, with no input qubits and two output qubits: it acts by preparing both qubits in $\ket{0}$ states, applying a Hadamard gate to one, and then a $CNOT$ between the two.}
        \label{fig:Cprep}
\end{figure}

\subsection{Entanglement generation with polylogarithmic lattice width\label{sec:polyloglattice}}
Lemma~\ref{lem:squaregridstategeneration} 
matches the statement of Theorem~\ref{thm:2Dentanglementgeneration}
except for the fact that both~$\ell_X$ and $\ell_Y$ are of order~$\Theta(R)$.

The following shows that it in fact suffices if~$\ell_X=O(\mathsf{poly}(\log R))$. This is based on a  different construction, where entanglement is generated locally and one half of a Bell state is sent through a robust implementation of the identity circuit.
The statement of Theorem~\ref{thm:2Dentanglementgeneration} is a strengthening of Theorem~\ref{thm:2Dentanglementgeneration} giving an explicit expression for the threshold as a function of the failure probability~$\varepsilon$.

\begin{theorem}[Robust one-shot entanglement generation --- see Theorem~\ref{thm:2Dentanglementgeneration}]\label{thm:2Dentanglementgenerationgeneral}
Let $\varepsilon>0$ be arbitrary. 
Define
\begin{align}\label{eq:pthresepsilon}
p_{\mathsf{thres}}(\varepsilon):=\min\left\{\varepsilon/26,
p_0^{\mathsf{1D,res}}(1,\varepsilon/2)
\right\}\ 
\end{align}
where $p_0^{\mathsf{1D,res}}$ is the function introduced in Eq.~\eqref{eq:pzerooneDresdef}. Then  following holds for any $R\geq 3$. 
There exists an adaptive Clifford circuit $\cC^{\mathsf{Bell}}$ such that
\begin{enumerate}[(i)]
    \item $\cC^{\mathsf{Bell}}$ is $2D$-local, i.e., 
    $G_{\ell_X,\ell_Y}$-local with 
    \begin{align}
        \ell_X &= O(\mathsf{poly}(\log R))\ , \\
        \ell_Y &=\Theta(R)\ .
    \end{align}
    \item $\cC^{\mathsf{Bell}}$ is of constant depth and non-adaptive except for its last operation layer, which consists of adaptive Pauli corrections.
    \item $\cC^{\mathsf{Bell}}$ has no input qubits and two output qubits $Q_1,Q_2$ separated by a distance $\Theta(R)$ on the square lattice.
    \item In the presence of local stochastic noise of strength~$p\leq p_{\mathsf{thres}}(\varepsilon)$, the output of the circuit $\cC^{\mathsf{Bell}}$ is a maximally entangled state $\ket{\Phi_{Q_1Q_2}}$ between the two distant output qubits except with probability~$\varepsilon$. 
\end{enumerate}
\end{theorem}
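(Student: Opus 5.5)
The circuit~$\cC^{\mathsf{Bell}}$ will be built as the text before the statement suggests. A Bell pair is prepared locally on two physical qubits, and one half of it is transported through a robust $2D$ implementation of the identity channel on a single qubit.

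\textbf{Step 1: a long identity circuit.} Let $\cC^{\mathsf{id}}$ be the $1$-qubit circuit ($N=1$, $N_{\mathsf{in}}=N_{\mathsf{out}}=1$) consisting of $D:=\Theta(R)$ identity (wait) layers. Then $|\mathsf{Loc}(\cC^{\mathsf{id}})|=D$, and we take $\upperboundL:=D$. We apply Theorem~\ref{thm:2Dimplementation} with failure probability $\varepsilon/2$. This yields a constant-depth $2D$-local circuit $\cC^{\mathsf{2D}}$, non-adaptive except for a final Pauli layer. It lives on a grid of dimensions
\begin{align}
\ell_X&=\Theta(\mathsf{poly}(\log 1/\varepsilon,\log R))\ ,\\
\ell_Y&=D\cdot\Theta(\mathsf{poly}(\log 1/\varepsilon,\log R))\ .
\end{align}
Its input qubit lies in the slice $[\ell_X]\times\{1\}$ and its output qubit in the slice $[\ell_X]\times\{\ell_Y\}$. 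For fixed $\varepsilon$ we want $\ell_Y=\Theta(R)$. Since $\ell_Y/D$ grows polylogarithmically, we instead choose $D$ as the largest integer with $D\cdot\mathsf{poly}(\log D)\le R$, that is $D=\Theta(R/\mathsf{poly}(\log R))$. The depth then satisfies $\ell_Y=\Theta(R)$, and $\ell_X=O(\mathsf{poly}(\log R))$ as required. The output qubit is the far end $Q_2$.

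\textbf{Step 2: attach the Bell preparation.} Next to the input qubit of $\cC^{\mathsf{2D}}$ we place one extra qubit $Q_1$ in the first slice, enlarging $\ell_X$ by a constant if necessary. We run $\cC^{\mathsf{prep}}$ (Fig.~\ref{fig:Cprep}) on $Q_1$ and the input qubit, and then apply $\cC^{\mathsf{2D}}$. In the Choi-state picture this is just a few constant-depth layers prepended to the input wire. Because the input of $\cC^{\mathsf{2D}}$ is consumed by a Bell measurement at its start, the composite remains constant depth and $2D$-local, and all adaptivity can be pushed into the final Pauli layer via Lemma~\ref{lem:adaptivenonadaptiveClifford}. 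In the noiseless case, the output on $Q_1Q_2$ is $\ket{\Phi}$ up to a Pauli correction.

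\textbf{Step 3: error accounting.} We split the noise $\cE$ into its localization to $\cC^{\mathsf{2D}}$ and the rest. The rest consists of the constantly many locations of $\cC^{\mathsf{prep}}$, of the idle/wire locations of $Q_1$, and of any extra locations introduced by the attachment. By Lemma~\ref{lem:localstochasticnoiseqoutput}\eqref{it:circnoisycomposedclaim} the noise acts nontrivially there with probability at most $c\,p$ for an explicit constant $c$. The constant in $p_{\mathsf{thres}}$ suggests $c\le 13$, which gives $c\,p\le\varepsilon/2$ when $p\le\varepsilon/26$. Conditioned on the localized noise, Theorem~\ref{thm:2Dimplementation} with $p\le p_0^{\mathsf{1D,res}}(1,\varepsilon/2)$ guarantees that $\cC^{\mathsf{2D}}$ acts as the identity channel except with probability $\varepsilon/2$. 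A union bound then gives total failure probability at most $\varepsilon$.

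\textbf{Main obstacle.} The delicate step is the bookkeeping of the locations that are outside $\cC^{\mathsf{2D}}$ but on which $Q_1$ must wait. In the unfolded $2D$ circuit everything happens in constant depth, so $Q_1$ idles only $O(1)$ layers. However, one must check that its wait locations and the Bell-pair preparation genuinely contribute only a constant number of locations (the $26$), rather than $\Theta(R)$ as in the $1D$ picture where the qubit would sit idle for the entire depth. I would verify this directly from the layout in Fig.~\ref{fig:circuitunfolding}, counting the locations of $\cC^{\mathsf{prep}}$, the constant number of idle layers of $Q_1$, and the interface with the first Bell measurement.
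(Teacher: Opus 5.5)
Your proposal is correct and follows the paper's construction. The paper sets $\cC^{\mathsf{Bell}}:=(\cC^{\mathsf{2D}}\otimes\cC^{\mathsf{wait}})\circ\cC^{\mathsf{prep}}$, where $\cC^{\mathsf{2D}}$ robustly implements a long single-qubit identity circuit with failure probability $\varepsilon/2$, and it bounds the $|\mathsf{Loc}(\cC^{\mathsf{wait}})|+|\mathsf{Loc}(\cC^{\mathsf{prep}})|=7+6=13$ remaining locations via Lemma~\ref{lem:localstochasticnoiseqoutput} and a union bound, exactly as you anticipate.

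Your choice of depth $D=\Theta(R/\mathsf{poly}(\log R))$ is slightly more careful than the paper. The paper takes depth $R$ and asserts the separation is $\Theta(R)$, even though Theorem~\ref{thm:2Dimplementation} then gives $\ell_Y=R\cdot\mathsf{poly}(\log R)$.
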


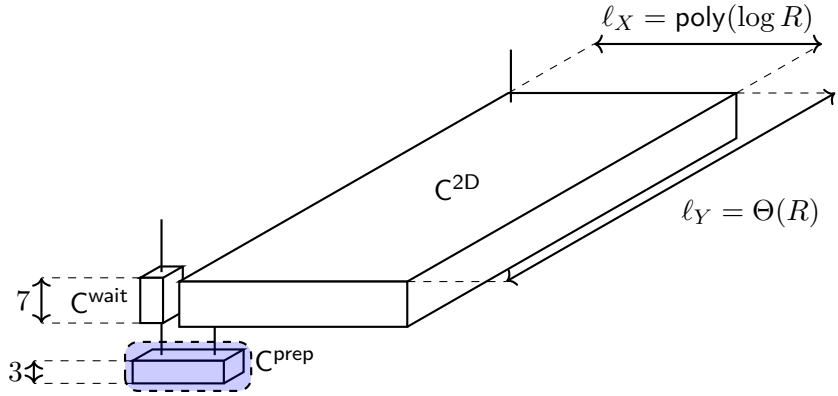
\begin{figure}
    \centering
\begin{tikzpicture}
    \def\heights{0.6} 
    \def\prepheight{0.3}
    \def\outwireheight{.7}
    \def\midwireheight{.5}
    \def\prepwidth{1.2}
    \def\prepdepth{0.3}
    \def\waitwidth{.3}
    \def\waitdepth{.3}
    \def\mainwidth{3}
    \def\maindepth{5}
    \def\wirebuffer{0.25} 
    \def\labeldist{1.3} 
    \def\blueboxbuffer{0.1};

    \pgfmathsetmacro{\zx}{cos(30)}
    \pgfmathsetmacro{\zy}{sin(30)}

    \coordinate (C2D1) at (0,0);
    \coordinate (C2D2) at ($(C2D1) + (\mainwidth,0)$);
    \coordinate (C2D3) at ($(C2D2) + (\zx*\maindepth,\zy*\maindepth)$);
    \coordinate (C2D4) at ($(C2D3) + (-\mainwidth,0)$);
    \coordinate (C2D1L) at ($(C2D1) + (0,-\heights)$);
    \coordinate (C2D2L) at ($(C2D2) + (0,-\heights)$);
    \coordinate (C2D3L) at ($(C2D3) + (0,-\heights)$);

    \coordinate (C2Doutstart) at ($(C2D4) + (\wirebuffer - \wirebuffer*\zx,-\wirebuffer*\zy)$);
    \coordinate (C2Doutend) at ($(C2Doutstart) + (0,\outwireheight)$);

    \coordinate (wireinend) at ($(C2D1L) + (\wirebuffer + \wirebuffer*\zx,\wirebuffer*\zy)$);
    \coordinate (wireinstart) at ($(wireinend) + (0,-\midwireheight)$);
    
    \coordinate (Cprep3) at ($(wireinstart) + (\wirebuffer + \prepdepth*0.5*\zx,\prepdepth*0.5*\zy)$);
    \coordinate (Cprep2) at ($(Cprep3) + (-\prepdepth*\zx,-\prepdepth*\zy)$);
    \coordinate (Cprep1) at ($(Cprep2) + (-\prepwidth,0)$);
    \coordinate (Cprep4) at ($(Cprep1) + (\prepdepth*\zx,\prepdepth*\zy)$);
    \coordinate (Cprep3L) at ($(Cprep3) + (0,-\prepheight)$);
    \coordinate (Cprep2L) at ($(Cprep2) + (0,-\prepheight)$);
    \coordinate (Cprep1L) at ($(Cprep1) + (0,-\prepheight)$);
    
    \coordinate (wiremidstart) at ($(Cprep1) + (\wirebuffer + \prepdepth*0.5*\zx,\prepdepth*0.5*\zy)$);
    \coordinate (wiremidend) at ($(wiremidstart) + (0,\midwireheight)$);
    
    \coordinate (Cwait1L) at ($(wiremidend) + (-\waitwidth*0.5-\waitdepth*0.5*\zx,-\waitdepth*0.5*\zy)$);
    \coordinate (Cwait2L) at ($(Cwait1L) + (\waitwidth,0)$);
    \coordinate (Cwait3L) at ($(Cwait2L) + (\waitdepth*\zx,\waitdepth*\zy)$);
    \coordinate (Cwait4L) at ($(Cwait3L) + (-\waitwidth,0)$);
    \coordinate (Cwait1) at ($(Cwait1L) + (0,\heights)$);
    \coordinate (Cwait2) at ($(Cwait2L) + (0,\heights)$);
    \coordinate (Cwait3) at ($(Cwait3L) + (0,\heights)$);
    \coordinate (Cwait4) at ($(Cwait4L) + (0,\heights)$);

    \coordinate (wirewaitstart) at ($(wiremidend) + (0,\heights)$);
    \coordinate (wirewaitend) at ($(wirewaitstart) + (0,\outwireheight)$);

    \draw[thick,fill=white] (Cprep1) -- (Cprep2) -- (Cprep3) -- (Cprep4) -- cycle;
    \draw[thick,fill=white] (Cprep1) -- (Cprep1L) -- (Cprep2L) -- (Cprep2) -- cycle;
    \draw[thick] (Cprep2L) -- (Cprep3L) -- (Cprep3);

    \draw[thick] (wireinstart) -- (wireinend);
    \draw[thick] (wiremidstart) -- (wiremidend);

    \draw[thick,fill=white] (Cwait1) -- (Cwait2) -- (Cwait3) -- (Cwait4) -- cycle;
    \draw[thick,fill=white] (Cwait1) -- (Cwait1L) -- (Cwait2L) -- (Cwait2) -- cycle;
    \draw[thick] (Cwait2L) -- (Cwait3L) -- (Cwait3);

    \draw[thick,fill=white] (C2D1) -- (C2D2) -- (C2D3) -- (C2D4) -- cycle;
    \draw[thick,fill=white] (C2D1) -- (C2D1L) -- (C2D2L) -- (C2D2) -- cycle;
    \draw[thick] (C2D2L) -- (C2D3L) -- (C2D3);

    \draw[thick] (C2Doutstart) -- (C2Doutend);
    \draw[thick] (wirewaitstart) -- (wirewaitend);

    \draw[thick,<->] ($(C2D2) + (\labeldist,0)$) -- ($(C2D3) + (\labeldist,0)$);
    \draw[dashed] (C2D2) -- ($(C2D2) + (\labeldist,0)$);
    \draw[dashed] (C2D3) -- ($(C2D3) + (\labeldist,0)$);
    \node[font=\small,below right] at ($(C2D2) + (\labeldist + 0.5*\maindepth*\zx,0.5*\maindepth*\zy)$) {$\ell_Y = \Theta(R)$}; 

    \draw[thick,<->] ($(C2D4) + (\labeldist*\zx,\labeldist*\zy)$) -- ($(C2D3) + (\labeldist*\zx,\labeldist*\zy)$);
    \draw[dashed] (C2D4) -- ($(C2D4) + (\labeldist*\zx,\labeldist*\zy)$);
    \draw[dashed] (C2D3) -- ($(C2D3) + (\labeldist*\zx,\labeldist*\zy)$);
    \node[font=\small,above] at ($(C2D4) + (\labeldist*\zx+0.5*\mainwidth,\labeldist*\zy)$) {$\ell_X = \poly(\log R)$};

    \draw[thick,<->] ($(Cprep1) + (-\labeldist,0)$) -- ($(Cprep1L) + (-\labeldist,0)$);
    \draw[dashed] (Cprep1) -- ($(Cprep1) + (-\labeldist,0)$);
    \draw[dashed] (Cprep1L) -- ($(Cprep1L) + (-\labeldist,0)$);
    \node[font=\small,left] at ($(Cprep1L) + (-\labeldist,\prepheight*0.5)$) {$3$};

    \draw[thick,<->] ($(Cwait1) + (-\labeldist,0)$) -- ($(Cwait1L) + (-\labeldist,0)$);
    \draw[dashed] (Cwait1) -- ($(Cwait1) + (-\labeldist,0)$);
    \draw[dashed] (Cwait1L) -- ($(Cwait1L) + (-\labeldist,0)$);
    \node[font=\small,left] at ($(Cwait1L) + (-\labeldist,\heights*0.5)$) {$7$};
    
    \node[font=\small] at ($(C2D1) + (\mainwidth*0.5 + \maindepth*\zx*0.5,\maindepth*\zy*0.5)$) {$\cC^{\mathsf{2D}}$};
    \node[font=\small,left] at ($(Cwait1L) + (0,\heights*0.5)$) {$\cC^{\mathsf{wait}}$};
    \node[font=\small,right] at ($(Cprep3L) + (0,\prepheight*0.5)$) {$\cC^{\mathsf{prep}}$};

    \coordinate (blueboxbottomleft) at ($(Cprep1L) + (-\blueboxbuffer,-\blueboxbuffer)$);
    \coordinate (blueboxtopright) at ($(Cprep3) + (\blueboxbuffer,\blueboxbuffer)$);
    \draw[fill=blue,fill opacity=0.2,thick,dashed,rounded corners=5] (blueboxbottomleft) rectangle (blueboxtopright);
    
\end{tikzpicture}

    \caption{Construction of the circuit $\cC^{\mathsf{Bell}}$ for Theorem~\ref{thm:2Dentanglementgenerationgeneral}. A maximally entangled state between two qubits is prepared by the circuit $\cC^{\mathsf{prep}}$, see Fig.~\ref{fig:Cprep}. One qubit is left idling for time $O(1)$ by the circuit $\cC^{\mathsf{wait}}$, whilst the other is fault-tolerantly teleported over a distance $\Omega(R)$ by the $O(1)$-depth circuit $\cC^{\mathsf{2D}}$ (represented by a solid block) constructed via Theorem~\ref{thm:2Dimplementation}.}
    \label{fig:Cbell}
\end{figure}

\begin{proof}
    Let $\cC$ define the circuit on $1$ qubit, with $N_{\mathsf{in}} = N_{\mathsf{out}} = 1$, which acts as the identity over time $R$. That is, $\cC$ consists of $R$ identity gates applied sequentially. We apply Theorem~\ref{thm:2Dimplementation} 
    with $\varepsilon/2$ in place of~$\varepsilon$
    to obtain an adaptive Clifford circuit $\cC^{\mathsf{2D}}$ which robustly implements $\cC$, and whose input and output qubits are separated by a distance $\Theta(\mathsf{depth}(\cC)) = \Theta(R)$.
    In particular,     using the robust implementation property of $\cC^{\mathsf{2D}}$ guaranteed by Theorem~\ref{thm:2Dimplementation}
 and with the 
    function $p^{\mathsf{1D,res}}_0$ defined in Eq.~\eqref{eq:pzerooneDresdef},
    this circuit satisfies
    \begin{align}
    \Pr\left[\cE^{\mathsf{2D}}\bowtie \cC^{\mathsf{2D}}\neq \mathsf{id}_{\cB(\mathbb{C}^2)}\right]&\leq \varepsilon/2\ \label{eq:localstochasticnoiseidentityup}
    \end{align}
    for any local stochastic Pauli noise $\cE^{\mathsf{2D}}$ of strength
        \begin{align}
    p\leq p^{\mathsf{1D,res}}_0(1,\varepsilon/2)\ .
    \end{align} 

    To construct $\cC^{\mathsf{Bell}}$, it remains to locally prepare a maximally entangled state between two qubits, and feed one of them into the circuit $\cC^{\mathsf{2D}}$, robustly teleporting it over a distance $\geq R$ (whilst the other qubit is left idling for the constant time it takes for $\cC^{\mathsf{2D}}$ to be executed).    The maximally entangled state of two qubits can be prepared
    by the two-qubit constant-depth circuit~$\cC^{\mathsf{prep}}$ in Fig.~\ref{fig:Cprep}.
    
 We write $\cC^{\mathsf{wait}}$ for the idling circuit applied to the qubit which is not fed into $\cC^{\mathsf{2D}}$. This circuit acts on one qubit, with single-qubit input and output, and simply applies the identity map a total of $\mathsf{depth}(\cC^{\mathsf{2D}})$ times.
 In particular,
 \begin{align}
 \mathsf{depth}(\cC^{\mathsf{wait}})&=\mathsf{depth}(\cC^{\mathsf{2D}})=O(1)\ .
  \end{align}
    The overall circuit $\cC^{\mathsf{Bell}}$ is then defined by
    \begin{align}
        \cC^{\mathsf{Bell}} := (\cC^{\mathsf{2D}}\otimes \cC^{\mathsf{wait}})\circ \cC^{\mathsf{prep}}\ ,
    \end{align}
    as illustrated in Fig.~\ref{fig:Cbell}. It remains to study the effect of local stochastic Pauli noise on this circuit. Let $\cE\sim \cN_{\cC^{\mathsf{Bell}}}^{\pauli}(p)$ be local stochastic noise of strength $p \in [0,1]$. We aim to bound
    \begin{align}
        \prob[\cE\bowtie \cC^{\mathsf{Bell}} \neq \cC^{\mathsf{Bell,ideal}}] \leq f(p)\ ,\label{eq:cbellrobustness}
    \end{align}
    for some function $f : [0,1] \rightarrow \RR$.

Let $\cE$ be local stochastic Pauli noise on~$\cC^{\mathsf{Bell}}$ of strength~$p\leq p_{\mathsf{thres}}(\varepsilon)$.     Let $\cE^{\restriction \mathsf{2D}}$ be the localization of the noise $\cE$ to the subcircuit $\cC^{\mathsf{2D}}$. Note that, by construction the subcircuits $\cC^{\mathsf{wait}}$ and $\cC^{\mathsf{prep}}$ contain a number of locations given by
\begin{align}
    |\mathsf{Loc}(\cC^{\mathsf{wait}})| + |\mathsf{Loc}(\cC^{\mathsf{prep}})| = 7 + 6 = 13 = K\ .
\end{align}
Therefore we can apply Lemma~\ref{lem:localstochasticnoiseqoutput}\eqref{it:circnoisycomposedclaim} to deduce that
    \begin{align}
        \prob\left[\cE\bowtie \cC^{\mathsf{Bell}} \neq ((\cE^{\restriction \mathsf{2D}} \bowtie \cC^{\mathsf{2D}})\otimes \cC^{\mathsf{wait}})\circ \cC^{\mathsf{prep}}\right] \leq Kp\leq \varepsilon/2
    \end{align}
    where we used the assumption that $p\leq p_{\mathsf{thres}}(\varepsilon)$. 
Observe that
    \begin{align}
    (\mathsf{id}_{\cB(\mathbb{C}^2)}\otimes \cC^{\mathsf{wait}})\circ \cC^{\mathsf{prep}}=\cC^{\mathsf{prep}}
    \end{align}
    (as a CPTP map) is equivalent to the (ideal) circuit~$\cC^{\mathsf{prep}}$ preparing the Bell state when considering the output qubits~$Q_1,Q_2$.  Thus 
    Eq.~\eqref{eq:localstochasticnoiseidentityup} implies that 
    \begin{align}
        \prob\left[((\cE^{\restriction \mathsf{2D}} \bowtie \cC^{\mathsf{2D}})\otimes \cC^{\mathsf{wait}})\circ \cC^{\mathsf{prep}} \neq \cC^{\mathsf{prep}} \right]\leq \varepsilon/2\ .
    \end{align}The claim then follows from the union bound. 
\end{proof}

As discussed after the statement of Theorem~\ref{thm:2Dentanglementgeneration}, this result immediately implies the existence of a state with robust long-range localizable entanglement.

\begin{corollary}[Robust long-range localizable entanglement in 2D]\label{cor:robustlocalizableentanglement}
    Given any distance $R>3$ and target error probability $\varepsilon > 0$, there exists a state $\ket{\Psi}$ on the $2D$ lattice $G_{\ell_X,\ell_Y}$, where $\ell_X,\ell_Y$ are as in Theorem~\ref{thm:2Dentanglementgenerationgeneral}, with the following properties.
    \begin{enumerate}[(i)]
        \item There is a constant-depth $G_{\ell_X,\ell_Y}$-local Clifford unitary $U$ such that
        \begin{align}
            \ket{\Psi} = U\ket{0^{\ell_X\times \ell_Y}}\ .
        \end{align}
        In particular, the state $\ket{\Psi}$ is $2D$-short range entangled.
        \item \label{it:localization}There exist two qubits $Q_1,Q_2 \in G_{\ell_X,\ell_Y}$ separated by distance $\Theta(R)$ such that, by measuring the qubits $G_{\ell_X,\ell_Y}\setminus \{Q_1,Q_2\}$ in the computational basis to obtain a  bit string $z \in \{0,1\}^{\times (\ell_X \ell_Y - 2)}$ and applying an effficiently computable Pauli correction $P(z)_{Q_1}$ to $Q_1$, a maximally entangled state $\ket{\Phi}_{Q_1Q_2}$ is prepared on the pair $Q_1Q_2$.
        \item The procedure of \eqref{it:localization} succeeds with probability at least $1-\varepsilon$ even when the initial state $\ket{\Psi}$ is corrupted with local stochastic Pauli noise of strength $p$, as long as $p\leq p_{\mathsf{thres}}(\varepsilon)$.
    \end{enumerate}
\end{corollary}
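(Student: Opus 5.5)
The plan is to read off the state $\ket{\Psi}$ directly from the circuit $\cC^{\mathsf{Bell}}$ of Theorem~\ref{thm:2Dentanglementgenerationgeneral}, which has constant depth, is non-adaptive except for its final layer of adaptive Pauli corrections, and has no input qubits. Write $\cC^{\mathsf{Bell}} = \cP\circ \cC^{\mathsf{na}}$, where $\cP$ is the final adaptive Pauli layer. The non-adaptive part $\cC^{\mathsf{na}}$ consists of preparations of $\ket{0}$, Clifford one- and two-qubit gates on $G_{\ell_X,\ell_Y}$, and computational basis measurements (single-qubit, or Bell measurements, which are a CNOT and Hadamard followed by single-qubit measurements).

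First, put $\cC^{\mathsf{na}}$ into the form ``unitary then measure''. Since $\cC^{\mathsf{2D}}$ introduces a fresh qubit for every (re)initialization, every qubit is prepared at most once, at the start, and measured at most once. So all preparations can be moved to time zero and all measurements to the end; idle locations fill the gaps. This gives a constant-depth $G_{\ell_X,\ell_Y}$-local Clifford unitary $U$ with $\cC^{\mathsf{na}} = (\text{measure } V\setminus\{Q_1,Q_2\})\circ U\ket{0^V}$, which is item (i). The correction $P(z)$ is the Pauli of $\cP$ as a function of the outcomes. By construction it can be taken to act only on $Q_1$, since any Pauli on $Q_2$ can be moved to $Q_1$ using the stabilizer symmetry of $\ket{\Phi}$. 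Item (ii) then follows from noiseless correctness of $\cC^{\mathsf{Bell}}$.

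For item (iii), a local stochastic Pauli error on $\ket{\Psi}$ together with noisy measurements is treated as local stochastic Pauli noise on the rearranged circuit. One of two routes settles it. The first route uses the relation $\gtrsim$: moving preparations and measurements amounts to an inflation move followed by substitution moves (Lemmas~\ref{lem:circuitinflation} and~\ref{lem:subcircuitsubstitution}), possibly with a constant-factor change in strength. The second, simpler route notes that state noise is propagated by Lemma~\ref{lem:unitarycliffordcircuit}. In either case the rearranged circuit satisfies the hypothesis of Theorem~\ref{thm:2Dentanglementgenerationgeneral}, which gives success probability $\ge 1-\varepsilon$.

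The main obstacle is this constant-factor bookkeeping in item (iii). The plan is to interpret ``corrupted with noise of strength $p$'' as noise on the spacetime wires of the same circuit $\cC^{\mathsf{Bell}}$, written in the rearranged form. Idle wires inserted during the rearrangement are exactly wait locations already present in $\cC^{\mathsf{Bell}}$, because each qubit idles there before its first gate. So no strength change occurs, and the threshold $p_{\mathsf{thres}}(\varepsilon)$ applies verbatim.
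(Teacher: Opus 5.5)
Your route is the paper's. Its proof takes $\ket{\Psi}$ to be the output of the unitary part of $\cC^{\mathsf{Bell}}$ and invokes Theorem~\ref{thm:2Dentanglementgenerationgeneral}. Your rearrangement (each qubit of $\cC^{\mathsf{Bell}}$ is prepared and measured at most once, so preparations can be pulled to the front and measurements pushed to the end) and your use of $(P\otimes I)\ket{\Phi}\propto(I\otimes P^{T})\ket{\Phi}$ to move the correction onto $Q_1$ only make explicit what the paper leaves implicit. Items (i) and (ii) are fine.

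The step that fails as written is the closing claim that no strength change occurs for spacetime noise on the rearranged circuit. The locations do correspond bijectively, but the wires do not carry equivalent noise. In $\cC^{\mathsf{Bell}}$, errors on a qubit's wires before its preparation are erased by the preparation (e.g.\ a fresh qubit of $\cC^{\mathsf{2D}}$ while $\cC^{\mathsf{prep}}$ runs). Errors after its measurement are irrelevant. After rearrangement, the corresponding idle wires lie between preparation and first gate, or between last gate and measurement, where every error acts. These errors compose into one error on the single relevant wire of $\cC^{\mathsf{Bell}}$. So strength-$p$ noise on the rearranged circuit only gives noise of strength about $cp$ on $\cC^{\mathsf{Bell}}$, with $c$ the constant depth: independent flips of probability $p$ on $c$ merged wires give an effective flip probability of roughly $cp$. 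You would then only get the conclusion for $p\le p_{\mathsf{thres}}(\varepsilon)/c$. Adding measurement noise on top of state noise, as you mention, causes the same kind of loss.

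The fix is to use (iii) as stated: only the state $\ket{\Psi}$ is corrupted, and the measurements are ideal. Then the error factor on each measured qubit can be moved through its idles to the unique wire of $\cC^{\mathsf{Bell}}$ just before that qubit's measurement; on $Q_1,Q_2$ it goes to their output wires. Distinct qubits go to distinct wires, so this is local stochastic Pauli noise of strength exactly $p$ on $\cC^{\mathsf{Bell}}$, and $p_{\mathsf{thres}}(\varepsilon)$ applies verbatim. Lemma~\ref{lem:unitarycliffordcircuit} plays no role here: state noise already sits at the end of $U$, and that lemma only pushes circuit noise forward at a cost in strength.
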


\begin{proof}
    This follows immediately from Theorem~\ref{thm:2Dentanglementgenerationgeneral}, letting $\ket{\Psi}$ be the state prepared by the unitary part of the circuit $\cC^{\mathsf{Bell}}$, i.e., before the final layers including measurement and adaptive correction.
\end{proof}

\subsection{Localizable entanglement at finite temperature\label{sec:localizableentanglementfinite}}

Notice that the state $\ket{\Psi}$ from Corollary~\ref{cor:robustlocalizableentanglement} is the unique ground state of the $2D$-local stabilizer Hamiltonian 
\begin{align}
H&=-\sum_{v\in V(G_{\ell_X,\ell_Y})} S_v\qquad\textrm{ where }\qquad S_v=U Z_vU^\dagger\textrm{ for every qubit } v\in V(G_{\ell_X,\ell_Y})\ .
\end{align}
Indeed, the constant-depth nature of the Clifford~$U$ ensures that each stabilizer generator $S_v$~has support contained in a constant-diameter region around the vertex~$v$.  Corollary~\ref{cor:robustlocalizableentanglement} thus states that 
the ground state~$\ket{\Psi}$ of the Hamiltonian~$H$ has long-range localizable entanglement (robust to local stochastic Pauli noise).

 This observation can be extended: In fact, for any temperature below a certain (constant) threshold temperature, the Gibbs state of~$H$ also has localizable entanglement. This statement is  obtained by translating temperature to (local stochastic Pauli) noise,
 similar to the analysis of~\cite{raussendorf2005long} for the 3D~cluster state.
 \begin{corollary}[Localizable entanglement at finite temperature]\label{cor:finitetemp}
    Let $\varepsilon > 0$ be arbitrary. Then there exists a constant threshold $\beta_{\mathsf{thres}}(\varepsilon) > 0$ such that the following holds for any $R \geq 3$. 
  Consider a system of qubits attached to the vertices of the grid graph $G_{\ell_X,\ell_Y}$, where $\ell_X,\ell_Y$ are as in Theorem~\ref{thm:2Dentanglementgenerationgeneral}.
    There exists a  (commuting) stabilizer Hamiltonian~$H$  with the following properties:
    \begin{enumerate}[(i)]
        \item The Hamiltonian $H$ is $2D$-local, i.e., consists of terms 
        supported on constant-diameter regions of~$G_{\ell_X,\ell_Y}$. 
        \item The Gibbs state $\rho_\beta=e^{-\beta H}/\tr(e^{-\beta H})$  at any inverse temperature~$\beta \geq  \beta_{\mathsf{thres}}(\varepsilon)$ satisfies the following:  There exist two qubits $Q_1,Q_2 \in G_{\ell_X,\ell_Y}$ separated by distance~$\Theta(R)$ such that, by measuring the qubits~$G_{\ell_X,\ell_Y}\setminus\{Q_1,Q_2\}$ of $\rho_\beta$ in the computational basis to obtain a string $z \in \{0,1\}^{\ell_X \ell_Y - 2}$ and applying an efficiently computable Pauli correction $P(z)_{Q_1}$ to $Q_1$, a maximally entangled state $\ket{\Phi}_{Q_1Q_2}$ is prepared on the pair $Q_1Q_2$ with probability at least~$1-\varepsilon$.
    \end{enumerate}
\end{corollary}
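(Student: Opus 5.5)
We take $H$ to be the stabilizer Hamiltonian already introduced before the statement, $H=-\sum_v S_v$ with $S_v=UZ_vU^\dagger$. Here $U$ is the constant-depth Clifford unitary from Corollary~\ref{cor:robustlocalizableentanglement} applied with target error $\varepsilon$. Since $U$ has constant depth and is $2D$-local, each $S_v$ is supported in a constant-diameter neighbourhood of $v$, which gives property (i). The $S_v$ commute, so the Gibbs state factorizes. Writing $\rho_\beta=U\sigma_\beta U^\dagger$ with $\sigma_\beta\propto\exp(\beta\sum_v Z_v)=\bigotimes_v \tau_\beta$, each single-qubit factor is $\tau_\beta=(1-q)\proj{0}+q\proj{1}$ with $q=q(\beta)=e^{-\beta}/(e^{\beta}+e^{-\beta})$.

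Since $\proj{1}=X\proj{0}X$, the product state $\sigma_\beta$ equals $\ket{0^V}$ corrupted by independent $X$-errors, each occurring with probability $q$. Hence $\rho_\beta$ is exactly the output of the circuit $\cC^{\mathsf{Bell}}$ restricted to its unitary part, run with an error applied only after the preparation layer. This error is an i.i.d. Pauli error $F$ with $\Pr[W\subseteq\supp(F)]=q^{|W|}$ for every set $W$ of wires, so it is local stochastic Pauli noise of strength $q$ on the circuit. The noiseless remaining layers (computational basis measurements on all qubits except $Q_1,Q_2$, followed by the adaptive correction $P(z)_{Q_1}$) carry trivial noise, which is still consistent with the definition of local stochastic noise of strength $q$.

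We then invoke Theorem~\ref{thm:2Dentanglementgenerationgeneral}, or equivalently Corollary~\ref{cor:robustlocalizableentanglement}(iii). Whenever $q(\beta)\le p_{\mathsf{thres}}(\varepsilon)$, the measurement-and-correction protocol yields $\ket{\Phi}_{Q_1Q_2}$ with probability at least $1-\varepsilon$, where the probability is taken jointly over the thermal noise and the measurement outcomes. Since $q(\beta)\le e^{-2\beta}$, we set $\beta_{\mathsf{thres}}(\varepsilon):=\tfrac12\log(1/p_{\mathsf{thres}}(\varepsilon))$. Then $q$ is monotone decreasing in $\beta$, so every $\beta\ge\beta_{\mathsf{thres}}$ satisfies $q(\beta)\le p_{\mathsf{thres}}(\varepsilon)$. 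This threshold is a constant independent of $R$ because $p_{\mathsf{thres}}$ depends only on $\varepsilon$.

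The one step needing care is that Theorem~\ref{thm:2Dentanglementgenerationgeneral} is stated for noise on the whole circuit $\cC^{\mathsf{Bell}}$. It also contains a preparation layer, and the thermal state corresponds to noise placed right after the $\ket0$ preparations. Because the theorem allows arbitrary local stochastic Pauli noise of strength $p\le p_{\mathsf{thres}}$, a noise pattern supported only on the first layer of wires is a special case and requires no extra argument. If the unitary part of $\cC^{\mathsf{Bell}}$ is defined with preparations absorbed into $U$, we instead place the $X$ errors on the wires immediately following the preparation $0$-Gas.
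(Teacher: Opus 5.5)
Your proof is correct, but at the key step it takes a different and simpler route than the paper.

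\textbf{The paper's route.} It rewrites $\rho_\beta$ as the ground state $U|0^V\rangle$ corrupted \emph{after} $U$ by the errors $E_v=UX_vU^\dagger$. Each $E_v$ is applied independently with probability $p=(1-\tanh\beta)/2$, which is your $q$. The paper then needs a propagation argument in the style of Lemma~\ref{lem:unitarycliffordcircuit} to show that $\prod_v E_v^{b_v}$ is local stochastic of some strength $\Lambda p^{\lambda}$. This is why its threshold comes out as $\beta_{\mathsf{thres}}=\arctanh\bigl(1-2(p_{\mathsf{thres}}(\varepsilon)/\Lambda)^{1/\lambda}\bigr)$.

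\textbf{Your route.} You keep the $X$ errors on the wires immediately after the $|0\rangle$ preparations. You then use that Theorem~\ref{thm:2Dentanglementgenerationgeneral} is a circuit-level statement tolerating faults at arbitrary spacetime locations. There the noise is i.i.d.\ and hence trivially local stochastic of strength $q$. No propagation through $U$ is needed, and you get a lower threshold inverse temperature, with the explicit value $\tfrac12\log(1/p_{\mathsf{thres}}(\varepsilon))$.

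\textbf{What each buys.} The paper's route gives a statement about the state itself: the Gibbs state is the ground state hit by local stochastic Pauli noise on its output wires. This is the noisy-state picture of Corollary~\ref{cor:robustlocalizableentanglement}(iii), paid for with a polynomially worse noise threshold. Your route gives the sharper temperature bound with less machinery.

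\textbf{One correction.} Your aside ``or equivalently Corollary~\ref{cor:robustlocalizableentanglement}(iii)'' is not quite right. If (iii) is read as noise acting on $U|0^V\rangle$, your pre-$U$ noise fits it only after exactly the propagation step you avoided, and then with strength $\Lambda q^{\lambda}$ rather than $q$. The strength-$q$ conclusion comes from the circuit-level theorem, which is what your argument actually uses.
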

\noindent Succinctly, we say that the Gibbs state at temperature~$\beta\geq \beta_{\mathsf{thres}}(\varepsilon)$ has distance-$\Theta(R)$ localizable entanglement.

\begin{proof}
Let~$V=V(G_{\ell_X,\ell_Y})$ denote the vertices of the grid graph. Define the operators
\begin{align}
E_v&=U X_vU^\dagger\qquad\textrm{ for every }\qquad v\in V\ .\label{eq:evdefinitiona}
\end{align}
Then we have the commutation respectively anti-commutation relations
\begin{align}
\begin{matrix}
[E_v,E_w]&=&0\qquad&\textrm{ for all }\qquad v,w\\
[E_v,S_w]&=&0\qquad&\textrm{ for }\qquad v\neq w\\
\{E_v,S_v\}&=&0\qquad&\textrm{ for all}\qquad v\ .
\end{matrix}\label{eq:commutationrelationserror}
\end{align}
The ground state of $H$ has the form
\begin{align}
\proj{\Psi}&=\prod_{v\in V}\frac{1}{2}(I+S_v)\ ,\label{eq:gsgibbssto}
\end{align}
and for $\beta>0$, the Gibbs state~$\rho_\beta$ can be written as
\begin{align}
\rho_\beta 
&=\prod_{v\in V}\frac{1}{2}(I+\tanh(\beta) S_v)\\
&=\prod_{v\in V}\left((1-p)\frac{1}{2}(I+S_v)+p\frac{1}{2}(I-S_v)\right)
\textrm{ where }\qquad p:=(1-\tanh(\beta))/2\ 
.\label{eq:thermalstateeq}
\end{align}
It follows from Eq.~\eqref{eq:gsgibbssto}, Eq.~\eqref{eq:thermalstateeq} and Eq.~\eqref{eq:commutationrelationserror}
that
\begin{align}
\rho_\beta&:=\cE(\proj{\Psi})\qquad\textrm{ where }\qquad \cE:=\bigcirc_{v\in V}\cE_v\label{eq:rhobetaerrora}
\end{align}
where we introduced the quantum channels (CPTP maps)
\begin{align}
\cE_v(\rho)&=(1-p)\rho+ p E_v\rho E_v^\dagger \qquad\textrm{ for every }\qquad v\in V\ .\label{eq:rhobetaerrorb}
\end{align}
Eq.~\eqref{eq:rhobetaerrora}
expresses the  Gibbs state~$\rho_\beta$
as a noise-corrupted version of the ground state~$\proj{\Psi}$.
The corresponding noise channel~$\cE$ (defined by Eq.~\eqref{eq:rhobetaerrora} and~\eqref{eq:rhobetaerrorb}) can easily be seen to
be local stochastic Pauli noise. 

In more detail, it is the result of applying each Pauli error~$E_v$ (for $v\in V$)  independently at random with probability~$p$. In other words, defining Bernoulli-$p$ identically and independently distributed  random variables~$\{b_v\}_{v\in V}$, the (random) Pauli error which is applied is
\begin{align}
E&=\prod_{v\in V}E_v^{b_v}\\
&=U \left(\prod_{v\in V}Z_v^{b_v}\right)U^\dagger\ 
\end{align}
where we used the definition of~$E_v$ (see Eq.~\eqref{eq:evdefinitiona}).
That is, $E$ is the image under the constant-depth circuit~$U$ of independently and identically distributed Pauli-$Z$ noise on each qubit. It follows from arguments identical to those used in the proof of Lemma~\ref{lem:unitarycliffordcircuit} that there are constants~$(\Lambda,\lambda)$ such that 
$E$ is a local stochastic Pauli error of strength
\begin{align}
p'(\beta)&:=\Lambda p^{\lambda}=\Lambda \left(\frac{1-\tanh(\beta)}{2}\right)^{\lambda}\ .
\end{align}
According to Theorem~\ref{thm:2Dentanglementgenerationgeneral},
 and Eq.~\eqref{eq:rhobetaerrora}, a
Bell pair can be obtained with probability at least~$1-\varepsilon$ by local measurements and Pauli corrections applied to thermal state~$\rho_\beta$ if
\begin{align}
p'(\beta)&\leq  p_{\mathsf{thres}}(\varepsilon)\ .
\end{align}
The claim follows from this by setting
\begin{align}
\beta_{\mathsf{thres}}(\varepsilon)&:= \arctanh\left(1-2\left(p_{\mathsf{thres}}(\varepsilon)/\Lambda\right)^{1/\lambda}\right)\ .
\end{align}

\end{proof}
To our knowledge, Corollary~\ref{cor:finitetemp} gives the first example of a $2D$-local Hamiltonian whose thermal states possess long-ranged localizable entanglement below a constant threshold temperature. (We note that the $3D$ cluster state Hamiltonian has been shown to satisfy a similar property in Ref.~\cite{raussendorf2005long}.)
We note that a slight modification of the
argument used to establish Corollary~\ref{cor:finitetemp}  
also establishes that this localizable entanglement is robust to local stochastic Pauli noise (below a slightly different constant threshold temperature).  This result should be contrasted to the recent breakthrough  of Ref.~\cite{bakshi2024high}
showing that the Gibbs state of any local Hamiltonian on a constant-degree graph is separable above a constant threshold temperature (see Ref.~\cite{rouze2025efficient} for related work).

\textit{Acknowledgments.}
R.K. gratefully acknowledges support by the European Research Council under Grant No. 101001976 (project EQUIPTNT) and the Munich Quantum Valley, which
is supported by the Bavarian state government through the Hightech Agenda Bayern Plus. D.H. acknowledges support from the Novo Nordisk Foundation (Grant No. NNF20OC0059939 ‘Quantum for Life’).
\bibliographystyle{plain}
\bibliography{bib}
\end{document}

%% file: commands.tex
\usepackage[T1]{fontenc}
\usepackage{lmodern}
\usepackage[expansion=false]{microtype}
\usepackage{authblk}

\usepackage[margin=0.75in]{geometry}

\usepackage{graphicx}
\usepackage{svg}

\usepackage{amsmath,amssymb,amsfonts,amsthm}
\usepackage{mathtools}
\mathtoolsset{showonlyrefs}
\usepackage{dsfont}
\usepackage{pifont}

\usepackage{hyperref}
\usepackage{url}

\usepackage{cleveref}

\usepackage{ifthen}
\usepackage{enumerate}
\usepackage{tikz}
\usetikzlibrary{quantikz2, shapes.geometric}
\usepackage{caption}
\usepackage{subcaption}
\usepackage{chngcntr}

\theoremstyle{plain}

\newtheorem{theorem}{Theorem}
\newtheorem{result}{Step}
\newtheorem{lemma}[theorem]{Lemma}
\newtheorem{definition}[theorem]{Definition}

\newtheorem{corollary}[theorem]{Corollary}

\newtheorem{claim}[theorem]{Claim}

\newtheorem*{theorem*}{Theorem}
\newtheorem*{lemma*}{Lemma}
\newtheorem*{definition*}{Definition}
\newtheorem*{setup*}{Setup}
\newtheorem*{assumption*}{Assumption}
\newtheorem*{definitionnc*}{Definition}
\newtheorem*{claim*}{Claim}

\newtheorem*{example*}{Example}

\numberwithin{theorem}{subsection}
\numberwithin{equation}{section}  

\crefname{theorem}{Theorem}{Theorems}
\crefname{lemma}{Lemma}{Lemmas}
\crefname{definition}{Definition}{Definitions}
\crefname{proposition}{Proposition}{Propositions}
\crefname{corollary}{Corollary}{Corollaries}
\crefname{conjecture}{Conjecture}{Conjectures}
\crefname{setup}{Setup}{Setups}
\crefname{assumption}{Assumption}{Assumptions}
\crefname{claim}{Claim}{Claims}
\crefname{example}{Example}{Examples}

\DeclareMathOperator{\NN}{\mathbb{N}}
\DeclareMathOperator{\RR}{\mathbb{R}}

\DeclareMathOperator{\prob}{Pr}

\DeclareMathOperator{\pauli}{\mathbb{P}}

\DeclareMathOperator{\arctanh}{arctanh}
\DeclareMathOperator{\poly}{\mathsf{poly}}
\DeclareMathOperator{\localization}{\upharpoonright}

\newcommand{\appref}[1]{\hyperref[#1]{{Appendix~\ref*{#1}}}}

\newcommand{\be}{\begin{eqnarray} \begin{aligned}}
\newcommand{\ee}{\end{aligned} \end{eqnarray}}
\newcommand{\benn}{\begin{eqnarray*} \begin{aligned}}
\newcommand{\eenn}{\end{aligned} \end{eqnarray*}}

\newcommand*{\textfrac}[2]{{{#1}/{#2}}}

\newcommand*{\cB}{\mathcal{B}}
\newcommand*{\cC}{\mathsf{C}}

\newcommand*{\cE}{\mathcal{E}}
\newcommand*{\cF}{\mathcal{F}}

\newcommand*{\cH}{\mathcal{H}}
\newcommand*{\cI}{\mathcal{I}}

\newcommand*{\cL}{\mathcal{L}}
\newcommand*{\cM}{\mathcal{M}}
\newcommand*{\cN}{\mathcal{N}}
\newcommand*{\cO}{\mathcal{O}}
\newcommand*{\cP}{\mathcal{P}}

\newcommand*{\cR}{\mathcal{R}}

\newcommand*{\cU}{\mathcal{U}}

\newcommand*{\cW}{\mathcal{W}}
\newcommand*{\cX}{\mathcal{X}}
\newcommand*{\cY}{\mathcal{Y}}

\newcommand*{\nmin}{\mathsf{n}_{\min}}
\newcommand*{\nmax}{\mathsf{n}_{\max}}
\newcommand*{\dmin}{\mathsf{d}_{\min}}
\newcommand*{\dmax}{\mathsf{d}_{\max}}

\newcommand*{\upperboundL}{S}

\newcommand*{\tr}{\mathop{\mathrm{tr}}\nolimits}

\newcommand*{\supp}{\mathrm{supp}}

\newcommand{\bc}{\begin{center}}
\newcommand{\ec}{\end{center}}

\def\01{\{0,1\}}

\newcommand*{\CX}{\mathsf{CX}}

\newcommand*{\CNOT}{\mathsf{CNOT}}

\tikzset{ftlabel/.style={font=\scriptsize, xshift=-2pt, yshift=-3pt}}

\tikzset{
    ecbox/.style={draw, minimum width=0.8cm, minimum height=0.6cm, fill=white},
    ecboxs/.style={draw, minimum width=0.8cm, minimum height=0.6cm, fill=white, label={[ftlabel]above right:$s$}}
}
\newcommand{\EC}{\gate[style={ecbox}]{\text{\small$\mathrm{EC}$}}}
\newcommand{\ECs}{\gate[style={ecboxs}]{\text{\small$\mathrm{EC}$}}}

\newcommand{\filterwidthqz}{0.08}
\tikzset{
    filterbox/.style={draw, minimum width=\filterwidthqz cm, minimum height=0.5cm, fill=white, inner sep=0pt},
    filtr/.style={draw, minimum width=\filterwidthqz cm, minimum height=0.5cm, fill=white, inner sep=0pt, label={[ftlabel]above right:$r$}},
    filts/.style={draw, minimum width=\filterwidthqz cm, minimum height=0.5cm, fill=white, inner sep=0pt, label={[ftlabel]above right:$s$}},
    filtri/.style={draw, minimum width=\filterwidthqz cm, minimum height=0.5cm, fill=white, inner sep=0pt, label={[ftlabel]above right:$r_i$}},
    filtsum/.style={draw, minimum width=\filterwidthqz cm, minimum height=0.5cm, fill=white, inner sep=0pt, label={[font=\scriptsize, xshift=-2pt, yshift=-3pt]above right:{$s{+}\sum_i r_i$}}},
    filtzero/.style={draw, minimum width=\filterwidthqz cm, minimum height=0.5cm, fill=white, inner sep=0pt, label={[ftlabel]above right:$0$}}
}
\newcommand{\filtr}{\gate[style={filtr}]{}}
\newcommand{\filts}{\gate[style={filts}]{}}

\newcommand{\filtri}{\gate[style={filtri}]{}}
\newcommand{\filtsum}{\gate[style={filtsum}]{}}
\newcommand{\filtrzero}{\gate[style={filtzero}]{}}

\tikzset{
prepgate/.style={
thickness,
filling,
shape=semicircle,
rotate=90,
scale=1.6,
draw=black,
inner sep=2pt
}}
\DeclareExpandableDocumentCommand{\prepgate}{O{}{m}}{|[prepgate,#1]|}

\tikzset{
measgate/.style={
thickness,
filling,
shape=semicircle,
rotate=-90,
scale=1.6,
draw=black,
inner sep=2pt
}}
\DeclareExpandableDocumentCommand{\measgate}{O{}{m}}{|[measgate,#1]|}

\tikzset{
encgate/.style={
thickness,
filling,
shape=isosceles triangle,
rotate=180,
isosceles triangle apex angle=60,
scale=1.6,
draw=black,
inner sep=2pt
}}
\DeclareExpandableDocumentCommand{\encgate}{O{}{m}}{|[encgate,#1]|}

\tikzset{
decgate/.style={
thickness,
filling,
shape=isosceles triangle,
rotate=0,
isosceles triangle apex angle=60,
scale=1.6,
draw=black,
inner sep=2pt
}}
\DeclareExpandableDocumentCommand{\decgate}{O{}{m}}{|[decgate,#1]|}

\newcommand{\encwidth}{2pt}

\tikzset{
    ftgate/.style={draw, circle, minimum size=0.6cm, fill=white, inner sep=0pt},
    ftgates/.style={draw, circle, minimum size=0.6cm, fill=white, inner sep=0pt, label={[ftlabel]above right:$s$}}
}
\newcommand{\ftgate}[1]{\gate[style={ftgate}]{#1}}
\newcommand{\ftgates}[1]{\gate[style={ftgates}]{#1}}

\newcommand*{\pathgraph}{\mathsf{P}}

\newcommand{\cenc}[1]{\underset{#1}{\mathsf{Enc}}}
\newcommand{\cdec}[1]{\underset{#1}{\mathsf{Dec}}}
\newcommand{\enc}{\mathsf{Enc}}
\newcommand{\dec}{\mathsf{Dec}}

\newcommand*{\idec}{\mathsf{iDec}}